\long\def\comment#1{}
\long\def\comment#1{}
\newtheorem{theorem}{Theorem}
\newtheorem{algorithm}{Algorithm}[section]
\newtheorem{corollary}{Corollary}
\newtheorem{lemma}{Lemma}
\theoremstyle{definition}
\newtheorem{remark}{Comment}[section]
\numberwithin{remark}{section}
\newcommand{\citen}{\citeasnoun}
\newcommand{\be}{\begin{eqnarray}}
\newcommand{\ee}{\end{eqnarray}}
\newcommand{\ba}{\begin{array}}
\newcommand{\ea}{\end{array}}
\newcommand{\bs}{\begin{align}\begin{split}\nonumber}
\newcommand{\bsnumber}{\begin{align}\begin{split}}
\newcommand{\es}{\end{split}\end{align}}
\renewcommand{\(}{\left(}
\renewcommand{\)}{\right)}
\renewcommand{\[}{\left[}
\renewcommand{\]}{\right]}
\renewcommand{\hat}{\widehat}
\newcommand{\Gn}{\mathbb{G}_n}
\newcommand{\Gnk}{\mathbb{G}_{n_k}}
\newcommand{\Pn}{\mathbb{P}_n}
\newcommand{\Ep}{{\mathrm{E}}}
\newcommand{\barEp}{\bar \Ep}
\newcommand{\En}{{\mathbb{E}_n}}
\newcommand{\Ena}{{\mathbb{E}_{n_a}}}
\newcommand{\Enb}{{\mathbb{E}_{n_b}}}
\newcommand{\Enk}{{\mathbb{E}_{n_k}}}
\newcommand{\LASSO}{\sqrt{\textrm{Lasso}}}
\renewcommand{\Pr}{{\mathrm{P}}}
\def\RR{ {\mathbb{R}}}
\def\z{{d}}
\def\x{{x}}
\def\supp{{\rm support}}
\newcommand{\ceil}[1]{\left\lceil #1 \right\rceil}
\newcommand{\semin}[1]{\phi_{{\rm min}}(#1)}
\newcommand{\semax}[1]{\phi_{{\rm max}}(#1)}
\renewcommand{\hat}{\widehat}
\renewcommand{\leq}{\leqslant}
\renewcommand{\geq}{\geqslant}
\newcommand{\sign}{ {\rm sign}}
\begin{document}

\title[Sparse Models and Methods for Optimal Instruments]
{Sparse Models and Methods for Optimal Instruments with an Application to Eminent Domain}
\author[Belloni \ Chen \ Chernozhukov \ Hansen]{A. Belloni \and D. Chen \and V. Chernozhukov \and C. Hansen}

\date{First version:  June 2009,  this version \today. This is a revision of an October 2010 ArXiv/CEMMAP paper with the same title.  Preliminary results of this paper were first presented at Chernozhukov's invited Cowles Foundation lecture at the Northern American meetings
of the Econometric society in June of 2009.  We thank seminar participants at Brown, Columbia, Hebrew University, Tel Aviv University, Harvard-MIT, the Dutch Econometric Study Group, Fuqua School of Business, NYU, and New Economic School,
for helpful comments. We also thank Denis Chetverikov, JB Doyle, and Joonhwan Lee for thorough reading
of this paper and very useful comments.}

\begin{abstract}

We develop results for the use of Lasso and Post-Lasso methods to form first-stage predictions and estimate optimal instruments in linear instrumental variables (IV) models with many instruments, $p$.  Our results apply even when $p$ is much larger than the sample size, $n$. We show that the IV estimator based on using Lasso or Post-Lasso in the first stage is root-n consistent and asymptotically normal when the first-stage is approximately sparse; i.e. when the conditional expectation of the endogenous variables given the instruments can be well-approximated by a relatively small set of variables whose identities may be unknown.  We also show the estimator is semi-parametrically efficient when the structural error is homoscedastic.   Notably our results allow for imperfect model selection, and do not rely upon the unrealistic "beta-min" conditions that are widely used to establish validity of inference following model selection.  In simulation experiments, the Lasso-based IV estimator with a data-driven penalty performs well compared to recently advocated many-instrument-robust procedures. In an empirical example dealing with the effect of judicial eminent domain decisions on economic outcomes, the Lasso-based IV estimator outperforms an intuitive benchmark.

Optimal instruments are conditional expectations.  In developing the IV results, we establish a series of new results for Lasso and Post-Lasso estimators of nonparametric conditional expectation functions which are of independent theoretical and practical interest.
We construct a modification of Lasso designed to deal with non-Gaussian, heteroscedastic disturbances which uses a data-weighted $\ell_1$-penalty function. By innovatively using moderate deviation theory for self-normalized sums, we provide convergence rates for the resulting Lasso and Post-Lasso estimators that are as sharp as the corresponding rates in the homoscedastic Gaussian case under the condition that $\log p = o(n^{1/3})$.  We also provide a data-driven method for choosing the penalty level that must be specified in obtaining Lasso and Post-Lasso estimates and establish its asymptotic validity under non-Gaussian, heteroscedastic disturbances. \\

Key Words:   Inference on a low-dimensional parameter after model selection, imperfect model selection, instrumental variables,  Lasso, post-Lasso, data-driven penalty, heteroscedasticity, non-Gaussian errors, moderate deviations for self-normalized sums
\end{abstract}

\maketitle

\section{Introduction}

Instrumental variables (IV) techniques are widely used in applied economic research.  While these methods provide a useful tool for identifying structural effects of interest, their application often results in imprecise inference.  One way to improve the precision of instrumental variables estimators is to use many instruments or to try to approximate the optimal instruments as in \citen{amemiya:optimalIV}, \citen{chamberlain}, and \citen{newey:optimaliv}.  Estimation of optimal instruments will generally be done nonparametrically and thus implicitly makes use of many constructed instruments such as polynomials.  The promised improvement in efficiency is appealing, but  IV estimators based on many instruments may have poor properties.  See, for example, \citen{bekker}, \citen{chao:swanson:05}, \citen{hhn:weakiv}, and \citen{neweyetal:hetmanyiv} which propose solutions for this problem based on ``many-instrument'' asymptotics.\footnote{It is important to note that the precise definition of ``many-instrument" is $p \propto n$ with $p< n$ where $p$ is the number
of instruments and $n$ is the sample size.  The current paper allows for this case and also for ``very many-instrument"
asymptotics where $p \gg n$.}

In this paper, we contribute to the literature on IV estimation with many instruments by considering the use of Lasso and Post-Lasso for estimating the first-stage regression of endogenous variables on the instruments.   Lasso is a widely used method
that acts both as an estimator of regression functions and as a model selection device. Lasso solves for regression coefficients by minimizing the sum of the usual least squares objective function and a penalty for model size through the sum of the absolute values of the coefficients.
The resulting Lasso estimator selects instruments and estimates the first-stage regression coefficients via a shrinkage procedure. The Post-Lasso estimator discards the Lasso coefficient estimates and uses the data-dependent set of instruments selected by Lasso
to refit the first stage regression via OLS to alleviate Lasso's shrinkage bias.
For theoretical and simulation evidence regarding Lasso's  performance,  see Bai and Ng \citeyear{BaiNg2008,BaiNg2009b}, \citen{BickelRitovTsybakov2009}, Bunea, Tsybakov, and Wegkamp \citeyear{BuneaTsybakovWegkamp2006,BuneaTsybakovWegkamp2007b,BuneaTsybakovWegkamp2007},  \citen{CandesTao2007}, \citen{horowitz:lasso}, \citen{knight:shrinkage}, \citen{Koltchinskii2009}, \citen{Lounici2008}, \citen{LouniciPontilTsybakovvandeGeer2009}, \citen{MY2007}, \citen{RosenbaumTsybakov2008}, \citen{T1996}, \citen{vdGeer}, \citen{Wainright2006},   \citen{ZhangHuang2006}, \citen{BC-PostLASSO}, and \citen{BuhlmannGeer2011} among many others.  See \citen{BC-PostLASSO} for analogous results on Post-Lasso.

Using Lasso-based methods to form first-stage predictions in IV estimation provides a practical approach to obtaining the efficiency gains from using optimal instruments while dampening the problems associated with many instruments.  We show that Lasso-based procedures produce first-stage predictions that provide good approximations to the optimal instruments even when the number of available instruments is much larger than the sample size when the first-stage is approximately sparse -- that is, when there exists a relatively small set of important instruments whose identities are unknown that well-approximate the conditional expectation of the endogenous variables given the instruments.
Under approximate sparsity, estimating the first-stage relationship using Lasso-based procedures produces IV estimators that are root-n consistent and asymptotically normal.  The IV estimator with Lasso-based first stage also achieves the semi-parametric efficiency bound under the additional condition that structural errors are homoscedastic.   Our results allow imperfect model selection and do not impose ``beta-min'' conditions that restrict the minimum allowable magnitude of the coefficients on relevant regressors.  We also provide a consistent asymptotic variance estimator.
  Thus, our results generalize the IV procedure of \citen{newey:optimaliv} and \citen{hahn:SeriesRate} based on conventional series approximation of the optimal instruments.  Our results also generalize \citen{BickelRitovTsybakov2009} by providing inference and confidence sets for the second-stage IV estimator based on  Lasso or Post-Lasso estimates of the first-stage predictions.  To our knowledge, our result is the first to verify root-n consistency and asymptotic normality of an estimator for a low-dimensional structural parameter
  in a high-dimensional setting without imposing the very restrictive ``beta-min" condition.\footnote{The ``beta-min" condition requires the relevant coefficients in the regression to be separated from zero by a factor that exceeds the potential estimation error.  This condition implies the identities of the relevant regressors may be perfectly determined.  There is a large body of theoretical work that uses such a condition and thus implicitly assumes that the resulting post-model selection estimator is the same as the oracle estimator that knows the identities of the relevant regressors. See \citen{BuhlmannGeer2011} for the discussion of the ``beta-min" condition and the theoretical role it plays in obtaining ``oracle" results.}
   Our results also remain valid in the presence of heteroscedasticity and thus provide a useful complement to existing approaches in the many instrument literature which often rely on homoscedasticity and may be inconsistent in the presence of heteroscedasticity; see \citen{neweyetal:hetmanyiv} for a notable exception that allows for heteroscedasticity and gives additional discussion.

 Instrument selection procedures complement existing/traditional methods that are meant to be robust to many-instruments but are not a universal
 solution to the many instruments problem.  The good performance of instrument selection procedures relies on approximate sparsity.  Unlike traditional IV methods, instrument selection procedures do not require the identity of these ``important'' variables to be known \textit{a priori} as the identity of these instruments will be estimated from the data.   This flexibility comes with the cost that instrument selection will tend not to work well when the first-stage is not approximately sparse.  When approximate sparsity breaks down, instrument selection procedures may select too few or no instruments or may select too many instruments.  Two scenarios where this failure is likely to occur are the weak-instrument case; e.g. \citen{ss:weakiv}, \citen{ams:optimal}, \citen{andrews:stock}, \citen{moreira:weakivlr}, \citen{kleib:weakiv}, and \citen{kleib:weakgmm}; and the many-weak-instrument case; e.g. \citen{bekker}, \citen{chao:swanson:05}, \citen{hhn:weakiv}, and \citen{neweyetal:hetmanyiv}.  We consider two modifications of our basic procedure aimed at alleviating these concerns.  In Section 4, we present a sup-score testing procedure that is related to \citen{anderson:rubin} and \citen{ss:weakiv} but is better suited to cases with very many instruments; and we consider a split sample IV estimator in Section 5 which combines instrument selection via Lasso with the sample-splitting method of \citen{AngristKruegerSplitSample1995}.  While these two procedures are steps toward addressing weak identification concerns with very many instruments, further exploration of the interplay between weak-instrument or many-weak-instrument methods and variable selection would be an interesting avenue for additional research.

Our paper also contributes to the growing literature on Lasso-based methods by providing results for Lasso-based estimators of nonparametric conditional expectations.  We consider a modified Lasso estimator with penalty weights designed to deal with non-Gaussianity
and heteroscedastic errors.  This new construction allows us to innovatively use the results of moderate deviation theory for self-normalized sums of \citen{jing:etal} to provide convergence rates for Lasso and Post-Lasso.  The derived convergence rates are as sharp as in the homoscedastic Gaussian case under the weak condition that the log of the number of regressors  $p$ is small relative to $n^{1/3}$, i.e. $\log p = o(n^{1/3})$.  Our construction generalizes the standard Lasso estimator of \citen{T1996} and allows us to generalize the Lasso results of \citen{BickelRitovTsybakov2009} and Post-Lasso results of \citen{BC-PostLASSO} both of which assume homoscedasticity and Gaussianity.  The construction as well as theoretical results are important for applied economic analysis where researchers are concerned about heteroscedasticity and non-Gaussianity in their data.  We also provide a data-driven method for choosing the penalty that must be specified to obtain Lasso and Post-Lasso estimates, and we establish its asymptotic validity allowing for non-Gaussian, heteroscedastic disturbances.  Ours is the first paper to provide such a data-driven penalty which was previously not available even in the Gaussian case.\footnote{One exception
is the work of \citen{BCW-SqLASSO} which considers square-root-Lasso estimators and shows that their use allows for pivotal penalty choices.  Those results strongly rely on homoscedasticity.} These results are of independent interest in a variety of theoretical and applied settings.

We illustrate the performance of Lasso-based IV through simulation experiments.  In these experiments,
we find that a feasible Lasso-based procedure that uses our data-driven penalty performs well across a range of
simulation designs where sparsity is a reasonable approximation.  In terms of estimation risk, it outperforms the estimator of \citen{fuller}
(FULL),\footnote{Note that this procedure is only applicable when the number of instruments $p$ is less than the sample size $n$. As mentioned earlier, procedures developed in this paper allow for $p$ to be much larger $n$.}
which is robust to many instruments (e.g. \citename{hhn:weakiv}, \citeyear*{hhn:weakiv}), except in a design where sparsity breaks down and the sample size is large relative to the number of instruments.
In terms of size of 5\% level tests, the Lasso-based IV estimator performs comparably to or better than FULL in
all cases we consider.  Overall, the simulation results are in line with the theory and favorable to the proposed Lasso-based IV procedures.

Finally, we demonstrate the potential gains of the Lasso-based procedure in an application where there are many available instruments among which there is not a clear \textit{a priori} way to decide which instruments to use.  We look at the effect of judicial decisions at the federal circuit court level regarding the government's exercise of eminent domain on house prices and state-level GDP as in \citen{chen:yeh:takings}.  We follow the identification strategy of \citen{chen:yeh:takings} who use the random assignment of judges to three judge panels that are then assigned to eminent domain cases to justify using the demographic characteristics of the judges on the realized panels as instruments for their decision.  This strategy produces a situation in which there are many potential instruments in that all possible sets of characteristics of the three judge panel are valid instruments.  We find that the Lasso-based estimates using the data-dependent penalty produce much larger first-stage Wald-statistics and generally have smaller estimated second stage standard errors than estimates obtained using the baseline instruments of \citen{chen:yeh:takings}.

\textbf{Relationship to econometric literature on variable selection and shrinkage.} The idea of instrument selection goes back to \citen{KloekMennes1960} and \citen{Amemiya1966} who searched
among principal components to approximate the optimal instruments. Related ideas appear in dynamic factor models as in \citen{BaiNg2010}, \citen{KapetaniosMarcellino2010}, and  \citen{KapetaniosKhalafMarcellino2011}.  Factor analysis differs from our approach though principal components, factors, ridge fits, and other functions of the instruments could be considered among the set of potential instruments to select from.\footnote{Approximate sparsity should be understood to be relative to a given structure defined by the set of instruments considered.  Allowing for principle components or ridge fits among the potential regressors considerably expands the applicability of the approximately sparse framework.}

There are several other papers that explore the use of modern variable selection methods in econometrics, including some papers that apply these procedures to IV estimation. \citen{BaiNg2009a} consider an approach to instrument selection that is closely related to ours based on boosting. The latter method is distinct from Lasso, cf. \citen{BuhlmannBoosting2006}, but it also does not rely on knowing the identity of the most  important instruments. They show through simulation examples that instrument selection via boosting works well in the designs they consider but do not provide formal results.
\citen{BaiNg2009a} also expressly mention the idea of using the Lasso method for instrument selection, though they focus their analysis on the boosting method. Our paper complements their analysis by providing a formal set of conditions under which Lasso variable selection will provide good first-stage predictions and providing theoretical estimation and inference results for the resulting IV estimator.  One of our theoretical results for the IV estimator is also sufficiently general to cover the use of any other first-stage variable selection procedure, including boosting, that satisfies a set of provided rate conditions.   \citen{caner:LGMM} considers estimation by penalizing the GMM criterion function by the $\ell_{\gamma}$-norm of the coefficients for $0 < \gamma < 1$.  The analysis of \citen{caner:LGMM} assumes that the number of parameters $p$ is fixed in relation to the sample size, and so it is complementary to our approach where we allow $p \to \infty$ as $n \to \infty$.
 Other uses of Lasso in econometrics include \citen{BaiNg2008},  \citen{BCH-PLM}, \citen{Brodie:etal2009}, \citen{DeMiguel2009},  \citen{horowitz:lasso}, \citen{knight:shrinkage}, and others.  An introductory treatment of this topic is given in \citen{BC-Intro}, and \citen{BCC-Review} provides a review of Lasso targeted at economic applications.

Our paper is also related to other shrinkage-based approaches to dealing with many instruments.  \citen{chamberlain:imbens:reiv} considers IV estimation with many instruments using a shrinkage estimator based on putting a random coefficients structure over the first-stage coefficients in a homoscedastic setting.  In a related approach, \citen{okui:manyiv} considers the use of ridge regression for estimating the first-stage regression in a homoscedastic framework where the instruments may be ordered in terms of relevance.  \citen{okui:manyiv} derives the asymptotic distribution of the resulting IV estimator and provides a method for choosing the ridge regression smoothing parameter that minimizes the higher-order asymptotic mean-squared-error (MSE) of the IV estimator.  These two approaches are related to the approach we pursue in this paper in that both use shrinkage in estimating the first-stage but differ in the shrinkage methods they use. Their results are also only supplied in the context of homoscedastic models.  \citen{donald:newey} consider a variable selection procedure that minimizes higher-order asymptotic MSE which relies on \textit{a priori} knowledge that allows one to order the instruments in terms of instrument strength.  Our use of Lasso as a variable selection technique does not require any \textit{a priori} knowledge about the identity of the most relevant instruments and so provides a useful complement to \citen{donald:newey} and \citen{okui:manyiv}.  \citen{carrasco:regularizedIV} provides an interesting approach to IV estimation with many instruments based on directly regularizing the inverse that appears in the definition of the 2SLS estimator; see also \citen{carrasco:LIML}.  \citen{carrasco:regularizedIV} considers three regularization schemes, including Tikhohov regularization which corresponds to ridge regression, and shows that the regularized estimators achieve the semi-parametric efficiency bound under some conditions. \citen{carrasco:regularizedIV}'s approach implicitly uses $\ell_2$-norm penalization and hence differs from and complements our approach.  A valuable feature of \citen{carrasco:regularizedIV} is the provision of a data-dependent method for choosing the regularization parameter based on minimizing higher-order asymptotic MSE following \citen{donald:newey} and \citen{okui:manyiv}.  Finally, in work that is more recent that the present paper, \citen{GautierTsybakovHDIV} consider the important case where the structural equation in an instrumental variables model is itself very high-dimensional and propose a new estimation method related to the Dantzig selector and the square-root-Lasso. They also provide an interesting inference method which differs from the one we consider.

\textbf{Notation.} In what follows, we work with triangular array data $\{\(z_{i,n}, i=1,...,n\), n=1,2,3,...\}$
defined on some common probability space $(\Omega, \mathcal{A}, \mathrm{P})$. Each  $z_{i,n}= (y_{i,n}', x_{i,n}', d_{i,n}')'$
is a vector, with components defined below in what follows, and these vectors are i.n.i.d. -- independent across $i$, but not necessarily identically distributed. The law $\Pr_n$ of $\{z_{i,n}, i=1,...,n\}$ can change with $n$, though we do not make explicit use of $\Pr_n$.  Thus, all parameters that characterize
the distribution of  $\{z_{i,n}, i=1,...,n\}$ are
implicitly indexed by the sample size $n$, but we omit the index $n$ in what follows to simplify
notation.  We use triangular array asymptotics to better capture some finite-sample phenomena and to retain the robustness
of conclusions to perturbations of the data-generating process. We also use the following empirical process notation, $\En[f] := \En[f(z_i)] := \sum_{i=1}^n f(z_i)/n$,  and $\Gn(f) := \sum_{i=1}^n ( f(z_i)
- \Ep[f(z_i)] )/\sqrt{n}.$
Since we want to deal with i.n.i.d. data, we also introduce the average expectation operator:
$
\barEp[f] := \Ep \En[f] =  \Ep \En[f(z_i)] = \sum_{i=1}^n \Ep[f(z_i)]/n.
$
The $\ell_2$-norm is denoted by
$\|\cdot\|_2$, and the $\ell_0$-norm, $\|\cdot\|_0$, denotes the number of non-zero components of a vector.  We use $\|\cdot\|_{\infty}$ to denote the maximal element of a vector.  The empirical $L^2(\Pn)$ norm
of a random variable $W_i$ is defined as
$
\|W_i\|_{2,n} := \sqrt{\En[ W_i^2]}.
$
When the empirical $L^2(\Pn)$ norm is applied to regressors $f_1,\ldots,f_p$ and a vector $\delta\in\RR^p$, $\|f_i'\delta\|_{2,n}$, it is called the prediction norm. Given a vector $\delta \in \RR^p$ and a set of
indices $T \subset \{1,\ldots,p\}$, we denote by $\delta_T$ the vector in which $\delta_{Tj} = \delta_j$ if $j\in T$, $\delta_{Tj}=0$ if $j \notin T$. We also denote $T^c:=\{1,2,\ldots,p\}\setminus T$.
 We use the notation $(a)_+ = \max\{a,0\}$, $a \vee b = \max\{ a, b\}$ and $a \wedge b = \min\{ a , b \}$. We also use the notation $a \lesssim b$ to denote $a \leqslant c b$ for some constant $c>0$ that does not depend on $n$; and
$a\lesssim_{\mathrm{P}} b$ to denote $a=O_{\mathrm{P}}(b)$. For an event $E$, we say that $E$ wp $\to$ 1 when $E$ occurs with probability approaching one as $n$ grows.
We say $X_n =_d Y_n + o_{\mathrm{P}}(1)$ to mean that $X_n$ has the same distribution as $Y_n$ up to a term $o_{\mathrm{P}}(1)$ that vanishes in probability.

\section{Sparse Models and Methods for Optimal Instrumental Variables}

In this section of the paper, we present the model and provide an overview of the main results. Sections 3 and 4 provide a technical presentation that includes a set of sufficient regularity conditions, discusses their plausibility, and establishes the main formal results of the paper.

\subsection{The IV Model and Statement of The Problem} The model is $y_i = \z_i'\alpha_0 + \epsilon_i$ where $\alpha_0$ denotes the true value of a vector-valued parameter $\alpha$. $y_i$ is the response variable, and $d_i$ is a finite
$k_d$-vector of variables whose first $k_e$ elements contain endogenous variables.  The disturbance $\epsilon_i$ obeys for all $i$ (and $n$):
$$
\Ep[\epsilon_i|\x_i] = 0,
$$
where $\x_i$ is a $k_{\x}$-vector of instrumental variables.

As a motivation,
suppose that the structural disturbance is conditionally homoscedastic, namely, for all $i$,
$
\Ep[\epsilon_i^2|\x_i] = \sigma^2.
$
Given a $k_d$-vector of instruments  $A(\x_i)$, the standard IV estimator of $\alpha_0$ is given by
$
\hat \alpha = (\En[A(\x_i)d_i'])^{-1} \En [A(\x_i)y_i],
$
where $\{(\x_i, d_i, y_i), i=1,...,n\}$ is an i.i.d. sample from the IV model above. For a given $A(\x_i)$,
$
\sqrt{n}(\hat \alpha - \alpha_0) =_d N(0, Q^{-1}_0 \Omega_0 {Q_0^{-1}}' ) + o_{\mathrm{P}}(1)$, where $Q_0 = \barEp [A(\x_i)d_i']$ and
 $\Omega_0 =  \sigma^2 \barEp [A(\x_i) A(\x_i)']$  under standard conditions.  Setting
$
A(\x_i) = D(\x_i) = \Ep[d_i|\x_i]$
minimizes the asymptotic variance which becomes $$\Lambda^* = \sigma^2 \{\barEp [D(\x_i) D(\x_i)'] \}^{-1},$$ the semi-parametric efficiency bound for estimating $\alpha_0$; see \citen{amemiya:optimalIV}, \citen{chamberlain},  and \citen{newey:optimaliv}.  In practice, the optimal instrument $D(x_i)$ is an unknown function and has to be estimated.  In what follows,
we investigate the use of sparse methods -- namely
Lasso and Post-Lasso -- for use in estimating the optimal instruments.  The resulting IV estimator is asymptotically as efficient as the infeasible optimal IV estimator above.

Note that if $d_i$ contains exogenous components $w_i$, then
$d_i = (d_{i1},..., d_{ik_e}, w_i')'$ where the first $k_e$ variables
are endogenous.  Since the rest of the components $w_i$ are exogenous, they appear
in $\x_i = (w_i', \tilde x_i')'$.  It follows that
$
D_i := D(x_i) := \Ep[d_i|\x_i]=  (\Ep[d_{1}|\x_i],..., \Ep[d_{k_e}|\x_i], w_i')';
$
i.e. the estimator of $w_i$ is simply $w_i$.  Therefore, we discuss
estimation of  the conditional expectation functions:
$$
D_{il} := D_{l}(\x_i) := \Ep[d_{{l}}|\x_i], \ {l} =1,...,k_e.
$$
In what follows, we focus on the strong instruments case which translates into the assumption that $Q = \barEp[D(\x_i)D(\x_i)']$ has eigenvalues bounded away from zero and from above. We also present an inference procedure that remains valid in the absence of strong instruments which is related to \citen{anderson:rubin} and \citen{ss:weakiv} but allows for $p \gg n$.

\subsection{Sparse Models for  Optimal Instruments and Other Conditional Expectations}
Suppose there is a very large list of instruments,
$$
f_i := (f_{i1},...,f_{ip})' := ( f_1(\x_i),...,f_p(\x_i))',
$$
to be used in estimation of conditional expectations $D_{{l}}(x_i), \ {l} =1,...,k_e$, where
the number of instruments $p$ is possibly much larger than the sample size $n$.

For example, high-dimensional instruments $f_i$ could arise as any combination of the following two cases.  First, the list of available instruments may simply be large, in which case $f_i=x_i$ as in e.g. \citen{amemiya:optimalIV} and \citen{bekker}.  Second, the list $f_i$ could consist of a large number of series terms with respect to some elementary regressor vector $x_i$; e.g., $f_i$ could be composed of B-splines, dummies, polynomials, and various interactions as in  \citen{newey:optimaliv} or \citen{hahn:SeriesRate} among others.  We term the first example the many instrument case and the second example the many series instrument case and note that our formulation does not require us to distinguish between the two cases.
We mainly use the term ``series instruments"  and  contrast
  our results with those in the seminal work of \citen{newey:optimaliv} and \citen{hahn:SeriesRate}, though our results
  are not limited to canonical series regressors as in \citen{newey:optimaliv} and \citen{hahn:SeriesRate}. The
  most important feature of our approach is that by allowing $p$ to be much larger than the sample size,
   we are able to consider many more series instruments than in \citen{newey:optimaliv} and \citen{hahn:SeriesRate} to approximate the optimal instruments.

The key assumption that allows effective use of this large set of instruments is sparsity.
To fix ideas, consider the case where $D_{{l}}(\x_i)$ is
 a function  of only $s \ll n$ instruments:
\begin{equation}\label{eq: sparse instrument0} 
\begin{array}{rcl}
&& D_{{l}}(\x_i) = f_i' \beta_{{l0}},   \ \ {l}=1,...,k_e, \\
&&\max_{1\leq{l}\leq k_e} \|\beta_{{l0}}\|_0 = \max_{1\leq{l}\leq k_e} \sum_{j=1}^{p}1\{ \beta_{{l}0 j} \neq 0 \} \leq  s \ll n.
\end{array}\end{equation}
This simple sparsity model generalizes the classic parametric model of optimal instruments of \citen{amemiya:optimalIV} by letting
the identities of the relevant instruments
$
T_{{l}} = \text{support}(\beta_{l0})=\{ j \in \{1,\ldots,p\} \ : \ |\beta_{{l0} j}| > 0\}
$
be unknown.

The model given by (\ref{eq: sparse instrument0}) is unrealistic in that it presumes exact sparsity.  We make no formal use of this model,
but instead use a much more general approximately sparse or nonparametric model:

\textbf{Condition AS.}(\textbf{Approximately Sparse Optimal Instrument}).   \textit{Each optimal instrument function $D_{{l}}(\x_i)$ is well-approximated by a function of unknown $s\geq 1$ instruments:}
\begin{equation}\label{eq: sparse instrument}
\begin{array}{rcl}
&& D_{{l}}(\x_i) = f_i' \beta_{{l} 0} + a_{{l}}(x_i),   \ \ \ \ \ {l}=1,...,k_e, \ \ k_e \ \mbox{fixed,}
\\
&& \max_{1\leq{l}\leq k_e} \|\beta_{{l} 0}\|_0  \leq  s = o(n), \ \ \ \max_{1\leq{l}\leq k_e} [\En a_{{l}}(x_i)^2]^{1/2} \leq c_{s} \lesssim_{\mathrm{P}} \sqrt{s/n}.
\end{array}
\end{equation}

Condition AS is the key assumption.  It requires that there are at most $s$ terms for each endogenous variable that are able to approximate the conditional expectation function $D_{{l}}(\x_i)$ up to approximation error $a_{{l}}(x_i)$ chosen
to be no larger than the conjectured size $\sqrt{s/n}$ of the estimation error of the infeasible estimator that knows the identity of these important variables, the ``oracle estimator.''  In other words, the number $s$ is
defined so that the approximation error is of the same order as the estimation error,
$\sqrt{s/n}$, of the oracle estimator.   Importantly, the assumption allows the identity
$$ T_{{l}} = \text{support}(\beta_{l0})$$
to be unknown and to differ for $l=1,\ldots,k_e$.

For a detailed motivation and discussion of this assumption, we refer the reader to \citen{BCC-Review}. Condition AS
generalizes  the conventional series approximation of optimal instruments in Newey \citeyear{newey:optimaliv,newey:series} and \citen{hahn:SeriesRate} by letting the identities
of the most important $s$ series terms $T_{{l}}$ be unknown. The rate
$\sqrt{s/n}$ generalizes the rate obtained with the optimal number $s$ of series terms in \citen{newey:optimaliv} for estimating conditional expectation by not relying on knowledge of what $s$ series terms to include. Knowing
the identities of the most important series terms is unrealistic
in many examples. The most important series terms
need not be the first $s$ terms, and the optimal number
of series terms to consider is also unknown. Moreover, an optimal approximation
could come from the combination of completely different bases e.g
by using both polynomials and B-splines.

Lasso and Post-Lasso use the data to estimate the set of the most relevant series terms in a manner that allows the resulting IV estimator
to achieve good performance  if a key growth condition,
$$
 \frac{s^2 \log^2 (p\vee n)}{n} \to 0,
$$
holds along with other more technical conditions.  The growth condition requires the optimal
instruments to be sufficiently smooth  so that a small (relative to $n$) number of series terms can be used
to approximate them well.  The use of a small set of instruments ensures that the impact of first-stage estimation on the IV estimator is asymptotically negligible. We can weaken this condition to $s \log (p\vee  n) = o(n)$ by using the sample-splitting idea from the many instruments
literature.

\subsection{Lasso-Based Estimation Methods for Optimal Instruments and Other Conditional Expectation Functions}

Let us write the first-stage regression equations as
\begin{equation}
\label{eq: k reg equations}
d_{i{l}} = D_{l}(\x_i) + v_{i{l}}, \ \  \Ep[v_{i{l}}|\x_i] =0, \ \ {l} = 1,..., k_e.
\end{equation}
Given the sample $\{(\x_i, d_{i{l}}, {l}=1,..., k_e), i=1,...,n\}$, we consider
estimators of the optimal instrument $D_{il}:= D_{{l}}(\x_i)$ that take the form
$$
\widehat D_{il}:=\widehat D_{l}(\x_i) = f_i'\hat \beta_{l}, \ \ {l}=1,...,k_e,
$$
where $\hat \beta_{{l}}$ is the Lasso or Post-Lasso estimator obtained by using $d_{il}$ as the dependent variable
and $f_i$ as regressors.

Consider the usual least squares criterion function:
$$
\hat Q_{{l}} (\beta):= \En [(d_{i{l}} - f_i'\beta)^2].
$$
The Lasso estimator is defined as a solution of the following optimization program:
\begin{equation}\label{Def:LASSOmain}
\widehat \beta_{{l} L} \in \arg \min_{\beta \in \mathbb{R}^p} \hat Q_{{l}} (\beta) + \frac{\lambda}{n} \|\hat \Upsilon_l \beta \|_{1}
\end{equation}
where  $\lambda$ is the penalty level and  $\hat \Upsilon_l = \text{diag}(\hat \gamma_{l1},..., \hat \gamma_{lp})$ is a diagonal matrix specifying  penalty loadings.

Our analysis will first employ the following ``ideal" penalty loadings:
$$
\hat \Upsilon_{{l}}^0 = \text{diag}(\hat \gamma^0_{{l} 1},..., \hat \gamma^0_{{l} p}),  \ \hat \gamma^0_{lj} = \sqrt{ \En [ f^2_{ij} v^2_{il}]}, \ j =1,...,p.
$$
The ideal  option is not feasible but leads to rather sharp theoretical bounds on estimation risk. This option is not feasible since $v_{il}$ is not observed.  In practice, we estimate the
ideal loadings by first using conservative penalty loadings and
then plugging-in the resulting estimated residuals in place of $v_{il}$
to obtain the refined loadings. This procedure could be iterated
via Algorithm A.1 stated in the appendix.

The idea behind the ideal penalty loading is to introduce self-normalization of the first-order condition of the Lasso problem by using data-dependent penalty loadings. This self-normalization allows us to apply moderate deviation theory of \citen{jing:etal} for self-normalized sums to bound deviations of the maximal element of the score vector
$$S_{l} =  2 \En [(\widehat \Upsilon^{0}_l)^{-1} f_i v_{il}]$$
which provides a representation of the estimation noise in the problem.
Specifically, the use of self-normalized moderate deviation theory allows us to establish that
\begin{equation}\label{eq: score bound}
\Pr \left( \sqrt{n} \max_{1 \leq {l} \leq k_e } \|S_{l}\|_{\infty} \leq  2\Phi^{-1}(1- \gamma/(2k_ep) ) \right) \geq 1- \gamma + o(1),
\end{equation}
from which we obtain sharp convergence results for the Lasso estimator under non-Gaussianity and heteroscedasticity.
Without using these loadings, we may not be able to achieve the same sharp rate of convergence.  It is important to emphasize that our construction of the penalty loadings for Lasso is new and differs from the canonical penalty loadings proposed in \citen{T1996} and \citen{BickelRitovTsybakov2009}.
Finally, to insure the good performance of the Lasso estimator, one needs to select the penalty level
   $\lambda/n$ to dominate the noise for all $k_e$ regression problems simultaneously; i.e. the penalty level should satisfy
\begin{equation}\label{the principle}
\Pr \left( \lambda/n \geq  c \max_{1 \leq {l} \leq k_e } \|S_{l}\|_{\infty} \right) \to 1,
 \end{equation}
for some constant $c>1$.
The bound (\ref{eq: score bound}) suggests that this can be achieved by selecting
\begin{equation}\label{the principle lambda}
\lambda = c 2 \sqrt{n} \Phi^{-1}(1- \gamma/(2k_ep) ), \ \ \mbox{with} \ \gamma\to 0, \ \ \log(1/\gamma) \lesssim \log (p\vee n),
\end{equation}
which implements (\ref{the principle}). Our current recommendation is to set the confidence level $\gamma = 0.1/\log(p\vee n)$ and the constant $c = 1.1$.\footnote{We note that there is not much room to change $c$. Theoretically, we require $c>1$, and finite-sample experiments show that increasing $c$ away from $c=1$ worsens the performance. Hence a value slightly above unity, namely $c=1.1$, is our current recommendation. The simulation evidence suggests that setting $c$ to any value near $1$, including $c=1$, does not impact the result noticeably.}

The Post-Lasso estimator is defined as the ordinary least square regression applied to the model $\widehat I_l \supseteq \widehat T_l$ where $\widehat T_l$ is the model selected by Lasso:
$$\widehat T_{{l}} = \supp( \hat \beta_{{l} L} ) = \{ j \in \{1,\ldots,p\} \ : \ |\hat\beta_{{l} L j }| > 0\},  \ \  {l} =1,..., k_e.$$
The set $\widehat I_l$ can contain additional variables not selected by Lasso, but we require the number of such variables to be similar to or smaller than the number selected by Lasso. The Post-Lasso estimator $\hat\beta_{{l} PL}$  is  \begin{equation}\label{Def:TwoStep} \widehat \beta_{{l} PL} \in \arg\min_{\beta \in \mathbb{R}^p:
\beta_{\widehat I^c_{l}} = 0}   \hat Q_l(\beta), \  {l} =1,..., k_e.
\end{equation}
In words, this estimator is ordinary least squares (OLS) using only the instruments/regressors whose coefficients were estimated to be non-zero by Lasso and any additional variables the researcher feels are important despite having Lasso coefficient estimates of zero.

Lasso and Post-Lasso are motivated by the desire to predict the target function well without overfitting.  Clearly, the OLS estimator is not consistent for estimating the target function when $p > n$. Some approaches based on BIC-penalization of model size are consistent but computationally infeasible.  The Lasso estimator of \citen{T1996} resolves  these difficulties
by penalizing model size through the sum of absolute parameter values. The Lasso estimator is computationally attractive because it minimizes a convex function. Moreover, under suitable conditions, this estimator achieves near-optimal rates in estimating
the regression function  $D_{{l}}(x_i)$.  The estimator achieves these rates by adapting to the unknown smoothness or sparsity of  $D_{l}(x_i)$.  Nonetheless, the estimator has an important drawback: The regularization by the $\ell_1$-norm employed in (\ref{Def:LASSOmain}) naturally lets the Lasso estimator avoid overfitting the data, but it also shrinks the estimated coefficients towards zero causing a potentially significant bias. The Post-Lasso estimator is meant to remove some of this shrinkage bias. If model selection by Lasso works perfectly -- that is,
if it selects exactly the ``relevant" instruments -- then the resulting Post-Lasso estimator
is simply the standard OLS estimator using only the relevant variables.  In cases where perfect selection does not occur, Post-Lasso estimates of coefficients will still tend to be less biased than Lasso. We prove the Post-Lasso estimator achieves the same rate of convergence as Lasso, which is a near-optimal rate, despite imperfect model selection by Lasso.

The introduction of self-normalization via the penalty loadings allows us to contribute to the broad Lasso literature cited in the introduction by showing that under possibly heteroscedastic and non-Gaussian  errors the Lasso and Post-Lasso estimators obey the following near-oracle performance bounds:
\begin{eqnarray}\label{key side A0}
&& \max_{1\leq {l}\leq k_e} \|\widehat D_{i{l}} -D_{i{l}}\|_{2,n} \lesssim_{\mathrm{P}} \sqrt{\frac{s \log (p\vee n)}{n}} \ \ \ \mbox{and} \  \ \max_{1\leq {l}\leq k_e} \| \widehat \beta_{{l}} - \beta_{{l0}}\|_{1}  \lesssim_{\mathrm{P}} \sqrt{\frac{s^2 \log (p\vee n)}{n}}. \ \ \ \ \ \ 
\end{eqnarray}
The performance bounds in (\ref{key side A0}) 
are called near-oracle because they coincide up to a $\sqrt{\log p}$  factor
with the bounds achievable when the the ideal series terms $T_l$ for each of the $k_e$ regressions equations in (\ref{eq: sparse instrument}) are known. Our results
extend those of \citen{BickelRitovTsybakov2009} for Lasso with Gaussian errors and those of \citen{BC-PostLASSO}
for Post-Lasso with Gaussian errors.  Notably, these bounds are as sharp as the results
for the Gaussian case under the weak condition $\log p = o(n^{1/3})$.  They are  also the first results
in the literature that allow for data-driven choice of the penalty level.

It is also useful to contrast the rates given in (\ref{key side A0}) with the rates available for nonparametrically estimating conditional expectations in the series literature; see, for example, \citeasnoun{newey:series}.  Obtaining rates of convergence for series estimators relies on approximate sparsity just as our results do. Approximate sparsity in the series context is typically motivated by smoothness assumptions, but approximate sparsity is more general than typical smoothness assumptions.\footnote{See, e.g., \citeasnoun{BCC-Review} and \citeasnoun{BCH-PLM} for detailed discussion of approximate sparsity.}  The standard series approach postulates that the first $K$ series terms are the most important for approximating the target regression function $D_{il}$. The Lasso approach postulates that $s$ terms from a large number $p$ of terms are important but does not require knowledge of the identity of these terms or the number of terms, $s$, needed to approximate the target function well-enough that approximation errors are small relative to estimation error.  Lasso methods estimate both the optimal number of series terms $s$ as well as the identities of these terms and thus automatically adapt to the unknown sparsity (or smoothness) of the true optimal instrument (conditional expectation). This behavior differs sharply from standard series procedures that do not adapt to the unknown sparsity of the target function unless the number of series terms is chosen by a model selection method.  Lasso-based methods may also provide enhanced approximation of the optimal instrument by allowing selection of the most important terms from a among a set of very many series terms with total number of terms $p \gg K$ that can be much larger than the sample size.\footnote{We can allow for $p \gg n$ for series formed with  orthonormal bases with bounded components, such as trigonometric bases, but further restrictions on the number of terms apply if bounds on components of the series are allowed to increase with the sample size. For example, if we work with B-spline series terms, we can only consider $p = o(n)$ terms.  }  For example, a standard series approach based on $K$ terms will perform poorely when the terms $m+1$, $m+2$,...,$m+j$ are the most important for approximating the optimal instrument for any $K < m$.  On the other hand, lasso-based methods will find the important terms as long as $p > m+j$ which is much less stringent than what is required in usual series approaches since $p$ can be very large. This point can also be made using the array asymptotics where the model changes with $n$ in such a way that the important series terms are always missed by the first $K \rightarrow \infty$ terms.  Of course, the additional flexibility allowed for by Lasso-based methods comes with a price, namely slowing the rate of convergence by $\sqrt{\log p}$ relative to the usual series rates.

\subsection{The Instrumental Variable Estimator based on Lasso and Post-Lasso constructed Optimal Instrument}
Given Condition AS, we take advantage of the approximate sparsity
by using Lasso and Post-Lasso methods to construct estimates of $D_{{l}}(x_i)$ of the form
$$
\widehat D_{{l}}(x_i) = f_i'\widehat \beta_{{l}},  \ \ {l} =1,...,k_e,
$$
and then set
$$
\widehat D_i =  ( \widehat D_{1}(x_i), ..., \widehat D_{k_{e}}(x_i), w_i')'.
$$
The resulting IV estimator takes the form
$$
\widehat \alpha =  \En [\widehat D_id_i']^{-1}  \En [\widehat D_i y_i].
$$

The main result of this paper is to show that, despite the possibility of $p$ being very large, Lasso and
Post-Lasso can select a set of instruments to produce estimates
of the optimal instruments $\widehat D_i$ such that the resulting IV estimator
achieves the efficiency bound asymptotically:
$$
\sqrt{n}(\hat \alpha - \alpha_0) =_d  N(0, \Lambda^*) + o_{\mathrm{P}}(1).
$$
The estimator matches the performance of the classical/standard series-based IV estimator of \citen{newey:optimaliv} and has additional advantages
mentioned in the previous subsection.  We also show that the IV estimator with Lasso-based optimal instruments continues to be root-$n$ consistent and asymptotically normal in the presence of heteroscedasticity:
\begin{equation}\label{robust normality result}
\sqrt{n}(\hat \alpha - \alpha_0) =_d N(0, Q^{-1}\Omega Q^{-1}) + o_{\mathrm{P}}(1),
\end{equation}
where $\Omega :=  \barEp[\epsilon_i^2 D(\x_i) D(\x_i)']$ and $Q:= \barEp[D(\x_i) D(\x_i)']$. A consistent estimator for the asymptotic variance is
\begin{equation}\label{ass variance}
\widehat Q^{-1}\widehat \Omega \widehat Q^{-1}, \ \ \hat \Omega :=\En [\hat \epsilon_i^2 \widehat D(\x_i) \widehat D(\x_i)'], \ \
\hat Q: = \En [\widehat D(\x_i) \widehat D(\x_i)'],
\end{equation}
where $ \hat \epsilon_i := y_i-d_i'\hat\alpha$,  $i=1,\ldots,n$. Using (\ref{ass variance}) we can perform robust inference.

We note that our result (\ref{robust normality result}) for the IV estimator do not rely on the Lasso and Lasso-based procedure specifically.  We provide the properties of the IV estimator for any generic sparsity-based procedure that
achieves the near-oracle performance  bounds (\ref{key side A0}). 

We conclude by stressing that our result (\ref{robust normality result})
does not rely on perfect model selection. Perfect model selection only occurs
in extremely limited circumstances that are unlikely to occur in practice.  We show that model selection mistakes do not affect the  asymptotic distribution of the IV estimator $\widehat \alpha$ under mild regularity conditions. The intuition is that the model selection mistakes are sufficiently small to allow the Lasso or Post-Lasso to estimate the first stage predictions with a sufficient, near-oracle accuracy, which translates to the result above.  Using analysis like that given in \citeasnoun{BCH-PLM}, the result (\ref{robust normality result})  can be shown to hold  over models with strong optimal instruments which are uniformly approximately sparse.  We also offer an inference test procedure in Section 4.2 that remains valid in the absence of a strong optimal instrument, is robust to many weak instruments, and can be used even if  $p \gg n$.  This procedure could also be shown to be uniformly valid over a large class of models.

\section{Results on Lasso and Post-Lasso Estimation of Conditional Expectation
Functions under Heteroscedastic, Non-Gaussian Errors}

In this section, we present our main results on Lasso and Post-Lasso estimators
of conditional expectation functions under non-classical assumptions and
data-driven penalty choices.  The problem we analyze in this section
has many applications outside the IV framework of the present paper.

\subsection{Regularity Conditions for Estimating Conditional Expectations}

The key condition concerns the behavior of the empirical Gram matrix
 $\En[f_if_i']$. This matrix is necessarily singular when $p > n$, so in principle it is not well-behaved.
However, we only need good behavior of certain moduli of continuity of the Gram matrix.    The first modulus
of continuity is called the restricted eigenvalue and is needed for Lasso. The second modulus is called the sparse eigenvalue and is needed for Post-Lasso.

In order to define the restricted eigenvalue, first define the restricted set:
$$\Delta_{C,T} = \{\delta \in \mathbb{R}^p: \|\delta_{T^c}\|_{1} \leq C  \| \delta_{T}\|_{1},  \delta \neq 0\}.$$
The restricted eigenvalue of a Gram matrix $M = \En[f_if_i']$ takes the form:
 \begin{eqnarray}\label{RE}
    \kappa^2_{C}(M) : = \min_{\delta \in \Delta_{C,T}, |T| \leq s} s\frac{\delta' M \delta }{\|\delta_T\|^2_{1} }.
\end{eqnarray}
This restricted eigenvalue can depend on $n$, but we suppress the dependence in our notation.

 In making simplified asymptotic statements involving the Lasso estimator, we will invoke the following condition:

\textbf{Condition RE.} \textit{ For any $C>0$, there exists a finite constant $\kappa> 0$, which does not depend on $n$ but may depend on $C$, such that the restricted eigenvalue obeys
 $\kappa_{C}(\En [f_i f_i']) \geq \kappa$ with probability approaching one  as $n \to \infty$.}

The restricted eigenvalue (\ref{RE}) is a variant of the  restricted eigenvalues introduced in \citen{BickelRitovTsybakov2009} to analyze the properties of Lasso in the classical Gaussian regression model.  Even though the minimal eigenvalue of the empirical Gram matrix $\En[f_if_i']$ is zero whenever $p \geq n$, \citen{BickelRitovTsybakov2009} show that its restricted eigenvalues can  be bounded away from zero. Lemmas \ref{Lemma:GaussianDesign} and \ref{Lemma:BoundedDesign} below contain sufficient conditions for this. Many other
sufficient conditions are available from the literature; see \citen{BickelRitovTsybakov2009}.  Consequently, we take restricted eigenvalues as primitive quantities and Condition RE as a primitive condition.

\begin{remark}[On Restricted Eigenvalues]
In order to gain intuition about restricted eigenvalues, assume the
exactly sparse model, in which there is no approximation error. In this model, the term $\delta$ stands for a generic deviation between an estimator and the true parameter vector $\beta_0$.  Thus, the restricted eigenvalue represents a modulus of continuity between a penalty-related term and the prediction norm, which allows us to derive the rate of convergence.  Indeed, the restricted eigenvalue bounds the minimum change in the prediction norm induced by a deviation $\delta$ within the restricted set $\Delta_{C,T}$ relative to the norm of $\delta_T$, the deviation on the true support.  Given a specific choice of the penalty level, the deviation of the estimator belongs to the restricted set, making the restricted eigenvalue relevant for deriving rates of convergence. 
\end{remark}

In order to define the sparse eigenvalues, let us define the $m$-sparse subset of a unit sphere as
$$
\Delta(m) = \{ \delta \in \mathbb{R}^p: \|\delta\|_0\leq m, \|\delta\|_2 = 1\},
$$
and also define the minimal and maximal $m$-sparse eigenvalue of the Gram matrix $M= \En\[f_if_i'\]$ as
$$
\semin{m}(M) = \min_{\delta \in \Delta(m)}  \delta'M\delta \ \ \mbox{and} \ \
\semax{m}(M) = \max_{\delta \in \Delta(m)}  \delta'M\delta.
$$

To simplify asymptotic statements for Post-Lasso, we use the following condition:

\textbf{Condition SE.} \textit{ For any $C>0$, there exist constants
$0< \kappa' <  \kappa'' < \infty$, which do not depend on $n$ but may depend on $C$, such
that with probability approaching one, as $n \to \infty$,
$\kappa' \leq \semin{Cs}(\En[f_if_i']) \leq \semax{Cs}(\En[f_if_i']) \leq \kappa''$.
}

Condition SE requires only that certain ``small" $m \times m$ submatrices of the large $p \times p$
empirical Gram matrix are well-behaved, which is a reasonable assumption and will be sufficient for the results that follow.
Condition SE implies Condition RE by the argument given in \citen{BickelRitovTsybakov2009}. The following lemmas show that Conditions RE and SE are plausible for both many-instrument and many series-instrument settings. We refer to \citen{BC-PostLASSO} for proofs; the first lemma builds upon results in \citen{ZhangHuang2006} and the second builds upon results in \citen{RudelsonVershynin2008}. The lemmas could also be derived from \citen{RudelsonZhou2011}.

\begin{lemma}[Plausibility of RE and SE under Many Gaussian Instruments]\label{Lemma:GaussianDesign}   Suppose $f_i$, $i = 1,\ldots,n$, are i.i.d. zero-mean Gaussian random vectors.  Further suppose that the population Gram matrix $\Ep[f_i f_i']$ has  $s\log n$-sparse eigenvalues  bounded from above and away from zero uniformly in $n$. Then if $s\log n = o(n/\log p)$, Conditions RE and SE hold.
\end{lemma}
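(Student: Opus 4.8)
The plan is to establish Condition SE and then read off Condition RE from the implication SE $\Rightarrow$ RE recalled above (the argument of \citen{BickelRitovTsybakov2009}). It is convenient to prove the slightly stronger assertion that the \emph{empirical} $m$-sparse eigenvalues are bounded above and away from zero for $m:=\lceil s\log n\rceil$: for any fixed $C>0$ one has $\lceil Cs\rceil\leq m$ for all large $n$, so by monotonicity of $\semin{\cdot}$ and $\semax{\cdot}$ in the sparsity level the $Cs$-sparse bounds follow from the $m$-sparse ones. By hypothesis there are constants $0<\phi_*\leq\phi^*<\infty$, not depending on $n$, with $\phi_*\leq\semin{m}(\Ep[f_i f_i'])\leq\semax{m}(\Ep[f_i f_i'])\leq\phi^*$. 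For any $\delta\in\Delta(m)$, writing $T:=\supp(\delta)$ so that $|T|\leq m$, we have $|\delta'(\En[f_i f_i']-\Ep[f_i f_i'])\delta|\leq\|\En[f_{iT}f_{iT}']-\Ep[f_{iT}f_{iT}']\|_{\mathrm{op}}$, so it suffices to show
$$
\Delta_n:=\max_{T\subseteq\{1,\dots,p\},\,|T|\leq m}\big\|\En[f_{iT}f_{iT}']-\Ep[f_{iT}f_{iT}']\big\|_{\mathrm{op}}=o_{\mathrm{P}}(1);
$$
indeed, on $\{\Delta_n\leq\phi_*/2\}$ we get $\phi_*/2\leq\semin{m}(\En[f_i f_i'])\leq\semax{m}(\En[f_i f_i'])\leq 2\phi^*$, which gives Condition SE.

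To control $\Delta_n$, I would fix a support $T$ with $|T|=m$ and note that $f_{1T},\dots,f_{nT}$ are i.i.d.\ centered Gaussian vectors in $\mathbb{R}^m$ with covariance $\Sigma_T:=\Ep[f_{iT}f_{iT}']$ satisfying $\|\Sigma_T\|_{\mathrm{op}}\leq\phi^*$, while $\En[f_{iT}f_{iT}']$ is their sample covariance. A standard operator-norm concentration bound for Gaussian sample covariance matrices -- as in \citen{RudelsonVershynin2008} (the route used in \citen{BC-PostLASSO}) or \citen{RudelsonZhou2011}, and also obtainable via the union-bound argument of \citen{ZhangHuang2006} -- provides absolute constants $c_1,c_2>0$ such that
$$
\Pr\!\left(\big\|\En[f_{iT}f_{iT}']-\Sigma_T\big\|_{\mathrm{op}}>c_1\phi^*\Big(\sqrt{\tfrac{m+t}{n}}+\tfrac{m+t}{n}\Big)\right)\leq 2e^{-c_2 t},\qquad t\geq 0.
$$
I would then union-bound over the $\binom{p}{m}\leq p^m$ subsets of size $m$, taking $t=c_2^{-1}\big((m+1)\log p+\log n\big)$: the total failure probability is at most $2p^m e^{-(m+1)\log p-\log n}=2/(pn)\to 0$, and on the complementary event $\Delta_n\lesssim\phi^*\big(\sqrt{(m\log p+\log n)/n}+(m\log p+\log n)/n\big)$, uniformly over $|T|\leq m$ (monotonicity of the operator norm of principal submatrices in $|T|$ absorbing the weak inequality).

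To conclude, I would invoke the growth condition: since $m\asymp s\log n$, we have $m\log p\asymp s\log n\log p=o(n)$ by the assumption $s\log n=o(n/\log p)$, while $\log n/n\to 0$ trivially, so the bound on $\Delta_n$ above is $o(1)$ and hence $\Delta_n=o_{\mathrm{P}}(1)$. This yields Condition SE, and Condition RE follows. The step I expect to be the crux -- and the reason the hypothesis is precisely $s\log n=o(n/\log p)$ -- is the submatrix estimate: one needs an operator-norm deviation bound for each $m\times m$ principal submatrix with the union-bound-friendly rate $\sqrt{(m+t)/n}$ and a genuinely exponential tail $e^{-c_2 t}$, so that the $\log\binom{p}{m}\asymp m\log p\asymp s\log n\log p$ price of ranging over all $m$-sparse supports is still paid for. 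A coarser approach -- controlling individual entries of the submatrices and union-bounding over the $O(p^2)$ entries -- would only give a bound of order $\sqrt{s^2\log p/n}$ and so would demand $s^2\log p=o(n)$, which in general does not follow from the stated hypothesis.
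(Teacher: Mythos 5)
Your proposal is correct and takes essentially the same route as the paper: the paper defers the proof to the supplement of Belloni and Chernozhukov, which (building on Zhang and Huang, 2006, and Rudelson--Vershynin-type concentration) establishes SE by exactly this union bound over $m$-sparse supports with $m\asymp s\log n$ combined with Gaussian operator-norm concentration for the submatrices, under the same growth condition $s\log n\log p=o(n)$, and then obtains RE from SE via the Bickel--Ritov--Tsybakov argument that the paper itself invokes. Your closing remark about why the cruder entrywise union bound would demand $s^2\log p=o(n)$ correctly identifies why the submatrix-level concentration is the essential ingredient.
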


\begin{lemma}[Plausibility of RE and SE under Many Series Instruments]\label{Lemma:BoundedDesign}  Suppose $f_i$, $i=1,\ldots,n$, are i.i.d. bounded zero-mean random vectors with $\| f_i\|_\infty \leq K_B$ a.s.  Further suppose that the population Gram  matrix $\Ep[f_i f_i']$ has $s\log n$-sparse eigenvalues bounded from above and away from zero uniformly in $n$. Then if $K^2_Bs\log^2 (n)  \log^2 (s\log n)  \log (p\vee n) =  o(n)$, Conditions RE and SE hold.
\end{lemma}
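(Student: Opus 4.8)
The plan is to establish Condition SE directly; Condition RE then follows at once from the implication ``SE implies RE'' noted above (which is the argument of \citen{BickelRitovTsybakov2009}), so there is nothing separate to do for RE. Fix $C>0$. Since the maximal $m$-sparse eigenvalue of any fixed matrix is nondecreasing in $m$ and the minimal one nonincreasing, and since, with $m := \lceil s\log n\rceil$, we have $Cs \leq m$ for all large $n$, it suffices to show that $\semax{m}(\En[f_if_i'])$ stays bounded above and $\semin{m}(\En[f_if_i'])$ stays bounded away from zero, both with probability approaching one. Writing $f_{iT}$ for the subvector of $f_i$ with coordinates in a set $T\subset\{1,\ldots,p\}$ and setting
$$
\mathcal{E}_n \ :=\ \max_{|T|\leq m}\ \big\|\, \En[f_{iT}f_{iT}'] - \Ep[f_{iT}f_{iT}'] \,\big\| ,
$$
one has the deterministic inequalities $\semax{m}(\En[f_if_i']) \leq \semax{m}(\Ep[f_if_i']) + \mathcal{E}_n$ and $\semin{m}(\En[f_if_i']) \geq \semin{m}(\Ep[f_if_i']) - \mathcal{E}_n$. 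Because the population $m$-sparse eigenvalues are bounded away from $0$ and from $\infty$ uniformly in $n$ by hypothesis, the entire lemma reduces to proving $\mathcal{E}_n \inp 0$.

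Controlling $\mathcal{E}_n$ is the heart of the matter, and I would do it by a concentration-plus-union-bound argument. For a fixed index set $T$ with $|T| = m$, the summands $f_{iT}f_{iT}' - \Ep[f_{iT}f_{iT}']$, $i = 1,\ldots,n$, are i.i.d., symmetric, mean-zero $m\times m$ matrices with operator norm at most $\|f_{iT}\|_2^2 + \semax{m}(\Ep[f_if_i']) = O(mK_B^2)$, and with sum of second moments $\sum_{i=1}^n \Ep[(f_{iT}f_{iT}')^2]$ bounded in operator norm by $n m K_B^2\,\semax{m}(\Ep[f_if_i'])$ (using $(f_{iT}f_{iT}')^2 = \|f_{iT}\|_2^2\, f_{iT}f_{iT}' \preceq mK_B^2\, f_{iT}f_{iT}'$). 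A matrix Bernstein inequality for this fixed $T$, combined with a union bound over the $\binom{p}{m} \leq (ep/m)^m$ choices of $T$, already gives $\mathcal{E}_n \inp 0$, but only under a condition of order $m^2 K_B^2 \log(p\vee n) = o(n)$, which is quadratic in the sparsity and hence stronger than the (linear-in-$s$) hypothesis of the lemma. To obtain the sharp, linear-in-$m$ dependence one instead appeals to the sparse matrix-deviation estimate of \citen{RudelsonVershynin2008} --- symmetrization followed by Rudelson's noncommutative-Khintchine/entropy bound, carried out for precisely this purpose in \citen{BC-PostLASSO} (and obtainable also from \citen{RudelsonZhou2011}) --- which yields a bound of the order
$$
\Ep[\mathcal{E}_n] \ \lesssim\ \sqrt{\frac{K_B^2\, m \,\log^2 m\,\log n\,\log(p\vee n)}{n}}\,\sqrt{\semax{m}(\Ep[f_if_i'])}\ +\ \frac{K_B^2\, m\, \log^2 m\,\log n\,\log(p\vee n)}{n}.
$$

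To conclude, substitute $m = \lceil s\log n\rceil$: then $K_B^2\, m\,\log^2 m\,\log n\,\log(p\vee n) \lesssim K_B^2\, s\,\log^2 n\,\log^2(s\log n)\,\log(p\vee n)$, which is $o(n)$ by the hypothesis of the lemma; since $\semax{m}(\Ep[f_if_i'])$ is bounded uniformly in $n$, both terms above tend to zero, so $\Ep[\mathcal{E}_n] \to 0$ and hence $\mathcal{E}_n \inp 0$ by Markov's inequality. Combined with the uniform bounds on the population $m$-sparse eigenvalues, this delivers Condition SE, and therefore Condition RE, for every $C>0$. The step I expect to be the genuine obstacle is exactly the sharp control of $\mathcal{E}_n$: keeping the sparsity dependence linear (rather than quadratic) and the logarithmic factors tight enough to match the stated growth condition requires the Rudelson--Vershynin symmetrization/chaining machinery rather than an off-the-shelf matrix Bernstein bound; everything else is routine bookkeeping.
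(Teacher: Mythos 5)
Your proposal is correct and follows essentially the same route as the paper, which does not prove the lemma in-text but defers to \citen{BC-PostLASSO}, whose argument is precisely the reduction of Condition SE (and hence RE) to uniform control of sparse submatrix deviations via the Rudelson--Vershynin symmetrization/chaining bound, with the same $m=\lceil s\log n\rceil$ bookkeeping matching the stated growth condition.
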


In the context of i.i.d. sampling, a standard assumption in econometric research is that the population Gram matrix $\Ep[f_i f_i']$ has eigenvalues bounded from above and below, see e.g. \citen{newey:series}.
The lemmas above allow for this and more general behavior, requiring only that the sparse eigenvalues of the population Gram matrix $\Ep[f_i f_i']$ are bounded from below and from above. The latter is important for allowing functions $f_i$ to be formed as a combination of elements from different bases, e.g. a combination of B-splines with polynomials.  The lemmas above further show that the good behavior of the population sparse eigenvalues translates into good behavior of empirical sparse eigenvalues under some restrictions on the growth of $s$ in relation to the sample size $n$.  For example, if $p$ grows polynomially with $n$ and the components of technical regressors are uniformly bounded, Lemma \ref{Lemma:BoundedDesign} holds provided $s = o(n/\log^5 n)$.

We also impose the following moment conditions on the reduced form errors $v_{il}$ and regressors $f_i$,
where we let $\tilde d_{il} := d_{il} - \barEp[d_{il}]$.

\textbf{Condition RF.}  \textit{
(i) $   \max_{l \leq k_e, j \leq p} \ \barEp[\tilde d^2_{il}] +\barEp[ |f^2_{ij} \tilde d^2_{il}|] + 1/\barEp[f_{ij}^2 v_{il}^2]  \lesssim 1$, (ii) $   {\displaystyle \max_{l \leq k_e, j \leq p} } \barEp[|f_{ij}^3 v_{il}^3|] \lesssim K_n$, (iii) $K_n^{2}\log^3 (p\vee n) = o(n) $ and $\ s \log (p\vee n) = o(n)
$, (iv)  $  \max_{i \leq n, j \leq p}  f_{ij}^2 [s \log (p\vee n)]/n \to_{\Pr} 0$ and $ \max_{l \leq k_e, j \leq p}   |(\En - \barEp)[f_{ij}^2 v^2_{il}] |  + | (\En - \barEp)[f^2_{ij}\tilde d^2_{il}]| \to_{\Pr} 0.$}

We emphasize that the conditions given above are only one possible set of
sufficient conditions, which are presented in a manner that reduces
the complexity of the exposition.

The following lemma shows that the population and empirical moment conditions appearing
 in Condition RF are plausible for both many-instrument and many series-instrument settings. Note that we say that a  random variable $g_{i}$ has uniformly bounded conditional moments of order $K$
 if for some positive constants  $ 0< B_1 < B_2< \infty$:
$$ B_1 \leq \Ep\Big[ |g_{i}|^k \Big |\x_i \Big ] \leq B_2 \text{ with probability } 1, \ \mbox{for} \ k = 1,\ldots, K, \ i=1, \ldots, n.$$


\begin{lemma}[Plausibility of RF]\label{lemma: RF}
 1. If the moments $\barEp[\tilde d_{il}^8]$ and $\barEp[v_{il}^8]$ are bounded uniformly in $1 \leq l \leq k_e$ and in $n$, the regressors $f_i$ obey
$\max_{1 \leq j \leq p}\En[f^8_{ij}] \lesssim_{\mathrm{P}} 1$  and $\max_{1 \leq i \leq n, 1 \leq j \leq p}f_{ij}^2 \frac{s \log (p \vee n)}{n}$ $\to_{\mathrm{P}}$ $0
$, Conditions RF(i)-(iii) imply Condition RF (iv).  2. Suppose that $\{(f_i, \tilde d_i, v_i),i=1,...n\}$ are i.i.d. vectors, and that $\tilde d_{il}$ and $v_{il}$  have uniformly bounded conditional moments of order 4 uniformly in $l=1,\ldots,k_e$. (1) If the regressors $f_i$ are Gaussian as in Lemma \ref{Lemma:GaussianDesign}, Condition RF(iii) holds, and $ s \log^2 (p\vee n)/n \to 0$ then Conditions RF(i),(ii) and (iv) hold.  (2) If the regressors $f_i$ have bounded entries as in Lemma \ref{Lemma:BoundedDesign}, then Conditions RF(i),(ii) and (iv) hold under Condition RF(iii).
\end{lemma}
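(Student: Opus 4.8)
The plan is to dispatch the three parts of Condition RF separately. Condition RF(iii) is taken as given throughout. Conditions RF(i)--(ii), which require verification only in Part 2 (where conditional moments are assumed), follow by conditioning on $x_i$: by the tower property $\barEp[f_{ij}^2\tilde d_{il}^2]=\barEp[f_{ij}^2\,\Ep[\tilde d_{il}^2\,|\,x_i]]$, $\barEp[f_{ij}^2 v_{il}^2]=\barEp[f_{ij}^2\,\Ep[v_{il}^2\,|\,x_i]]$ and $\barEp[|f_{ij}^3 v_{il}^3|]=\barEp[|f_{ij}|^3\,\Ep[|v_{il}|^3\,|\,x_i]]$, so the bounds $B_1\le\Ep[\,\cdot\,|\,x_i]\le B_2$ on conditional moments of order up to four sandwich each term between constant multiples of $\barEp[f_{ij}^2]$ or of $\barEp[|f_{ij}|^3]$ (and $\barEp[\tilde d_{il}^2]\le B_2$). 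It then suffices to observe that $\min_j\barEp[f_{ij}^2]$ and $\max_j\barEp[f_{ij}^2]$ are the minimal and maximal $1$-sparse eigenvalues of $\Ep[f_if_i']$ and hence, sparse eigenvalues being monotone in the sparsity level, are bounded away from zero and from above under the hypotheses of Lemmas~\ref{Lemma:GaussianDesign}--\ref{Lemma:BoundedDesign}; that $\barEp[|f_{ij}|^3]\le K_B\,\barEp[f_{ij}^2]\lesssim K_B$ for bounded regressors (so one may take $K_n\asymp K_B$); and that $\barEp[|f_{ij}|^3]\lesssim(\barEp[f_{ij}^2])^{3/2}\lesssim1$ for Gaussian regressors (so $K_n$ may be taken constant). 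Note that $\barEp[f_{ij}^2]\asymp1$ forces $K_n\gtrsim1$, whence RF(iii) yields $\log^3(p\vee n)=o(n)$; both facts are used below.

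For Condition RF(iv), its first part $\max_{i\le n,\,j\le p}f_{ij}^2\,s\log(p\vee n)/n\to_{\mathrm{P}}0$ is one of the hypotheses of Part 1 and must be read off the design in Part 2: for Gaussian $f_i$ one has $\max_{i,j}f_{ij}^2=O_{\mathrm{P}}(\log(np))$ (the maximum of $np$ squared Gaussians with variances $\lesssim1$), reducing the requirement to $s\log^2(p\vee n)/n\to0$, which is exactly the extra hypothesis imposed in Part 2; for bounded $f_i$ one has $f_{ij}^2\le K_B^2$ and $K_B^2 s\log(p\vee n)/n\to0$ is part of the design restriction underlying Lemma~\ref{Lemma:BoundedDesign}. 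The substantive step is the second part, $\max_{l\le k_e,\,j\le p}\big(|(\En-\barEp)[f_{ij}^2 v_{il}^2]|+|(\En-\barEp)[f_{ij}^2\tilde d_{il}^2]|\big)\to_{\mathrm{P}}0$, which I would establish by truncating each summand at a slowly growing level $\tau_n\to\infty$, applying a Bernstein-type inequality to the truncated (hence bounded) part and union-bounding over the $p\,k_e$ coordinates --- its variance proxy is $O(\tau_n^3)$ and envelope $O(\tau_n^2)$, so the two resulting terms are $o(1)$ provided $\tau_n$ grows slowly enough relative to $n$, using $\log^3(p\vee n)=o(n)$ --- and controlling the empirical and population remainders by H\"older's inequality together with the eighth-moment bounds on $f_{ij}$ ($\max_j\En[f_{ij}^8]\lesssim_{\mathrm{P}}1$, and for the population remainder the envelope condition above), $v_{il}$, and $\tilde d_{il}$, with $k_e$ fixed. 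In Part 2 one first checks the two inputs to this argument --- $\max_j\En[f_{ij}^8]\lesssim_{\mathrm{P}}1$ (a law of large numbers with a union bound in the Gaussian case, $\le K_B^8$ in the bounded case) and the moment bounds on $v_{il},\tilde d_{il}$ (i.i.d.\ with bounded conditional fourth moments, $k_e$ fixed) --- and then runs the same argument.

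I expect the uniform concentration in the second part of RF(iv) to be the main obstacle. Since $p$ may be far larger than $n$, a Chebyshev-plus-union-bound argument only yields the useless rate $O_{\mathrm{P}}(\sqrt{p k_e/n})$, so one must work with an exponential-tail (Bernstein or Fuk--Nagaev) bound per coordinate; and because the summands $f_{ij}^2 v_{il}^2$ are unbounded and heavy-tailed, this requires a truncation --- possibly a multi-level one --- whose level must be large enough that the tail remainders are negligible yet small enough that the Bernstein envelope term vanishes. Threading this needle uniformly over all $p\,k_e$ pairs, and checking that every growth requirement it generates is covered by RF(iii) (and, in Part 2, by the extra hypothesis $s\log^2(p\vee n)/n\to0$ in the Gaussian case and by the stronger growth restriction of Lemma~\ref{Lemma:BoundedDesign} in the bounded case), is the delicate part. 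The remaining ingredients --- the conditioning computations for RF(i)--(ii), the Gaussian maximal inequality for $\max_{i,j}|f_{ij}|$, and the laws of large numbers for $\En[f_{ij}^8]$ --- are routine.
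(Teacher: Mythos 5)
Your overall architecture matches the paper's: RF(i)--(ii) are verified by conditioning on $x_i$ and using bounds on the (population) moments of $f_{ij}$; the first half of RF(iv) is assumed in Part 1 and read off the design in Part 2 (Gaussian maximum $\max_{i,j}f_{ij}^2\lesssim_{\mathrm{P}}\log(p\vee n)$ combined with $s\log^2(p\vee n)/n\to0$, or boundedness by $K_B$); and the crux is the uniform convergence $\max_{l,j}|(\En-\barEp)[f_{ij}^2v_{il}^2]|\to_{\mathrm{P}}0$ together with its $\tilde d$-analogue. For that crux, however, you take a genuinely different route. The paper applies its symmetrization maximal inequality (Online Appendix Lemma \ref{Thm:masterSym}) to the finite class of $pk_e$ functions: the symmetrized process is conditionally sub-Gaussian with respect to the \emph{empirical} $L_2$ norm, so the dominant term is $\sqrt{\log(pk_e)/n}\,\max_{l,j}(\En[f_{ij}^4v_{il}^4])^{1/2}$, which is controlled purely by sample moments --- via Cauchy--Schwarz with $\max_j\En[f_{ij}^8]\lesssim_{\mathrm{P}}1$ and $\En[v_{il}^8]\lesssim_{\mathrm{P}}1$ in Part 1, and via $\max_{i,j}f_{ij}^2\lesssim_{\mathrm{P}}\log(p\vee n)$ plus fourth moments of $v_{il}$ in Part 2, giving $\sqrt{\log^3(p\vee n)/n}\to0$ under $\log p=o(n^{1/3})$. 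No truncation is needed, and the $\tilde d$-term is treated identically.

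The weak point of your truncation-plus-Bernstein plan is exactly where you flagged it, but it is more than bookkeeping: the \emph{population} truncation remainder. You propose to control both remainders ``by H\"older together with the eighth-moment bounds on $f_{ij}$'' and, for the population side, ``the envelope condition''; but in Part 1 the eighth-moment bound on $f_{ij}$ is an in-probability statement about $\En[f_{ij}^8]$, and the envelope condition concerns the realized maximum --- neither can bound a population expectation such as $\max_j\barEp\bigl[f_{ij}^2v_{il}^2\,1\{f_{ij}^2v_{il}^2>\tau_n\}\bigr]$. For the $v$-term this is repairable using RF(ii): the tail is at most $\barEp[|f_{ij}|^3|v_{il}|^3]/\sqrt{\tau_n}\lesssim K_n/\sqrt{\tau_n}$, and taking $\tau_n\asymp K_n^2\ell_n$ with $\ell_n\to\infty$ slowly keeps both Bernstein terms $o(1)$ under RF(iii) (the truncated variance is $\lesssim\sqrt{\tau_n}\,K_n$, not your $O(\tau_n^3)$). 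For the $\tilde d$-term, though, Part 1 supplies no joint population moment beyond $\barEp[f_{ij}^2\tilde d_{il}^2]\lesssim1$ (the eighth moments of $\tilde d$ and $v$ are population, but those of $f$ are only empirical), and a uniformly bounded first moment does not yield the uniform-in-$j$, triangular-array integrability needed for $\max_j\barEp[f_{ij}^2\tilde d_{il}^2\,1\{f_{ij}^2\tilde d_{il}^2>\tau_n\}]\to0$. This is precisely the obstacle the paper's empirical-norm symmetrization bound is built to sidestep. In Part 2 your route is fine, since Gaussian or bounded regressors plus bounded conditional fourth moments deliver all the population product moments one needs; but for Part 1 you should either invoke the symmetrization lemma as the paper does or add a population moment assumption that delivers uniform integrability for the $\tilde d$-products.
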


\subsection{Results on  Lasso and Post-Lasso for Estimating Conditional Expectations}

We consider Lasso and Post-Lasso estimators defined in equations (\ref{Def:LASSOmain}) and (\ref{Def:TwoStep})  in the system of $k_e$ nonparametric regression equations (\ref{eq: k reg equations})
with  non-Gaussian and  heteroscedastic errors.    These results extend the previous results
of \citen{BickelRitovTsybakov2009} for Lasso and of \citen{BC-PostLASSO} for Post-Lasso with classical i.i.d. errors.
In addition, we account for the fact that we are simultaneously estimating $k_e$ regressions and account for the dependence of our results on $k_e$.

The following theorem presents the properties of Lasso.  Let us call asymptotically valid
 any penalty loadings $\hat \Upsilon_l $ that obey a.s.
\begin{equation}\label{Def:AsympValidPenaltyLoading}
\ell \hat \Upsilon^0_l \leq \hat \Upsilon_l \leq u \hat \Upsilon^0_l,
\end{equation}
with $ 0< \ell \leq 1 \leq u$ such that $\ell \to_{\mathrm{P}} 1$ and $u  \to_{\mathrm{P}} u'$ with $u' \geq 1$.  The penalty loadings constructed by Algorithm A.1 satisfy this condition.

\begin{theorem}[Rates for Lasso under Non-Gaussian and Heteroscedastic Errors]\label{Thm:RateLASSO} Suppose that in the regression model (\ref{eq: k reg equations}) Conditions AS and RF hold. Suppose
the penalty level is specified as in (\ref{the principle lambda}), and  consider any asymptotically valid penalty loadings $\hat \Upsilon $, for example, penalty loadings constructed by Algorithm A.1 stated in Appendix A. Then, the Lasso estimator $\widehat \beta_{l}= \widehat \beta_{lL}$
and the Lasso fit $\hat D_{il} = f_i' \hat \beta_{lL}$, $l= 1,...,k_e$, satisfy
$$ \max_{1 \leq l \leq k_e }\|\hat D_{il}  - D_{il}\|_{2,n}  \lesssim_{\mathrm{P}}  \frac{1}{\kappa_{\bar C}}\sqrt{\frac{s \log (k_ep/\gamma)}{ n }} \ \ \mbox{and} \ \    \max_{1 \leq l \leq k_e } \ \ \|\hat{\beta}_{l} - {\beta_{l}}_0 \|_1   \lesssim_{\mathrm{P}} \frac{1}{(\kappa_{2\bar C})^2}\sqrt{\frac{s^2 \log (k_ep/\gamma)}{ n }},
$$
where $\bar C = {\displaystyle \max_{1\leq l\leq k_e}}\{\|\hat \Upsilon_l^0\|_\infty\|(\hat\Upsilon_l^{0})^{-1}\|_\infty\}(uc+1)/(\ell c - 1)$ and $\kappa_{\bar C}=\kappa_{\bar C}(\En[f_if_i'])$. \end{theorem}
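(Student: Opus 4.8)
The plan is to follow the standard oracle-inequality route for Lasso (as in \citen{BickelRitovTsybakov2009} and \citen{BC-PostLASSO}), modified in three respects: the ``noise'' in the problem is controlled through self-normalized moderate deviations rather than Gaussian concentration, since the $v_{il}$ are heteroscedastic and non-Gaussian; the approximation errors $a_l(x_i)$ of Condition AS are carried through the inequalities and shown to be of the same order $\sqrt{s/n}$ as the leading term; and the data-dependent loadings $\hat\Upsilon_l$ are controlled via the sandwich (\ref{Def:AsympValidPenaltyLoading}), which only affects constants (in particular the cone width $\bar C$). Since $k_e$ is fixed, it suffices to prove each bound for a single $l$ on a single high-probability event and then maximize over $l$; the union bound over $l$ is absorbed into the factor $k_e p$ in $\lambda$. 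Throughout write $\hat\delta_l := \hat\beta_{lL} - \beta_{l0}$ and $S_l := 2\En[(\hat\Upsilon^0_l)^{-1} f_i v_{il}]$.

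The first and main step is to establish the regularization event on which the penalty dominates the score. The $j$-th coordinate of $S_l$ equals $2\En[f_{ij}v_{il}]/\sqrt{\En[f_{ij}^2 v_{il}^2]}$, i.e. twice a self-normalized sum of the independent mean-zero terms $f_{ij}v_{il}$. Applying the moderate-deviation inequality of \citen{jing:etal} coordinatewise, Condition RF(i)--(iii) --- in particular $K_n^2\log^3(p\vee n)=o(n)$, which is exactly what makes the relevant third-absolute-moment ratio $o(\sqrt{n/\log(k_e p/\gamma)})$ --- gives $\Pr(\sqrt n\,|S_{lj}|/2 > x)\leq 2(1-\Phi(x))(1+o(1))$ uniformly over $x$ up to order $n^{1/6}$, a range that contains $x=\Phi^{-1}(1-\gamma/(2k_e p))$ because $\log(1/\gamma)\lesssim\log(p\vee n)$ and $\log p=o(n^{1/3})$. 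Summing the $k_e p$ tail probabilities yields (\ref{eq: score bound}); since $\lambda = 2c\sqrt n\,\Phi^{-1}(1-\gamma/(2k_e p))$ by (\ref{the principle lambda}), this says that on an event $\mathcal R$ with $\Pr(\mathcal R)\to 1$ (as $\gamma\to0$) we have $\lambda/n \geq c\max_l\|S_l\|_\infty$ with $c>1$. On the same event, (\ref{Def:AsympValidPenaltyLoading}) together with RF(i),(iv) --- which keep $\|\hat\Upsilon^0_l\|_\infty$ and $\|(\hat\Upsilon^0_l)^{-1}\|_\infty$ bounded in probability, hence $\bar C = O_{\mathrm{P}}(1)$ --- transfers the domination to the feasible loadings $\hat\Upsilon_l$.

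On $\mathcal R$, optimality of $\hat\beta_{lL}$ in (\ref{Def:LASSOmain}) gives the basic inequality. Expanding $\hat Q_l$, splitting $d_{il}-f_i'\beta_{l0}=a_l(x_i)+v_{il}$, bounding $|2\En[v_{il}f_i'\hat\delta_l]|\leq\|S_l\|_\infty\|\hat\Upsilon^0_l\|_\infty\|\hat\delta_l\|_1$ and $|2\En[a_l(x_i)f_i'\hat\delta_l]|\leq 2c_s\|f_i'\hat\delta_l\|_{2,n}$ (Cauchy--Schwarz and Condition AS), and applying the triangle inequality to the penalty term with $\beta_{l0}$ supported on $T_l$, one obtains --- because $c>1$ --- both a cone condition $\hat\delta_l\in\Delta_{\bar C,T_l}$ in the regime where the score term dominates ($\hat\delta_l\in\Delta_{2\bar C,T_l}$ in general) and an inequality of the form $\|f_i'\hat\delta_l\|_{2,n}^2 \leq C\big(c_s\|f_i'\hat\delta_l\|_{2,n} + (\lambda/n)\|\hat\delta_{l,T_l}\|_1\big)$. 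Condition RE then gives $\|\hat\delta_{l,T_l}\|_1\leq\sqrt s\,\|f_i'\hat\delta_l\|_{2,n}/\kappa_{\bar C}$; substituting and solving the resulting quadratic for $\|f_i'\hat\delta_l\|_{2,n}$, using $c_s\lesssim_{\mathrm{P}}\sqrt{s/n}$, $\lambda\asymp\sqrt{n\log(k_e p/\gamma)}$, and $\Phi^{-1}(1-\gamma/(2k_e p))\asymp\sqrt{\log(k_e p/\gamma)}$, delivers the prediction-norm bound. For the $\ell_1$ bound, use $\|\hat\delta_l\|_1\leq(1+2\bar C)\|\hat\delta_{l,T_l}\|_1$ on the (wider) cone and the restricted eigenvalue at level $2\bar C$, $\|\hat\delta_{l,T_l}\|_1\leq\sqrt s\,\|f_i'\hat\delta_l\|_{2,n}/\kappa_{2\bar C}$, together with the just-obtained prediction-norm rate, to get $\|\hat\delta_l\|_1\lesssim_{\mathrm{P}}\kappa_{2\bar C}^{-2}\sqrt{s^2\log(k_e p/\gamma)/n}$. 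Taking $\max_{l\leq k_e}$ over the finitely many bounds finishes the proof.

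The substantive obstacle is the first step: producing a Gaussian-type tail bound for $\max_{l,j}|S_{lj}|$ without Gaussianity or homoscedasticity. The key device is the self-normalization built into $\hat\gamma^0_{lj}=\sqrt{\En[f_{ij}^2 v_{il}^2]}$, which turns each score coordinate into a self-normalized sum so that the Jing--Shao--Wang moderate-deviation inequality applies; the delicate point is that this inequality is only accurate up to deviations of order $n^{1/6}$, which must still exceed the required level $\Phi^{-1}(1-\gamma/(2k_e p))\asymp\sqrt{\log(p\vee n)}$ --- and this is precisely why Condition RF(iii), $K_n^2\log^3(p\vee n)=o(n)$ (equivalently $\log p=o(n^{1/3})$), is imposed. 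A secondary, more routine nuisance is propagating the gap between the feasible $\hat\Upsilon_l$ and the infeasible $\hat\Upsilon^0_l$ through every inequality without degrading the rate, which is handled by (\ref{Def:AsympValidPenaltyLoading}) and the moment conditions RF(i),(iv).
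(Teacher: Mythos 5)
Your overall route is the paper's own (Appendix C, Steps 1--4): control the score $S_l$ by the Jing--Shao--Wang moderate-deviation bound for self-normalized sums so that $\lambda/n\geq c\max_l\|S_l\|_\infty$ holds with probability tending to one under $K_n^2\log^3(p\vee n)=o(n)$; transfer this to the feasible loadings via (\ref{Def:AsympValidPenaltyLoading}) together with Condition RF keeping $\|\hat\Upsilon^0_l\|_\infty$ and $\|(\hat\Upsilon^0_l)^{-1}\|_\infty$ bounded; then derive the basic inequality, obtain the cone property in the regime $\|f_i'\hat\delta_l\|_{2,n}\geq 2c_s$, invoke the restricted eigenvalue, and solve the quadratic to get the prediction-norm rate. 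That part of your proposal is correct and essentially identical to the paper.

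The gap is in the $\ell_1$ bound, where you assert that $\hat\delta_l\in\Delta_{2\bar C,T_l}$ ``in general.'' Under Condition AS this is false: when $\|f_i'\hat\delta_l\|_{2,n}<2c_s$ the basic inequality only yields $\|\hat\Upsilon^0_l\hat\delta_{l,T_l^c}\|_1\leq c_0\|\hat\Upsilon^0_l\hat\delta_{l,T_l}\|_1+\frac{c}{\ell c-1}\frac{n}{\lambda}c_s^2$, and the additive approximation-error term cannot be folded into a cone of any fixed width --- for instance Lasso may place all of its deviation off the support ($\hat\delta_{l,T_l}=0$) with $\|\hat\delta_{l,T_l^c}\|_1$ of order $nc_s^2/\lambda$. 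For the prediction norm this regime is harmless (you may assume $\|f_i'\hat\delta_l\|_{2,n}\geq 2c_s$, else you are done), but a small prediction norm does not imply a small $\ell_1$ norm when $p\gg n$, so your appeal to the restricted eigenvalue at level $2\bar C$ is unjustified exactly in this case. The paper's Lemma~\ref{Thm:ReviewModel1} closes it by a case split: if the widened cone condition fails, then directly $\|\hat\Upsilon^0_l\hat\delta_l\|_1\leq\big(1+\frac{1}{2c_0}\big)\frac{2c}{\ell c-1}\frac{n}{\lambda}c_s^2$, which, since $c_s\lesssim_{\mathrm{P}}\sqrt{s/n}$ and $\lambda\asymp\sqrt{n\log(k_ep/\gamma)}$, is $\lesssim_{\mathrm{P}} s/\sqrt{n\log(p\vee n)}$ and hence of lower order than $\sqrt{s^2\log(k_ep/\gamma)/n}$; this is precisely the extra $\frac{3c_0n}{\lambda}c_s^2$ term in that lemma and the piece the paper flags as new for the nonparametric case. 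Your argument needs this case (or an equivalent device) to be complete; with it added, the remainder goes through as you describe.
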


The theorem provides a rate result for the Lasso estimator constructed specifically to deal with non-Gaussian errors and heteroscedasticity.
The rate result generalizes, and is as sharp as, the rate results of \citen{BickelRitovTsybakov2009} obtained for the homoscedastic Gaussian case.
 This generalization is important for real applications where non-Gaussianity and heteroscedasticity are
 ubiquitous.  Note that the obtained rate is near-optimal in the sense
 that if we happened to know the model $T_\ell$, i.e. if we knew the identities of the most important variables, we would only improve the rate by the $\log p$ factor.  The theorem also shows that the data-driven penalty loadings defined in Algorithm A.1 are asymptotically valid.

The following theorem presents the properties of Post-Lasso which requires a mild assumption on the number of additional variables in the set $\widehat I_l$, $l=1,\ldots,k_e$. We assume that the size of these sets are not substantially larger than the model selected by Lasso, namely, a.s.
\begin{equation}\label{Def:AmeliorationSets}
|\widehat I_l \setminus  \widehat T_l| \lesssim 1 \vee |\widehat T_l|, \ \ l=1,\ldots,k_e.
\end{equation}

\begin{theorem}[Rates for Post-Lasso under Non-Gaussian and Heteroscedastic Errors]\label{Thm:RatesPostLASSO}
  Suppose that in the regression model (\ref{eq: k reg equations}) Conditions AS and RF hold. Suppose
the penalty level for the Lasso estimator is specified as in (\ref{the principle lambda}), that Lasso's
penalty loadings $\hat \Upsilon $ are asymptotically valid, and the sets of additional variables obey (\ref{Def:AmeliorationSets}).  Then, the Post-Lasso estimator $\widehat \beta_{l}= \widehat \beta_{lPL}$
and the Post-Lasso fit $\hat D_{il} = f_i' \hat \beta_{lPL}$, $l= 1,...,k_e$, satisfy
$$ \max_{1 \leq l \leq k_e }\|\hat D_{il}  - D_{il}\|_{2,n} \lesssim_{\mathrm{P}} \frac{\mu}{\kappa_{\bar C}} \sqrt{\frac{s \log (k_ep/\gamma)}{ n }} \ \ \mbox{and} \ \
   \max_{1 \leq l \leq k_e } \ \ \|\hat{\beta}_{l} - {\beta_{l}}_0 \|_1   \lesssim_{\mathrm{P}} \frac{\mu^2}{(\kappa_{\bar C})^2}\sqrt{\frac{s^2 \log (k_ep/\gamma)}{ n }},
$$
where $\mu^2 = \min_{k} \{\semax{k}(\En[f_if_i'])/\semin{k+s}(\En[f_if_i']): k > 18 \bar C^2s\semax{k}(\En[f_if_i'])/(\kappa_{\bar C})^2\}$ for $\bar{C}$ defined in Theorem \ref{Thm:RateLASSO}. \end{theorem}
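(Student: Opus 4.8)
The plan is to follow the post-model-selection argument of \citen{BC-PostLASSO}, adapting it in two respects: the sampling noise is controlled through the self-normalized moderate-deviation bound (\ref{eq: score bound}) rather than Gaussian concentration, and every estimate is taken uniformly over the $k_e$ equations and over the enlarged supports $\widehat I_l\supseteq\widehat T_l$. Work throughout on the intersection of the probability-approaching-one events in Conditions RE, SE, RF, the validity (\ref{Def:AsympValidPenaltyLoading}) of the penalty loadings, the score bound (\ref{eq: score bound}), and the conclusions of Theorem \ref{Thm:RateLASSO}; on this event it suffices to argue deterministically. First I would establish a \emph{sparsity bound} for the selected model: from the first-order conditions of (\ref{Def:LASSOmain}), the penalty choice (\ref{the principle lambda}), the validity of the loadings, and the two-sided bound on the ideal loadings supplied by Condition RF, one obtains---exactly as in \citen{BC-PostLASSO}---a self-referential inequality for $\widehat s_l:=|\widehat T_l|$ of the form $\widehat s_l\lesssim s\,\semax{\widehat s_l}(\En[f_if_i'])/(\kappa_{\bar C})^2$, which upon closing the recursion via the sub-additivity of the sparse-eigenvalue map and Condition SE yields $\max_{l\le k_e}\widehat s_l\lesssim s$ with the sharp constant encoded by the minimizing index in the definition of $\mu^2$; combined with (\ref{Def:AmeliorationSets}) this gives $\widehat m_l:=|\widehat I_l\cup T_l|\lesssim s$ uniformly in $l$, so $\widehat\delta_l:=\widehat\beta_{lPL}-\beta_{l0}$ is supported on $\widehat m_l\lesssim s$ coordinates. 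Second, since Condition RF keeps the ideal loadings $\widehat\gamma^0_{lj}=\sqrt{\En[f_{ij}^2v_{il}^2]}$ uniformly bounded above and away from zero, (\ref{eq: score bound}) together with $\log(1/\gamma)\lesssim\log(p\vee n)$ gives the noise bound $\max_{l\le k_e,\,j\le p}|\En[f_{ij}v_{il}]|\lesssim_{\mathrm{P}}\sqrt{\log(k_ep/\gamma)/n}$.

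For the prediction-norm rate, write $\zeta_{il}:=d_{il}-f_i'\beta_{l0}=a_l(x_i)+v_{il}$; since $\widehat\delta_l$ is supported on $\widehat I_l\cup T_l$, expanding the squares yields $\|f_i'\widehat\delta_l\|_{2,n}^2=\hat Q_l(\widehat\beta_{lPL})-\hat Q_l(\beta_{l0})+2\En[\zeta_{il}f_i'\widehat\delta_l]$. Because $\widehat\beta_{lPL}$ minimizes $\hat Q_l$ over vectors supported on $\widehat I_l\supseteq\widehat T_l$ while $\widehat\beta_{lL}$ is feasible for that problem, $\hat Q_l(\widehat\beta_{lPL})\le\hat Q_l(\widehat\beta_{lL})$, and the Lasso optimality inequality together with (\ref{the principle lambda}) and the $\ell_1$-rate of Theorem \ref{Thm:RateLASSO} gives $\hat Q_l(\widehat\beta_{lL})-\hat Q_l(\beta_{l0})\le(\lambda/n)\|\widehat\Upsilon_l(\widehat\beta_{lL}-\beta_{l0})\|_1\lesssim_{\mathrm{P}}s\log(k_ep/\gamma)/n$. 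Splitting $\zeta_{il}=a_l(x_i)+v_{il}$ in the cross term, Cauchy--Schwarz and Condition AS give $|\En[a_l(x_i)f_i'\widehat\delta_l]|\le c_s\|f_i'\widehat\delta_l\|_{2,n}\lesssim_{\mathrm{P}}\sqrt{s/n}\,\|f_i'\widehat\delta_l\|_{2,n}$, while the sparsity of $\widehat\delta_l$ and Condition SE give
$$|\En[v_{il}f_i'\widehat\delta_l]|\le\sqrt{\frac{\widehat m_l}{\semin{\widehat m_l}(\En[f_if_i'])}}\Big(\max_{j}|\En[f_{ij}v_{il}]|\Big)\|f_i'\widehat\delta_l\|_{2,n}\lesssim_{\mathrm{P}}\frac{\mu}{\kappa_{\bar C}}\sqrt{\frac{s\log(k_ep/\gamma)}{n}}\,\|f_i'\widehat\delta_l\|_{2,n},$$
where the factor $\mu/\kappa_{\bar C}$ is exactly what the eigenvalue bookkeeping from the sparsity step produces. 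Collecting terms, $t_l:=\|f_i'\widehat\delta_l\|_{2,n}$ satisfies, uniformly in $l$, $t_l^2\lesssim_{\mathrm{P}}s\log(k_ep/\gamma)/n+(\mu/\kappa_{\bar C})\sqrt{s\log(k_ep/\gamma)/n}\;t_l$ (the approximation-error contribution being of smaller order), and solving this quadratic inequality gives $\max_l t_l\lesssim_{\mathrm{P}}(\mu/\kappa_{\bar C})\sqrt{s\log(k_ep/\gamma)/n}$. The $\ell_1$-rate then follows from $\|\widehat\beta_{lPL}-\beta_{l0}\|_1\le\sqrt{\widehat m_l}\,\|\widehat\delta_l\|_2\le\sqrt{\widehat m_l/\semin{\widehat m_l}(\En[f_if_i'])}\;t_l$ together with the sparsity bound, which inserts the additional factor $\mu\sqrt{s}/\kappa_{\bar C}$ and yields $\max_l\|\widehat\beta_{lPL}-\beta_{l0}\|_1\lesssim_{\mathrm{P}}(\mu^2/(\kappa_{\bar C})^2)\sqrt{s^2\log(k_ep/\gamma)/n}$.

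The main obstacle is the sparsity step. The bound on the number of variables Lasso selects is self-referential through $\semax{\widehat s_l}(\En[f_if_i'])$, so closing it cleanly requires the sub-additivity and continuity of the sparse-eigenvalue map together with Condition SE; moreover it must be carried out for the data-driven, merely asymptotically valid loadings rather than the ideal ones, which is handled by sandwiching through (\ref{Def:AsympValidPenaltyLoading}) and the uniform two-sided bound on $\widehat\gamma^0_{lj}$ from Condition RF, and simultaneously across the $k_e$ equations so that only the $\log(k_ep/\gamma)$ factor appears. Once $\max_l\widehat s_l\lesssim s$ is secured with the constant tracked through the minimizing index in $\mu^2$, the remaining steps are routine combinations of Cauchy--Schwarz, the sparse-eigenvalue bounds, the approximation-error control of Condition AS, the moderate-deviation score bound (\ref{eq: score bound}), and the Lasso rates of Theorem \ref{Thm:RateLASSO}.
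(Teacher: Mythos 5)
Your proposal is essentially sound and rests on the same two pillars as the paper's argument: the sparsity bound for the Lasso-selected model (obtained from the Lasso KKT conditions, closed by sub-linearity of sparse eigenvalues, and carried out for the data-driven loadings via the sandwich (\ref{Def:AsympValidPenaltyLoading}) and the RF bounds on $\hat\gamma^0_{lj}$), and the control of $\max_{l,j}|\En[f_{ij}v_{il}]|$ by the self-normalized moderate-deviation bound. Where you differ is in the Post-Lasso oracle inequality itself: the paper decomposes $D_l - F\hat\beta_{lPL} = (I-\mathcal{P}_{\hat I_l})D_l - \mathcal{P}_{\hat I_l}v_l$, bounds $\|\mathcal{P}_{T_l}v_l\|_2$ and $\|\mathcal{P}_{\hat I_l\setminus T_l}v_l\|_2$ through sparse eigenvalues and the score bound, and controls the approximation term $\|(I-\mathcal{P}_{\hat I_l})D_l\|_2$ by the Lasso \emph{prediction-norm} bound (feasible because $\widehat T_l\subseteq\widehat I_l$), so only $\kappa_{\bar C}$ enters; you instead use the basic inequality $\hat Q_l(\widehat\beta_{lPL})\le\hat Q_l(\widehat\beta_{lL})$, expand $\hat Q_l$, and apply Cauchy--Schwarz on the sparsely supported cross term, which is an equally valid route and arguably more elementary. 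The one place you should tighten the bookkeeping is your bound $\hat Q_l(\widehat\beta_{lL})-\hat Q_l(\beta_{l0})\le(\lambda/n)\|\widehat\Upsilon_l(\widehat\beta_{lL}-\beta_{l0})\|_1$ combined with the full $\ell_1$-rate of Theorem \ref{Thm:RateLASSO}: that rate carries $1/\kappa_{2\bar C}^2$, so your prediction-norm constant becomes $\max\{1/\kappa_{2\bar C},\mu/\kappa_{\bar C}\}$ rather than the stated $\mu/\kappa_{\bar C}$; this is repaired by using the sharper consequence of Lasso optimality, $\hat Q_l(\widehat\beta_{lL})-\hat Q_l(\beta_{l0})\le(\lambda/n)\|\widehat\Upsilon_l(\widehat\beta_{lL}-\beta_{l0})_{T_l}\|_1\lesssim(\lambda/n)\sqrt{s}\,\|f_i'(\widehat\beta_{lL}-\beta_{l0})\|_{2,n}/\kappa_{\bar C}$, i.e.\ by routing through the prediction-norm rate as the paper does. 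Relatedly, you should not condition on Conditions RE/SE events: the theorem does not assume them, and the eigenvalue behavior is meant to be absorbed entirely into the explicit factor $\mu$ (and $\kappa_{\bar C}$), exactly as your own tracking of the minimizing index in $\mu^2$ already permits.
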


The theorem provides a rate result for the Post-Lasso estimator with non-Gaussian errors and heteroscedasticity. The rate result generalizes the results of \citen{BC-PostLASSO} obtained for the homoscedastic Gaussian case.
The Post-Lasso achieves the same near-optimal rate of convergence as Lasso.  As stressed in the introductory sections, our analysis  allows Lasso to make model selection mistakes which is expected generically. We show that these model selection mistakes are small enough to allow the Post-Lasso estimator to perform as well as Lasso.\footnote{Under further conditions stated in proofs, Post-Lasso can sometimes achieve a faster rate of convergence. In special cases where perfect model selection is possible, Post-Lasso becomes the so-called oracle estimator and can completely remove the $\log p$ factor.}

Rates of convergence in different norms can also be of interest in other applications. In particular, the $\ell_2$-rate of convergence can be derived
from the  rate of convergence in the prediction norm and Condition SE using a sparsity result for Lasso established in Appendix \ref{AppendixThm2}. Below we specialize the previous theorems to the important case that Condition SE holds. 

\begin{corollary}[Rates for Lasso and Post-Lasso under SE]\label{CorollaryL2}
Under the conditions of Theorem \ref{Thm:RatesPostLASSO} and Condition SE, the Lasso and Post-Lasso estimators satisfy
{\small $$ \max_{1 \leq l \leq k_e }\|\hat D_{il}  - D_{il}\|_{2,n}   \lesssim_{\mathrm{P}}  \sqrt{\frac{s \log (p\vee n)}{ n }}, $$ $$\ \
  \max_{1 \leq l \leq k_e } \|\hat{\beta}_{l} - {\beta_{l}}_0 \|_2   \lesssim_{\mathrm{P}} \sqrt{\frac{s \log (p\vee n)}{ n }}, \ \
  \max_{1 \leq l \leq k_e }  \|\hat{\beta}_{l} - {\beta_{l}}_0 \|_1   \lesssim_{\mathrm{P}} \sqrt{\frac{s^2 \log (p\vee n)}{ n }}.
$$}
\end{corollary}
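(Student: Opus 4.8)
The plan is to read Corollary \ref{CorollaryL2} off Theorems \ref{Thm:RateLASSO} and \ref{Thm:RatesPostLASSO} by showing that, under Condition SE, every design-dependent quantity in those bounds is bounded above and away from zero in probability, and then by upgrading the prediction-norm rate to an $\ell_2$-rate via a sparsity bound for Lasso. First I would argue $\bar C = O_{\mathrm P}(1)$: by Condition RF the ideal loadings $\hat\gamma^0_{lj}=\sqrt{\En[f_{ij}^2 v_{il}^2]}$ are bounded above and away from zero uniformly in $l\leq k_e,\,j\leq p$ with probability approaching one (RF(i) bounds $\barEp[f_{ij}^2 v_{il}^2]$ below, and $\barEp[f_{ij}^2 v_{il}^2]\leq\barEp[f_{ij}^2\tilde d_{il}^2]\lesssim 1$ bounds it above, while RF(iv) transfers these to the empirical versions), and $(uc+1)/(\ell c-1)\to_{\mathrm P}(u'c+1)/(c-1)$ by asymptotic validity of the loadings; hence $\bar C\leq\bar C^*$ with probability approaching one for a constant $\bar C^*$. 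Since Condition SE implies Condition RE and $\kappa_C(\cdot)$ is non-increasing in $C$, this gives $\kappa_{\bar C}(\En[f_if_i'])\geq\kappa_{2\bar C}(\En[f_if_i'])\geq\kappa_{2\bar C^*}(\En[f_if_i'])\geq\kappa>0$ with probability approaching one. Finally, since $k_e$ is fixed and $\log(1/\gamma)\lesssim\log(p\vee n)$ by (\ref{the principle lambda}), we have $\log(k_ep/\gamma)\lesssim\log(p\vee n)$.

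With these facts, Theorem \ref{Thm:RateLASSO} immediately yields the Lasso prediction-norm rate $\max_l\|\hat D_{il}-D_{il}\|_{2,n}\lesssim_{\mathrm P}\sqrt{s\log(p\vee n)/n}$ and the $\ell_1$-rate $\max_l\|\hat\beta_l-\beta_{l0}\|_1\lesssim_{\mathrm P}\sqrt{s^2\log(p\vee n)/n}$. For the Post-Lasso analogues I would in addition establish $\mu=O_{\mathrm P}(1)$: taking $k$ of order $s$ in the defining minimum (feasibility of such a $k$ under Condition SE being precisely the content of the sparsity argument underlying Theorem \ref{Thm:RatesPostLASSO}), Condition SE gives $\mu^2\leq\semax{k}(\En[f_if_i'])/\semin{k+s}(\En[f_if_i'])=O_{\mathrm P}(1)$. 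Plugging $\mu=O_{\mathrm P}(1)$ and $\kappa_{\bar C}\geq\kappa>0$ into Theorem \ref{Thm:RatesPostLASSO} delivers the Post-Lasso prediction-norm and $\ell_1$ rates in the same form.

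It remains to produce the $\ell_2$-rate, the only genuinely new step. I would invoke the sparsity bound for Lasso from Appendix \ref{AppendixThm2}: under Condition SE, with probability approaching one, $\hat s_l:=\|\hat\beta_{lL}\|_0\leq\bar L s$ uniformly in $l\leq k_e$ for a constant $\bar L$. Put $\delta_l:=\hat\beta_l-\beta_{l0}$. For Lasso, $\|\delta_l\|_0\leq\|\beta_{l0}\|_0+\hat s_l\leq(1+\bar L)s$; for Post-Lasso, $\hat\beta_{lPL}$ is supported on $\widehat I_l$ with $|\widehat I_l|\leq|\widehat T_l|+|\widehat I_l\setminus\widehat T_l|\lesssim\hat s_l$ by (\ref{Def:AmeliorationSets}), so again $\|\delta_l\|_0\leq\bar M s$ with probability approaching one for a constant $\bar M$. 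On the corresponding event, the definition of the minimal sparse eigenvalue together with Condition SE (applied with $C=\bar M$) gives
$$\|f_i'\delta_l\|_{2,n}^2=\delta_l'\En[f_if_i']\delta_l\geq\semin{\bar M s}(\En[f_if_i'])\,\|\delta_l\|_2^2\geq\kappa'\,\|\delta_l\|_2^2,$$
hence $\|\delta_l\|_2\leq\|f_i'\delta_l\|_{2,n}/\sqrt{\kappa'}$. Since $\|f_i'\delta_l\|_{2,n}\leq\|\hat D_{il}-D_{il}\|_{2,n}+\|a_l(x_i)\|_{2,n}$ with $\|a_l(x_i)\|_{2,n}\leq c_s\lesssim_{\mathrm P}\sqrt{s/n}$ by Condition AS, combining with the prediction-norm bounds above yields $\max_l\|\hat\beta_l-\beta_{l0}\|_2\lesssim_{\mathrm P}\sqrt{s\log(p\vee n)/n}$ for both estimators.

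The one substantive input is the sparsity bound $\hat s_l\lesssim_{\mathrm P}s$ for Lasso under heteroscedastic, non-Gaussian errors and data-driven loadings — bounding how many regressors Lasso retains is where Condition SE does its real work — so that is where I expect the effort to lie; given it, together with Theorems \ref{Thm:RateLASSO}--\ref{Thm:RatesPostLASSO}, the corollary is essentially bookkeeping.
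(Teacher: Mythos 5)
Your proposal is correct and follows essentially the same route the paper intends: specialize Theorems \ref{Thm:RateLASSO} and \ref{Thm:RatesPostLASSO} under Condition SE (which controls $\kappa_{\bar C}$ and $\mu$), and obtain the $\ell_2$-rate from the prediction-norm rate via the Lasso sparsity bound of Appendix \ref{AppendixThm2} and the minimal sparse eigenvalue, exactly as the paper does for the $\ell_1$-bound in Step 3 of that appendix. No gaps worth noting.
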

The rates of convergence in the prediction norm and $\ell_2$-norm are faster than the rate of convergence in the $\ell_1$-norm which is typical of  high dimensional settings.

\section{Main Results on IV Estimation}  In this section we present our main inferential results
on instrumental variable estimators.

\subsection{The IV estimator with Lasso-based  instruments}
We impose the following moment conditions on the instruments, the structural errors,
and regressors.

\textbf{Condition SM.}  \textit{ (i) The eigenvalues of $Q=\barEp[D(\x_i)D(\x_i)']$ are bounded uniformly from above and away from zero, uniformly in $n$. The conditional variance
$\Ep[\epsilon_i^2|\x_i]$ is bounded uniformly from above and away from zero, uniformly in $i$ and $n$. Given this assumption,
without loss of generality, we normalize the instruments so that
 $\barEp[f_{ij}^2\epsilon_i^2] =1$ for each $1 \leq j \leq p$  and for all $n$.
(ii)  For some $q>2$ and $q_{\epsilon} > 2$, uniformly in $n$,
$$
\max_{1\leq j \leq p}\barEp[ |f_{ij}\epsilon_i|^3] + \barEp[\|D_i\|^q_2 |\epsilon_i|^{2q}] + \barEp[ \| D_i\|^q_2] + \barEp[|\epsilon_i|^{q_{\epsilon}}] + \barEp[ \|d_i\|^q_2]  \lesssim 1.
$$
(iii) In addition to $\log^3 p = o(n)$, the following growth conditions hold:
$$(a) \ \frac{s \log (p\vee n) }{n} n^{2/q_{\epsilon}} \to 0 \ \ (b) \ \frac{s^2 \log^2 (p\vee n)}{n} \to 0, \ \ (c) \max_{1\leq j \leq p} \En[f^2_{ij} \epsilon^2_i] \lesssim_P 1.
$$
}

\begin{remark}(On Condition SM)
Condition SM(i) places restrictions on the variation of the structural errors $(\epsilon)$ and the optimal instruments $(D(x))$.  The first condition about the variation in the optimal instrument guarantees that identification is strong; that is, it ensures that the conditional expectation of the endogenous variables given the instruments is a nontrivial function of the instruments.  This assumption rules out non-identification in which case $D(x)$ does not depend on $x$ and weak-identification in which case $D(x)$ would be local to a constant function.  We present an inference procedure that remains valid without this condition in Section 4.2.
The remaining restriction in Condition SM(i) requires that structural errors are boundedly heteroscedastic.  Given this
we make a normalization assumption on the instruments. This entails no loss of generality
since this is equivalent to suitably rescaling the parameter space for coefficients
$\beta_{l0}, \ l =1,..., k_e$, via an isomorphic transformation.
We use this normalization to simplify notation in the proofs but do not use it in the construction of the estimators.  Condition SM(ii) imposes  some mild moment assumptions. Condition SM(iii) strengthens the growth requirement
 $s \log p/n \to 0$ needed for estimating conditional expectations.
However, the restrictiveness of  Condition SM(iii)(a) rapidly decreases
 as the number of bounded moments of the structural error increases.
 Condition SM(iii)(b) indirectly requires the optimal
instruments in Condition AS to be smooth enough that the number of unknown series terms $s$ needed
to approximate them well is not too large.  This condition ensures that the impact of the instrument
estimation on the IV estimator is asymptotically negligible.  This condition can be relaxed using
the sample-splitting method. \end{remark}

The following lemma shows that the moment assumptions in Condition SM (iii) are plausible for both many-instrument and many series-instrument settings.

\begin{lemma}[Plausibility of SM(iii)]\label{lemma: SM}
Suppose that the structural disturbance $\epsilon_i$ has
uniformly bounded conditional moments of order 4 uniformly in $n$ and that $s^2\log^2(p\vee n) = o(n)$. Then Condition SM(iii) holds if  (1) the regressors $f_i$ are Gaussian
as in Lemma 1 or (2) the regressors $f_i$ are arbitrary i.i.d. vectors with bounded entries as in Lemma 2.
\end{lemma}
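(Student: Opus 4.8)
The plan is to verify, in order, the three growth requirements that make up Condition SM(iii) given the maintained requirement $\log^3 p = o(n)$: (a) $s\log(p\vee n)n^{2/q_\epsilon}/n \to 0$, (b) $s^2\log^2(p\vee n)/n\to0$, and (c) $\max_{1\le j\le p}\En[f_{ij}^2\epsilon_i^2]\lesssim_{\mathrm{P}}1$. Item (b) is literally the hypothesis. For (a), I would take $q_\epsilon=4$, which is admissible because $\epsilon_i$ has uniformly bounded conditional moments of order $4$ and hence $\barEp[|\epsilon_i|^{q_\epsilon}]\lesssim1$; then $n^{2/q_\epsilon}=n^{1/2}$ and (a) reduces to $s\log(p\vee n)/\sqrt n\to0$, which is equivalent to (b). So the only real content is (c).

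For (c), I would first note that $\En[f_{ij}^2\epsilon_i^2]=\barEp[f_{ij}^2\epsilon_i^2]+(\En-\barEp)[f_{ij}^2\epsilon_i^2]$ and that the mean term is harmless: since $f_{ij}=f_j(x_i)$, $\barEp[f_{ij}^2\epsilon_i^2]=\barEp[f_{ij}^2\,\Ep[\epsilon_i^2\mid x_i]]$ lies between $B_1\barEp[f_{ij}^2]$ and $B_2\barEp[f_{ij}^2]$, and $\barEp[f_{ij}^2]$ is a diagonal entry of the population Gram matrix, hence bounded away from $0$ and $\infty$ uniformly in $j$ and $n$ by the one-sparse eigenvalue bounds assumed in Lemmas \ref{Lemma:GaussianDesign} and \ref{Lemma:BoundedDesign} (these bounds also make the SM(i) normalization $\barEp[f_{ij}^2\epsilon_i^2]=1$ admissible, and they are preserved by the rescaling). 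Thus (c) follows once I show $\max_{j\le p}|(\En-\barEp)[f_{ij}^2\epsilon_i^2]|=o_{\mathrm{P}}(1)$.

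This last bound I would obtain by the same truncation argument used to verify Condition RF(iv) in Lemma \ref{lemma: RF}, with $\epsilon_i$ now playing the role of $v_{il}$. Pick $M_n\to\infty$ and split $f_{ij}^2\epsilon_i^2$ according to whether $\epsilon_i^2\le M_n$. The tail part is at most $(\max_{i\le n,j\le p}f_{ij}^2)\,\En[\epsilon_i^4]/M_n$, with $\En[\epsilon_i^4]=O_{\mathrm{P}}(1)$ by Markov's inequality (using $\barEp[\epsilon_i^4]\lesssim1$) and $\max_{i,j}f_{ij}^2\lesssim1$ under bounded entries, resp. $\max_{i,j}f_{ij}^2\lesssim_{\mathrm{P}}\log(p\vee n)$ in the Gaussian case (union bound over at most $pn$ Gaussians with uniformly bounded variances); this is $o_{\mathrm{P}}(1)$ provided $M_n$ grows faster than $\max_{i,j}f_{ij}^2$. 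The truncated summands are bounded by $L_n:=M_n\max_{i,j}f_{ij}^2$ and have variance at most $\max_j\barEp[f_{ij}^4\epsilon_i^4]\le B_2\max_j\barEp[f_{ij}^4]\lesssim1$ (using bounded entries, resp. $\barEp[f_{ij}^4]=3(\barEp[f_{ij}^2])^2\lesssim1$ for Gaussians), so Bernstein's inequality combined with a union bound over $j\le p$ controls their centered average by $O_{\mathrm{P}}(\sqrt{\log p/n}+L_n\log p/n)$. Finally, since $\log^3 p=o(n)$ forces $\log(p\vee n)=o(n^{1/3})$, there is enough room to choose $M_n\to\infty$ such that simultaneously $\max_{i,j}f_{ij}^2/M_n\to_{\mathrm{P}}0$ and $L_n\log p/n\to0$ (for instance $M_n\asymp\sqrt{n/\log(p\vee n)}$ works in both cases); combining the two pieces gives the desired $o_{\mathrm{P}}(1)$ bound, hence (c).

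I expect the deviation bound in (c) to be the only genuine obstacle. When $p\gg n$ a plain second-moment-plus-union-bound argument does not close, so one must exploit the four bounded conditional moments of $\epsilon_i$ through truncation to produce a Bernstein-type tail that survives a union bound over $p$ coordinates; items (a), (b), and the admissibility of the normalization are routine consequences of the stated growth condition and the eigenvalue bounds of Lemmas \ref{Lemma:GaussianDesign}--\ref{Lemma:BoundedDesign}.
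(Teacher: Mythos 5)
Your treatment of parts (a) and (b) coincides with the paper's: take $q_\epsilon=4$ so that (a) reduces to $s\log(p\vee n)/\sqrt{n}\to 0$, which is implied by the hypothesis $s^2\log^2(p\vee n)=o(n)$. For part (c), however, you take a genuinely different route. The paper never forms the deviation $(\En-\barEp)[f_{ij}^2\epsilon_i^2]$ at all: it applies Cauchy--Schwarz inside the empirical expectation, $\max_j \En[f_{ij}^2\epsilon_i^2]\le(\En[\epsilon_i^4])^{1/2}\max_j(\En[f_{ij}^4])^{1/2}$, bounds $\En[\epsilon_i^4]\lesssim_{\mathrm{P}}1$ by Markov using the bounded conditional fourth moments, and bounds $\max_j\En[f_{ij}^4]\lesssim_{\mathrm{P}}1$ trivially in the bounded-regressor case and via Gaussian concentration of the $1$-Lipschitz map $\|f_{\cdot j}\|_4$ plus a union bound (using $\log^2 p=o(n)$) in the Gaussian case. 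Since SM(iii)(c) only asks for an $O_{\mathrm{P}}(1)$ bound, not consistency, this sidesteps the "genuine obstacle" you identify; your claim that a deviation bound is unavoidable when $p\gg n$ is therefore an overstatement. Your truncation-plus-Bernstein argument does work and in fact proves more (uniform closeness of $\En[f_{ij}^2\epsilon_i^2]$ to its mean, in the spirit of RF(iv)), at the cost of extra machinery: you should note that in the Gaussian case the Bernstein envelope $L_n=M_n\max_{i,j}f_{ij}^2$ is random, so the argument has to be run conditional on the high-probability event $\{\max_{i,j}f_{ij}^2\lesssim\log(p\vee n)\}$ (or with $f_{ij}$ truncated as well), and the population mean of the tail piece should be dispatched via $\Ep[\epsilon_i^2 1\{\epsilon_i^2>M_n\}\mid x_i]\le B_2/M_n$; both are routine repairs. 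In short: your proof is correct once these details are filled in, but the paper's Cauchy--Schwarz decoupling buys a much shorter argument that needs only the marginal empirical fourth moments of $\epsilon_i$ and $f_{ij}$, while your approach buys a stronger uniform law of large numbers that the lemma does not actually require.
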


The first result describes the properties of the IV estimator
with the optimal IV constructed using Lasso or Post-Lasso in the
setting of the standard model. The result
also provides a consistent estimator for the asymptotic variance of this estimator
under heteroscedasticity.

\begin{theorem}[Inference with Optimal IV Estimated by Lasso or Post-Lasso]\label{theorem 1}
Suppose that data $(y_i, \x_i, d_i)$ are i.n.i.d. and obey the linear IV model described in Section 2.
Suppose also that Conditions AS, RF, SM, (\ref{the principle lambda}) and (\ref{Def:AsympValidPenaltyLoading}) hold. To construct the estimate of the optimal instrument, suppose that Condition RE holds in the case of using Lasso or that Condition SE and (\ref{Def:AmeliorationSets}) hold in the case of using Post-Lasso.
Then the IV estimator $\hat \alpha$, based on either Lasso
or Post-Lasso estimates of the optimal instrument, is root-$n$ consistent and asymptotically normal:
$$
 (Q^{-1}\Omega Q^{-1})^{-1/2}\sqrt{n}(\hat \alpha - \alpha_0) \to_d N(0,I),
$$
for $\Omega :=  \barEp[\epsilon_i^2 D(\x_i) D(\x_i)']$ and $Q:= \barEp[D(\x_i) D(\x_i)']$.
Moreover, the result above continues to hold with
  $\Omega$ replaced by $\hat \Omega := \En [\hat \epsilon_i^2 \widehat D(\x_i) \widehat D(\x_i)'] $  for  $\hat \epsilon_i =
y_i - d_i'\hat \alpha$, and $Q$ replaced by $\hat Q: = \En [\widehat D(\x_i) \widehat D(\x_i)']$. In the case that the structural error $\epsilon_i$ is homoscedastic conditional on $x_i$, that is, $E[\epsilon_i^2|\x_i] = \sigma^2$ a.s. for all $i=1,...,n$, the IV estimator $\widehat \alpha$ based on either Lasso or Post-Lasso estimates of the optimal instrument is root-$n$ consistent, asymptotically normal, and achieves the efficiency bound:
$
(\Lambda^*)^{-1/2}\sqrt{n}(\hat \alpha - \alpha_0) \to_d N(0, I)
$
where $\Lambda^* := \sigma^2 Q^{-1}$. The result above continues to hold  with $\Lambda^*$ replaced by  $\hat \Lambda^* :=\hat \sigma^2 \hat Q^{-1}$, where $\hat Q:= \En [\widehat D(\x_i) \widehat D(\x_i)']$ and $\hat\sigma^2 := \En[(y_i-d_i'\widehat\alpha)^2]$.
\end{theorem}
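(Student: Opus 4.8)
The plan is to establish the asymptotic linear representation $\sqrt{n}(\hat\alpha - \alpha_0) = Q^{-1}\, \En[D(\x_i)\epsilon_i]\sqrt{n} + o_{\mathrm{P}}(1)$ and then invoke a Lyapunov-type central limit theorem for i.n.i.d. triangular arrays. Writing $\hat\alpha = \En[\hat D_i d_i']^{-1}\En[\hat D_i y_i]$ and substituting $y_i = d_i'\alpha_0 + \epsilon_i$, we get $\sqrt{n}(\hat\alpha-\alpha_0) = \En[\hat D_i d_i']^{-1}\,\sqrt{n}\,\En[\hat D_i \epsilon_i]$. The first step is to control the ``denominator'': show $\En[\hat D_i d_i'] \to_{\mathrm{P}} Q$. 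Since $d_i = D_i + v_i$ (stacking the endogenous and exogenous blocks, with $v_i$ zero in the exogenous block), this decomposes into $\En[\hat D_i D_i'] + \En[\hat D_i v_i']$. Using $\hat D_i = D_i + (\hat D_i - D_i)$, Condition AS, the near-oracle rate $\|\hat D_{il}-D_{il}\|_{2,n}\lesssim_{\mathrm{P}}\sqrt{s\log(p\vee n)/n}$ from Theorems \ref{Thm:RateLASSO}--\ref{Thm:RatesPostLASSO} (or Corollary \ref{CorollaryL2}), and Cauchy–Schwarz together with the moment bounds in Condition SM(ii), each cross term is $o_{\mathrm{P}}(1)$; the leading term $\En[D_i D_i'] \to_{\mathrm{P}} Q$ by a law of large numbers using the bounded eigenvalues in Condition SM(i). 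The key term $\En[\hat D_i v_i']$ is handled by splitting $\hat D_i = D_i + (\hat D_i - D_i)$: $\En[D_i v_i']\to_{\mathrm{P}}0$ by LLN (since $\Ep[v_{il}|\x_i]=0$), and $\En[(\hat D_i-D_i)v_i']$ is bounded via Cauchy–Schwarz by the product of prediction-norm rates, which is $O_{\mathrm{P}}(s\log(p\vee n)/n) = o_{\mathrm{P}}(1)$ under Condition SM(iii)(b).

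The crux is the ``numerator'': $\sqrt{n}\,\En[\hat D_i \epsilon_i] = \sqrt{n}\,\En[D_i\epsilon_i] + \sqrt{n}\,\En[(\hat D_i - D_i)\epsilon_i]$. The first piece is exactly what feeds the CLT. The hard part is showing $\sqrt{n}\,\En[(\hat D_i-D_i)\epsilon_i] = o_{\mathrm{P}}(1)$ — this is the statement that first-stage estimation error does not contaminate the second-stage asymptotics. I would write $\hat D_{il}-D_{il} = f_i'(\hat\beta_l - \beta_{l0}) - a_l(x_i)$ and bound $\sqrt{n}|\En[(f_i'(\hat\beta_l-\beta_{l0}))\epsilon_i]| \leq \sqrt{n}\,\|\hat\beta_l-\beta_{l0}\|_1\,\max_j|\En[f_{ij}\epsilon_i]|$. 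By the normalization $\barEp[f_{ij}^2\epsilon_i^2]=1$ in Condition SM(i) and moderate-deviation theory for self-normalized sums (as used for the bound \eqref{eq: score bound}), $\max_j|\En[f_{ij}\epsilon_i]| \lesssim_{\mathrm{P}} \sqrt{\log p/n}$; combined with $\|\hat\beta_l-\beta_{l0}\|_1\lesssim_{\mathrm{P}}\sqrt{s^2\log(p\vee n)/n}$ this gives $\sqrt{n}\cdot\sqrt{s^2\log(p\vee n)/n}\cdot\sqrt{\log p/n} = \sqrt{s^2\log^2(p\vee n)/n}\cdot(1/\sqrt{?})$, which vanishes precisely under Condition SM(iii)(b). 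The approximation-error term $\sqrt{n}|\En[a_l(x_i)\epsilon_i]|$ is controlled by Cauchy–Schwarz: $\sqrt{n}\,\|a_l(x_i)\|_{2,n}\,\|\epsilon_i\|_{2,n} \lesssim_{\mathrm{P}}\sqrt{n}\cdot\sqrt{s/n}\cdot O_{\mathrm{P}}(1) = O_{\mathrm{P}}(\sqrt{s})$, which is not obviously $o_{\mathrm{P}}(1)$ — so here I would instead use orthogonality, noting $\barEp[a_l(x_i)\epsilon_i] = 0$ because $a_l$ is a function of $x_i$ and $\Ep[\epsilon_i|\x_i]=0$, hence $\sqrt{n}\,\En[a_l(x_i)\epsilon_i] = \Gn(a_l(x_i)\epsilon_i)$ has variance $\barEp[a_l(x_i)^2\epsilon_i^2]\lesssim\barEp[a_l(x_i)^2] = O_{\mathrm{P}}(s/n) = o(1)$, so it is $o_{\mathrm{P}}(1)$ by Chebyshev. (A subtlety: $\hat\beta_l$ is random and correlated with the $\epsilon_i$, so the bound via $\|\hat\beta_l-\beta_{l0}\|_1 \times \max_j|\En[f_{ij}\epsilon_i]|$ — which holds deterministically by Hölder — is the clean way to sidestep this dependence.)

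Given the linear representation, I would finish by applying a Lyapunov CLT to $\sqrt{n}\,\En[D_i\epsilon_i] = \Gn(D_i\epsilon_i)$: the summands are independent, mean zero, with average variance $\Omega = \barEp[\epsilon_i^2 D_iD_i']$ having eigenvalues bounded away from $0$ and $\infty$ (Condition SM(i)), and the Lyapunov ratio is controlled by $\barEp[\|D_i\|_2^q|\epsilon_i|^{2q}]\lesssim 1$ in Condition SM(ii) with $q>2$, giving $\Omega^{-1/2}\Gn(D_i\epsilon_i)\to_d N(0,I)$; Slutsky with $\En[\hat D_id_i']\to_{\mathrm{P}}Q$ yields $(Q^{-1}\Omega Q^{-1})^{-1/2}\sqrt{n}(\hat\alpha-\alpha_0)\to_d N(0,I)$. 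For the variance estimator, I would show $\hat Q = \En[\hat D_i\hat D_i']\to_{\mathrm{P}}Q$ (same decomposition as the denominator) and $\hat\Omega = \En[\hat\epsilon_i^2\hat D_i\hat D_i']\to_{\mathrm{P}}\Omega$, where $\hat\epsilon_i = \epsilon_i - d_i'(\hat\alpha-\alpha_0)$; expanding $\hat\epsilon_i^2$, the leading term $\En[\epsilon_i^2\hat D_i\hat D_i']$ converges to $\Omega$ after replacing $\hat D_i$ by $D_i$ (cross terms controlled by the prediction-norm rate and the moment bound $\barEp[\|D_i\|_2^q|\epsilon_i|^{2q}]\lesssim 1$), while the remainder terms carry a factor $\hat\alpha-\alpha_0 = O_{\mathrm{P}}(n^{-1/2})$ times averages of $\|d_i\|_2^2\|D_i\|_2^2$-type quantities bounded in probability by Condition SM(ii), hence are $o_{\mathrm{P}}(1)$. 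The homoscedastic case is the specialization $\Omega = \sigma^2 Q$, so $Q^{-1}\Omega Q^{-1} = \sigma^2 Q^{-1} = \Lambda^*$ equals the efficiency bound; consistency of $\hat\Lambda^* = \hat\sigma^2\hat Q^{-1}$ follows from $\hat\sigma^2 = \En[\hat\epsilon_i^2]\to_{\mathrm{P}}\sigma^2$ by the same expansion argument. The main obstacle throughout is the numerator estimate $\sqrt{n}\,\En[(\hat D_i-D_i)\epsilon_i]=o_{\mathrm{P}}(1)$ and in particular exploiting the $\ell_1$-rate together with the self-normalized moderate-deviation bound on $\max_j|\En[f_{ij}\epsilon_i]|$ so that the rates multiply to something governed exactly by Condition SM(iii)(b).
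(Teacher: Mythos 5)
Your proposal follows essentially the same route as the paper's proof: the same decomposition into the denominator term $\En[\hat D_i d_i']\to_{\mathrm{P}}Q$ and the numerator term $\sqrt{n}\,\En[(\hat D_i-D_i)\epsilon_i]=o_{\mathrm{P}}(1)$, the same H\"older bound pairing the $\ell_1$-rate with the self-normalized moderate-deviation control of $\max_j|\En[f_{ij}\epsilon_i]|$ (so that Condition SM(iii)(b) is exactly what is needed), the same orthogonality/Chebyshev treatment of the approximation error $a_l(x_i)$, and the same Lyapunov CLT plus Slutsky finish. The only spot where the paper does visibly more work than your sketch is the heteroscedastic variance estimator, where the quadratic term $\En[\epsilon_i^2(\hat D_i-D_i)(\hat D_i-D_i)']$ is handled via $\max_{i\leq n}\epsilon_i^2\lesssim_{\mathrm{P}} n^{2/q_{\epsilon}}$ and Condition SM(iii)(a) (which is precisely that condition's role), a detail compatible with, but not spelled out in, your outline.
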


In the setting with homoscedastic structural errors
the estimator achieves the efficiency bound asymptotically.
In the case of heteroscedastic structural errors, the estimator does not achieve the efficiency bound,
but we can expect it to be close to achieving the bound if heteroscedasticity
is mild.

%

The final result of this section extends the previous result to any IV-estimator
with a generic sparse estimator of the optimal instruments.

\begin{theorem}[Inference with IV Constructed by a Generic Sparsity-Based Procedure]\label{theorem 3}
Suppose that conditions AS and SM hold, and suppose that the fitted values of the optimal instrument, $\widehat D_{il} = f_i'\widehat \beta_l$, are constructed
using any estimator $\widehat \beta_l$ such that
\begin{eqnarray}\label{key side}
&& \max_{1\leq {l}\leq k_e} \|\widehat D_{i{l}} -D_{i{l}}\|_{2,n} \lesssim_{\mathrm{P}} \sqrt{\frac{s \log (p\vee n)}{n}} \ \ \mbox{and} \ \ \max_{1\leq {l}\leq k_e} \| \widehat \beta_{{l}} - \beta_{{l0}}\|_{1}  \lesssim_{\mathrm{P}} \sqrt{\frac{s^2 \log (p\vee n)}{n}}. \ \ \ \ \ \ \ \ \ \ 
\end{eqnarray}
Then the conclusions reached in Theorem \ref{theorem 1} continue to apply. \end{theorem}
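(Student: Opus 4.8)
The plan is the standard linearization of the IV estimator, with the impact of first-stage estimation controlled by the generic rate bounds (\ref{key side}) together with a moderate-deviation bound on the score. Using $y_i = d_i'\alpha_0 + \epsilon_i$, write
\[
\sqrt n(\hat\alpha - \alpha_0) = \left(\En[\widehat D_i d_i']\right)^{-1}\sqrt n\,\En[\widehat D_i\epsilon_i],
\]
and establish two facts: (a) $\En[\widehat D_i d_i'] = Q + o_{\mathrm{P}}(1)$, so that its inverse is $Q^{-1} + o_{\mathrm{P}}(1)$ since $Q$ has eigenvalues bounded away from zero and above (Condition SM(i)); and (b) $\sqrt n\,\En[\widehat D_i\epsilon_i] = \sqrt n\,\En[D_i\epsilon_i] + o_{\mathrm{P}}(1)$. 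For (a), decompose $d_i = D_i + (d_i - D_i)$ (the difference stacking the reduced-form errors $v_{il}$ and zeros for the exogenous block) and $\widehat D_i = D_i + (\widehat D_i - D_i)$: the leading term $\En[D_iD_i']$ converges to $Q$ by a weak law of large numbers for i.n.i.d.\ arrays under the moment bounds of Condition SM(ii); the term $\En[D_i v_i']$ is a mean-zero average of independent terms, hence $o_{\mathrm{P}}(1)$; and every remaining term carries a factor $\|\widehat D_{il} - D_{il}\|_{2,n} \lesssim_{\mathrm{P}} \sqrt{s\log(p\vee n)/n}$ which, paired by Cauchy--Schwarz with $\|D_{im}\|_{2,n} = O_{\mathrm{P}}(1)$, $\|v_{im}\|_{2,n} = O_{\mathrm{P}}(1)$, or another copy of itself, is $o_{\mathrm{P}}(1)$ since $s\log(p\vee n)/n\to 0$.

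The crux is (b). Using Condition AS, write $\widehat D_{il} - D_{il} = f_i'(\widehat\beta_l - \beta_{l0}) - a_l(x_i)$, so it suffices to show $\sqrt n\,\En[(\widehat D_{il} - D_{il})\epsilon_i] = o_{\mathrm{P}}(1)$ for each fixed $l\leq k_e$. The approximation piece $\sqrt n\,\En[a_l(x_i)\epsilon_i]$ is mean zero with variance conditional on $\{x_i\}$ at most (a bound on $\Ep[\epsilon_i^2|x_i]$) times $\En[a_l(x_i)^2] \lesssim_{\mathrm{P}} s/n \to 0$, hence $o_{\mathrm{P}}(1)$. For the estimation piece, Hölder's inequality gives
\[
\left|\sqrt n\,\En[f_i'(\widehat\beta_l - \beta_{l0})\epsilon_i]\right| \leq \|\widehat\beta_l - \beta_{l0}\|_1 \cdot \sqrt n\,\|\En[f_i\epsilon_i]\|_\infty ,
\]
where the first factor is $\lesssim_{\mathrm{P}} \sqrt{s^2\log(p\vee n)/n}$ by (\ref{key side}), and the second factor — which does not involve $\widehat\beta_l$ — is bounded exactly as in the derivation of (\ref{eq: score bound}): under the normalization $\barEp[f_{ij}^2\epsilon_i^2]=1$, the bound $\barEp[|f_{ij}\epsilon_i|^3]\lesssim 1$, and $\log^3 p = o(n)$, the self-normalized moderate-deviation theory of \citen{jing:etal} plus a union bound over $j\leq p$ give $\sqrt n\max_j|\En[f_{ij}\epsilon_i]|/\sqrt{\En[f_{ij}^2\epsilon_i^2]} \lesssim_{\mathrm{P}} \sqrt{\log p}$, and $\max_j\En[f_{ij}^2\epsilon_i^2]\lesssim_{\mathrm{P}}1$ (Condition SM(iii)(c)) then yields $\sqrt n\,\|\En[f_i\epsilon_i]\|_\infty \lesssim_{\mathrm{P}} \sqrt{\log p}$. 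The product is $\lesssim_{\mathrm{P}} \sqrt{s^2\log^2(p\vee n)/n} = o_{\mathrm{P}}(1)$ by Condition SM(iii)(b). This is the step I expect to be the main obstacle, since it is where imperfect model selection is absorbed: because of the endogeneity, $\widehat\beta_l$ is correlated with $\epsilon$ and one cannot condition, so the argument must instead trade the $\ell_1$-rate of the first stage against the $\ell_\infty$-size of the score, which is precisely the reason Condition SM(iii)(b) is imposed.

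Combining (a) and (b) gives $\sqrt n(\hat\alpha - \alpha_0) = Q^{-1}\sqrt n\,\En[D_i\epsilon_i] + o_{\mathrm{P}}(1)$, and I would finish with a Lyapunov central limit theorem for triangular arrays applied to the summands $n^{-1/2}V_n^{-1/2}Q^{-1}D_i\epsilon_i$, where $V_n := Q^{-1}\Omega Q^{-1}$: these are independent, mean zero (since $\Ep[\epsilon_i|x_i]=0$ and $D_i$ is $x_i$-measurable), have total variance $I$, and satisfy the Lyapunov condition of order $q>2$ because $\|V_n^{-1/2}Q^{-1}\|$ is bounded (the eigenvalues of $\Omega$ lie between $\underline\sigma^2$ and $\bar\sigma^2$ times those of $Q$) and $\barEp[\|D_i\|_2^q|\epsilon_i|^q]\lesssim 1$ (which follows from Condition SM(ii) via $|\epsilon_i|^q \leq 1 + |\epsilon_i|^{2q}$). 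Hence $(Q^{-1}\Omega Q^{-1})^{-1/2}\sqrt n(\hat\alpha - \alpha_0)\to_d N(0,I)$. For the variance estimator, the decompositions of (a) give $\hat Q = \En[\widehat D_i\widehat D_i'] \to_{\mathrm{P}} Q$, and, substituting $\hat\epsilon_i = \epsilon_i - d_i'(\hat\alpha - \alpha_0)$ with $\hat\alpha - \alpha_0 = O_{\mathrm{P}}(n^{-1/2})$ and using Hölder/Cauchy--Schwarz with the moment bounds of Condition SM(ii) and the rate (\ref{key side}) to kill every cross term, $\hat\Omega = \En[\hat\epsilon_i^2\widehat D_i\widehat D_i'] \to_{\mathrm{P}} \Omega$; Slutsky's theorem then validates the feasible statistic. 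Finally, under conditional homoscedasticity $\Ep[\epsilon_i^2|x_i]=\sigma^2$ one has $\Omega = \sigma^2 Q$, so $Q^{-1}\Omega Q^{-1} = \sigma^2 Q^{-1} = \Lambda^*$ and the general result specializes to efficiency; $\hat\sigma^2 = \En[(y_i - d_i'\hat\alpha)^2]\to_{\mathrm{P}}\sigma^2$ together with $\hat Q\to_{\mathrm{P}}Q$ gives $\hat\Lambda^* = \hat\sigma^2\hat Q^{-1}\to_{\mathrm{P}}\Lambda^*$.
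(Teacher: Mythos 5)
Your argument is, in substance, the paper's own proof: the same linearization $\sqrt n(\hat\alpha-\alpha_0)=\En[\widehat D_i d_i']^{-1}\sqrt n\En[\widehat D_i\epsilon_i]$, the same treatment of $\En[\widehat D_i d_i']$ via Cauchy--Schwarz plus an i.n.i.d.\ law of large numbers (von Bahr--Esseen in the paper), the same key step trading the $\ell_1$-rate of $\widehat\beta_l$ against $\sqrt n\,\|\En[f_i\epsilon_i]\|_\infty\lesssim_{\mathrm{P}}\sqrt{\log p}$ via the self-normalized moderate-deviation bound together with SM(iii)(c), the same handling of the approximation error $a_l(x_i)$, and the same Lyapunov CLT for triangular arrays. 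You also correctly identify why conditioning on $\widehat\beta_l$ is unavailable and why SM(iii)(b) is exactly the condition that makes the product $\sqrt{s^2\log^2(p\vee n)/n}$ vanish. Up through the asymptotic normality statement the proposal is correct and essentially identical to the paper's Steps 0--3.

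The one place where your sketch would not go through as written is the consistency of $\hat\Omega=\En[\hat\epsilon_i^2\widehat D_i\widehat D_i']$ in the heteroscedastic case. You claim that H\"older/Cauchy--Schwarz with the SM(ii) moments and the rate (\ref{key side}) ``kill every cross term,'' but the problematic term is not a cross term: it is $\En[\epsilon_i^2\,(\widehat D_{il}-D_{il})(\widehat D_{im}-D_{im})]$, and since (\ref{key side}) controls only the empirical $L_2$ norm of $\widehat D_{il}-D_{il}$ (not its fourth or higher empirical moments), Cauchy--Schwarz against $\En[\epsilon_i^4]$ cannot close this bound. The paper's fix is to pull out $\max_{i\leq n}\epsilon_i^2\lesssim_{\mathrm{P}} n^{2/q_\epsilon}$ (a consequence of $\barEp[|\epsilon_i|^{q_\epsilon}]\lesssim1$ and a maximal inequality) and then invoke Condition SM(iii)(a), i.e.\ $\frac{s\log(p\vee n)}{n}n^{2/q_\epsilon}\to0$ -- a condition your proposal never uses, although it is imposed precisely for this purpose. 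A similar device (namely $\max_{i\leq n}\|d_i\|_2\lesssim_{\mathrm{P}} n^{1/q}$ and $\max_{i\leq n}\|d_i\|_2|\epsilon_i|\lesssim_{\mathrm{P}} n^{1/q}$) is needed for $\En[(\hat\epsilon_i^2-\epsilon_i^2)\widehat D_i\widehat D_i']$, which again is not a pure Cauchy--Schwarz step. Adding these maximal-inequality arguments, exactly as in Step 5 of the paper's proof of Theorem \ref{theorem 1}, completes your proposal.
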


 This result shows that the previous two theorems apply for any first-stage estimator that attains near-oracle performance given in
(\ref{key side}). 
Examples of other sparse estimators covered by this theorem are Dantzig and Gauss-Dantzig (\citename{CandesTao2007}, \citeyear*{CandesTao2007}), $\LASSO$ and post-$\LASSO$ (\citename{BCW-SqLASSO}, \citeyear*{BCW-SqLASSO2} and \citeyear*{BCW-SqLASSO}), thresholded Lasso and Post-thresholded Lasso (\citename{BC-PostLASSO}, \citeyear*{BC-PostLASSO}), group Lasso and Post-group Lasso (\citename{horowitz:lasso}, \citeyear*{horowitz:lasso}; and \citename{LouniciPontilTsybakovvandeGeer2009}, \citeyear*{LouniciPontilTsybakovvandeGeer2009}), adaptive versions of the above (\citename{horowitz:lasso}, \citeyear*{horowitz:lasso}), and boosting (\citename{BuhlmannBoosting2006}, \citeyear*{BuhlmannBoosting2006}).  Verification
of the near-oracle performance (\ref{key side}) 
can be done on a case by case basis
using the best conditions in the literature.\footnote{Post-$\ell_1$-penalized
procedures have only been analyzed for the case of Lasso and $\LASSO$; see \citen{BC-PostLASSO} and \citen{BCW-SqLASSO2}. We expect
that similar results carry over to other procedures listed above.}
Our results extend to Lasso-type estimators under alternative forms of regularity conditions
that fall outside the framework of Conditions RE and Conditions RF; all that is required is the near-oracle performance
of the kind (\ref{key side}). 

\subsection{Inference when instruments are weak}   When instruments are weak individually, Lasso may end up selecting no instruments or may produce unreliable estimates of the optimal instruments. To cover this case, we propose a method for inference based on inverting pointwise tests performed using a sup-score statistic defined below.  The procedure is similar in spirit to \citen{anderson:rubin} and \citen{ss:weakiv} but uses a different statistics that is better suited to cases with very many instruments. In order to describe the approach, we rewrite the main structural equation as:
\begin{eqnarray}
& &  y_{i}  = d_{ei} '\alpha_{1} + w_i'\alpha_2 + \epsilon_i, \ \ \Ep[\epsilon_i|x_i] = 0, \label{Def: WIVmodel}
\end{eqnarray}
where $y_{i}$ is the response variable, $d_{ei}$ is a vector of endogenous variables, $w_i$ is a $k_w$-vector of control variables, $x_i = (z_i',w_i')'$ is a vector of elementary instrumental variables, and $\epsilon_i$ is a disturbance such that $\epsilon_1,..., \epsilon_n$ are i.n.i.d. conditional on $X=[x_1',...,x_n']$.  We partition $d_i = ({d_{ei}}', w_i')'$. The parameter of interest is $\alpha_1 \in \mathcal{A}_1 \subset \mathbb{R}^{k_e}$.  We use $f_{i} =  P(x_i)$, a vector which includes $w_i$, as technical instruments.     In this subsection, we treat $X$ as fixed; i.e. we condition on $X$.

 We  would like to  use a high-dimensional vector $f_i$ of technical instruments for inference on $\alpha_1$ that is robust to weak identification. In order to formulate a practical sup-score statistic, it is useful to partial-out the effect of $w_i$ on the key variables.
For an $n$-vector $\{u_i, i=1,...,n\}$, define $\tilde u_{i} = u_{i} - w_{i}'\En[w_i w_i']^{-1} \En[w_i u_i]$, i.e. the residuals left after regressing this vector on $\{w_i, i=1,...,n\}$. Hence $\tilde y_i$, $\tilde d_{ei}$, and $\tilde f_{ij}$ are residuals
obtained by partialling out controls $w_i$.  Also, let
\begin{equation}\label{def: define tilde f}
\tilde f_i = (\tilde f_{i1},...,\tilde f_{ip})'.
\end{equation} In this formulation, we
omit elements of $w_i$ from $\tilde f_{ij}$ since they are eliminated by partialling out. We then normalize these technical instruments so that
\begin{equation}\label{def: normalize tilde f}
\En[\tilde f_{ij}^2] =1, \ \ j =1,...,p.
\end{equation}

  The sup-score statistic for testing
the hypothesis $\alpha_1 = a $ takes the form
\begin{equation}\label{def:sup-score}
\Lambda_a =  \max_{1 \leq j \leq p} |n \En [(\tilde y_{i} - \tilde d_{ei}'a) \tilde f_{ij}]|/\sqrt{\En[
(\tilde y_{i} - \tilde d_{ei}'a)^2 \tilde f^2_{ij} ]}.
\end{equation}
As before, we apply self-normalized moderate deviation theory for self-normalized sums to obtain
$$
\Pr ( \Lambda_{\alpha_1} \leq  c \sqrt{n}\Phi^{-1}(1- \gamma/2p) ) \geq 1- \gamma + o(1).
$$
Therefore, we can employ  $\Lambda(1- \gamma):= c \sqrt{n}\Phi^{-1}(1- \gamma/2p)$
 for $c > 1$ as a critical value for testing $\alpha_1 = a$ using $\Lambda_a$ as the test-statistic. The asymptotic $(1- \gamma)$ -- confidence region for $\alpha_1$ is then given by
$
\mathcal{C}:=\{ a \in \mathcal{A}_1:  \Lambda_a \leq \Lambda(1- \gamma)\}.
$

The construction of confidence regions above can be given the following \emph{Inverse Lasso} interpretation.
Let
$$
\hat \beta_a  \in \arg\min_{\beta \in \mathbb{R}^p} \En[ (\tilde y_{i} -  \tilde d_{ei}'a) - \tilde f_{ij}'\beta]^2 + \frac{\lambda}{n}
\sum_{j=1}^p \gamma_{aj} | \beta_j |, \ \ \gamma_{aj} = \sqrt{\En[ (\tilde y_{i} - \tilde d_{ei}'a)^2 \tilde f^2_{ij} ]}.
$$
If $\lambda = 2\Lambda(1- \gamma)$, then $ \mathcal{C}$ is equivalent to the region $\{ a \in \mathbb{R}^{k_e}:  \hat \beta_a = 0\}$. In words, this confidence region collects all potential values of the structural parameter where the Lasso regression of the potential structural disturbance on the instruments yields zero coefficients on the instruments.  This idea is akin to the Inverse Quantile Regression and Inverse Least Squares ideas in \citename{ch:iqrWeakId} \citeyear{ch:iqrWeakId,ch:WeakId}.

Below, we state the main regularity condition for the validity of inference using the sup-score statistic as well as the formal inference result.

\textbf{Condition SM2.} \emph{Suppose that for each $n$  the linear model (\ref{Def: WIVmodel}) holds with $\alpha_1 \in \mathcal{A}_1 \subset \mathbb{R}^{k_e}$ such that $\epsilon_1,..., \epsilon_n$ are i.n.i.d., $X$ is fixed, and $\tilde f_1,... ,\tilde f_n$ are $p$-vectors of technical instruments defined in (\ref{def: define tilde f}) and (\ref{def: normalize tilde f}). Suppose that
(i) the dimension of $w_i$ is $k_w$ and $\|w_i\|_2 \leq \zeta_w$ such that  $\sqrt{k_w} \zeta_w/\sqrt{n} \to 0$, (ii)  the eigenvalues of $\En[w_i w_i']$ are bounded away from zero and eigenvalues of $\barEp[\epsilon_i^2w_i w_i']$ are bounded away from above, uniformly in $n$, (iii) $\max_{1 \leq j \leq p }\barEp[|\epsilon_i|^3 |\tilde f_{ij}|^3] ^{1/3}/ \barEp[\epsilon_i^2 \tilde f^2_{ij}] ^{1/2} \leq K_n$ , and (iv)  $K^2_n \log (p\vee n) = o(n^{1/3})$. }

\begin{theorem}[Valid Inference based on the Sup-Score Statistic]\label{Thm:WIV} Let $\gamma \in (0,1)$ be fixed or, more generally,   such that $\log (1/\gamma) \lesssim \log (p\vee n)$.
Under Condition SM2,
(1) in large samples, the constructed confidence set $\mathcal{C}$ contains the true value $\alpha_1$ with at least the prescribed probability,  namely $\Pr( \alpha_1 \in \mathcal{C}) \geq 1- \gamma -o(1).$
  (2) Moreover,  the confidence set $\mathcal{C}$ necessarily excludes a sequence of parameter value $a$, namely $\Pr ( a \in \mathcal{C}) \to 0$, if
  $$
\max_{1 \leq j \leq p}  \frac{\sqrt{n/\log (p/\gamma)} \ |\En[(a-\alpha_1)' \tilde d_{ei}\tilde f_{ij}]|}{ c\sqrt{\En[\epsilon^2_i \tilde f_{ij}^2]} + \sqrt{
\En[ \{(a-\alpha_1)' \tilde d_{ei}\}^2 \tilde f^2_{ij}] }} \to_{\Pr} \infty. $$
\end{theorem}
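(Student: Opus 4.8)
The plan is to treat the two claims separately, both built on the self-normalized moderate deviation inequality of \citen{jing:etal} applied conditionally on the fixed $X$, after reducing the partialled-out quantities to sums of independent mean-zero summands.

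For claim (1), evaluate the sup-score statistic at $a=\alpha_1$. Partialling $w_i$ out of $y_i = d_{ei}'\alpha_1 + w_i'\alpha_2 + \epsilon_i$ removes $w_i'\alpha_2$ exactly in sample, so $\tilde y_i - \tilde d_{ei}'\alpha_1 = \tilde \epsilon_i$, the residual of $\epsilon_i$ on $\{w_i\}$; and since $\{\tilde f_{ij}\}$ is in-sample orthogonal to $\{w_i\}$ by construction, the numerator collapses exactly, $n\En[\tilde\epsilon_i\tilde f_{ij}] = n\En[\epsilon_i\tilde f_{ij}]$. Writing $\tilde\epsilon_i = \epsilon_i - w_i'\hat\delta$ with $\hat\delta = \En[w_iw_i']^{-1}\En[w_i\epsilon_i]$, Condition SM2(ii) yields $\|\hat\delta\|_2 = O_{\mathrm P}(\sqrt{k_w/n})$, so using $\|w_i\|_2\leq\zeta_w$, $\En[\tilde f_{ij}^2]=1$, $\sqrt{k_w}\zeta_w/\sqrt n\to 0$, and a uniform lower bound $\min_{j\leq p}\En[\epsilon_i^2\tilde f_{ij}^2]\gtrsim_{\mathrm P}1$, one obtains $\En[\tilde\epsilon_i^2\tilde f_{ij}^2] = \En[\epsilon_i^2\tilde f_{ij}^2](1+o_{\mathrm P}(1))$ uniformly in $j$. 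Hence $\Lambda_{\alpha_1} = (1+o_{\mathrm P}(1))\max_{j\leq p}|S_j|$ with $S_j := n\En[\epsilon_i\tilde f_{ij}]/\sqrt{\En[\epsilon_i^2\tilde f_{ij}^2]}$. Conditionally on $X$ the summands $\epsilon_i\tilde f_{ij}$ are independent with mean zero, so $S_j/\sqrt n$ is a self-normalized sum; by \citen{jing:etal}, SM2(iii) bounds the relevant third-moment ratio by $K_n$ and SM2(iv) ($K_n^2\log(p\vee n)=o(n^{1/3})$) puts $\Phi^{-1}(1-\gamma/(2p))\asymp\sqrt{\log(p/\gamma)}\lesssim\sqrt{\log(p\vee n)}$ inside the moderate-deviation zone, so $\Pr(|S_j| > \sqrt n\,x)\leq 2(1-\Phi(x))(1+o(1))$ uniformly over $j\leq p$ and $x\lesssim\sqrt{\log(p\vee n)}$. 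A union bound at $x = \Phi^{-1}(1-\gamma/(2p))$ gives $\Pr(\max_j|S_j|\leq\sqrt n\,\Phi^{-1}(1-\gamma/(2p)))\geq 1-\gamma-o(1)$; the $(1+o_{\mathrm P}(1))$ factor and the $o(1)$ are absorbed by the slack $c>1$ in $\Lambda(1-\gamma)$, so $\Pr(\alpha_1\in\mathcal{C})\geq 1-\gamma-o(1)$. This is the display stated just before the theorem, with the partialling step made explicit.

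For claim (2), at an alternative $a$ we have $\tilde y_i - \tilde d_{ei}'a = \tilde\epsilon_i - \tilde d_{ei}'(a-\alpha_1)$, so the $j$-th numerator is $n\En[\epsilon_i\tilde f_{ij}] - n\En[(a-\alpha_1)'\tilde d_{ei}\tilde f_{ij}]$, a ``noise'' term minus a ``signal'' term. Let $j^\ast$ attain the maximum in the displayed separation condition. The moderate-deviation bound from claim (1) gives $|n\En[\epsilon_i\tilde f_{ij^\ast}]|\lesssim_{\mathrm P}\sqrt{n\log(p/\gamma)}\,\sqrt{\En[\epsilon_i^2\tilde f_{ij^\ast}^2]}$, while the separation condition, multiplied through by $n$, reads $|n\En[(a-\alpha_1)'\tilde d_{ei}\tilde f_{ij^\ast}]| = R_n\sqrt{n\log(p/\gamma)}\bigl(c\sqrt{\En[\epsilon_i^2\tilde f_{ij^\ast}^2]}+\sqrt{\En[\{(a-\alpha_1)'\tilde d_{ei}\}^2\tilde f_{ij^\ast}^2]}\bigr)$ with $R_n\to_{\mathrm P}\infty$; so the noise is $o_{\mathrm P}(1)$ times the signal and the $j^\ast$-th numerator of $\Lambda_a$ exceeds $\tfrac12$ the signal wp $\to1$. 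For the denominator, $(\tilde\epsilon_i-\tilde d_{ei}'(a-\alpha_1))^2\leq 2\tilde\epsilon_i^2+2\{(a-\alpha_1)'\tilde d_{ei}\}^2$ together with the denominator estimate from claim (1) gives $\sqrt{\En[(\tilde y_i-\tilde d_{ei}'a)^2\tilde f_{ij^\ast}^2]}\lesssim_{\mathrm P} c\sqrt{\En[\epsilon_i^2\tilde f_{ij^\ast}^2]}+\sqrt{\En[\{(a-\alpha_1)'\tilde d_{ei}\}^2\tilde f_{ij^\ast}^2]}$. Dividing, $\Lambda_a$ is bounded below wp $\to1$ by a fixed multiple of $R_n\sqrt{n\log(p/\gamma)}$, which strictly dominates $\Lambda(1-\gamma) = c\sqrt n\,\Phi^{-1}(1-\gamma/(2p))\asymp c\sqrt{n\log(p/\gamma)}$; hence $\Lambda_a>\Lambda(1-\gamma)$ wp $\to1$ and $\Pr(a\in\mathcal{C})\to 0$.

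The main obstacle is making the self-normalized moderate deviation approximation valid uniformly over all $p$ technical instruments and over the full deviation range $x\lesssim\sqrt{\log(p\vee n)}$ required by the union bound; this is precisely what Conditions SM2(iii)--(iv) are calibrated for, and it parallels the derivation of (\ref{eq: score bound}) in the main text. The remaining pieces --- the exact collapse of the numerators under in-sample partialling, the uniform relative negligibility of the partialling correction in the denominators (which needs only $\sqrt{k_w}\zeta_w/\sqrt n\to 0$ and the uniform lower bound on $\En[\epsilon_i^2\tilde f_{ij}^2]$), and the bookkeeping absorbing the $(1+o(1))$ terms into the slack $c>1$ --- are routine once that uniform bound is in hand.
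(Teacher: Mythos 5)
Your proposal is correct and follows essentially the same route as the paper's proof: the exact collapse of the numerator under in-sample partialling, the $o_{\mathrm P}(1)$ denominator perturbation absorbed into the slack $c>1$, and Lemma \ref{Lemma:SNMD} applied conditionally on $X$ with $U_{ij}=\epsilon_i\tilde f_{ij}$ give claim (1), while claim (2) is the same triangle-inequality signal/noise decomposition combined with the separation condition and $\Lambda(1-\gamma)\lesssim\sqrt{n\log(p/\gamma)}$. The only cosmetic difference is that you make explicit the uniform lower bound on $\En[\epsilon_i^2\tilde f_{ij}^2]$ needed to turn the additive partialling error into a multiplicative factor, which the paper uses implicitly.
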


The theorem shows that the confidence region $\mathcal{C}$  constructed above is valid in large samples and that the probability of including a false point $a$ in  $\mathcal{C}$ tends to zero as long as $a$ is sufficiently distant from $\alpha_1$ and instruments
are not too weak.  In particular, if there is a strong instrument, the confidence regions will eventually
exclude points $a$ that are further than $\sqrt{\log (p\vee n)/n}$ away from $\alpha_1$. Moreover,
if there are instruments whose correlation with the endogenous variable is of greater order
than $\sqrt{\log (p\vee n)/n}$, then the confidence regions will asymptotically be bounded.

\section{Further Inference and Estimation Results for the IV Model}
In this section we provide further estimation and inference results. We develop
an overidentification test which compares the IV-Lasso based estimates to estimates obtained
using a baseline set of instruments.  We also combine the IV selection using Lasso with a sample-splitting
technique from the many instruments literature which allows us to relax the growth requirement on the number
of relevant instruments.

\subsection{ A Specification Test for Validity of Instrumental Variables}
Here we develop a Hausman-style specification test for the validity of the instrumental variables.
 Let $A_i =A(x_i)$ be a baseline
set of instruments, with $\dim(A_i) \geq \dim(\alpha)=k_{\alpha}$ bounded. Let  $\tilde \alpha$ be the baseline instrumental variable estimator based on these instruments:
$$
\tilde \alpha = (\En[d_i A_i']\En[A_i A_i']^{-1}\En[A_i d_i'])^{-1}\En[d_i A_i'] \En[A_i A_i']^{-1} \En[A_i y_i].
$$
If the instrumental variable exclusion restriction is valid, then the unscaled difference between this estimator and the IV estimator $\widehat \alpha$ proposed in the previous sections should be small. If the exclusion restriction is not valid, the difference between $\tilde \alpha$ and  $\widehat \alpha$ should be large.
Therefore, we can reject the null hypothesis of instrument validity if the difference is large.

We formalize the test as follows.  Suppose we care about
$R'\alpha$ for some $k\times k_d$ matrix $R$ of $\text{rank}(R) = k$. For instance,
we might care only about the first $k$ components of $\alpha$, in which case $R=[I_k \ 0]$ is a
 $k \times k_d$ matrix that selects the first $k$ coefficients of $\alpha$.
Define the estimand for $\tilde \alpha$ as
$$
\alpha =  (\barEp[d_i A_i']\barEp[A_i A_i']^{-1}\barEp[A_i d_i'])^{-1}\barEp[d_i A_i'] \barEp[A_i A_i']^{-1} \barEp[A_i y_i],
$$
and define the estimand of $\hat \alpha$ as
$$\alpha_a = \barEp[ D(\x_i) D(\x_i)']^{-1} \barEp[ D(\x_i) y_i].$$
The null hypothesis
$H_0$ is $R(\alpha - \alpha_a)= 0$ and the alternative $H_a$ is $R(\alpha - \alpha_a) \neq 0$.
We can form a test statistic
$$
J= \sqrt{n} (\tilde \alpha - \hat \alpha)'R' (R\widehat \Sigma R')^{-1} \sqrt{n} R(\tilde \alpha - \hat \alpha)
$$
 for a matrix $\widehat \Sigma$
defined below and reject $H_0$ if  $J >  c_\gamma$
where $c_\gamma$ is the $(1-\gamma)$-quantile of chi-square random variable with $k$ degrees of freedom.  The justification for this test is provided by the following theorem which builds upon the previous results coupled with conventional results
for the baseline instrumental variable estimator.\footnote{The proof of this result is provided in a supplementary appendix.}

\begin{theorem}[Specification Test]\label{theorem: spec}  (1) Suppose the conditions of Theorem \ref{theorem 1} hold,
that $\barEp[\|A_i\|_2^{q}]$ is bounded uniformly in $n$ for $q >4$, and the
eigenvalues of $$\Sigma:= \barEp[\epsilon^2_i( M A_i-Q^{-1} D(x_i))( M A_i-Q^{-1} D(x_i))' ]$$ are bounded
from above and below, uniformly in $n$, where
$$M = (\barEp[d_i A_i']\barEp[A_i A_i']^{-1}\barEp[A_i d_i'])^{-1}\barEp[d_i A_i'] \barEp[A_i A_i']^{-1}.$$
 Then
$
\sqrt{n} \widehat \Sigma^{-1/2} (\tilde \alpha - \hat \alpha)' \to_d  N(0,I) \text { and } J \to_d \chi^2(k),$
where $$\widehat \Sigma= \En[\hat \epsilon^2_i( \hat M^{-1} A_i-\hat Q^{-1} \hat D(x_i))( \hat M^{-1} A_i-\hat Q^{-1} \hat D(x_i))' ],$$ for  $\hat \epsilon_i =
y_i - d_i'\hat \alpha$, $ \widehat Q = \En [\widehat D(\x_i) \widehat D(\x_i)']$, and $$\hat M = (\En[d_i A_i']\En[A_i A_i']^{-1}\En[A_i d_i'])^{-1}\En[d_i  A_i'] \En[A_i A_i']^{-1}.$$
(2)  Suppose the conditions of Theorem \ref{theorem 1} hold with the exception that $\Ep[A_i \epsilon_i]=0$ for all $i=1,...,n$ and $n$, but $\|\barEp[D(x_i) \epsilon_i]\|_2$ is bounded away from zero. Then
$J \to_{\Pr} \infty$.
\end{theorem}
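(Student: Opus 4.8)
The plan is to reduce both parts to asymptotically linear representations for the baseline estimator $\tilde\alpha$ and the Lasso/Post-Lasso IV estimator $\hat\alpha$ driven by the common structural error $\epsilon_i$, to apply a central limit theorem for independent triangular arrays to their difference, and then to verify separately that $\widehat\Sigma$ consistently estimates $\Sigma$. Write $D_i=D(x_i)$ and $A_i=A(x_i)$. By Theorems~\ref{Thm:RateLASSO} and~\ref{Thm:RatesPostLASSO}, under Conditions AS and RF the first-stage fits from either Lasso or Post-Lasso obey $\max_l\|\widehat D_{il}-D_{il}\|_{2,n}\lesssim_{\mathrm{P}}\sqrt{s\log(p\vee n)/n}$ and $\max_l\|\widehat\beta_l-\beta_{l0}\|_1\lesssim_{\mathrm{P}}\sqrt{s^2\log(p\vee n)/n}$; these bounds are unaffected by whether the IV exclusion restriction holds, since they concern only the reduced-form equations $d_{il}=D_l(x_i)+v_{il}$. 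Combined with a triangular-array law of large numbers, the moment bounds of Condition SM(ii), and Cauchy--Schwarz applied to $\En[(\widehat D_i-D_i)d_i']$, they give $\En[\widehat D_id_i']\to_{\mathrm{P}}Q$ and $\En[\widehat D_i\widehat D_i']\to_{\mathrm{P}}Q$.

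\textbf{Part (1).} Here $\Ep[\epsilon_i|x_i]=0$, so $\barEp[A_i\epsilon_i]=\barEp[D_i\epsilon_i]=0$ and $\alpha=\alpha_a=\alpha_0$. I would quote from the proof of Theorem~\ref{theorem 1} the representation $\sqrt n(\hat\alpha-\alpha_0)=Q^{-1}\Gn(D_i\epsilon_i)+o_{\mathrm{P}}(1)$, whose nontrivial ingredient is $\sqrt n\,\En[(\widehat D_i-D_i)\epsilon_i]=o_{\mathrm{P}}(1)$: this uses the decomposition $\widehat D_{il}-D_{il}=f_i'(\widehat\beta_l-\beta_{l0})-a_l(x_i)$, the $\ell_1$-rate above, the bound $\|\En[f_i\epsilon_i]\|_\infty\lesssim_{\mathrm{P}}\sqrt{\log(p\vee n)/n}$ from the moderate-deviation theory of \citen{jing:etal} for self-normalized sums (legitimate under the normalization of Condition SM(i) and Conditions SM(ii)--(iii)), a variance bound for the approximation-error term, and the growth restriction SM(iii)(b). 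For the baseline estimator, conventional two-stage least squares algebra, the i.n.i.d.\ array setup, the rank conditions implicit in the statement, the moment bound $\barEp[\|A_i\|_2^q]\lesssim1$ with $q>4$, and a triangular-array LLN give $\sqrt n(\tilde\alpha-\alpha_0)=M\,\Gn(A_i\epsilon_i)+o_{\mathrm{P}}(1)$. Subtracting, $\sqrt n(\tilde\alpha-\hat\alpha)=\Gn\big((MA_i-Q^{-1}D_i)\epsilon_i\big)+o_{\mathrm{P}}(1)$; the summands are centered with average covariance $\Sigma$, so a triangular-array CLT under the moment conditions of Condition SM yields $\Sigma^{-1/2}\sqrt n(\tilde\alpha-\hat\alpha)\to_dN(0,I)$ and hence $\sqrt n\,R(\tilde\alpha-\hat\alpha)\to_dN(0,R\Sigma R')$, with $R\Sigma R'$ positive definite since $\mathrm{rank}(R)=k$ and the eigenvalues of $\Sigma$ are bounded away from zero.

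\textbf{Consistency of $\widehat\Sigma$.} This is the main step; once $R\widehat\Sigma R'\to_{\mathrm{P}}R\Sigma R'$ is shown, $J\to_d\chi^2(k)$ follows by Slutsky. I would substitute $\widehat\epsilon_i=\epsilon_i-d_i'(\hat\alpha-\alpha_0)$, $\widehat D_i=D_i+(\widehat D_i-D_i)$, $\hat M\to_{\mathrm{P}}M$, $\hat Q\to_{\mathrm{P}}Q$ into $\widehat\Sigma$, expand the empirical quadratic form, and bound every cross and remainder term by Cauchy--Schwarz and H\"older, using $\hat\alpha-\alpha_0=O_{\mathrm{P}}(n^{-1/2})$, the prediction-norm rate on $\widehat D_i-D_i$, $\max_{i\le n}|\epsilon_i|=O_{\mathrm{P}}(n^{1/q_\epsilon})$, and the moment bounds $\barEp[\|A_i\|_2^q]$, $\barEp[\|d_i\|_2^q]$, $\barEp[\|D_i\|_2^q|\epsilon_i|^{2q}]$, $\barEp[|\epsilon_i|^{q_\epsilon}]\lesssim1$; the growth condition SM(iii)(a), $n^{2/q_\epsilon}s\log(p\vee n)/n\to0$, is exactly what kills the dominant remainder $\max_i\epsilon_i^2\,\|\widehat D_{il}-D_{il}\|_{2,n}^2$. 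This reduces $\widehat\Sigma$ to $\En[\epsilon_i^2(MA_i-Q^{-1}D_i)(MA_i-Q^{-1}D_i)']+o_{\mathrm{P}}(1)$, which converges to $\Sigma$ by a triangular-array LLN. I expect this step to be the main obstacle: all the first-stage-estimation remainders must be controlled uniformly in $n$ while $\widehat D_i-D_i$ lives in a $p$-dimensional space and the baseline instruments $A_i$ are only moment-bounded (possibly heavy-tailed), and it is precisely the interaction of the near-oracle rates with Condition SM that makes this go through.

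\textbf{Part (2).} Now $\Ep[A_i\epsilon_i]=0$ but $\Ep[\epsilon_i|x_i]$ need not vanish and $\|\barEp[D_i\epsilon_i]\|_2$ is bounded away from zero; the first-stage rates and $\En[\widehat D_id_i']\to_{\mathrm{P}}Q$ persist. The baseline estimator remains consistent, $\tilde\alpha\to_{\mathrm{P}}\alpha_0$, because $\barEp[A_iy_i]=\barEp[A_id_i']\alpha_0$. Writing $\hat\alpha-\alpha_0=\En[\widehat D_id_i']^{-1}\En[\widehat D_i\epsilon_i]$ with $\En[\widehat D_i\epsilon_i]=\En[D_i\epsilon_i]+\En[(\widehat D_i-D_i)\epsilon_i]\to_{\mathrm{P}}\barEp[D_i\epsilon_i]$ (here plain Cauchy--Schwarz on the cross term suffices, since only consistency and not a $\sqrt n$-rate is needed), we get $\hat\alpha\to_{\mathrm{P}}\alpha_0+Q^{-1}\barEp[D_i\epsilon_i]=\alpha_a\neq\alpha_0$. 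Hence $R(\tilde\alpha-\hat\alpha)\to_{\mathrm{P}}-RQ^{-1}\barEp[D_i\epsilon_i]=:\delta$, bounded away from zero. Meanwhile $\widehat\epsilon_i$ is now centered at $\epsilon_i+d_i'(\alpha_0-\alpha_a)$, which only adds terms bounded in mean through $\barEp[\|d_i\|_2^q]$, so the moment bounds of Condition SM(ii) and the hypotheses on $A_i$ still give $\widehat\Sigma=O_{\mathrm{P}}(1)$ in operator norm; therefore $\delta'(R\widehat\Sigma R')^{-1}\delta$ stays bounded away from zero with probability approaching one, and $J=\big(\sqrt n\,R(\tilde\alpha-\hat\alpha)\big)'(R\widehat\Sigma R')^{-1}\big(\sqrt n\,R(\tilde\alpha-\hat\alpha)\big)\geq c\,n\,(1+o_{\mathrm{P}}(1))$ for some $c>0$, i.e.\ $J\to_{\mathrm{P}}\infty$.
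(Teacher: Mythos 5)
Your proposal is correct and follows essentially the same route as the paper's own (very terse) proof: asymptotic linear representations $\sqrt n(\tilde\alpha-\alpha)=M\,\Gn[A_i\epsilon_i]+o_{\mathrm{P}}(1)$ and $\sqrt n(\hat\alpha-\alpha_a)=Q^{-1}\Gn[D_i\epsilon_i]+o_{\mathrm{P}}(1)$ (the latter quoted from the proof of Theorem \ref{theorem 1}), a CLT for the difference, consistency of $\widehat\Sigma$ argued exactly as the consistency of $\widehat\Omega$ and $\widehat Q$ in that proof, and, for part (2), consistency of $\tilde\alpha$ for $\alpha$ and of $\hat\alpha$ for $\alpha_a=\alpha+Q^{-1}\barEp[D_i\epsilon_i]$ so that the estimands separate and $J\to_{\Pr}\infty$. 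If anything you supply more detail than the paper (the explicit remainder bounds for $\widehat\Sigma$ and the $\widehat\Sigma=O_{\mathrm{P}}(1)$ step under the alternative), so there is nothing to correct.
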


\subsection{Split-sample IV estimator}

The rate condition $s^2\log^2 (p\vee n) = o(n)$ can be substantive and cannot be substantially weakened for the full-sample IV estimator considered above.  However, we can replace this condition with the weaker condition that
$$ s \log (p\vee n) = o(n)$$
by employing a sample splitting method from the many instruments literature (\citename{AngristKruegerSplitSample1995}, \citeyear*{AngristKruegerSplitSample1995}).  Specifically, we consider dividing the sample randomly into (approximately) equal parts $a$ and $b$, with sizes $n_a = \lceil n/2 \rceil$ and  $n_b = n - n_a$. We use superscripts $a$ and $b$ for variables in the first and second subsample respectively.  The index $i$ will enumerate observations in both samples, with ranges for the index given by $1 \leq i \leq n_a$ for sample $a$ and $1 \leq i \leq n_b$ for sample $b$.  We can use each of the subsamples to fit the first stage via Lasso or Post-Lasso to obtain the first stage estimates $\hat \beta^k_l, k=a, b,$ and $l=1,\ldots,k_e$. Then setting $\hat D_{il}^a = {f_i^a}'\widehat \beta^b_l, 1 \leq i \leq n_a $,  $\hat D_{il}^b = {f_i^b}'\widehat \beta^a_l, 1 \leq i \leq n_b$,  $\widehat D^k_i  = ( {\widehat{D}_{i1}^{k}},\ldots, \widehat D_{ik_e}^k, {w^k_i}')', k=a, b $,
we form the IV estimates in the two subsamples:
$$
\widehat \alpha_a = \Ena [\hat D_i^a  {d_i^a}' ]^{-1} \Ena [\hat D_i^a y^a_{i}] \ \ \ \mbox{and} \ \ \  \widehat \alpha_b = \Enb [\widehat D_i^b  {d_i^b}' ]^{-1} \Enb [{\widehat D_i^b} y^b_{i}].
$$
Then we combine the estimates into one
 \begin{equation}\label{Def:SplitIVcombined}
\widehat \alpha_{ab} = ( n_a  \Ena [\hat D_i^a  \hat D_i^{a}{}']  +  n_b \Enb [\hat D_i^b  \hat D_i^{b}{}']   )^{-1} (n_a  \Ena [\hat D_i^a  \hat D_i^{a}{}'] \widehat \alpha_a +   n_b \Enb [\hat D_i^b  \hat D_i^{b}{}'] \widehat \alpha_b   ).
 \end{equation}

The following result shows that the split-sample IV estimator $\widehat \alpha_{ab}$ has the same large sample properties
 as the estimator $\widehat \alpha$ of the previous section but requires a weaker growth condition.

\begin{theorem}[Inference with a Split-Sample IV Based on Lasso or Post-Lasso]\label{Thm:InferenceSplitSampleIV}
Suppose that data $(y_i, \x_i, d_i)$ are i.n.i.d. and obey the linear IV model described in Section 2. Suppose
also that Conditions AS, RF, SM, (\ref{the principle lambda}), (\ref{Def:AsympValidPenaltyLoading}) and (\ref{Def:AmeliorationSets}) hold, except that instead of growth condition $s^2\log^2 (p\vee n) = o(n)$ we now have a weaker growth condition  $ s \log (p\vee n) = o(n)$. Suppose also that Condition SE hold for $M^k=\Enk[f^k_i f^k_i{}']$ for $k=a,b$. Let $\widehat D^k_{il}= f^k_i{}'\hat \beta^{k^c}_l$ where $\hat \beta^{k^c}_l$ is the Lasso or Post-Lasso estimator  applied to the subsample $ \{(d^{k^c}_{li},f^{k^c}_i) : 1\leq i \leq n_{k^c}\}$ for  $k =a,b$, and $k^c=\{a,b\}\setminus k$.  Then the split-sample IV estimator based on equation (\ref{Def:SplitIVcombined}) is $\sqrt{n}$-consistent and asymptotically normal,  as $n \to \infty$
$$
 (Q^{-1}\Omega Q^{-1})^{-1/2}\sqrt{n}(\widehat\alpha_{ab} - \alpha_0) \to_d N(0,I),
$$
for $\Omega :=  \barEp[\epsilon_i^2 D(\x_i) D(\x_i)']$ and $Q:= \barEp[D(\x_i) D(\x_i)']$.
Moreover, the result above continues to hold with
  $\Omega$ replaced by $\hat \Omega := \En [\hat \epsilon_i^2 \widehat D(\x_i) \widehat D(\x_i)'] $  for  $\hat \epsilon_i =
y_i - d_i'\hat \alpha_{ab}$, and $Q$ replaced by $\hat Q: = \En [\widehat D(\x_i) \widehat D(\x_i)']$.
\end{theorem}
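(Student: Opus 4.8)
The plan is to exploit the defining feature of sample splitting: each first-stage coefficient $\widehat\beta^{k^c}_l$ is computed on the complementary subsample $k^c$ and hence, conditional on the regressors $\{x^k_i : i\leq n_k\}$ of subsample $k$, it is independent of the structural errors $\{\epsilon^k_i\}$ and reduced-form errors $\{v^k_i\}$ of subsample $k$. I would condition on the random partition (which is independent of the data) and treat each subsample $k\in\{a,b\}$ as its own i.n.i.d. array, so that the conditional-expectation results of Section 3 apply within each subsample; since the partition is balanced and data-independent, subsample averages of $\Ep[\cdot]$ agree with the full-sample $\barEp[\cdot]$ up to $o_{\mathrm{P}}(1)$, which identifies the population targets $Q$ and $\Omega$ with the stated objects. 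Throughout, the key quantity is the \emph{out-of-sample} first-stage error $\|\widehat D^k_i - D^k_i\|_{2,n_k}$, i.e. the prediction error of the cross-fitted fit $\widehat D^k_{il}=f^k_i{}'\widehat\beta^{k^c}_l$ evaluated on the sample that was not used to compute $\widehat\beta^{k^c}_l$.

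Controlling this quantity is the step I expect to be the main obstacle, and it is where the weaker growth condition pays off. I would (a) use the sparsity bound for Lasso (hence Post-Lasso) established in the appendix, together with (\ref{Def:AmeliorationSets}), to get $\|\widehat\beta^{k^c}_l - \beta_{l0}\|_0 \lesssim_{\mathrm{P}} s$, so $\|\widehat\beta^{k^c}_l - \beta_{l0}\|_0 \leq Cs$ w.p.a.1 for some fixed $C$; (b) apply Corollary \ref{CorollaryL2} on subsample $k^c$ — its hypotheses (Conditions AS, RF, and Condition SE for $M^{k^c}$) hold and require only $s\log(p\vee n)=o(n)$ — to get $\|\widehat\beta^{k^c}_l - \beta_{l0}\|_2 \lesssim_{\mathrm{P}}\sqrt{s\log(p\vee n)/n}$; (c) transfer this across samples using Condition SE for $M^k=\Enk[f^k_i f^k_i{}']$, namely $\Enk[(f^k_i{}'(\widehat\beta^{k^c}_l - \beta_{l0}))^2] \leq \semax{Cs}(M^k)\,\|\widehat\beta^{k^c}_l - \beta_{l0}\|_2^2$; and (d) add the approximation error $\Enk[a_l(x^k_i)^2]^{1/2}\lesssim_{\mathrm{P}}\sqrt{s/n}$ from Condition AS. This yields $\max_l\|\widehat D^k_i - D^k_i\|_{2,n_k}\lesssim_{\mathrm{P}}\sqrt{s\log(p\vee n)/n}$. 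Given this, the Gram-matrix limits $\Enk[\widehat D^k_i d^k_i{}']\to_{\mathrm{P}} Q$ and $\widehat Q_k:=\Enk[\widehat D^k_i\widehat D^k_i{}']\to_{\mathrm{P}} Q$ follow by adding and subtracting $D^k_i$: a LLN under Condition SM handles $\Enk[D^k_i D^k_i{}']$ and $\Enk[D^k_i d^k_i{}']$; Cauchy–Schwarz plus the above rate handles cross terms with $D^k_i$; and a conditional-variance (Chebyshev) bound handles $\Enk[(\widehat D^k_i - D^k_i)v^k_i{}']$, whose conditional mean vanishes since $\Ep[v^k_i|x^k_i]=0$.

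Next I would expand the score in subsample $k$: $\sqrt{n_k}\,\Enk[\widehat D^k_i\epsilon^k_i] = \tfrac1{\sqrt{n_k}}\sum_{i\leq n_k}D^k_i\epsilon^k_i + \tfrac1{\sqrt{n_k}}\sum_{i\leq n_k}(\widehat D^k_i - D^k_i)\epsilon^k_i$. Because $\widehat D^k_i - D^k_i$ is a function of the complementary subsample and of $x^k_i$, the second term has conditional mean zero and conditional variance $\lesssim\|\widehat D^k_i - D^k_i\|_{2,n_k}^2\max_i\Ep[\epsilon^2_i|x_i]\lesssim_{\mathrm{P}} s\log(p\vee n)/n=o(1)$ under $s\log(p\vee n)=o(n)$, hence it is $o_{\mathrm{P}}(1)$ by Chebyshev. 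This is precisely the point at which sample splitting avoids the uniform-in-$j$ empirical-process bound used in the full-sample proof, which would contribute an extra $\sqrt{\log p}$ (from $\max_j|\Gn(f_{ij}\epsilon_i)|$) and an extra $\sqrt s$ (from $\|\widehat\beta_l-\beta_{l0}\|_1$) and thereby force $s^2\log^2(p\vee n)=o(n)$. Combining with $\widehat Q_k(\Enk[\widehat D^k_i d^k_i{}'])^{-1}\to_{\mathrm{P}} I$ gives $n_k\widehat Q_k(\widehat\alpha_k-\alpha_0) = \sum_{i\leq n_k}D^k_i\epsilon^k_i + o_{\mathrm{P}}(\sqrt n)$.

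Finally I would assemble the two halves. Plugging the last display into (\ref{Def:SplitIVcombined}), $(n_a\widehat Q_a + n_b\widehat Q_b)=n(Q+o_{\mathrm{P}}(1))$ and the numerator telescopes to $\sum_{i\leq n_a}D^a_i\epsilon^a_i + \sum_{i\leq n_b}D^b_i\epsilon^b_i + o_{\mathrm{P}}(\sqrt n) = \sum_{i\leq n}D_i\epsilon_i + o_{\mathrm{P}}(\sqrt n)$, so $\sqrt n(\widehat\alpha_{ab}-\alpha_0)=Q^{-1}\tfrac1{\sqrt n}\sum_{i\leq n}D_i\epsilon_i + o_{\mathrm{P}}(1)$; the Lindeberg–Feller CLT with moment bounds from Condition SM(ii) then yields $N(0,Q^{-1}\Omega Q^{-1})$. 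Consistency of $\widehat\Omega$ and $\widehat Q$ follows from the same add-and-subtract arguments together with $\widehat\alpha_{ab}\to_{\mathrm{P}}\alpha_0$, the Step-2 rate, and Condition SM's moment conditions, delivering the feasible version of the result.
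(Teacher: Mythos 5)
Your proposal is correct and follows essentially the same route as the paper's proof: it uses the sparsity bound plus Condition SE on both subsample Gram matrices to transfer the Lasso/Post-Lasso prediction rate to the held-out sample, exploits the conditional independence from splitting to kill $\sqrt{n_k}\Enk[(\widehat D^k_i-D^k_i)\epsilon^k_i]$ by a conditional Chebyshev bound (precisely the step that replaces the full-sample $\sqrt{\log p}\times\ell_1$ argument and weakens the growth condition to $s\log(p\vee n)=o(n)$), and then combines the two halves and applies the Lyapunov CLT. The only cosmetic difference is that you route the cross-sample transfer through the $\ell_2$-rate of Corollary \ref{CorollaryL2} rather than the ratio $\sqrt{\semax{\cdot}(M^k)/\semin{\cdot}(M^{k^c})}$ of prediction norms, which is the same estimate.
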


\section{Simulation Experiment}

The previous sections' results suggest that using Lasso for fitting first-stage regressions should result in IV estimators with good estimation and inference properties.  In this section, we provide simulation evidence regarding these properties in a situation where there are many possible instruments.  We also compare the performance of the developed Lasso-based estimators to many-instrument robust estimators that are available in the literature.

Our simulations are based on a simple instrumental variables model data generating process (DGP):
\begin{align*}
\begin{array}{ll}
y_i &= \beta d_i + e_i \\
d_i &= z_i'\Pi + v_i
\end{array}  \ \ \ \ \ \
(e_i,v_i) &\sim N\(0,\(\begin{array}{cc} \sigma^2_e & \sigma_{ev} \\ \sigma_{ev} & \sigma^2_{v}\end{array}\)\) \ \text{i.i.d.}
\end{align*}
where $\beta=1$ is the parameter of interest, and $z_i = (z_{i1},z_{i2},...,z_{i100})' \sim N(0,\Sigma_Z)$ is a 100 x 1 vector with $E[z_{ih}^2] = \sigma^2_z$ and $\textrm{Corr}(z_{ih},z_{ij}) = .5^{|j-h|}$.  In all simulations, we set $\sigma^2_e = 1$ and $\sigma^2_z = 1$.  We also set $\textrm{Corr}(e,v) = 0.6.$

For the other parameters, we consider various settings.  We provide results for sample sizes, $n$, of 100 and 250.  We set $\sigma^2_v$ so that the unconditional variance of the endogenous variable equals one; i.e. $\sigma^2_v = 1 - \Pi'\Sigma_Z\Pi$.  We use three different settings for the pattern of the first-stage coefficients, $\Pi$.  In the first, we set $\Pi = C\widetilde\Pi = C (1,.7,.7^2,...,.7^{98},.7^{99})'$; we term this the ``exponential'' design.  In the second and third case, we set $\Pi = C\widetilde\Pi = C(\iota_s,0_{n-s})'$ where $\iota_s$ is a $1 \times s$ vector of ones and $0_{n-s}$ is a $1 \times n-s$ vector of zeros.  We term this the ``cut-off'' design and consider two different values of $s$, $s = 5$ and $s = 50$.  In the exponential design, the model is not literally sparse although the majority of explanatory power is contained in the first few instruments.  While the model is exactly sparse in the cut-off design, we expect Lasso to perform poorly with $s = 50$ since treating $\frac{s^2\log^2 p}{n}$ as vanishingly small seems like a poor approximation given the sample sizes considered.  We consider different values of the constant $C$ that are chosen to generate target values for the concentration parameter, $\mu^2 = \frac{n\Pi'\Sigma_Z\Pi}{\sigma^2_v}$, which plays a key role in the behavior of IV estimators; see, e.g. \citen{stock:survey} or \citen{hhn:weakiv}.\footnote{The concentration parameter is closely related to the first-stage Wald statistic and first-stage F-statistic for testing that the coefficients on the instruments are equal to 0.  Under homoscedasticity, the first-stage Wald statistic is $W = \widehat\Pi' (Z'Z) \widehat\Pi/\widehat\sigma^2_v$ and the first-stage F-statistic is $W/\dim(Z)$.}  Specifically, we choose $C$ to solve $\mu^2 = \frac{nC^2 \widetilde\Pi \Sigma_Z \widetilde\Pi}{1-C^2\widetilde\Pi \Sigma_Z \widetilde\Pi}$ for $\mu^2 = 30$ and for $\mu^2 = 180$.  These values of the concentration parameter were chosen by using estimated values from the empirical example reported below as a benchmark.\footnote{In the empirical example, first-stage Wald statistics based on the selected instruments range from between 44 and 243.  In the cases with constant coefficients, our concentration parameter choices correspond naturally to ``infeasible F-statistics'' defined as $\mu^2/s$ of 6 and 36 with $s = 5$ and .6 and 3.6 with $s = 50$.  In an online appendix, we provide additional simulation results.  The results reported in the current section are sufficient to capture the key patterns.}

For each setting of the simulation parameter values, we report results from seven different procedures.  A simple possibility when presented with many instrumental variables (with $p < n$) is to just estimate the model using 2SLS and all of the available instruments.  It is well-known that this will result in poor-finite sample properties unless there are many more observations than instruments; see, for example, \citen{bekker}.  The estimator proposed in \citen{fuller} (FULL) is robust to many instruments (with $p < n$) as long as the presence of many instruments is accounted for when constructing standard errors for the estimators; see \citen{hhn:weakiv} for example.\footnote{FULL requires a user-specified parameter.  We set this parameter equal to one which produces a higher-order unbiased estimator.  See \citen{hhk:weakmse} for additional discussion.  LIML is another commonly proposed estimator which is robust to many instruments.  In our designs, its performance was generally similar to that of FULL, and we report only FULL for brevity.}  We report results for these estimators in rows labeled 2SLS(100) and FULL(100) respectively.\footnote{With $n = 100$, estimates are based on a randomly selected 99 instruments.}  For our variable selection procedures, we use Lasso to select among the instruments using the refined data-dependent penalty loadings given in (\ref{choice of loadings2}) described in Appendix A and consider two post-model selection estimation procedures.  The first, Post-Lasso, runs 2SLS using the instruments selected by Lasso; and the second, Post-Lasso-F, runs FULL using the instruments selected by Lasso.  In cases where no instruments are selected by Lasso, we use the point-estimate obtained by running 2SLS with the single instrument with the highest within sample correlation to the endogenous variable as the point estimate for Post-Lasso and Post-Lasso-F.  In these cases, we use the sup-Score test for performing inference.\footnote{Inference based on the asymptotic approximation when Lasso selects instruments and based on the sup-Score test when Lasso fails to select instruments is our preferred procedure.}  We report inference results based on the weak-identification robust sup-score testing procedure in rows labeled ``sup-Score''.

The other two procedure ``Post-Lasso (Ridge)'' and ``Post-Lasso-F (Ridge)'' use a combination of Ridge regression, Lasso, and sample-splitting.  For these procedures, we randomly split the sample into two equal-sized parts.  Call these sub-samples ``sample A'' and ``sample B.''  We then use leave-one-out cross-validation with only the data in sample A to select a ridge penalty parameter and then estimate a set of ridge coefficients using this penalty and the data from sample A.  We then use the data from sample B with these coefficients estimated using only data from sample A to form first-stage fitted values for sample B.  Then, we take the full-set of instruments augmented with the estimated fitted value just described and perform Lasso variable selection using only the data from sample B.  We use the selected variables to run either 2SLS or Fuller in sample B to obtain estimates of $\beta$ (and associated standard errors), say $\widehat\beta_{B,2SLS}$ ($s_{B,2SLS}$) and $\widehat\beta_{B,Fuller}$ ($s_{B,Fuller}$).  We then repeat this exercise switching sample A and B to obtain estimates of $\beta$ (and associated standard errors) from sample A, say $\widehat\beta_{A,2SLS}$ ($s_{A,2SLS}$) and $\widehat\beta_{A,Fuller}$ ($s_{A,Fuller}$).  Post-Lasso (Ridge) is then $w_{A,2SLS} \widehat\beta_{A,2SLS} + (1-w_{A,2SLS}) \widehat\beta_{B,2SLS}$ for $w_{A,2SLS} = \frac{s^2_{B,2SLS}}{s^2_{A,2SLS}+s^2_{B,2SLS}}$, and Post-Lasso-F (Ridge) is defined similarly.  If instruments are selected in one subsample but not in the other, we put weight one on the estimator from the subsample where instruments were selected.  If no instruments are selected in either subsample, we use the single-instrument with the highest correlation to obtain the point estimate and use the sup-score test for performing inference.

For each estimator, we report median bias (Med. Bias), median absolute deviation (MAD), and rejection frequencies for 5\% level tests (rp(.05)).  For computing rejection frequencies, we estimate conventional, homoscedastic 2SLS standard errors for 2SLS(100) and Post-Lasso and the many instrument robust standard errors of \citen{hhn:weakiv} which rely on homoscedasticity for FULL(100) and Post-Lasso-F.  We report the number of cases in which Lasso selected no instruments in the column labeled N(0).

We summarize the simulation results in Table 1.  It is apparent that the Lasso procedures are dominant when $n = 100.$  In this case, the Lasso-based procedures outperform 2SLS(100) and FULL(100) on all dimensions considered.  When the concentration parameter is 30 or $s = 50$, the instruments are relatively weak, and Lasso accordingly selects no instruments in many cases.  In these cases, inference switches to the robust sup-score procedure which controls size.  With a concentration parameter of 180, the instruments are relatively more informative and sparsity provides a good approximation in the exponential design and $s = 5$ cut-off design.  In these cases, Lasso selects instruments in the majority of replications and the procedure has good risk and inference properties relative to the other procedures considered.  In the $n = 100$ case, the simple Lasso procedures also clearly dominate Lasso augmented with Ridge as this procedure often results in no instruments being selected and relatively low power; see Figure 1.   We also see that the sup-score procedure controls size across the designs considered.

In the $n = 250$ case, the conventional many-instrument asymptotic sequence which has $p$ proportional to $n$ but $p/n < 1$ provides a reasonable approximation to the DGP, and one would expect FULL to perform well.  In this case, 2SLS(100) is clearly dominated by the other procedures.  However, there is no obvious ranking between FULL(100) and the Lasso-based procedures.  With $s = 50$, sparsity is a poor approximation in that there is signal in the combination of the 50 relevant instruments but no small set of instruments has much explanatory power.  In this setting, FULL(100) has lower estimation risk than the Lasso procedure which is not effectively able to capture the diffuse signal though both inference procedures have size close to the prescribed level.  Lasso augemented with the Ridge fit also does relatively well in this setting, being roughly on par with FULL(100).  In the exponential and cut-off with $s = 5$ designs, sparsity is a much better approximation.  In these cases, the simple Lasso-based estimators have smaller risk than FULL(100) or Lasso with Ridge and produce tests that have size close to the nominal 5\% level.  Finally, we see that the sup-score procedure continues to control size with $n = 250$.

Given that the sup-score procedure uniformly controls size across the designs considered but is actually substantially undersized, it is worth presenting additional results regarding power.  We plot size-adjusted power curves for the sup-score test, Post-Lasso-F, Post-Lasso-F (Ridge), and FULL(100) across the different designs in the $\mu^2 = 180$ cases in Figure 1.  We focus on $\mu^2 = 180$ since we expect it is when identification is relatively strong that differences in power curves will be most pronounced.  From these curves, it is apparent that the robustness of the sup-score test comes with a substantial loss of power in cases where identification is strong.  Exploring other procedures that are robust to weak identification, allow for $p \gg n$, and do not suffer from such power losses may be interesting for future research.

\subsection{Conclusions from Simulation Experiments}

The evidence from the simulations is supportive of the derived theory and favorable to Lasso-based IV methods.  The Lasso-IV estimators clearly dominate on all metrics considered when $p = n$ and $s \ll n$.  The Lasso-based IV estimators generally have relatively small median bias and estimator risk and do well in terms of testing properties, though they do not dominate FULL in these dimensions across all designs with $p < n$.  The simulation results verify that FULL becomes more appealing as the sparsity assumption breaks down.  This breakdown of sparsity is likely in situations with weak instruments, be they many or few, where none of the first-stage coefficients are well-separated from zero relative to sampling variation.  Overall, the simulation results show that simple Lasso-based procedures can usefully complement other many-instrument methods.

\section{The Impact of Eminent Domain on Economic Outcomes}

As an example of the potential application of Lasso to select instruments, we consider IV estimation of the effects of federal appellate court decisions regarding eminent domain on a variety of economic outcomes.\footnote{See \citen{chen:yeh:takings} for a detailed discussion of the economics of takings law (or eminent domain), relevant institutional features of the legal system, and a careful discussion of endogeneity concerns and the instrumental variables strategy in this context.}
To try to uncover the relationship between takings law and economic outcomes, we estimate structural models of the form
$$
y_{ct} = \alpha_c + \alpha_t + \gamma_c t + \beta \ Takings \ Law_{ct} + W_{ct}'\delta + \epsilon_{ct}
$$
where $y_{ct}$ is an economic outcome for circuit $c$ at time $t$, \textit{Takings Law}$_{ct}$ represents the number of pro-plaintiff appellate takings decisions in circuit $c$ and year $t$; $W_{ct}$ are judicial pool characteristics,\footnote{The judicial pool characteristics are the probability of a panel being assigned with the characteristics used to construct the instruments.  There are 30, 33, 32, and 30 controls available for FHFA house prices, non-metro house prices, Case-Shiller house prices, and GDP respectively.} a dummy for whether there were no cases in that circuit-year, and the number of takings appellate decisions; and $\alpha_c$, $\alpha_t$, and $\gamma_c t$ are respectively circuit-specific effects, time-specific effects, and circuit-specific time trends.  An appellate court decision is coded as pro-plaintiff if the court ruled that a taking was unlawful, thus overturning the government's seizure of the property in favor of the private owner.  We construe pro-plaintiff decisions to indicate a regime that is more protective of individual property rights.  The parameter of interest, $\beta$, thus represents the effect of an additional decision upholding individual property rights on an economic outcome.

We provide results using four different economic outcomes: the log of three home-price-indices and log(GDP).  The three different home-price-indices we consider are the quarterly, weighted, repeat-sales FHFA/OFHEO house price index that tracks single-family house prices at the state level for metro (FHFA) and non-metro (Non-Metro) areas  and the Case-Shiller home price index (Case-Shiller) by month for 20 metropolitan areas based on repeat-sales residential housing prices. We also use state level GDP from the Bureau of Economic Analysis to form log(GDP).  For simplicity and since all of the controls, instruments, and the endogenous variable vary only at the circuit-year level, we use the within-circuit-year average of each of these variables as the dependent variables in our models.  Due to the different coverage and time series lengths available for each of these series, the sample sizes and sets of available controls differ somewhat across the outcomes.  These differences lead to different first-stages across the outcomes as well.
The total sample sizes are 312 for FHFA and GDP which have identical first-stages.  For Non-Metro and Case-Shiller, the sample sizes are 110 and 183 respectively.

The analysis of the effects of takings law is complicated by the possible endogeneity between governmental takings and takings law decisions and economic variables.  To address the potential endogeneity of takings law, we employ an instrumental variables strategy based on the identification argument of \citen{chen:sethi} and \citen{chen:yeh:takings} that relies on the random assignment of judges to federal appellate panels.  Since judges are randomly assigned to three judge panels to decide appellate cases, the exact identity of the judges and, more importantly, their demographics are randomly assigned conditional on the distribution of characteristics of federal circuit court judges in a given circuit-year.  Thus, once the distribution of characteristics is controlled for, the realized characteristics of the randomly assigned three judge panel should be unrelated to other factors besides judicial decisions that may be related to economic outcomes.

There are many potential characteristics of three judge panels that may be used as instruments.  While the basic identification argument suggests any set of characteristics of the three judge panel will be uncorrelated with the structural unobservable, there will clearly be some instruments which are more worthwhile than others in obtaining precise second-stage estimates.  For simplicity, we consider only the following demographics: gender, race, religion, political affiliation, whether the judge's bachelor was obtained in-state, whether the bachelor is from a public university, whether the JD was obtained from a public university, and whether the judge was elevated from a district court along with various interactions.  In total, we have 138, 143, 147, and 138 potential instruments for FHFA prices, non-metro prices, Case-Shiller, and GDP respectively that we select among using Lasso.\footnote{Given the sample sizes and numbers of variables, estimators using all the instruments without shrinkage are only defined in the GDP and FHFA data.  For these outcomes, the \citen{fuller} point estimate (standard error) is -.0020 (3.123) for FHFA and .0120 (.1758) for GDP.}

Table 3 contains estimation results for $\beta$.  We report OLS estimates and results based on three different sets of instruments.  The first set of instruments, used in the rows labeled 2SLS, are the instruments adopted in \citen{chen:yeh:takings}.\footnote{\citen{chen:yeh:takings} used two variables motivated on intuitive grounds, whether a panel was assigned an appointee who did not report a religious affiliation and whether a panel was assigned an appointee who earned their first law degree from a public university, as instruments.}  We consider this the baseline.    The second set of instruments are those selected through Lasso using the refined data-driven penalty.\footnote{Lasso selects the number of panels with at least one appointee whose law degree is from a public university (Public) cubed for GDP and FHFA.  In the Case-Shiller data, Lasso selects Public and Public squared.  For non-metro prices, Lasso selects Public interacted with the number of panels with at least one member who reports belonging to a mainline protestant religion, Public interacted with the number of panels with at least one appointee whose BA was obtained in-state (In-State), In-State interacted with the number of panels with at least one non-white appointee, and the interaction of the number of panels with at least one Democrat appointee with the number of panels with at least one Jewish appointee.}  The number of instruments selected by Lasso is reported in the row ``S''.  We use the Post-Lasso 2SLS estimator and report these results in the rows labeled ``Post-Lasso''.  The third set of instruments is simply the union of the first two instrument sets.  Results for this set of instruments are in the rows labeled ``Post-Lasso+''.  In this case, ``S'' is the total number of instruments used.  In all cases, we use heteroscedasticity consistent standard error estimators.  Finally, we report the value of the test statistic discussed in Section 4.3.1 comparing estimates using the first and second sets of instruments in the row labeled ``Spec. Test''.

The most interesting results from the standpoint of the present paper are found by comparing first-stage Wald-statistics and estimated standard errors across the instrument sets.  The Lasso instruments are clearly much better first-stage predictors as measured by the first-stage Wald-statistic compared to the \citen{chen:yeh:takings} benchmark.  Given the degrees of freedom, this increase obviously corresponds to Lasso-based IV providing a stronger first-stage relationship for FHFA prices, GDP, and the Case-Shiller prices. In the non-metro case, the p-value from the Wald test with the baseline instruments of \citen{chen:yeh:takings} is larger than that of the Lasso-selected instruments.  This improved first-stage prediction is associated with the resulting 2SLS estimator having smaller estimated standard errors than the benchmark case for non-metro prices, Case-Shiller prices, and GDP.  The reduction in standard errors is sizable for both non-metro and Case-Shiller.  Tthe standard error estimate is somewhat larger in the FHFA case despite the improvement in first-stage prediction.  Given that the Post-Lasso first-stage produces a larger first-stage Wald-statistic while choosing fewer instruments than the benchmark suggests that we might prefer the Post-Lasso results in any case.  We also see that the test statistics for testing the difference between the estimate using the \citen{chen:yeh:takings} instruments and the Post-Lasso estimate is equal to zero are uniformly small.  Given the small differences between estimates using the first two sets of instruments, it is unsurprising that the results using the union of the two instrument sets are similar to those already discussed.

The results are also economically interesting.  The point estimates for the effect of an additional pro-plaintiff decision, a decision in favor of individual property holders, are positive, suggesting these decisions are associated with increases in property prices and GDP.  These point estimates are all small, and it is hard to draw any conclusion about the likely effect on GDP or the FHFA index given their estimated standard errors.  On the other hand, confidence intervals for non-metro and Case-Shiller constructed at usual confidence levels exclude zero.  Overall, the results do suggest that the causal effect of decisions reinforcing individual property rights is an increase in the value of holding property, at least in the short term.
The results are also consistent with the developed asymptotic theory in that the 2SLS point-estimates based on the benchmark instruments are similar to the estimates based on the Lasso-selected instruments while Lasso produces a stronger first-stage relationship and the Post-Lasso estimates are more precise in three of the four cases.  The example suggests that there is the potential for Lasso to be fruitfully employed to choose instruments in economic applications.

\appendix

\section{Implementation Algorithms} It is useful to organize the precise
implementation details into the following algorithm.  We
establish the asymptotic validity of this algorithm in the subsequent sections.
Feasible options for setting the penalty level and the loadings for $j=1,\ldots,p$, and $l=1,\ldots,k_e$ are
\begin{equation}\label{choice of loadings2}\begin{array}{llll}
&\text{initial}  &   \hat \gamma_{lj}  =   \sqrt{\En[f^2_{ij} (d_{i{l}}- \bar d_{l} )^2]},  &  \lambda =  2c\sqrt{n} \Phi^{-1}(1- \gamma/(2k_ep) ), \\
& \text{refined } &   \hat \gamma_{lj}  = \sqrt{\En [f^2_{ij} \widehat v^2_{i{l}}]},  & \lambda = 2c\sqrt{n} \Phi^{-1}(1- \gamma/(2k_ep) ),
\end{array}\end{equation}
where  $c>1$ is a constant, $\gamma \in (0,1)$, $\bar d_{l}:= \En[d_{il}]$ and $\hat v_{il}$ is an estimate of $v_{il}$.  Let $K \geq 1$ denote a bounded
number of iterations. We used $c=1.1$, $\gamma = 0.1/\log(p\vee n)$, and $K = 15$ in the simulations. In what follows Lasso/Post-Lasso estimator indicates that the practitioner can apply either the Lasso or Post-Lasso estimator. Our preferred approach uses Post-Lasso at every stage.

\begin{algorithm}[Lasso/Post-Lasso Estimators]\label{alg1} 
(1) For each $l=1,\ldots,k_e$, specify penalty loadings according to the initial option in (\ref{choice of loadings2}).  Use these penalty loadings in computing the Lasso/Post-Lasso estimator $\hat \beta_{l}$ via equations (\ref{Def:LASSOmain}) or (\ref{Def:TwoStep}). Then compute residuals $\hat v_{il}= d_{li}- f_i'\hat \beta_{l}$, $i=1,...,n$.
(2) For each $l=1,\ldots,k_e$, update the penalty loadings according to the refined option in (\ref{choice of loadings2}) and update the Lasso/Post-Lasso estimator $\hat \beta_{l}$.  Then compute a new set of residuals using the updated Lasso/Post-Lasso coefficients $\hat v_{il}= d_{li}- f_i'\hat \beta_{l}$, $i=1,...,n$. (3) Repeat the previous step $K$ times.
\end{algorithm}

If  the Algorithm \ref{alg1} selected no instruments other than intercepts, or, more generally if $\En[\hat D_{il} \hat D_{il} ']$ is near-singular, proceed to Algorithm \ref{alg3}; otherwise,
we recommend the following algorithm.

\begin{algorithm}[IV Inference Using Estimates of Optimal Instrument]\label{alg2} Compute the estimates of the optimal instrument, $\hat D_{il} = f_i' \hat\beta_{l}$, for $i=1,...,n$ and each $l=1,...,k_e$, where $\hat \beta_l$ is computed by Algorithm \ref{alg1}.
 Compute the IV estimator $\widehat \alpha =  \En [\widehat D_id_i']^{-1}  \En [\widehat D_i y_i]$. (2) Compute estimates of the asymptotic variance matrix $\hat Q^{-1}\hat \Omega\hat Q^{-1}$ where $\hat \Omega := \En [\hat \epsilon_i^2 \widehat D_i \widehat D_i'] $  for  $\hat \epsilon_i =
y_i - d_i'\hat \alpha$, and $\hat Q: = \En [\widehat D_i \widehat D_i']$. (3) Proceed to perform conventional  inference using the normality
result (\ref{robust normality result}).
\end{algorithm}

The following algorithm is only invoked if the weak instruments problem has been diagnosed, e.g., using the methods of \citen{StockYogo2005}.
In the algorithm below, $\mathcal{A}_1$ is the parameter space, and $\mathcal{G}_1 \subset \mathcal{A}_1$ is a grid of potential values for $\alpha_1$.  Choose the confidence level $1-\gamma$ of the interval, and set $\Lambda(1-\gamma)=c\sqrt{n}\Phi^{-1}(1-\gamma/2p)$.

\begin{algorithm}[IV Inference Robust to Weak Identification]\label{alg3}
(1) Set $\mathcal{C}=\emptyset$. (2) For each $a\in  \mathcal{G}_1$ compute $\Lambda_{a}$ as in (\ref{def:sup-score}). If $\Lambda_{ a} \leq \Lambda(1-\gamma)$ add $a$ to $\mathcal{C}$.
(3) Report $\mathcal{C}$.
\end{algorithm}

\section{Tools}

The following useful lemma is a consequence of moderate deviations theorems for self-normalized sums in  \citen{jing:etal} and \citeasnoun{delapena}.

We shall be using the following result -- Theorem 7.4 in \citen{delapena}. Let $X_1,...,X_n$ be independent, zero-mean variables, and
$
S_n = \sum_{i=1}^n X_i,  \ \ V^2_n = \sum_{i=1}^n X^2_i.
$
For $0 < \mu \leq 1$  set
$
B^2_n = \sum_{i=1}^n \Ep X_i^2, \ \ L_{n, \mu} = \sum_{i=1}^n \Ep|X_i|^{2 + \mu}, \ \ d_{n, \mu} = B_n/L_{n,\mu}^{1/(2 + \mu)}.
$
Then uniformly in $ 0 \leq x \leq d_{n, \mu}$,
 \begin{eqnarray*}
\frac{\Pr(S_n/V_n  \geq x) }{ \bar \Phi(x)} = 1 + O(1) \left( \frac{1+x}{d_{n,\mu}}\right)^{2 + \mu}, \ \ \frac{\Pr(S_n/V_n  \leq -x) }{\Phi(-x)} = 1 + O(1) \left( \frac{1+x}{d_{n,\mu}}\right)^{2 + \mu},
  \end{eqnarray*}
where the terms $O(1)$ are bounded in absolute value by a universal constant $A$,  $\bar \Phi := 1 - \Phi$, and $\Phi$ is the cumulative distribution function of a standard Gaussian random variable.

\begin{lemma}[Moderate Deviation Inequality for Maximum of a Vector]\label{Lemma:SNMD} Suppose that
$$ \mathcal{S}_{j} =  \frac{\sum_{i=1}^n U_{ij}}{\sqrt{ \sum_{i=1}^n U^2_{ij}}},$$
where $U_{ij}$ are independent variables across $i$ with mean zero.  We have that
$$
\Pr \left( \max_{1 \leq j\leq p }|\mathcal{S}_{j}|  >  \Phi^{-1}(1- \gamma/2p)  \right) \leq \gamma \(1 + \frac{A}{\ell^3_n}\),
$$
where $A$ is an absolute constant, provided that for $\ell_n>0$
$$
0 \leq \Phi^{-1}(1- \gamma/(2p))  \leq \frac{n^{1/6}}{\ell_n} \min_{1\leq j \leq p} M[U_j]-1, \ \  \ M[U_j] := \frac{\left( \frac{1}{n} \sum_{i=1}^n E U_{ij}^2\right)^{1/2}}{\left(\frac{1}{n} \sum_{i=1}^n E|U_{ij}^3| \right)^{1/3}}.
$$
\end{lemma}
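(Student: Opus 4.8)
The plan is to reduce the statement to a union bound over the $p$ coordinates, where the per-coordinate probability is controlled by the self-normalized moderate deviation bound of de la Peña (Theorem 7.4, quoted above) applied with $\mu = 1$.

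First I would fix $j$ and apply Theorem 7.4 with $X_i := U_{ij}$, so that $S_n = \sum_{i=1}^n U_{ij}$, $V_n^2 = \sum_{i=1}^n U_{ij}^2$, and $\mathcal{S}_j = S_n/V_n$ exactly. With $\mu = 1$ the relevant quantities are $B_n^2 = \sum_{i=1}^n \Ep U_{ij}^2 = n\cdot\tfrac1n\sum_{i=1}^n\Ep U_{ij}^2$ and $L_{n,1} = \sum_{i=1}^n \Ep|U_{ij}|^3 = n\cdot\tfrac1n\sum_{i=1}^n\Ep|U_{ij}|^3$, whence $d_{n,1} = B_n / L_{n,1}^{1/3} = n^{1/6}\,M[U_j]$. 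Set $x := \Phi^{-1}(1-\gamma/(2p))\ge 0$. The hypothesis of the lemma says precisely that $1 + x \le n^{1/6}\min_{j}M[U_j]/\ell_n \le d_{n,1}/\ell_n$ for every $j$; in the relevant regime $\ell_n\ge 1$ this gives $x \le d_{n,1}$, so $x$ lies in the admissible range $[0,d_{n,1}]$ for Theorem 7.4, and moreover $\bigl((1+x)/d_{n,1}\bigr)^3 \le \ell_n^{-3}$.

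Next I would invoke Theorem 7.4 on both tails: $\Pr(\mathcal{S}_j \ge x) \le \bar\Phi(x)\bigl(1 + A\ell_n^{-3}\bigr)$ and, using the symmetry $\Phi(-x) = \bar\Phi(x)$ of the standard normal, $\Pr(\mathcal{S}_j \le -x) \le \bar\Phi(x)\bigl(1 + A\ell_n^{-3}\bigr)$, where $A$ is the universal constant obtained by absorbing the $O(1)$ factor of Theorem 7.4 (the same $A$ works for all $j$ and both tails). Adding the two tails gives $\Pr(|\mathcal{S}_j| > x) \le 2\bar\Phi(x)\bigl(1 + A\ell_n^{-3}\bigr)$. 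By construction $\bar\Phi(x) = 1 - \Phi\bigl(\Phi^{-1}(1-\gamma/(2p))\bigr) = \gamma/(2p)$, so $\Pr(|\mathcal{S}_j| > x) \le (\gamma/p)\bigl(1 + A\ell_n^{-3}\bigr)$. A union bound over $j = 1,\dots,p$ then yields $\Pr\bigl(\max_{1\le j\le p}|\mathcal{S}_j| > x\bigr) \le \sum_{j=1}^p \Pr(|\mathcal{S}_j| > x) \le \gamma\bigl(1 + A\ell_n^{-3}\bigr)$, which is the claimed inequality.

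The argument has no genuine analytic obstacle; the only points needing care are bookkeeping ones. The step I expect to require the most attention is verifying the identity $d_{n,1} = n^{1/6}M[U_j]$ from the definitions and confirming that the stated hypothesis (with the $\min_j$ over coordinates and the slack ``$-1$'') really does place $x$ inside the admissible range $[0,d_{n,1}]$ simultaneously for all $j$, together with checking that a single universal constant $A$ serves across coordinates and both tails so that the $O(1)$ terms can be uniformly absorbed.
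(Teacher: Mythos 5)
Your argument is correct and follows essentially the same route as the paper's proof: apply the de la Peña Theorem 7.4 with $\mu=1$ to each coordinate (noting $d_{n,1}=n^{1/6}M[U_j]$ so the hypothesis with the ``$-1$'' slack places $x=\Phi^{-1}(1-\gamma/(2p))$ in the admissible range and gives the $\ell_n^{-3}$ error factor), then combine the two tails with $\bar\Phi(x)=\gamma/(2p)$ and take a union bound over $j$. Your caveat about $\ell_n\geq 1$ is the same implicit requirement present in the paper's Step 1, so there is no substantive difference.
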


\begin{proof}[Proof of Lemma \ref{Lemma:SNMD}.]
Step 1. We first note the following simple consequence  of the result
of Theorem 7.4 in \citen{delapena}.  Let $X_{1,n},...,X_{n,n}$ be the triangular array of i.n.i.d, zero-mean random variables. Suppose that
$$n^{1/6} M_n/\ell_n \geq 1,  \  \ M_n := \frac{ (\frac{1}{n}\sum_{i=1}^n \Ep X_{i,n}^2 )^{1/2}}{ (\frac{1}{n}\sum_{i=1}^n\Ep|X_{i,n}|^3)^{1/3} }.$$
Then uniformly on
$0 \leq  x \leq  n^{1/6} M_n/\ell_n-1$, the
quantities
$S_{n,n} = \sum_{i=1}^n X_{i,n}$ and  $V^2_{n,n} = \sum_{i=1}^n X^2_{i,n}$
obey
$$
\left |\frac{\Pr(|S_{n,n}/V_{n,n}|  \geq x) }{ 2 \bar \Phi(x)} - 1 \right |  \leq
\frac{A}{ \ell_n^3}. \\
$$
This corollary follows by the application of the quoted theorem to the case with $\mu =1$.
The calculated  error bound follows from the triangular inequalities and conditions on $\ell_n$ and $M_n$.

Step 2.  It follows that
\begin{eqnarray*}
&& \Pr \left( \max_{1 \leq j\leq p }|\mathcal{S}_{j}|  >  \Phi^{-1}(1- \gamma/2p)  \right)  \leq_{(1)} p  \max_{1 \leq j\leq p } \Pr \left( |\mathcal{S}_{j}|  >  \Phi^{-1}(1- \gamma/2p)  \right) \\
&&  =_{(2)} p   \Pr \left( |\mathcal{S}_{j_n}|  >  \Phi^{-1}(1- \gamma/2p)  \right)   \leq_{(3)} p 2\bar \Phi (  \Phi^{-1}(1- \gamma/2p)  ) \(1 + \frac{A}{\ell^3_n}\) \\
&&  \leq 2p \gamma/(2p) \(1 + \frac{A}{\ell^3_n}\) \leq  \gamma \(1 + \frac{A}{\ell^3_n}\),
\end{eqnarray*}
on the set
$
0 \leq \Phi^{-1}(1- \gamma/(2p))  \leq \frac{n^{1/6}}{\ell_n} M_{j_n}-1,
$
where  inequality (1) follows by the union bound,  equality (2) is the maximum taken
over finite set, so the maximum is attained at some $j_n \in \{1,...,p\}$, and the last
inequality follows by the application of Step 1, by setting $X_{i,n}= U_{i,j_n}$. \end{proof}


\section{Proof of Theorem 1}\label{AppendixThm1}

The proof of Theorem \ref{Thm:RateLASSO} has four steps.   The most important steps are the Steps 1-3.
One half  of Step 1 for bounding $\|\cdot\|_{2,n}$-rate follows the strategy of \citen{BickelRitovTsybakov2009}, but accommodates
data-driven penalty loadings.  The other half of Step 1 for bounding  the $\|\cdot\|_1$-rate is new for the nonparametric case. Step 2 innovatively uses the moderate deviation theory for self-normalized sums which allows us to
obtain sharp results for non-Gaussian and heteroscedastic errors as well as handle data-driven penalty loadings. Step 3 relates the ideal penalty loadings and the feasible penalty loadings.  Step 4 puts the results together to reach the conclusions.

Step 1. For $C>0$ and each $l=1,\ldots, k_e$, consider the following weighted restricted eigenvalue
$$ \kappa_{C}^l = \min_{\delta \in \mathbb{R}^p : \  \|\widehat \Upsilon^0_{l} {\delta}_{T^c_l}\|_{1} \leq C\|\widehat \Upsilon^0_{l} {\delta}_{T_l}\|_1, \|\delta\|_2 \neq 0}\frac{\sqrt{s}\|f_i'{\delta}\|_{2,n}}{\|\widehat \Upsilon^0_{l} {\delta}_{T_l}\|_1}.$$
This quantity controls the modulus of continuity between the prediction norm $\|f_i'{\delta}\|_{2,n}$ and the $\ell_1$-norm $\|\delta\|_1$ within a restricted region that depends on $l=1,\ldots,k_e$. Note that if $\displaystyle a = \min_{1\leq l\leq k_e}\min_{1\leq j\leq p} \widehat \Upsilon_{lj}^0 \leq \max_{1\leq l\leq k_e}\|\widehat \Upsilon_{l}^0\|_\infty = b,$ for every $C>0$, because $\{\delta \in \RR^p: \|\widehat \Upsilon^0_{l} {\delta}_{T^c_l}\|_{1} \leq C\|\widehat \Upsilon^0_{l} {\delta}_{T_l}\|_1 \} \subseteq \{\delta \in \RR^p: a\| {\delta}_{T^c_l}\|_{1} \leq bC\| {\delta}_{T_l}\|_1 \}$ and $\|\widehat \Upsilon^0_{l} {\delta}_{T_l}\|_1\leq b \| {\delta}_{T_l}\|_1 $, we have $$\min_{1\leq l\leq k_e} \kappa_C^l \geq (1/b)\kappa_{(bC/a)}(\En[f_if_i'])$$ where the latter is the restricted eigenvalue defined in (\ref{RE}). If $C=c_0=(uc+1)/(\ell c-1)$ we have $\min_{1\leq l\leq k_e} \kappa_{c_0}^l \geq (1/b) \kappa_{\bar C}(\En[f_if_i'])$. By Condition RF and by Step 3 of Appendix \ref{AppendixThm1} below, we have $a$ bounded away from zero and $b$ bounded from above with probability approaching one as $n$ increases. 

\comment{We assume that for $c_0 = (c+1)/(c-1)$, where $c>1$ is defined in (\ref{choice of lambda}), $m\semin{m+s}\geq 4c_0^2s\semax{m}$ for some $m\geq s$. Let $$\kappa_0 = \sqrt{\semin{s+m}}( 1 - c_0\sqrt{s\semax{m}/[ m\semin{s+m} ]})\geq \sqrt{\semin{s+m}}/2.$$
For example, if $\semax{s\log n}$ is uniformly bounded, and $\semin{s(1+\log n)}$ is uniformly bounded away from zero,
we have $\kappa_0 / \sqrt{\semin{s(1+\log n)}} \to 1$.

For instance, if $\Ep[f_if_i']$ has eigenvalues bounded away from zero and from above, $\|f_i\|_\infty < K $, and $s^2\log^5p = o(n)$, it follows that $\semin{s\log(n)}$ is bounded away from zero as $n$ grows.}

The main result of this step is the following lemma:

\begin{lemma}\label{Thm:ReviewModel1} Under Condition AS, if $\lambda/n \geq c \| S_{l}\|_{\infty}$, and $\widehat \Upsilon_l$ satisfies (\ref{Def:AsympValidPenaltyLoading}) with $u \geq 1 \geq \ell > 1/c$  then
$$
\|f_i'(\hat{\beta_{l}}-{\beta_{l}}_0)\|_{2,n}  \leq \(u + \frac{1}{c}\) \frac{\lambda \sqrt{s}}{n \kappa_{c_0}^l}+2c_s,
$$
$$ \|\hat\Upsilon_{l}^0 (\hat{\beta_{l}} - {\beta_{l}}_0 )\|_1 \leq  3c_0 \frac{\sqrt{s}}{\kappa_{2c_0}^l}\left( ( u + [1/c])\frac{\lambda \sqrt{s}}{n \kappa_{c_0}^l}+2c_s \right) +   \frac{3c_0n}{\lambda} c_s^2,$$
where $c_0 = (uc+1)/(\ell c-1)$.
\end{lemma}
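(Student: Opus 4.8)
The plan is to run the standard restricted-eigenvalue Lasso argument of \citen{BickelRitovTsybakov2009}, modified to carry the data-driven loadings $\widehat\Upsilon_l$ only through the two-sided bound $\ell\widehat\Upsilon^0_l\le\widehat\Upsilon_l\le u\widehat\Upsilon^0_l$ and to absorb the nonparametric bias $a_l$ with $\|a_l(x_i)\|_{2,n}\le c_s$. Write $\delta:=\widehat\beta_l-\beta_{l0}$ and use $d_{il}=f_i'\beta_{l0}+a_l(x_i)+v_{il}$. By optimality of $\widehat\beta_{lL}$ in (\ref{Def:LASSOmain}), $\widehat Q_l(\widehat\beta_l)-\widehat Q_l(\beta_{l0})\le(\lambda/n)(\|\widehat\Upsilon_l\beta_{l0}\|_1-\|\widehat\Upsilon_l\widehat\beta_l\|_1)$. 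Expanding the quadratic, $\widehat Q_l(\widehat\beta_l)-\widehat Q_l(\beta_{l0})=\|f_i'\delta\|_{2,n}^2-2\En[v_{il}f_i'\delta]-2\En[a_l(x_i)f_i'\delta]$, and since $\mathrm{supp}(\beta_{l0})\subseteq T_l$ together with $\ell\widehat\Upsilon^0_l\le\widehat\Upsilon_l\le u\widehat\Upsilon^0_l$, the triangle inequality gives $\|\widehat\Upsilon_l\beta_{l0}\|_1-\|\widehat\Upsilon_l\widehat\beta_l\|_1\le u\|\widehat\Upsilon^0_l\delta_{T_l}\|_1-\ell\|\widehat\Upsilon^0_l\delta_{T_l^c}\|_1$. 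For the two cross terms I would write $2\En[v_{il}f_i'\delta]=\sum_j S_{lj}\widehat\gamma^0_{lj}\delta_j$ with $S_l=2\En[(\widehat\Upsilon^0_l)^{-1}f_iv_{il}]$, so that the hypothesis $\lambda/n\ge c\|S_l\|_\infty$ yields $|2\En[v_{il}f_i'\delta]|\le\|S_l\|_\infty\|\widehat\Upsilon^0_l\delta\|_1\le(\lambda/(cn))(\|\widehat\Upsilon^0_l\delta_{T_l}\|_1+\|\widehat\Upsilon^0_l\delta_{T_l^c}\|_1)$, while Cauchy-Schwarz and Condition AS give $|2\En[a_l(x_i)f_i'\delta]|\le2c_s\|f_i'\delta\|_{2,n}$. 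Combining yields the master bound
$$
\|f_i'\delta\|_{2,n}^2+\frac{\lambda}{n}\Big(\ell-\frac1c\Big)\|\widehat\Upsilon^0_l\delta_{T_l^c}\|_1\ \le\ \frac{\lambda}{n}\Big(u+\frac1c\Big)\|\widehat\Upsilon^0_l\delta_{T_l}\|_1+2c_s\|f_i'\delta\|_{2,n},
$$
whose left coefficient is positive since $\ell c>1$.

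For the prediction-norm bound I would split on whether $\delta$ lies in the weighted cone. If $\|\widehat\Upsilon^0_l\delta_{T_l^c}\|_1>c_0\|\widehat\Upsilon^0_l\delta_{T_l}\|_1$, then because $(\ell-1/c)c_0=u+1/c$ by the very definition $c_0=(uc+1)/(\ell c-1)$, the first two terms of the master bound have nonpositive sum, so $\|f_i'\delta\|_{2,n}^2\le2c_s\|f_i'\delta\|_{2,n}$ and hence $\|f_i'\delta\|_{2,n}\le2c_s$, which is within the claim. Otherwise $\|\widehat\Upsilon^0_l\delta_{T_l^c}\|_1\le c_0\|\widehat\Upsilon^0_l\delta_{T_l}\|_1$, so $\delta$ is in the set defining $\kappa_{c_0}^l$, whence $\|\widehat\Upsilon^0_l\delta_{T_l}\|_1\le(\sqrt s/\kappa_{c_0}^l)\|f_i'\delta\|_{2,n}$; dropping the negative term of the master bound, substituting, and dividing by $\|f_i'\delta\|_{2,n}$ gives $\|f_i'\delta\|_{2,n}\le(u+1/c)\lambda\sqrt s/(n\kappa_{c_0}^l)+2c_s$, the first display of the lemma.

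For the $\ell_1$ bound I would instead apply $2c_s\|f_i'\delta\|_{2,n}\le\|f_i'\delta\|_{2,n}^2+c_s^2$ to the master bound; the $\|f_i'\delta\|_{2,n}^2$ terms cancel and leave $(\ell-1/c)\|\widehat\Upsilon^0_l\delta_{T_l^c}\|_1\le(u+1/c)\|\widehat\Upsilon^0_l\delta_{T_l}\|_1+(n/\lambda)c_s^2$. Split again on whether $(n/\lambda)c_s^2\le(u+1/c)\|\widehat\Upsilon^0_l\delta_{T_l}\|_1$: if so, then $\|\widehat\Upsilon^0_l\delta_{T_l^c}\|_1\le2c_0\|\widehat\Upsilon^0_l\delta_{T_l}\|_1$, so $\delta$ is in the set defining $\kappa_{2c_0}^l$, giving $\|\widehat\Upsilon^0_l\delta\|_1\le(1+2c_0)\|\widehat\Upsilon^0_l\delta_{T_l}\|_1\le3c_0(\sqrt s/\kappa_{2c_0}^l)\|f_i'\delta\|_{2,n}$, into which I substitute the prediction-norm bound just proved; in the opposite case $\|\widehat\Upsilon^0_l\delta_{T_l}\|_1<(n/\lambda)c_s^2$, and then the displayed inequality forces $\|\widehat\Upsilon^0_l\delta_{T_l^c}\|_1<2c_0(n/\lambda)c_s^2$, so $\|\widehat\Upsilon^0_l\delta\|_1<(1+2c_0)(n/\lambda)c_s^2\le3c_0(n/\lambda)c_s^2$. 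Adding the two regimes' bounds gives the second display. The only arithmetic facts about constants needed are $c_0\ge1$, $\kappa_{2c_0}^l\le\kappa_{c_0}^l$, $c/(\ell c-1)\le c_0$, and $(\ell-1/c)c_0=u+1/c$, the last of which is where the form of $c_0$ is used.

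The main obstacle, and essentially the only delicate point, is arranging the two case splits so that the prediction-norm bound retains the sharp constant $u+1/c$, the eigenvalue $\kappa_{c_0}^l$ (rather than $\kappa_{2c_0}^l$), and an \emph{additive} — not quadratic — dependence on $c_s$, while the $\ell_1$ bound is handled by the different device of completing the square to kill $\|f_i'\delta\|_{2,n}^2$ so that its residual appears exactly as $3c_0n c_s^2/\lambda$; everything else is routine bookkeeping. I note that this lemma needs none of the moderate-deviation machinery for self-normalized sums: the inequality $\lambda/n\ge c\|S_l\|_\infty$ is a hypothesis here, and is verified separately (with $\lambda$ as in (\ref{the principle lambda})) in the subsequent steps.
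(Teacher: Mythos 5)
Your proof is correct and follows essentially the same route as the paper's: the same basic inequality obtained from the optimality of the Lasso solution (H\"older with the self-normalized score $S_l$ for the noise term, Cauchy--Schwarz with $c_s$ for the approximation error, and the two-sided loading bounds), followed by the cone conditions with constants $c_0$ and $2c_0$ and the weighted restricted eigenvalues $\kappa^l_{c_0},\kappa^l_{2c_0}$. The only cosmetic difference is in the $\ell_1$ bound, where you apply $2c_s x\le x^2+c_s^2$ up front and split on which right-hand term dominates, while the paper splits directly on whether $\|\hat\Upsilon^0_l\delta_{T_l^c}\|_1\le 2c_0\|\hat\Upsilon^0_l\delta_{T_l}\|_1$ and uses the equivalent bound $x(2c_s-x)\le c_s^2$ inside the second case; both give the same constants.
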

\begin{proof}[Proof of Lemma \ref{Thm:ReviewModel1}]
Let ${\delta_{{l}}} := \widehat {\beta_{l}} - {\beta_{l0}}$. By optimality of
$\widehat {\beta_{l}}$ we have \begin{equation}\label{Eq:OptBeta}  \hat Q_l(\hat {\beta_{l}}) - \hat Q_l({\beta_{l0}})
\leq \frac{\lambda}{n}\(\| \hat\Upsilon_{l}  {\beta_{l0}}\|_1 -
\| \hat\Upsilon_{l} \hat{\beta_{l}}\|_1\). \end{equation}
Expanding the quadratic function $\hat Q_l$, and using that $S_l =  2(\widehat \Upsilon^0_{l})^{-1} \En[v_{il}f_i]$, we have
 \begin{equation}\label{Eq:QuadraticExp}
\begin{array}{rcl} \left| \  \hat Q_l(\hat {\beta_{l}}) - \hat Q_l({\beta_{l}}_0) - \|f_i'{\delta_{{l}}}\|^2_{2,n} \right| & = & |2\En[v_{il}f_i'\delta_l] + 2\En[a_{il}f_i'\delta_l] |\\
& \leq & \| S_{l}\|_{\infty} \| \widehat \Upsilon^0_{l}{\delta_{{l}}}\|_{1} + 2c_s\|f_i'{\delta_{{l}}}\|_{2,n}.
  \end{array}
 \end{equation}
So combining (\ref{Eq:OptBeta}) and (\ref{Eq:QuadraticExp}) with $\lambda/n \geq
c\|S_{l}\|_{\infty}$ and the conditions imposed on $\widehat \Upsilon_l$ in the statement of the theorem,
\begin{equation}\label{Eq:Basic} \begin{array}{rcl}\|f_i'{\delta_{{l}}}\|_{2,n}^2  & \leq & \displaystyle \frac{\lambda}{n}\(\| \hat\Upsilon_{l} {\delta_{{l}}}_{T_l}\|_1 - \| \hat\Upsilon_{l} {\delta_{{l}}}_{T^c_l}\|_1\) + \|S_{l}\|_{\infty}\|\widehat \Upsilon_l^0{\delta_{{l}}}\|_1 + 2c_s\|f_i'{\delta_{{l}}}\|_{2,n}\\
& & \\
& \leq & \displaystyle \(u + \frac{1}{c}\) \frac{\lambda}{n} \| \hat\Upsilon^0_{l} {\delta_{{l}}}_{T_l}\|_1 - \(\ell-\frac{1}{c}\)\frac{\lambda}{n}\|  \hat\Upsilon^0_{l} {\delta_{{l}}}_{T^c_l}\|_1 +2c_s\|f_i'{\delta_{{l}}}\|_{2,n}.\\
\end{array}
\end{equation}

To show the first statement of the Lemma we can assume $\|f_i'{\delta_{{l}}}\|_{2,n}
\geq 2c_s$, otherwise we are done. This condition together with
relation (\ref{Eq:Basic}) implies that for $c_0 = (uc + 1)/(\ell c - 1)$ we have
$\|\hat\Upsilon^0_{l} {\delta_{{l}}}_{T^c_l}\|_1 \leq  c_0\|\hat\Upsilon^0_{l}
{\delta_{{l}}}_{T_l}\|_1.$ Therefore, by definition of $\kappa_{c_0}^l$, we have $ \|\hat\Upsilon^0_{l} {\delta_{{l}T_l} }\|_1 \leq \sqrt{s}\|f_i'{\delta_{{l}}}\|_{2,n} / \kappa_{c_0}^l.$ Thus, relation
(\ref{Eq:Basic}) implies
$ \|f_i'{\delta_{{l}}}\|_{2,n}^2  \leq \(u + \frac{1}{c}\)  \frac{\lambda\sqrt{s}}{n \kappa_{c_0}^l}\|f_i'{\delta_{{l}}}\|_{2,n} + 2c_s\|f_i'{\delta_{{l}}}\|_{2,n}$
and the result follows.

 To establish the second  statement of the Lemma, we
consider two cases. First, assume $\|\hat\Upsilon^0_{l}
{\delta_{{l}}}_{T^c_l}\|_1 \leq 2c_0  \|\hat\Upsilon^0_{l} {\delta_{{l}}}_{T_l}\|_1.$ In this
case, by definition of $\kappa_{2c_0}^l$, we have
$$ \|\hat\Upsilon^0_{l} {\delta_{{l}}}\|_1 \leq (1+2c_0) \|\hat\Upsilon^0_{l} {\delta_{{l}}}_T\|_1 \leq (1+2c_0)\sqrt{s}\|f_i'{\delta_{{l}}}\|_{2,n}/\kappa_{2c_0}^l $$
and the result follows by applying the first bound to
$\|f_i'{\delta_{{l}}}\|_{2,n}$. On the other hand, consider the case that
\begin{equation}\label{L1:Case2}\|\hat\Upsilon^0_{l}
{\delta_{{l}}}_{T^c_l}\|_1 > 2c_0 \|\hat\Upsilon^0_{l} {\delta_{{l}}}_{T_l}\|_1\end{equation} which would already imply $\|f_i'{\delta_{{l}}}\|_{2,n} \leq 2c_s$ by (\ref{Eq:Basic}). Moreover,
$$\begin{array}{rcl}
 \|\hat\Upsilon^0_{l} {\delta_{{l}}}_{T^c_l}\|_1 & \leq_{(1)} & c_0  \|\hat\Upsilon^0_{l} {\delta_{{l}}}_{T_l}\|_1 + \frac{c}{\ell c-1} \frac{n}{\lambda}\|f_i'{\delta_{{l}}}\|_{2,n}(2c_s - \|f_i'{\delta_{{l}}}\|_{2,n})\\
 & \leq_{(2)} & c_0   \|\hat\Upsilon^0_{l} {\delta_{{l}}}_{T_l}\|_1 + \frac{c}{\ell c-1}\frac{n}{\lambda}  c_s^2 \leq_{(3)}  \frac{1}{2} \|\hat\Upsilon^0_{l} {\delta_{{l}}}_{T^c_l}\|_1 + \frac{c}{\ell c-1}\frac{n}{\lambda}  c_s^2,\\
\end{array}
$$
where (1) holds by (\ref{Eq:Basic}), (2) holds since
$\|f_i'{\delta_{{l}}}\|_{2,n}(2c_s - \|f_i'{\delta_{{l}}}\|_{2,n}) \leq \max_{x\geq 0} x(2c_s-x) \leq c_s^2$, and (3) follows from (\ref{L1:Case2}). Thus,
$$\|\hat\Upsilon^0_{l} {\delta_{{l}}}\|_1 \leq \(1 + \frac{1}{2c_0}\) \|\hat\Upsilon^0_{l} {\delta_{{l}}}_{T^c_l}\|_1 \leq
  \(1 + \frac{1}{2c_0 }\)\frac{2c}{\ell c-1}\frac{n}{\lambda} c_s^2, $$
and the result follows from noting that $c/(\ell c-1) \leq c_0/u \leq c_0$ and $1 + 1/2c_0 \leq 3/2$.
\end{proof}

Step 2.   In this step we prove a lemma about the quantiles of the maximum of the scores
$ S_{l} =  2 \En [(\widehat \Upsilon^0_{l})^{-1} f_i v_{il}],$
and use it to pin down the level of the penalty.
For  $
\lambda =  c2\sqrt{n}\Phi^{-1}(1-\gamma/(2k_ep)),$ 
we have that as $\gamma \to 0$ and $n \to \infty$,
$
\Pr \left( c \max_{1 \leq {l} \leq k_e } n \|S_{l}\|_{\infty}  >  \lambda  \right) = o(1),
$
provided that for some $b_n \to \infty$
$$
2\Phi^{-1}(1-\gamma/(2k_ep)) \leq \frac{n^{1/6}}{b_n} \min_{1\leq j \leq p, 1 \leq l \leq k_e} M_{jl}, \ \  \ M_{jl} := \frac{\barEp[f_{ij}^2 v_{il}^2]^{1/2}}{\barEp[|f_{ij}|^3 |v_{il}|^3]^{1/3}}.
$$
Note that the last condition is satisfied under our conditions for large $n$ for some $b_n \to \infty$, since $k_e$ is fixed, $\log (1/\gamma) \lesssim \log(p\vee n)$,
$K_n^{2}\log^3 (p\vee n) = o (n)$, and  $\min_{1\leq j \leq p, 1 \leq l \leq k_e} M_{jl} \gtrsim 1/ K_n^{1/3}$.  This result follows from the bounds
on moderate deviations of a maximum of a vector provided in Lemma \ref{Lemma:SNMD}, by $\bar \Phi(t) \leq \phi(t)/t$, $\max_{j\leq p,l\leq k_e} 1/ M_{jl} \lesssim K_n^{1/3}$, and $K_n^{2/3}\log (p\vee n) = o(n^{1/3})$  holding by Condition RF.

%

Step 3. The main result of this step is the following: Define the expected ``ideal" penalty loadings
$
 \Upsilon^0_{l} := \text{diag}\left(  \sqrt{\barEp[f_{i1}^2 v_{il}^2 ]},..., \sqrt{\barEp[f_{ip}^2 v_{il}^2 ]} \right ),
$
where the entries of  $\Upsilon^0_{l}$ are bounded away from zero and from above uniformly in $n$ by Condition RF.
Then the empirical ``ideal" loadings converge to the expected ``ideal" loadings:  $\max_{1 \leq l \leq k_e} \| \widehat \Upsilon^0_{l} - \Upsilon^0_{l}\|_{\infty}  \to_{\mathrm{P}} 0.$ This is assumed
in Condition RF.

Step 4.  Combining the results of all the steps above, given that $\lambda=2c\sqrt{n}\Phi^{-1}(1-\gamma/(2pk_e))\lesssim c\sqrt{n\log(pk_e/\gamma)}$, $k_e$ fixed, and asymptotic valid penalty loadings $\widehat \Upsilon_l$,
and using the bound $c_{s} \lesssim_{\mathrm{P}} \sqrt{s/n}$ from Condition AS, we obtain the conclusion that
$$
\|f_i'(\hat{\beta_{l}}-{\beta_{l0}})\|_{2,n}  \lesssim_{\mathrm{P}}  \frac{1}{\kappa_{c_0}^l}\sqrt{\frac{s \log (k_ep/\gamma)}{ n }} +  \sqrt{\frac{s }{ n }},
$$
which gives, by the triangular inequality and by $\| D_{il} - f_i'{\beta_{l0}}\|_{2,n} \leq c_s \lesssim_{\mathrm{P}} \sqrt{s/n}$ holding by Condition AS,
$$
\|\hat D_{il}  - D_{il}\|_{2,n}  \lesssim_{\mathrm{P}} \frac{1}{\kappa_{c_0}^l} \sqrt{\frac{s \log (k_ep/\gamma)}{ n }}.
$$
The first result follows since $\kappa_{\bar C}\lesssim_P \kappa_{c_0}^l$ by Step 1.

To derive the $\ell_1$-rate we apply the second result in Lemma \ref{Thm:ReviewModel1} as follows
$$\begin{array}{l}  \displaystyle  \|\hat{\beta_{l}} - {\beta_{l0}} \|_1 \displaystyle  \leq \|(\hat\Upsilon_{l}^0)^{-1}\|_{\infty} \|\hat\Upsilon_{l}^0 (\hat{\beta_{l}} - {\beta_{l}}_0 )\|_1 \\
  \displaystyle \lesssim_{\mathrm{P}}  \|(\hat\Upsilon_{l}^0)^{-1}\|_{\infty}\left (
 \frac{\sqrt{s}}{\kappa_{2c_0}^l} \left(\frac{1}{\kappa_{c_0}^l}\sqrt{\frac{s \log (k_ep/\gamma)}{ n }} +  \sqrt{\frac{s }{ n }}\right) +  \frac{1}{\sqrt{\log (p/\gamma)}} \frac{s}{\sqrt{n}} \right)
 \displaystyle
 \lesssim_{\mathrm{P}} \frac{1}{(\kappa_{2c_0}^l)^2}\sqrt{\frac{s^2 \log (k_ep/\gamma)}{ n }}.\end{array} $$
That yields the result since $\kappa_{2\bar C}\lesssim_P \kappa_{2c_0}^l$ by Step 1. \qed

\section{Proof of Theorem 2}\label{AppendixThm2}

The proof proceeds in three steps. The general strategy of Step 1 follows \citen{BC-SparseQR} and \citen{BC-PostLASSO}, but a major difference is the use of moderate deviation theory for self-normalized sums which allows us to
obtain the results for non-Gaussian and heteroscedastic errors as well as handle data-driven penalty loadings. The sparsity proofs are motivated by \citen{BC-PostLASSO} but adjusted for the data-driven penalty loadings that contain self-normalizing factors.

Step 1.  Here we derive a general performance bound for Post-Lasso, that actually contains
more information than the statement of the theorem. This lemma will be invoked in Step 3 below.

Let $F = [f_1;\ldots;f_n]'$ denote a $n$ by $p$ matrix and for a set of indices $S \subset \{1,\ldots,p\}$ we define
$\mathcal{P}_{S} = F[S](F[S]'F[S])^{-1} F[S]'$ denote the projection matrix on the columns associated with the indices in $S$.

\begin{lemma}[Performance of the Post-Lasso]\label{Thm:2StepMain}
Under Conditions AS and RF, let $\widehat T_l$ denote the support selected by $\widehat \beta_{l} = \widehat \beta_{lL}$, $\widehat T_l \subseteq \widehat I_l$, $\widetilde m_l =
|\widehat I_l\setminus T_l|$, and $\widehat \beta_{lPL}$ be
the Post-Lasso estimator based on $\widehat I_l$, $l=1,\ldots, k_e$. Then we have
$$
\begin{array}{l}
\displaystyle \max_{l\leq k_e} \| D_{il} - f_i'\hat \beta_{lPL}\|_{2,n} \lesssim_P   \max_{l\leq k_e}\left\{ \frac{\sqrt{(s+\widetilde m_l) \log(p k_e)}}{\sqrt{n \ \semin{s+\widetilde m_l}}}  + \|(D_l- \mathcal{P}_{\hat I_l}D_l)/\sqrt{n}\|_2\right\},\\
\displaystyle \max_{1\leq l \leq k_e}\| \widehat \Upsilon_l (\widehat\beta_{lPL} -\beta_{l0} ) \|_1 \leq  \max_{1\leq l \leq k_e} \frac{\( \|\widehat \Upsilon_l^0\|_\infty + \| \widehat \Upsilon_l - \widehat \Upsilon_l^0\|_\infty\)\sqrt{\widetilde m_l + s}}{\sqrt{\semin{\widetilde m_l +s}}} \|f_i'( \widehat\beta_{lPL} -\beta_{l0} )\|_{2,n}.\\
\end{array}
$$

If in addition $\lambda/n \geq
c\| S_l\|_{\infty}$, and $\widehat \Upsilon_l$ satisfies (\ref{Def:AsympValidPenaltyLoading}) with $u \geq 1 \geq \ell > 1/c$ in the first stage for Lasso for every $l=1,\ldots,k_e$, then we have
$$ \max_{l\leq k_e} \|(D_l- \mathcal{P}_{\hat I_l}D_l)/\sqrt{n}\|_2 \leq  \max_{l\leq k_e} \(u + \frac{1}{c}\) \frac{\lambda \sqrt{s}}{n \kappa_{c_0}^l}+3c_s.$$

\end{lemma}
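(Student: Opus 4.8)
The three displayed inequalities are treated in turn; the first (the prediction norm) carries all the probabilistic content, while the second is deterministic and the third follows from the Lasso bound already recorded in Lemma \ref{Thm:ReviewModel1}.

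\textbf{Reduction for the prediction norm.} The plan is to identify the Post-Lasso fit $f_i'\widehat\beta_{lPL}$ with the least-squares projection $\mathcal P_{\hat I_l}\mathbf d_l$ of $\mathbf d_l=(d_{1l},\dots,d_{nl})'$ onto the columns indexed by $\widehat I_l$, and to write $\mathbf d_l=D_l+v_l$ with $D_l=(D_{1l},\dots,D_{nl})'$ and $v_l=(v_{1l},\dots,v_{nl})'$. Idempotency of $\mathcal P_{\hat I_l}$ gives $D_l-\mathcal P_{\hat I_l}\mathbf d_l=(D_l-\mathcal P_{\hat I_l}D_l)-\mathcal P_{\hat I_l}v_l$, so by the triangle inequality $\|D_{il}-f_i'\widehat\beta_{lPL}\|_{2,n}\le\|(D_l-\mathcal P_{\hat I_l}D_l)/\sqrt n\|_2+\|\mathcal P_{\hat I_l}v_l/\sqrt n\|_2$. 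The first summand is exactly the approximation term appearing in the conclusion, so it remains only to bound the projected noise $\|\mathcal P_{\hat I_l}v_l/\sqrt n\|_2$ by the first two right-hand terms.

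\textbf{The two elementary statements.} For the weighted $\ell_1$ bound, set $\delta_l=\widehat\beta_{lPL}-\beta_{l0}$; since $\widehat\beta_{lPL}$ is supported on $\widehat I_l$ and $\beta_{l0}$ on $T_l$, $\delta_l$ is supported on $\widehat I_l\cup T_l$, of cardinality at most $\widetilde m_l+s$. Then $\|\widehat\Upsilon_l\delta_l\|_1\le(\|\widehat\Upsilon_l^0\|_\infty+\|\widehat\Upsilon_l-\widehat\Upsilon_l^0\|_\infty)\|\delta_l\|_1$ (triangle inequality on the diagonal loadings), $\|\delta_l\|_1\le\sqrt{\widetilde m_l+s}\,\|\delta_l\|_2$ (Cauchy--Schwarz on the support), and $\|\delta_l\|_2\le\|f_i'\delta_l\|_{2,n}/\sqrt{\semin{\widetilde m_l+s}}$ (the $(\widetilde m_l+s)$-sparse eigenvalue bound); chaining these three reproduces the second display verbatim. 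For the supplementary projection-residual bound, because $\mathcal P_{\hat I_l}D_l$ is the best fit of $D_l$ in the span of $\widehat I_l\supseteq\widehat T_l$ and $\widehat\beta_{lL}$ is supported on $\widehat T_l$, we get $\|(D_l-\mathcal P_{\hat I_l}D_l)/\sqrt n\|_2\le\|D_{il}-f_i'\widehat\beta_{lL}\|_{2,n}\le\|f_i'(\widehat\beta_{lL}-\beta_{l0})\|_{2,n}+c_s$ using $D_{il}=f_i'\beta_{l0}+a_{il}$ and $\|a_{il}\|_{2,n}\le c_s$ from Condition AS; under $\lambda/n\ge c\|S_l\|_\infty$ and the loading condition (\ref{Def:AsympValidPenaltyLoading}) the Lasso bound of Lemma \ref{Thm:ReviewModel1} gives $\|f_i'(\widehat\beta_{lL}-\beta_{l0})\|_{2,n}\le(u+1/c)\lambda\sqrt s/(n\kappa_{c_0}^l)+2c_s$, and adding the remaining $c_s$ yields $3c_s$.

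\textbf{The noise term, oracle-support part.} I would split $\widehat I_l=(\widehat I_l\cap T_l)\cup(\widehat I_l\setminus T_l)$ and use the orthogonal (Pythagorean) identity $\|\mathcal P_{\hat I_l}v_l\|_2^2=\|\mathcal P_{\hat I_l\cap T_l}v_l\|_2^2+\|\mathcal P_{(\hat I_l\setminus T_l)\mid \hat I_l\cap T_l}v_l\|_2^2$. Since $\widehat I_l\cap T_l\subseteq T_l$ and $T_l$ is nonrandom, monotonicity of projections gives $\|\mathcal P_{\hat I_l\cap T_l}v_l\|_2\le\|\mathcal P_{T_l}v_l\|_2$, and $\|\mathcal P_{T_l}v_l/\sqrt n\|_2^2\le\semin{s}^{-1}\sum_{j\in T_l}(\En[f_{ij}v_{il}])^2$. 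Because $T_l$ is fixed of size at most $s$ and each $\En[f_{ij}v_{il}]$ is mean zero with second moment $\lesssim 1/n$ under Condition RF, controlling $\max_{l\le k_e}\sum_{j\in T_l}(\En[f_{ij}v_{il}])^2$ needs no union bound over the $p$ columns: either $\max_l(\cdot)\le\sum_l(\cdot)$ and Markov yield the factor $k_e$, or a self-normalized moderate-deviation bound (Lemma \ref{Lemma:SNMD}) over the $sk_e$ indices yields $\log(sk_e)$; the smaller of the two gives $\max_l\|\mathcal P_{T_l}v_l/\sqrt n\|_2\lesssim_{\mathrm{P}}\sqrt{s/n}\,\sqrt{(k_e\wedge\log(sk_e))/\semin{s}}$. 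This is precisely where the absence of a $\log p$ factor is essential, and it hinges on $T_l$ being data-independent.

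\textbf{The noise term, extra part, and the main obstacle.} The residualized piece is the hard part, since $\widehat I_l\setminus T_l$ is data-dependent of size $\widetilde m_l$ and could a priori be any of the $\binom{p}{\widetilde m_l}$ such subsets. Bounding $\|\mathcal P_{(\hat I_l\setminus T_l)\mid\hat I_l\cap T_l}v_l\|_2^2$ by $\|F[R]'v_l\|_2^2/(n\,\semin{\widetilde m_l})$ for the relevant $\widetilde m_l$-dimensional residualized system and using $\|F[R]'v_l\|_2^2\le\widetilde m_l\,n^2\max_{j\le p}(\En[f_{ij}v_{il}])^2$, the key observation is that the self-normalized moderate-deviation bound behind (\ref{eq: score bound}), via Lemma \ref{Lemma:SNMD}, already controls $\max_{j\le p,\,l\le k_e}\sqrt n\,|\En[f_{ij}v_{il}]|/\widehat\gamma^0_{lj}\lesssim_{\mathrm{P}}\sqrt{\log(pk_e)}$ \emph{uniformly over all columns at once}; since the loadings $\widehat\gamma^0_{lj}$ are bounded away from zero and above under Condition RF, this single maximal bound dispenses with the union over subsets and gives $\max_l\|\mathcal P_{(\cdot)\mid(\cdot)}v_l/\sqrt n\|_2\lesssim_{\mathrm{P}}\sqrt{\widetilde m_l\log(pk_e)/(n\,\semin{\widetilde m_l})}$. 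The care I expect to be most delicate is the partialling-out bookkeeping — bounding below the minimum eigenvalue of the residualized Gram matrix and bounding the residualized scores by the raw scores — which I would handle with the sparse eigenvalue bounds of Condition RF/SE on the relevant submatrices. Combining the two pieces through $\sqrt{a^2+b^2}\le a+b$ and adding the projection-residual term completes the first display.
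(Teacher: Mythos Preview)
Your proposal is correct and follows essentially the same route as the paper: the identical three-term decomposition $(I-\mathcal P_{\hat I_l})D_l$, $\mathcal P_{T_l}v_l$, $\mathcal P_{\hat I_l\setminus T_l}v_l$; the same two alternative bounds ($k_e$ via Markov and $\log(sk_e)$ via self-normalized moderate deviations) for the oracle-support noise; the same $\sqrt{\widetilde m_l/\semin{\widetilde m_l}}\,\|F'v_l/\sqrt n\|_\infty$ bound for the extra-support noise via Lemma~\ref{Lemma:SNMD}; and the identical chain for the $\ell_1$ and projection-residual statements. The only difference is cosmetic: the paper writes the noise split as $\|\mathcal P_{\hat I_l}v_l\|_2\le\|\mathcal P_{T_l}v_l\|_2+\|\mathcal P_{\hat I_l\setminus T_l}v_l\|_2$ directly (a step that is itself somewhat informal), whereas you route it through a Pythagorean/residualization identity and flag the partialling-out bookkeeping as the delicate point --- which is, if anything, a more honest accounting of the same inequality.
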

\begin{proof}[Proof of Lemma \ref{Thm:2StepMain}]
We have that $D_l - F\hat\beta_{lPL} = ( I - \mathcal{P}_{\hat I_l})D_l - \mathcal{P}_{\hat I_l}v_l$ where $I$ is the identity operator. Therefore for every $l = 1,\ldots, k_e$ we have
\begin{equation}\label{eqPL}
\begin{array}{rl}
\| D_l - F\hat\beta_{lPL}\|_2 & \leq \|D_l - \mathcal{P}_{\hat I_l}D_l - \mathcal{P}_{\hat I_l}v_l\|_2 \\
 & \leq \|(I- \mathcal{P}_{\hat I_l})D_l\|_2 +\|\mathcal{P}_{\hat I_l}v_l\|_2.
\end{array}
\end{equation}
Note that $\|\mathcal{P}_{\hat I_l}v_l\|_2\leq \|(F[\hat I_l]/\sqrt{n})((F[\hat I_l]'F[\hat I_l]/n)^{-1}) \| \ \|F[\hat I_l]'v_l/\sqrt{n}\|_2$ where $\| M \|$ denotes the operator norm of the matrix $M$.
Since $\| M \|=\sqrt{{\rm max \ eig}\{M'M\}}$ we have $$\begin{array}{rl}
\|(F[\hat I_l]/\sqrt{n})((F[\hat I_l]'F[\hat I_l]/n)^{-1}) \| & = \sqrt{{\rm max \ eig}\{(F[\hat I_l]'F[\hat I_l]/n)^{-1}(F[\hat I_l]'F[\hat I_l]/n)(F[\hat I_l]'F[\hat I_l]/n)^{-1}\}}\\
& = \sqrt{{\rm max \ eig}\{(F[\hat I_l]'F[\hat I_l]/n)^{-1}\}}\\
 &\leq \sqrt{1/\semin{s+\widetilde m_l}},\end{array}$$ where $\widetilde m_l = |\hat I_l\setminus T_l|,$  so that the last term in (\ref{eqPL}) satisfies
$$\|\mathcal{P}_{\hat I_l}v_l\|_2 \leq  \sqrt{1/\semin{s+\widetilde m_l}}  \|F[\hat I_l]'v_l/\sqrt{n}\|_2 \leq \sqrt{\frac{s+\widetilde m_l}{\semin{s+\widetilde m_l}}}  \|F'v_l/\sqrt{n}\|_\infty.$$
Under Condition RF, by Lemma \ref{Lemma:SNMD} we have
$$\max_{l=1,\ldots, k_e}  \|F'v_l/\sqrt{n}\|_\infty \lesssim_P \sqrt{\log(pk_e)} \max_{l\leq k_e, j\leq p} \sqrt{\En[f_{ij}^2v_{il}^2]}.$$
Note that Condition RF also implies $\max_{l\leq k_e, j\leq p} \sqrt{\En[f_{ij}^2v_{il}^2]} \lesssim_P 1$ since $\max_{l\leq k_e, j\leq p} |(\En-\barEp)[f_{ij}^2v_{il}^2]|\to_P 0$ and $\max_{l\leq k_e, j\leq p}\barEp[f_{ij}^2v_{il}^2] \leq \max_{l\leq k_e, j\leq p}\barEp[f_{ij}^2\tilde d_{il}^2] \lesssim 1$.

These relations yield the first result.


Letting $\delta_l = \hat\beta_{lPL}-\beta_{l0}$, the statement regarding the $\ell_1$-norm of the theorem follows from
$$\|\widehat \Upsilon_l \delta_l\|_1 \leq  \|\widehat \Upsilon_l\|_\infty\|\delta_l\|_1 \leq \|\widehat \Upsilon_l\|_\infty \sqrt{\|\delta_l\|_0}\|\delta_l\|_2 \leq \|\widehat \Upsilon_l\|_\infty \sqrt{\|\delta_l\|_0}\|f_i'\delta_l\|_{2,n}/\sqrt{\semin{\|\delta_l\|_0}},$$
and noting that $\|\delta_l\|_0 \leq \widetilde m_l + s$ and $\|\widehat \Upsilon_l\|_\infty \leq \|\widehat \Upsilon_l^0\|_\infty + \|\widehat \Upsilon_l - \widehat \Upsilon_l^0\|_\infty$.

The last statement follows from noting that the Lasso solution provides an upper bound to the approximation of the best model based on $\widehat I_l$, since $\widehat T_l \subseteq \widehat I_l$, and the application of Lemma \ref{Thm:ReviewModel1}.\end{proof}

\begin{remark}[Comparison between Lasso and Post-Lasso performance]\label{comment: main}
Under mild conditions on the empirical Gram matrix and on the number of additional variables, Lemma \ref{Thm:Sparsity} below derives sparsity bounds on the model selected by Lasso, which establishes that
$$|\widehat T_l\setminus T_l| = \widehat m_l \lesssim_P s.$$ Under this condition, we have that the rate of Post-Lasso is no worse than Lasso's rate. This occurs despite the fact that Lasso may in general fail to correctly select the oracle model $T_l$ as a subset, that is $T_l \not \subseteq \widehat T_l$. However, if the oracle model has well-separated coefficients and the approximation error does not dominate the estimation error, then the Post-Lasso rate improves upon Lasso's rate. Specifically, this occurs if  Condition AS holds,  $\widehat m_l= o_{\mathrm{P}}(s)$ and $T_l \subseteq \widehat T_l$ wp $\to 1$, or if $T = \widehat T$ wp $\to 1$ as under the conditions of \citen{Wainright2006}. In such cases,  the rates found for Lasso are sharp, and they cannot be faster than $\sqrt{  s  \log p/n}$. Thus, the improvement in the rate of convergence of Post-Lasso over Lasso is strict these cases.
Note that, as shown in the proof of Lemma \ref{Lemma:SparsityLASSO},  a higher penalty level will tend to reduce $\widehat m_l$ but will increase the likelihood
of $T_l \not\subseteq \widehat T_l$. On the other hand, a lower penalty level will decrease the likelihood
of $T_l \not\subseteq \widehat T_l$ (bias) but will tend to increase $\widehat m_l$ (variance). The impact in the estimation of this trade off is captured by the last term of the bound in Lemma \ref{Thm:2StepMain}.
\end{remark}

Step 2. In this step we provide a sparsity bound for Lasso, which is important for establishing various rate results and fundamental to the analysis of Post-Lasso. It relies on the following lemmas.

\begin{lemma}[Empirical pre-sparsity for Lasso]\label{Lemma:SparsityLASSO}
Let $\widehat T_l$ denote the support selected by the Lasso estimator, $\hat m_l = |\widehat T_l \setminus T_l|$, and assume that $\lambda/n \geq c \|S_l\|_\infty$ and $u\geq 1 \geq \ell > 1/c$ as in Lemma \ref{Thm:ReviewModel1}. Then, for $c_0 = (uc+1)/(\ell c - 1)$ we have
$$\sqrt{\hat m_l} \leq \sqrt{\semax{\hat m_l}}\|(\widehat \Upsilon_l^{0})^{-1}\|_\infty c_0\[\frac{2\sqrt{s}}{\kappa_{c_0}^l} + \frac{6nc_s}{\lambda}\].$$
\end{lemma}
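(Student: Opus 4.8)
The plan is to exploit the first-order (KKT) conditions of the Lasso problem to control the number of nonzero coefficients outside the true support $T_l$. First I would write down the subgradient optimality condition for $\widehat\beta_{lL}$: for each coordinate $j\in\widehat T_l$, the $j$-th component of $2\En[(d_{il}-f_i'\widehat\beta_{lL})f_{ij}]$ equals $\tfrac{\lambda}{n}\widehat\gamma_{lj}\,\mathrm{sign}(\widehat\beta_{lLj})$ in absolute value. Decomposing $d_{il}-f_i'\widehat\beta_{lL} = v_{il}+a_{il}-f_i'\delta_l$ with $\delta_l=\widehat\beta_{lL}-\beta_{l0}$, this says that on $\widehat T_l\setminus T_l$ each coordinate of $2\En[f_i(f_i'\delta_l)]$ is, up to the score $S_l$ and the approximation-error term $2\En[a_{il}f_{ij}]$, of size $\tfrac{\lambda}{n}\widehat\gamma_{lj}$. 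Summing the squares over the $\widehat m_l$ coordinates in $\widehat T_l\setminus T_l$ and using $\lambda/n\geq c\|S_l\|_\infty$ together with $u\ge 1\ge\ell>1/c$, the contributions of $S_l$ and $a_{il}$ can be absorbed, leaving a lower bound of the form $\sqrt{\widehat m_l}\,\tfrac{\lambda}{n}\bigl(\min_j\widehat\gamma_{lj}\bigr)\,\ell(1-1/(c\ell)) \lesssim \|F[\widehat T_l\setminus T_l]'F\delta_l/n\|_2$ plus an approximation-error remainder controlled by $c_s$.

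Next I would bound the right-hand side $\|F[\widehat T_l\setminus T_l]'(f_i'\delta_l)\|_{2}$ by $\sqrt{\semax{\widehat m_l}}\,\|f_i'\delta_l\|_{2,n}\sqrt n$, which is exactly where the maximal sparse eigenvalue enters; then substitute the prediction-norm bound $\|f_i'\delta_l\|_{2,n}\lesssim (u+1/c)\tfrac{\lambda\sqrt s}{n\kappa_{c_0}^l}+2c_s$ from Lemma \ref{Thm:ReviewModel1}. Dividing through by $\tfrac{\lambda}{n}\min_j\widehat\gamma_{lj}$ and writing $(\min_j\widehat\gamma_{lj})^{-1}\le \|(\widehat\Upsilon_l^0)^{-1}\|_\infty$ (after accounting for the $\ell,u$ factors in the ratio $\widehat\Upsilon_l/\widehat\Upsilon_l^0$, which get folded into the constant $c_0$), one arrives at
$$
\sqrt{\widehat m_l}\;\le\;\sqrt{\semax{\widehat m_l}}\,\|(\widehat\Upsilon_l^0)^{-1}\|_\infty\, c_0\Bigl[\tfrac{2\sqrt s}{\kappa_{c_0}^l}+\tfrac{6nc_s}{\lambda}\Bigr],
$$
where the ``$2$'' comes from the prediction-norm term (after the $\lambda\sqrt s/n$ cancels against the $\lambda/n$ in the denominator, leaving $2\sqrt s/\kappa_{c_0}^l$ once the $u+1/c$ and $\ell c-1$ factors are arranged into $c_0$), and the ``$6$'' collects the several $c_s$ contributions (the $2c_s$ from the prediction norm plus the direct approximation-error term in the KKT expansion).

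The main obstacle I anticipate is bookkeeping the constants so that everything collapses cleanly into the single factor $c_0=(uc+1)/(\ell c-1)$ and the stated numerical coefficients $2$ and $6$; in particular, one must be careful that the self-normalizing penalty loadings $\widehat\Upsilon_l$ (as opposed to the ideal $\widehat\Upsilon_l^0$) appear in the KKT conditions, so the ratio bounds $\ell\widehat\Upsilon_l^0\le\widehat\Upsilon_l\le u\widehat\Upsilon_l^0$ must be threaded through both the lower bound coming from the penalty on $\widehat T_l\setminus T_l$ and the upper bound coming from $\lambda/n\ge c\|S_l\|_\infty$. A secondary subtlety is that $\semax{\widehat m_l}$ appears on the right with the random quantity $\widehat m_l$ inside it; this is acceptable as a self-referential inequality here, and is resolved (turned into an a priori bound $\widehat m_l\lesssim_P s$) in the subsequent sparsity lemma via a sub-linearity argument on $m\mapsto\semax{m}$, which is outside the scope of this particular statement.
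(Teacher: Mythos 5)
Your plan follows essentially the same route as the paper's proof: starting from the Lasso KKT conditions on $\widehat T_l\setminus T_l$, decomposing the residual into noise, approximation error, and the design term $F\delta_l$, taking the $\ell_2$-norm over the $\widehat m_l$ selected coordinates, bounding the design term via $\sqrt{\semax{\widehat m_l}}\|f_i'\delta_l\|_{2,n}$, substituting the prediction-norm bound from Lemma \ref{Thm:ReviewModel1}, and threading $\ell\widehat\Upsilon_l^0\le\widehat\Upsilon_l\le u\widehat\Upsilon_l^0$ through so the $\ell$, $u$, $c$ factors collapse into $c_0$ exactly as you describe (including your correct reading of the constants $2$ and $6$ and of the self-referential $\semax{\widehat m_l}$ being resolved only in the subsequent sparsity lemma).
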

\begin{proof}[Proof of Lemma \ref{Lemma:SparsityLASSO}] We have from the optimality conditions that the Lasso estimator $\widehat \beta_l = \widehat \beta_{lL}$ satisfies  $$2\En[ \widehat \Upsilon_{lj}^{-1}f_{ij}(y_i-f_i'\hat\beta_l)] = \sign(\hat\beta_{lj})\lambda/n \ \text{ for each } \ j \in \widehat T_l\setminus T_l.  $$

Therefore, noting that $\|\widehat \Upsilon_l^{-1}\widehat \Upsilon_l^{0}\|_\infty \leq 1/\ell$, we have for $R=(a_{l1},\ldots,a_{ln})'$ and $F$ denoting the $n\times p$ matrix with rows $f_i'$, $i=1,\ldots,n$
{\small \begin{eqnarray*}
   &  &\sqrt{\hat m_l} \lambda    =   2\| (\widehat \Upsilon_l^{-1} F'(Y - F\hat \beta_l))_{\widehat T_l\setminus T_l} \|_2 \\
            &  & \leq   2\| (\widehat \Upsilon_l^{-1}F'(Y - R - F \beta_{l0}))_{\widehat T_l\setminus T_l} \|_2 + 2\| (\widehat \Upsilon_l^{-1}F'R)_{\widehat T_l\setminus T_l} \|_2 + 2\| (\widehat \Upsilon_l^{-1}F'F(\beta_{l0}-\hat \beta_l))_{\widehat T_l\setminus T_l} \|_2 \\
            & &  \leq  \sqrt{\hat m_l}\ n\|\widehat \Upsilon_l^{-1}\widehat \Upsilon_l^{0}\|_\infty\|S_l\|_{\infty} + 2 n\sqrt{\semax{\hat m_l}}\|\widehat \Upsilon_l^{-1}\|_\infty c_s+  2n \sqrt{ \semax{\hat m_l}} \|\widehat \Upsilon_l^{-1}\|_\infty\|f_i'(\hat \beta_l - \beta_{l0}) \|_{2,n},\\
            & & \leq \sqrt{\hat m_l}\ (1/\ell) \ n\|S_l\|_{\infty} + 2 n\sqrt{\semax{\hat m_l}}\frac{\|(\widehat \Upsilon_l^{0})^{-1}\|_\infty}{\ell} c_s+  2n \sqrt{ \semax{\hat m_l}} \frac{\|(\widehat \Upsilon_l^{0})^{-1}\|_\infty}{\ell}\|f_i'(\hat \beta_l- \beta_{l0}) \|_{2,n},
\end{eqnarray*}
}\!where we used that
$$\begin{array}{rcl}
&& \| (F'F(\beta_{l0}-\hat \beta_l))_{\widehat T_l\setminus T_l} \|_2 \\
&& = \sup_{\|\delta\|_0\leq \hat m_l, \|\delta\|_2\leq 1}|
\delta' F'F(\beta_{l0}-\hat \beta_l)| \leq
\sup_{\|\delta\|_0\leq \hat m_l, \|\delta\|_2\leq 1}\| \delta'F'\|_2\|F(\beta_{l0}-\hat \beta_l)\|_2 \\
&& \leq  \sup_{\|\delta\|_0\leq \hat
m_l, \|\delta\|_2\leq 1}\sqrt{| \delta'F'F\delta|}\|F(\beta_{l0}-\hat \beta_l)\|_2 \leq
n\sqrt{\semax{\hat m_l}}\|f_i'(\beta_{l0}-\hat \beta_l)\|_{2,n},
\end{array}$$ and
similarly $$\begin{array}{rcl}
\| (F'R)_{\widehat T_l\setminus T_l} \|_2&
= &\sup_{\|\delta\|_0\leq \hat m_l, \|\delta\|_2\leq 1}|
\delta'F'R| \leq
\sup_{\|\delta\|_0\leq \hat m_l, \|\delta\|_2\leq 1}\| \delta'F'\|_2\|R\|_2 \\
&= & \sup_{\|\delta\|_0\leq \hat
m_l, \|\delta\|_2\leq 1}\sqrt{| \delta'F'F\delta|}\|R\|_2
 \leq  n\sqrt{\semax{\hat m_l}} c_s.\\
\end{array}$$
Since $\lambda/c \geq n\|S_l\|_\infty$, and by Lemma \ref{Thm:ReviewModel1}, $\|f_i'(\widehat\beta_l-\beta_{l0})\|_{2,n} \leq \(u + \frac{1}{c}\) \frac{\lambda \sqrt{s}}{n \kappa_{c_0}^l} + 2c_s$ we have
$$ \sqrt{\hat m_l} \leq \frac{2\sqrt{\semax{\hat m_l}}\frac{\|(\widehat \Upsilon_l^{0})^{-1}\|_\infty}{\ell}\[\(u+\frac{1}{c}\)\frac{\sqrt{s}}{\kappa_{c_0}^l} + \frac{3nc_s}{\lambda}\]}{\(1-\frac{1}{c\ell}\)}.$$

The result follows by noting that $(u+[1/c])/(1-1/[\ell c]) = c_0 \ell$ by definition of $c_0$.
\end{proof}

\begin{lemma}[Sub-linearity of maximal sparse eigenvalues]
\label{Lemma:SparseEigenvalueIMP}Let $M$ be a semi-definite positive matrix. For any integer $k \geq 0$ and constant $\ell \geq 1$ we have $ \semax{\ceil{\ell k}}(M) \leq  \lceil \ell \rceil \semax{k}(M).$
\end{lemma}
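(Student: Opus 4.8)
The plan is to reduce the bound on an $\ceil{\ell k}$-sparse quadratic form to a sum of $\lceil \ell\rceil$ quadratic forms, each supported on at most $k$ coordinates. First I would fix a vector $\delta^*\in\RR^p$ attaining the maximum in the definition of $\semax{\ceil{\ell k}}(M)$, so that $\|\delta^*\|_0\le\ceil{\ell k}$, $\|\delta^*\|_2=1$, and $\delta^{*\prime}M\delta^*=\semax{\ceil{\ell k}}(M)$ (the cases $k=0$ or $M=0$ being trivial). Since $\ell k\le\lceil\ell\rceil k$ and the right-hand side is an integer, we have the arithmetic fact $\ceil{\ell k}\le\lceil\ell\rceil k$, so $\supp(\delta^*)$ can be split into $\lceil\ell\rceil$ pairwise disjoint index sets $G_1,\dots,G_{\lceil\ell\rceil}$, each of cardinality at most $k$ (assign coordinates greedily, at most $k$ per block). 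Writing $\delta_i$ for the restriction of $\delta^*$ to $G_i$, we obtain $\delta^*=\sum_{i=1}^{\lceil\ell\rceil}\delta_i$ with $\|\delta_i\|_0\le k$ and, by disjointness of the blocks, $\sum_{i=1}^{\lceil\ell\rceil}\|\delta_i\|_2^2=\|\delta^*\|_2^2=1$.

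The second ingredient is the elementary sublinearity inequality for positive semidefinite quadratic forms: writing $M=L'L$ (e.g. $L=M^{1/2}$), the Cauchy--Schwarz inequality (equivalently, convexity of $\|\cdot\|_2^2$) gives $\big\|\sum_{i=1}^m L\delta_i\big\|_2^2\le m\sum_{i=1}^m\|L\delta_i\|_2^2$, i.e. $\big(\sum_i\delta_i\big)'M\big(\sum_i\delta_i\big)\le m\sum_i\delta_i'M\delta_i$ with $m=\lceil\ell\rceil$. Applying this to $\delta^*$, and then bounding each term by homogeneity of the sparse-eigenvalue definition, $\delta_i'M\delta_i\le\semax{k}(M)\|\delta_i\|_2^2$ (trivially true when $\delta_i=0$), yields
$$\semax{\ceil{\ell k}}(M)=\delta^{*\prime}M\delta^*\le\lceil\ell\rceil\sum_{i=1}^{\lceil\ell\rceil}\delta_i'M\delta_i\le\lceil\ell\rceil\,\semax{k}(M)\sum_{i=1}^{\lceil\ell\rceil}\|\delta_i\|_2^2=\lceil\ell\rceil\,\semax{k}(M),$$
which is the claim.

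There is no serious obstacle here; the two points that require a moment's care are the arithmetic fact $\ceil{\ell k}\le\lceil\ell\rceil k$ (which is exactly what guarantees the partition into $\lceil\ell\rceil$ blocks of size $\le k$ exists) and the use of positive semidefiniteness of $M$ in the sublinearity step — without $M\succeq 0$ one cannot control the cross terms $\delta_i'M\delta_j$. The disjointness of the blocks is what makes $\sum_i\|\delta_i\|_2^2$ equal exactly to one, so no slack is lost on that account.
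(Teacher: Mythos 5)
Your proof is correct and follows essentially the same route as the paper's: decompose the maximizer into $\lceil\ell\rceil$ disjointly supported $k$-sparse pieces (using $\ceil{\ell k}\leq\lceil\ell\rceil k$), bound the resulting quadratic form by $\lceil\ell\rceil$ times the sum of the blockwise forms, and conclude via $\sum_i\|\delta_i\|_2^2=1$. The only cosmetic difference is that you control the cross terms through $M=L'L$ and the inequality $\|\sum_i x_i\|_2^2\leq m\sum_i\|x_i\|_2^2$, while the paper uses the equivalent pointwise bound $2|\alpha_i'M\alpha_j|\leq\alpha_i'M\alpha_i+\alpha_j'M\alpha_j$.
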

\begin{proof}
Denote by $\phi_M(k) = \semax{k}(M)$, and let $\bar \alpha$ achieve $\phi_M(\ell k)$. Moreover let
$\sum_{i=1}^{\lceil \ell \rceil} \alpha_i = \bar \alpha$ such that
$\sum_{i=1}^{\lceil \ell \rceil} \|\alpha_i\|_0 = \|\bar
\alpha\|_0$. We can choose $\alpha_i$'s such that $\|\alpha_i\|_0
\leq k$ since $\lceil \ell \rceil k \geq \ell k$. Since $M$ is
positive semi-definite, for any $i, j$ w $\alpha_i'M\alpha_i +
\alpha_j'M\alpha_j \geq 2\left|\alpha_i'M\alpha_j\right|.$
Therefore
$$
\begin{array}{rcl}
\phi_M(\ell k) & = &  \bar \alpha' M \bar\alpha =  \displaystyle \sum_{i=1}^{\lceil \ell \rceil} \alpha_i'M\alpha_i + \sum_{i=1}^{\lceil \ell \rceil} \sum_{j\neq i} \alpha_i'M\alpha_j \leq \sum_{i=1}^{\lceil \ell \rceil} \left\{ \alpha_i'M\alpha_i + ({\lceil \ell \rceil}-1)\alpha_i'M\alpha_i \right\}  \\
& \leq & \displaystyle {\lceil \ell \rceil} \sum_{i=1}^{\lceil
\ell \rceil} \|\alpha_i\|^2 \phi_M(\|\alpha_i\|_0)  \leq {\lceil \ell \rceil} \max_{i=1,\ldots,{\lceil
\ell \rceil}} \phi_M(\|\alpha_i\|_0) \leq {\lceil \ell \rceil}
\phi_M(k)
\end{array}$$
where we used that $\sum_{i=1}^{\lceil \ell \rceil} \|\alpha_i\|^2=1$. \end{proof}

\begin{lemma}[Sparsity bound for Lasso under data-driven penalty]\label{Thm:Sparsity}
Consider the Lasso estimator $\widehat \beta_l = \widehat \beta_{lL}$ with $\lambda/n \geq c\|S_l\|_\infty$, and let $\widehat m_l = |\widehat T_l\setminus T_l|$. Consider the set $$\mathcal{M}=\left\{ m \in \mathbb{N}:
m > s \ 2\semax{m}\|(\widehat \Upsilon_l^{0})^{-1}\|_\infty^2\[\frac{2c_0}{\kappa_{c_0}^l} + \frac{6c_0nc_s}{\lambda\sqrt{s}}\]^2 \right\}.$$ Then,
$$ \hat m_l \leq s \  \left( \min_{m \in \mathcal{M}}\semax{m\wedge n}\right) \  \|(\widehat \Upsilon_l^{0})^{-1}\|_\infty^2\(\frac{2c_0}{\kappa_{c_0}^l} + \frac{6c_0nc_s}{\lambda\sqrt{s}}\)^2.$$
\end{lemma}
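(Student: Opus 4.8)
The plan is to bootstrap the empirical pre-sparsity bound of Lemma~\ref{Lemma:SparsityLASSO} into a genuine sparsity bound by ``unfolding'' the sparse eigenvalue that appears on its right-hand side. First I would square the conclusion of Lemma~\ref{Lemma:SparsityLASSO}; collecting the factor $\sqrt{s}$ inside the bracket, it reads
$$\widehat m_l \;\leq\; \semax{\widehat m_l}\, K, \qquad K := s\,\|(\widehat\Upsilon_l^{0})^{-1}\|_\infty^2\Big(\frac{2c_0}{\kappa_{c_0}^l}+\frac{6c_0 n c_s}{\lambda\sqrt s}\Big)^2,$$
so that $\mathcal{M}=\{m\in\mathbb{N}: m> 2K\semax{m}\}$ and the target inequality is precisely $\widehat m_l\leq K\min_{m\in\mathcal{M}}\semax{m\wedge n}$. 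If $\widehat m_l=0$ there is nothing to prove, so take $\widehat m_l\geq 1$.

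The core step is a dichotomy applied to an arbitrary $m\in\mathcal{M}$. If $m\leq\widehat m_l$, then sub-linearity of maximal sparse eigenvalues (Lemma~\ref{Lemma:SparseEigenvalueIMP} with $k=m$ and $\ell=\widehat m_l/m\geq 1$, so that $\lceil\ell k\rceil=\widehat m_l$) gives $\semax{\widehat m_l}\leq\lceil\widehat m_l/m\rceil\,\semax{m}\leq (2\widehat m_l/m)\,\semax{m}$; substituting this into $\widehat m_l\leq K\semax{\widehat m_l}$ and dividing by $\widehat m_l>0$ yields $m\leq 2K\semax{m}$, contradicting $m\in\mathcal{M}$. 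Hence $m>\widehat m_l$ for every $m\in\mathcal{M}$. Using in addition that a Lasso solution has at most $n$ nonzero coordinates, so that $\widehat m_l\leq n$ (this is exactly what the cap $m\wedge n$ records in the statement), we obtain $\widehat m_l\leq (m-1)\wedge n\leq m\wedge n$, hence $\semax{\widehat m_l}\leq\semax{m\wedge n}$ by monotonicity, and feeding this back into $\widehat m_l\leq K\semax{\widehat m_l}$ gives $\widehat m_l\leq K\semax{m\wedge n}$.

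Since this bound holds for every $m\in\mathcal{M}$, and $\mathcal{M}\neq\emptyset$ (for $m\geq p$ one has $\semax{m}=\semax{p}$, so $m>2K\semax{m}$ holds for all large $m$), taking the infimum over $m\in\mathcal{M}$ delivers $\widehat m_l\leq K\min_{m\in\mathcal{M}}\semax{m\wedge n}$, which is the claimed inequality once the definition of $K$ is unpacked.

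The only real obstacle I anticipate is the self-referential form of $\widehat m_l\leq K\semax{\widehat m_l}$ combined with the fact that $\widehat m_l$ could a priori be as large as $p\gg n$: it is precisely this that forces the appeal to the sub-linearity lemma, which ``rescales'' the sparse eigenvalue down to order $m$ (equivalently $m\wedge n$), and that makes the $m\wedge n$ cap, via $\widehat m_l\leq n$, part of the statement. Everything else is routine bookkeeping of constants, already arranged so that the constant $K$ coming out of Lemma~\ref{Lemma:SparsityLASSO} matches the threshold $2K\semax{m}$ in the definition of $\mathcal{M}$.
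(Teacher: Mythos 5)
Your proposal is correct and follows essentially the same route as the paper's proof: square the pre-sparsity bound of Lemma \ref{Lemma:SparsityLASSO} to get $\widehat m_l\leq K\semax{\widehat m_l}$, use the sub-linearity Lemma \ref{Lemma:SparseEigenvalueIMP} together with $\lceil k\rceil\leq 2k$ in a contradiction argument to conclude $\widehat m_l< m$ for every $m\in\mathcal{M}$, then feed $\widehat m_l\leq m\wedge n$ back into the squared bound and minimize over $\mathcal{M}$. Your extra remarks (the trivial case $\widehat m_l=0$, non-emptiness of $\mathcal{M}$, and the explicit monotonicity of $\semax{\cdot}$) are harmless refinements of the same argument.
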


\begin{remark}[Sparsity Bound]
Provided that the regularization event $\lambda/n\geq c\|S_l\|_\infty$ occurs,
Lemma \ref{Thm:Sparsity} bounds the number of components $\widehat m_l$ incorrectly selected by Lasso. Essentially, the bound depends on $s$ and on the ratio between the maximum sparse eigenvalues and the restricted eigenvalues. Thus, the empirical Gram matrix can impact the sparsity bound substantially. However, under Condition SE, the ratio mentioned is bounded from above uniformly in $n$.
As expected the bound improves and the regularization event is more likely to occur if a larger value of the penalty parameter $\lambda$ is used.
\end{remark}

\begin{proof}[Proof of Lemma \ref{Thm:Sparsity}]
Rewriting the conclusion in Lemma \ref{Lemma:SparsityLASSO} we have
 \begin{equation}\label{Eq:Sparsity}\hat m_l \leq s \ \semax{\hat m_l}\|(\widehat \Upsilon_l^{0})^{-1}\|_\infty^2\[\frac{2c_0}{\kappa_{c_0}^l} + \frac{6c_0nc_s}{\lambda\sqrt{s}}\]^2.\end{equation}
Note that $\widehat m_l \leq n$ by optimality conditions.  Consider any $M \in \mathcal{M}$, and suppose $\widehat m_l > M$. Therefore by Lemma \ref{Lemma:SparseEigenvalueIMP} on sublinearity of sparse eigenvalues
$$ \hat m_l \leq s \ \ceil{\frac{\hat m_l}{M}}\semax{M}\|(\widehat \Upsilon_l^{0})^{-1}\|_\infty^2\[\frac{2c_0}{\kappa_{c_0}^l} + \frac{6c_0nc_s}{\lambda\sqrt{s}}\]^2.$$ Thus, since $\ceil{k}\leq 2k$ for any $k\geq 1$  we have
$$ M \leq  s \ 2\semax{M}\|(\widehat \Upsilon_l^{0})^{-1}\|_\infty^2\[\frac{2c_0}{\kappa_{c_0}^l} + \frac{6c_0nc_s}{\lambda\sqrt{s}}\]^2$$
which violates the condition that $M \in \mathcal{M}$. Therefore, we have $\widehat m_l \leq M$.

In turn, applying (\ref{Eq:Sparsity}) once more with $\widehat m_l \leq (M\wedge n)$ we obtain
 $$ \hat m_l \leq  s \ \semax{M\wedge n}\|(\widehat \Upsilon_l^{0})^{-1}\|_\infty^2\[\frac{2c_0}{\kappa_{c_0}^l} + \frac{6c_0nc_s}{\lambda\sqrt{s}}\]^2.$$

The result follows by minimizing the bound over $M \in \mathcal{M}$.
\end{proof}

Step 3. Next we combine the previous steps to establish Theorem \ref{Thm:RatesPostLASSO}. As in Step 3 of Appendix \ref{AppendixThm1}, recall that $\max_{1\leq l\leq k_e}\| \widehat \Upsilon_l^0 - \Upsilon_l^0\|_\infty \to_{\mathrm{P}} 0.$ 

Let $\bar k$ be the integer that achieve the minimum in the definition of $\mu^2$. Since $c_s \lesssim_{\mathrm{P}} \sqrt{s/n}$ leads to $nc_s/[\lambda\sqrt{s}] \to_P 0$, we have that $\bar k \in \mathcal{M}$ with high probability as $n\to \infty$.
Moreover,  as long as $\lambda/n \geq c\max_{1\leq l\leq k_e}\|S_l\|_\infty$, $\ell \to_{\mathrm{P}} 1$ and $c>1$, by Lemma \ref{Thm:Sparsity} we have for every $l=1,\ldots,k_e$ that
\begin{equation}\label{SparsityBound} \widehat m_l \lesssim_{\mathrm{P}} s \mu^2 \semin{\bar k+s}/\kappa_{\bar C}^2\lesssim_{\mathrm{P}} s \mu^2 \semin{\hat m_l+s}/\kappa_{\bar C}^2\end{equation}
since  $\bar k \in \mathcal{M}$ implies $\bar k \geq \hat m_l$.

By the choice of $\lambda=2c\sqrt{n} \Phi^{-1}(1-\gamma/(2pk_e))$ in (\ref{the principle lambda}), since $\gamma \to 0$,
the event $\lambda/n \geq c\max_{1\leq l \leq k_e}\|S_l\|_\infty$ holds with probability approaching $1$.
Therefore, by the first and last results in Lemma \ref{Thm:2StepMain} we have
$$\max_{1\leq l\leq k_e} \|D_{il} - f_i'\widehat \beta_{lPL}\|_{2,n} \lesssim_{\mathrm{P}} \frac{\mu}{\kappa_{\bar C}}\sqrt{\frac{s\log p}{n}} + c_s + \max_{1\leq l \leq k_e}\frac{\lambda\sqrt{s}}{n\kappa_{c_0}^l}.$$
Because $\max_{1\leq l\leq k_e}1/\kappa^l_{c_0} \leq \max_{1\leq l\leq k_e} \| \hat \Gamma_l^0\|_\infty/\kappa_{\bar C} \lesssim_P 1/\kappa_{\bar C}$ by Step 1 of Theorem \ref{Thm:RateLASSO}, we have
\begin{equation}\label{BoundPredicNormPL} \max_{1\leq l\leq k_e}\|D_{il} - f_i'\widehat \beta_{lPL}\|_{2,n} \lesssim_{\mathrm{P}} \frac{\mu}{\kappa_{\bar C}}\sqrt{\frac{s\log (k_ep/\gamma)}{n}}\end{equation}
since $k_e \lesssim p$ and $c_s \lesssim_P \sqrt{s/n}$. That establishes the first inequality of Theorem \ref{Thm:RatesPostLASSO}.

To establish the second inequality of Theorem \ref{Thm:RatesPostLASSO}, since
$\|\widehat \beta_{lPL}-\beta_{l0}\|_0 \leq \widehat m_l+s$, we have
$$\|\widehat \beta_{lPL}-\beta_{l0}\|_1\leq
\sqrt{\|\widehat \beta_{lPL}-\beta_{l0}\|_0}\|\widehat \beta_{lPL}-\beta_{l0}\|_2 \leq \sqrt{\widehat m_l+s}
\frac{\|f_i'(\widehat \beta_{lPL}-\beta_{l0})\|_{2,n}}{\sqrt{\semin{\widehat m_l + s}}}.$$
The sparsity bound (\ref{SparsityBound}), the prediction norm bound
(\ref{BoundPredicNormPL}), and the relation $\|D_{il} - f_i'\widehat \beta_{lPL}\|_{2,n} \leq c_s + \|f_i'(\widehat \beta_{lPL}-\beta_{l0})\|_{2,n}$ yield the result with the relation above. \qed

\begin{lemma}[Asymptotic Validity of the Data-Driven Penalty Loadings]\label{Lemma:ValidityDataDriven} Under the
conditions of Theorem \ref{Thm:RateLASSO} and Condition RF or the conditions of Theorem \ref{Thm:RatesPostLASSO} and Condition SE, the
penalty loadings $\hat \Upsilon $ constructed by the K-step
Algorithm A.1  are asymptotically valid. In particular, for $K\geq 2$ we have  $u' =1$.
\end{lemma}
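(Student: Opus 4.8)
The plan is to exploit the two-phase structure of Algorithm~\ref{alg1}: I will first verify that the \emph{initial} penalty loadings are asymptotically valid in the sense of (\ref{Def:AsympValidPenaltyLoading}) with some finite limit $u'\geq 1$, and then --- feeding this back through the rate bounds of Theorems~\ref{Thm:RateLASSO} and \ref{Thm:RatesPostLASSO}, which require only asymptotically valid loadings --- show that after the first \emph{refinement} step the loadings are asymptotically valid with $u'=1$. A short induction over the iterations $1,\dots,K$ then delivers the claim for every $K\geq 2$.

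For the initial loadings, write $\hat\gamma^{(1)}_{lj}=\sqrt{\En[f_{ij}^2(d_{il}-\bar d_l)^2]}$ and compare it to the ideal $\hat\gamma^0_{lj}=\sqrt{\En[f_{ij}^2 v_{il}^2]}$. Using $d_{il}=D_{il}+v_{il}$ and $\bar D_{il}:=D_{il}-\barEp[D_{il}]$, one decomposes $\En[f_{ij}^2(d_{il}-\bar d_l)^2]=\En[f_{ij}^2 v_{il}^2]+\En[f_{ij}^2\bar D_{il}^2]+2\En[f_{ij}^2\bar D_{il}v_{il}]$ up to a remainder controlled by $\bar d_l-\barEp[d_{il}]=O_{\mathrm{P}}(n^{-1/2})$. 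The key observation is that $f_{ij}$ and $\bar D_{il}$ are functions of $\x_i$ while $\Ep[v_{il}\mid\x_i]=0$, so $\barEp[f_{ij}^2\bar D_{il}v_{il}]=0$; the cross term is thus a centered average, which Condition RF(iv) --- together with the identity $f_{ij}^2\bar D_{il}v_{il}=\tfrac12(f_{ij}^2\tilde d_{il}^2-f_{ij}^2 v_{il}^2-f_{ij}^2\bar D_{il}^2)$ and the analogous control of $(\En-\barEp)[f_{ij}^2\bar D_{il}^2]$ --- renders $o_{\mathrm{P}}(1)$ uniformly in $l\leq k_e$ and $j\leq p$. Since $\En[f_{ij}^2\bar D_{il}^2]\geq 0$, this gives $\En[f_{ij}^2(d_{il}-\bar d_l)^2]\geq\En[f_{ij}^2 v_{il}^2]-o_{\mathrm{P}}(1)$ uniformly, while Conditions RF(i) and RF(iv) give $\min_{l,j}\En[f_{ij}^2 v_{il}^2]\gtrsim_{\mathrm{P}}1$ and $\max_{l,j}\En[f_{ij}^2\bar D_{il}^2]\lesssim_{\mathrm{P}}1$. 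Hence $\min_{l,j}\hat\gamma^{(1)}_{lj}/\hat\gamma^0_{lj}\to_{\mathrm{P}}1$ and $\max_{l,j}\hat\gamma^{(1)}_{lj}/\hat\gamma^0_{lj}$ converges in probability to a finite $u'\geq 1$, so (taking $\ell=1\wedge\min_{l,j}\hat\gamma^{(1)}_{lj}/\hat\gamma^0_{lj}$ and $u=1\vee\max_{l,j}\hat\gamma^{(1)}_{lj}/\hat\gamma^0_{lj}$) the first-step loadings are asymptotically valid; this already settles the case $K=1$.

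Given asymptotically valid first-step loadings, the rate bounds of Theorem~\ref{Thm:RateLASSO} (Lasso) or Theorem~\ref{Thm:RatesPostLASSO} (Post-Lasso) apply to $\hat\beta_l$ and, under the maintained regularity conditions, give $\max_{l\leq k_e}\|f_i'(\hat\beta_l-\beta_{l0})\|_{2,n}\lesssim_{\mathrm{P}}\sqrt{s\log(p\vee n)/n}$. The refined residuals then satisfy $\hat v_{il}=v_{il}+r_{il}$ with $r_{il}=a_{il}-f_i'(\hat\beta_l-\beta_{l0})$ and $\max_l\|r_{il}\|_{2,n}\leq c_s+\max_l\|f_i'(\hat\beta_l-\beta_{l0})\|_{2,n}\lesssim_{\mathrm{P}}\sqrt{s\log(p\vee n)/n}$ by Condition AS. Expanding $\En[f_{ij}^2\hat v_{il}^2]=\En[f_{ij}^2 v_{il}^2]+2\En[f_{ij}^2 v_{il}r_{il}]+\En[f_{ij}^2 r_{il}^2]$, I would bound the quadratic remainder by $\En[f_{ij}^2 r_{il}^2]\leq(\max_{i\leq n,j\leq p}f_{ij}^2)\,\|r_{il}\|_{2,n}^2\to_{\mathrm{P}}0$ using Condition RF(iv), and the cross term by Cauchy--Schwarz, $|\En[f_{ij}^2 v_{il}r_{il}]|\leq\hat\gamma^0_{lj}\sqrt{\En[f_{ij}^2 r_{il}^2]}=O_{\mathrm{P}}(1)\cdot o_{\mathrm{P}}(1)$, all uniformly in $l,j$. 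Therefore $\max_{l,j}|\hat\gamma^{(2)}_{lj}/\hat\gamma^0_{lj}-1|\to_{\mathrm{P}}0$, i.e.\ (\ref{Def:AsympValidPenaltyLoading}) holds with $\ell\to_{\mathrm{P}}1$ and $u\to_{\mathrm{P}}1$, so $u'=1$ after one refinement. Since each subsequent iteration again feeds an asymptotically valid loading (now with $u'=1$) into Theorems~\ref{Thm:RateLASSO}--\ref{Thm:RatesPostLASSO}, the same near-oracle rate persists for $\hat\beta_l$ and the preceding display reproduces verbatim at every step $3,\dots,K$; hence for all $K\geq 2$ the loadings output by the algorithm are asymptotically valid with $u'=1$.

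The main obstacle is the first phase: the uniform-in-$j$ comparison of $\hat\gamma^{(1)}_{lj}$ with $\hat\gamma^0_{lj}$ when $p$ may be far larger than $n$. This rests on the vanishing of the population cross-moments $\barEp[f_{ij}^2\bar D_{il}v_{il}]$ via $\Ep[v_{il}\mid\x_i]=0$, on the uniform laws of large numbers built into Condition RF(iv), and on the two-sided population bounds of Condition RF(i), which keep the ideal loadings uniformly bounded away from zero (so the ratios are well-defined and stable) and the initial loadings uniformly bounded above. The refinement phase and the iteration are then a routine perturbation argument once the near-oracle rate from Theorems~\ref{Thm:RateLASSO}--\ref{Thm:RatesPostLASSO} is available.
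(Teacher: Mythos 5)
Your proposal follows essentially the same two-phase route as the paper's own proof: show the initial loadings are asymptotically valid but conservative (some finite $u'\geq 1$), invoke Theorems \ref{Thm:RateLASSO} and \ref{Thm:RatesPostLASSO} --- which only require asymptotically valid loadings --- to get the near-oracle prediction rate, and then show the refined loadings satisfy $\max_{l,j}|\hat\gamma_{lj}/\hat\gamma^{0}_{lj}-1|\to_{\mathrm{P}}0$, so that $u'=1$ after one refinement and at every subsequent iteration of the bounded-$K$ algorithm. Your treatment of the refinement step (the decomposition $\hat v_{il}=v_{il}+a_{il}-f_i'(\hat\beta_l-\beta_{l0})$, Cauchy--Schwarz for the cross term, and $\max_{i\leq n,j\leq p}f_{ij}^2\, s\log(p\vee n)/n\to_{\mathrm{P}}0$ for the quadratic term) matches the paper's Step 2 almost verbatim, as does the iteration argument.

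The one place where your argument departs from the paper and, as written, leans on a condition that is not assumed is the first phase. You expand the empirical quantity $\En[f_{ij}^2(d_{il}-\bar d_l)^2]$ into $\En[f_{ij}^2v_{il}^2]+\En[f_{ij}^2\bar D_{il}^2]+2\En[f_{ij}^2\bar D_{il}v_{il}]$ and propose to control the empirical cross term through the identity $f_{ij}^2\bar D_{il}v_{il}=\tfrac12\left(f_{ij}^2\tilde d_{il}^2-f_{ij}^2v_{il}^2-f_{ij}^2\bar D_{il}^2\right)$; but this requires a uniform law of large numbers for $(\En-\barEp)[f_{ij}^2\bar D_{il}^2]$, which is not part of Condition RF(iv) (RF(iv) only controls $(\En-\barEp)[f_{ij}^2 v_{il}^2]$ and $(\En-\barEp)[f_{ij}^2\tilde d_{il}^2]$), so the identity alone is circular. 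The paper avoids this by using your ``key observation'' only at the population level: since $\barEp[f_{ij}^2\bar D_{il}v_{il}]=0$, one has $\barEp[f_{ij}^2\tilde d_{il}^2]\geq\barEp[f_{ij}^2v_{il}^2]$, and then the two uniform LLNs actually assumed in RF(iv) give $\En[f_{ij}^2\tilde d_{il}^2]\geq\En[f_{ij}^2v_{il}^2]-o_{\mathrm{P}}(1)$ uniformly in $l,j$, with the upper bound coming from RF(i); the recentering by $\bar d_l$ is handled exactly as you do, via $\En[\tilde d_{il}]=O_{\mathrm{P}}(n^{-1/2})$ and the bound on $\max_{i,j}|f_{ij}|$ implied by RF(iv). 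So your step is fixable with ingredients you already name, but as stated it invokes an unassumed ULLN. (A smaller point: you only obtain $\min_{l,j}\hat\gamma^{(1)}_{lj}/\hat\gamma^{0}_{lj}\geq 1-o_{\mathrm{P}}(1)$ rather than convergence to $1$, but your choice $\ell=1\wedge\min_{l,j}\hat\gamma^{(1)}_{lj}/\hat\gamma^{0}_{lj}$ already accounts for this.)
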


For proof of Lemma \ref{Lemma:ValidityDataDriven} see Online Appendix.

\section{Proofs of Lemmas 1-4}

For proof of Lemma 1,  see \citen{BC-SparseQR}, Supplement.
For proof of Lemma 2, see \citen{BC-PostLASSO}. For proofs of Lemma \ref{lemma: RF} and \ref{lemma: SM}, see Online Appendix.

\section{Proofs of Theorems 3-7.}\label{AppThm4to6}
\subsection{Proof of Theorems \ref{theorem 1} and \ref{theorem 3}.} The proofs are original and they rely on the consistency of the sparsity-based estimators
both with respect to the $L^2(\Pn)$ norm $\|\cdot\|_{2,n}$ and the $\ell_1$-norm $\|\cdot\|_1$.  These proofs
also exploit the use of moderate deviation theory for self-normalized sums.

Step 0. Using data-driven penalty satisfying (\ref{the principle lambda}) and (\ref{Def:AsympValidPenaltyLoading}), we have by Theorem \ref{Thm:RateLASSO} and Condition RE that the Lasso estimator
and by Theorems \ref{Thm:RatesPostLASSO} and Condition SE that the Post-Lasso estimator obey:
\begin{eqnarray}\label{eq: Step 0}
&& \max_{1\leq {l}\leq k_e} \|\widehat D_{i{l}} -D_{i{l}}\|_{2,n}  \lesssim_{\mathrm{P}} \sqrt{\frac{s \log (p\vee n) }{n}} \to 0 \\
&& \sqrt{\log p}  \| \widehat \beta_{{l}} - \beta_{l0}\|_{1}  \lesssim_{\mathrm{P}} \sqrt{\frac{s^2 \log^2 (p\vee n) }{n}} \to 0.\label{eq: Step 01}
\end{eqnarray}
In order to prove Theorem \ref{theorem 1} we need also the condition
$$
\max_{1\leq {l}\leq k_e}\|\widehat D_{i{l}}-D_{i{l}}\|^2_{2,n} n^{2/q_{\epsilon}} \lesssim_{\mathrm{P}} \frac{s \log (p\vee n) }{n} n^{2/q_{\epsilon}}  \to 0,
$$
with the last statement holding by Condition SM.  Note that Theorem \ref{theorem 3} assumes
(\ref{eq: Step 0})-(\ref{eq: Step 01}) as high level conditions.

Step 1.    We have that by $\Ep[\epsilon_i|D_i]=0$
\begin{eqnarray*}
\sqrt{n}(\hat \alpha - \alpha_0) &= & \En [\widehat D_i d_i']^{-1} \sqrt{n} \En [\widehat D_i \epsilon_i]  =  \{\En [\widehat D_i d_i']\}^{-1} (\Gn [D_i \epsilon_i]+o_{\mathrm{P}}(1) )  \\
& = & \{\barEp [D_i d_i'] + o_{\mathrm{P}}(1)\}^{-1} \left( \Gn [D_i \epsilon_i] + o_{\mathrm{P}}(1) \right),
 \end{eqnarray*}
where by Steps 2 and 3 below:
\begin{eqnarray}
 \En [\widehat D_id_i'] = \barEp[D_id_i'] + o_{\mathrm{P}}(1) \label{eq: to show 1} \\
 \sqrt{n}\En [\widehat D_i \epsilon_i] = \Gn [D_i \epsilon_i] + o_{\mathrm{P}}(1)\label{eq: to show 2}
 \end{eqnarray}
where $\barEp [D_id_i'] = \barEp[D_i D_i'] = Q$ is bounded away from zero
and bounded from above in the matrix sense, uniformly in $n$. Moreover,
$\textrm{Var}( \Gn [D_i \epsilon_i]   ) = \Omega$ where $\Omega=\sigma^2 \barEp[D_i D_i']$ under homoscedasticity
and $\Omega= \barEp[\epsilon_i^2 D_i D_i'] $ under heteroscedasticity. In either case
we have that $\Omega$  is bounded away from zero and from above in the matrix sense, uniformly in $n$, by the assumptions
the theorems. (Note that matrices $\Omega$ and $Q$ are implicitly indexed by $n$, but we omit the index
to simplify notations.)
Therefore,
$$
\sqrt{n}(\hat \alpha - \alpha_0) = Q^{-1}  \Gn [D_i \epsilon_i] + o_{\mathrm{P}}(1),
$$
and  $
Z_n = (Q^{-1} \Omega Q^{-1})^{-1/2} \sqrt{n}(\hat \alpha - \alpha_0)  = \Gn[z_{i,n}] + o_{\mathrm{P}}(1),$
where $z_{i,n} = (Q^{-1} \Omega Q^{-1})^{-1/2} Q^{-1} D_i \epsilon_i$ are i.n.i.d. with mean zero and variance
$I$. We have
that for some small enough $\delta>0$
$
\barEp\|z_{i,n}\|_2^{2+\delta} \lesssim \barEp\left[ \| D_i \|_2^{2+\delta} |\epsilon_i|^{2 + \delta} \right]  \lesssim 1,
$
by Condition SM. This condition verifies the Lyapunov condition, and the application of the Lyapunov CLT for i.n.i.d. triangular arrays and the Cramer-Wold device implies that $Z_n \to_d N(0,I)$.

Step 2.   To show (\ref{eq: to show 1}), note that
{\small  \begin{eqnarray*}
\| \En[(\widehat D_i - D_i)d_i ']\| &\leq&  \En[\|\widehat D_i-D_i\|_2 \|d_i\|_2] \leq  \sqrt{\En[\|\widehat D_i-D_i\|_2^2] \En[\|d_i\|_2^2] } \\
& = & \sqrt{\En\[\sum_{{l}=1}^{k_e} |\widehat D_{i{l}}-D_{i{l}}|^2\] \En[\|d_i\|_2^2]} \leq   \sqrt{k_e}\max_{1\leq {l}\leq k_e} \|\widehat D_{i{l}} -D_{i{l}}\|_{2,n}  \sqrt{\En[\|d_i\|_2^2]} \\
& \lesssim_{\mathrm{P}} & \max_{1\leq{l}\leq k_e} \|\widehat D_{i{l}} -D_{i{l}}\|_{2,n} = o_{\mathrm{P}}(1).
 \end{eqnarray*}}
where $ \sqrt{\En[\|d_i\|_2^2]} \lesssim_{\mathrm{P}} 1$ by $\barEp\|d_i\|_2^2 \lesssim 1$ and Chebyshev, and the last assertion holds by Step 0.

Moreover, $\En [D_i D_i'] - \barEp [D_i D_i'] \to_{\mathrm{P}} 0$ by von Bahr-Essen inequality (\citename{vonbahr:esseen}, \citeyear*{vonbahr:esseen})
using that $\barEp[\|D_i\|_2^q] $ for a fixed $q >2$ is bounded uniformly in $n$ by Condition SM.

Step 3.   To show (\ref{eq: to show 2}), let $a_{il}:= a_l(x_i)$, note that $\Ep[f_{ij}\epsilon_i]=0$, $\Ep[\epsilon_i|D_{il}]=0$ and $\Ep[\epsilon_{il}|a_{il}]=0$, and
 \begin{eqnarray*}
 &  &  \max_{1\leq {l}\leq k_e}  |\sqrt{n}\En[(\widehat D_{il} - D_{il}) \epsilon_i]|  \\
 & & =  \max_{1\leq {l}\leq k_e} | \sqrt{n}\En \{f_i'(\widehat \beta_{l} - \beta_{{l0}})  \epsilon_i \}  - \Gn \{ a_{i{l}} \epsilon_i \} |  =   \max_{1\leq {l}\leq k_e} | \sum_{j=1}^p \Gn \{ f_{ij} \epsilon_i  \} '  (\widehat \beta_{{l} j} - \beta_{{l0} j})   - \Gn \{ a_{i{l}} \epsilon_i \} |  \\
 & & \leq
 \max_{1 \leq j \leq p}  \left |\frac{\Gn [ f_{ij}\epsilon_i]}
 {\sqrt{\En [f_{ij}^2 \epsilon_i^2]}}  \right |   \max_{1 \leq j \leq p} \sqrt{\En [f_{ij}^2 \epsilon_i^2]}
  \max_{1\leq {l}\leq k_e} \|\widehat \beta_{{l}} - \beta_{{l0}}\|_{1}   +   \max_{1 \leq {l} \leq k_e} |\Gn \{ a_{i{l}} \epsilon_i\} |.
 \end{eqnarray*}

Next we note that for each $l=1,\ldots,k_e$
$
 |\Gn \{ a_{i{l}} \epsilon_i \} |  \lesssim_{\mathrm{P}} [\En a^2_{i{l}}]^{1/2} \lesssim_{\mathrm{P}}  \sqrt{s/n} \to 0,
$
by  the Condition AS on $[\En a^2_{i{l}}]^{1/2}$ and by Chebyshev inequality, since in the homoscedastic case of Theorem \ref{theorem 1}:
$
\text{Var}\[\Gn \{ a_{i{l}} \epsilon_i \}| x_1,...,x_n \] \leq  \sigma^2 \En a^2_{i{l}},
$
and in the bounded heteroscedastic case of Theorem \ref{theorem 1}:
$
\text{Var}\[\Gn \{ a_{i{l}} \epsilon_i  \}| x_1,...,x_n \] \lesssim  \En a^2_{i{l}}.
$
Next we can bound
$
\max_{1 \leq j \leq p}  \left |{\Gn [ f_{ij}\epsilon_i]}/
 {\sqrt{\En [f_{ij}^2 \epsilon_i^2]}}  \right |  \lesssim_{\mathrm{P}} \sqrt{ \log p}
$
provided that $p$ obeys the growth condition $\log p = o(n^{1/3})$, and
\begin{equation}\label{eq: no bozos}
\min_{1 \leq j \leq p} M_{j0} := \frac{\barEp[f_{ij}^2 \epsilon_i^2]^{1/2}}{\barEp[|f_{ij}|^3 |\epsilon_i|^3]^{1/3}} \gtrsim 1.
\end{equation}
  This result follows by the bound on moderate deviations of a maximum of a self-normalized vector stated in Lemma \ref{Lemma:SNMD},  and by (\ref{eq: no bozos}) holding by Condition SM.
Finally,
$
 \max_{1 \leq j \leq p} \En [f_{ij}^2 \epsilon_i^2]  \lesssim_{\mathrm{P}} 1,
$
by Condition SM. Thus, combining bounds above with bounds in (\ref{eq: Step 0})-(\ref{eq: Step 01})
$$
\max_{1\leq {l}\leq k_e}  |\sqrt{n}\En[(\widehat D_{il} - D_{il}) \epsilon_i]|  \lesssim_{\mathrm{P}}   \sqrt{ \frac{s^2 \log^2 (p\vee n) }{n }}   +  \sqrt{\frac{s}{n}}   \to 0,
$$
where the conclusion by Condition SM (iii).

Step 4. This step establishes
consistency of the variance estimator in the homoscedastic case
of Theorem \ref{theorem 1}.

Since $\sigma^2$ and  $Q=\barEp [D_i D_i']$ are bounded away from zero
and from above uniformly in $n$, it suffices to show $\hat \sigma^2 - \sigma^2 \to_{\mathrm{P}} 0$
and  $\En [\widehat D_i\widehat D_i'] - \barEp [D_i D_i'] \to_{\mathrm{P}} 0$. Indeed,
$\hat\sigma^2 = \En[(\epsilon_i-d_i'(\widehat\alpha-\alpha_0))^2] =
\En[\epsilon_i^2]+2\En[\epsilon_id_i'(\alpha_0-\widehat\alpha)]
+\En[(d_i'(\alpha_0-\widehat\alpha))^2]$ so that
$\En[\epsilon_i^2]- \sigma^2 \to_{\mathrm{P}} 0$ by Chebyshev inequality since $\barEp[|\epsilon_i|^4]$ is bounded
uniformly in $n$, and the remaining
terms converge to zero in probability since $\widehat \alpha - \alpha_0 \to_{\mathrm{P}}
0$ by Step 3, $\|\En[d_i\epsilon_i]\|_2 \lesssim_{\mathrm{P}} 1$ by Markov and since
$\barEp\|d_i\epsilon_i\|_2 \leq \sqrt{\barEp\|d_i\|_2^2} \sqrt{\barEp|\epsilon_i|^2}$
is uniformly bounded in $n$ by Condition SM, and $ \En \|d_i\|_2^2 \lesssim_{\mathrm{P}} 1$
by Markov and $\barEp \|d_i\|_2^2$ bounded uniformly in $n$ by Condition SM.   Next, note that
$$ \|\En [\widehat D_i\widehat D_i'] - \En [D_i D_i'] \| = \|\En
[D_i (\widehat D_i-D_i)' + (\widehat D_i-D_i)D_i'] + \En
[(\widehat D_i-D_i) (\hat D_i-D_i)'] \|
$$
which is bounded up to a constant by
$$
\sqrt{k_e}\max_{1\leq {l}\leq k_e}\|\widehat D_{i{l}}-D_{i{l}}\|_{2,n} \| \ \|D_{i}\|_2\|_{2,n} + k_e\max_{1\leq {l}\leq k_e}\|\widehat D_{i{l}}-D_{i{l}}\|^2_{2,n} \to_{\mathrm{P}} 0
$$
by (\ref{eq: Step 0}) and by $\|\ \|D_i\|_2 \|_{2,n} \lesssim_{\mathrm{P}} 1$ holding by Markov inequality.
Moreover, $\En [D_i D_i'] - \barEp [D_i D_i'] \to_{\mathrm{P}} 0$ by Step 2.  

Step 5.   This step establishes
consistency of the variance estimator in the boundedly heteroscedastic case
of Theorem \ref{theorem 1}.

 Recall that
$\hat \Omega := \En [\hat \epsilon_i^2 \widehat D(\x_i) \widehat D(\x_i)'] $ and
$\Omega$ $:=$ $\barEp [ \epsilon_i^2  D(\x_i)  D(\x_i)'] $, where the latter
is bounded away from zero and from above uniformly in $n$.   Also, $Q=\barEp [D_i D_i']$ is bounded away from zero
and from above uniformly in $n$. Therefore, it suffices to show $\hat \Omega - \Omega \to_{\mathrm{P}} 0$
and  that $\En [\widehat D_i\widehat D_i'] - \barEp [D_i D_i'] \to_{\mathrm{P}} 0$. The latter has been shown in the previous step, and we only need to show the former.

In what follows, we shall repeatedly use the following elementary inequality: for arbitrary non-negative random variables $W_1,...,W_n$
and $q>1$:
\begin{equation} \label{eq: bound max}
\max_{i \leq n} W_i \lesssim n^{1/q} \text{ if }  \barEp[W_i^q] \lesssim 1,
\end{equation}
which follows by Markov inequality from $
\Ep[\max_{i \leq n} W_i ] \leq n^{1/q} \Ep \( \frac{1}{n} \sum_{i=1}^n W_i^q\)^{1/q} \leq  n^{1/q} (\barEp[W_i^q])^{1/q},$
which follows from the trivial bound $\max_{i \leq n} |w_i| \leq \sum_{i=1}^n |w_i|$ and Jensen's inequality.

First, we note
\begin{eqnarray*}
&  & \| \En [ (\hat \epsilon_i^2 - \epsilon_i^2) \hat D_i \hat D_i']\|  \leq   \| \En[ \{ d_i'(\hat \alpha - \alpha_0)\}^2 \hat D_i \hat D_i'] \|  + 2 \| \En[ \epsilon_i d_i'(\hat \alpha - \alpha_0) \hat D_i \hat D_i'] \|   \\ &  & \lesssim_{\mathrm{P}}   \max_{i \leq n} \|d_i\|_2^2 n^{-1} \| \En [\hat D_i \hat D_i'] \| +  \max_{i \leq n } |\epsilon_i| \|d_i\|_2 n^{-1/2}  \|  \En [\hat D_i \hat D_i'] \| \to_{\mathrm{P}} 0,
\end{eqnarray*}
since  $\|\hat \alpha - \alpha_0\|_2^2 \lesssim_P 1/n$, $\| \En \hat D_i \hat D_i' \| \lesssim_{\mathrm{P}} 1$  by Step 4, and
$\max_{i \leq n } \|d_i\|_2^2 n^{-1} \to_{\mathrm{P}} 0$ (by
$ \max_{i \leq n } \|d_i\|_2 \lesssim_P n^{1/q}$ for $q> 2$,  holding by  $\barEp[\|d_i\|_2^  q] \lesssim 1$ and inequality (\ref{eq: bound max})) and  $\max_{i \leq n} [\|d_i \|_2 |\epsilon_i|] n^{-1/2} \to_{\mathrm{P}} 0$
(by
$ \max_{i \leq n} [\|d_i \|_2 |\epsilon_i|] \lesssim_P n^{1/q}
$ for $q>2$ holding by $\barEp [  (\|d_i\|_2 |\epsilon_i|^2)^{q}] \lesssim 1$ and inequality (\ref{eq: bound max})).

 Next we note that
$$ \|\En [ \epsilon_i^2 \widehat D_i\widehat D_i'] - \En [  \epsilon_i^2 D_i D_i'] \| = \|  \En
[\epsilon_i^2 D_i (\widehat D_i-D_i)' +   \epsilon_i^2 (\widehat D_i-D_i)D_i'] + \En
[ \epsilon_i^2 (\widehat D_i-D_i) (\hat D_i-D_i)'] \|
$$
which is bounded up to a constant by
$$
\sqrt{k_e}\max_{1\leq {l}\leq k_e}\|\widehat D_{i{l}}-D_{i{l}}\|_{2,n} \|\epsilon_i^2 \|D_i\|_2\|_{2,n} + k_e\max_{1\leq {l}\leq k_e}\|\widehat D_{i{l}}-D_{i{l}}\|^2_{2,n} \max_{i \leq n} \epsilon_i^2\to_{\mathrm{P}} 0.
$$
The latter occurs because $\|\epsilon_i^2 \|D_i\|_2\|_{2,n} = \sqrt{ \En [\epsilon_i^4 \|D_i\|_2^2 ]} \lesssim_{\mathrm{P}} 1$ by  $\barEp [\epsilon_i^4 \|D_i\|_2^2 ]$ uniformly bounded in $n$ by Condition SM and by Markov inequality, and
$$
\max_{1\leq {l}\leq k_e}\|\widehat D_{i{l}}-D_{i{l}}\|^2_{2,n} \max_{i \leq n} \epsilon_i^2 \lesssim_{\mathrm{P}} \frac{s \log (p\vee n)  }{n} n^{2/q_{\epsilon}}\to 0,
$$
where the latter step holds by Step 0 and by $\max_{i \leq n} \epsilon_i^2 \lesssim_{\mathrm{P}} n^{2/q_{\epsilon}}$ holding
by $
\barEp[\epsilon_i^{q_{\epsilon}}] \lesssim 1$ and inequality (\ref{eq: bound max}).  Finally, $\En [\epsilon_i^2 D_i D_i'] - \barEp [\epsilon_i^2 D_i D_i'] \to_{\mathrm{P}} 0$ by the von Bahr-Essen inequality (\citename{vonbahr:esseen}, \citeyear*{vonbahr:esseen}) and by $\barEp [|\epsilon_i|^{2 + \mu} \|D_i\|_2^{2 + \mu}]$ bounded uniformly in $n$ for small enough $\mu>0$ by Condition SM.

We conclude that $\En [\hat \epsilon_i^2 \widehat D_i\widehat D_i'] -  \barEp [\epsilon_i^2 D_i D_i'] \to_{\mathrm{P}} 0$. \qed

\subsection{Proof of Theorem \ref{Thm:WIV}}

Step 1. To establish claim (1), using the properties of projection we note that
\begin{equation}\label{note a}
n \En[\tilde \epsilon_i \tilde f_{ij}] = n \En[\epsilon_i \tilde f_{ij}].
 \end{equation}
Since for $\hat \mu_\epsilon = (\En[w_i w_i'])^{-1} \En[w_i \epsilon_i] $  we have $\| \hat \mu_\epsilon \|_2 \leq \| \En[w_i w_i']^{-1} \| \|\En[w_i \epsilon_i]  \|_2 $, where $\| \En[w_i w_i']^{-1} \|$  is bounded by SM2(ii) and $ \|\En[w_i \epsilon_i]  \|_2$  is of stochastic order $\sqrt{k_w/n}$ by Chebyshev inequality and SM2(ii). Hence
$\|\hat \mu_\epsilon \|_2 \lesssim_\Pr  \sqrt{k_w/n}$. Since
$\|w_i\|_2 \leq \zeta_w $ by Condition SM2(i),  we conclude that $\max_{i \leq n } |w_i' \hat \mu_\epsilon| \lesssim_\Pr \zeta_w \sqrt{k_w}/\sqrt{n} \to 0$.  Hence, uniformly in $j \in \{1,...,p\}$,
\begin{equation}\label{note b}
\big |\sqrt{\En [\tilde \epsilon_i^2 \tilde f^2_{ij}]} -\sqrt{\En [\epsilon_i^2 \tilde f^2_{ij}]} \big | \overset{(a)}{\leq} \sqrt{ \En{ [(w_i'\hat \mu_\epsilon)^2 \tilde f^2_{ij} ]} } \overset{(b)}{=}  o_P(1) \sqrt{\En [\tilde f^2_{ij}]}  \overset{(c)}{=} o_P(1),
 \end{equation}
where (a) is by the triangular inequality and the decomposition $\tilde \epsilon_i = \epsilon_i - w_i' \hat \mu_\epsilon$,  (b) is by
the Holder inequality, and (c) is by the normalization $\sqrt{\En [\tilde f^2_{ij}]}=1$ for each $j$.  Hence, for $c > 1$, by (\ref{note a}) and (\ref{note b})  wp $\to 1$
$$
\Lambda_{\alpha_1} \leq  c \bar \Lambda_{\alpha_1},  \ \
  \bar \Lambda_{\alpha_1} := \max_{1 \leq j \leq p}  n|\En[ \epsilon_i \tilde f_{ij}] |/\sqrt{\En[\epsilon^2_i \tilde f^2_{ij}]}.
$$
Since $\bar \Lambda_{\alpha_1} $ is a maximum of self-normalized sum of i.n.i.d. terms conditional on $X$,  application of SM2(iii)-(iv) and the moderate deviation bound from Lemma \ref{Lemma:SNMD}  for the self-normalized sum with $U_{ij} = \epsilon_i \tilde f_{ij} $, conditional on $X$,  implies that  $\Pr( c\bar \Lambda_{\alpha_1}  \leq \Lambda(1-\gamma) ) \geq 1- \gamma -o(1)$. This verifies claim (i).

Step 2. To show claim (2) we  note that using triangular and other elementary inequalities:
{\small $$\Lambda_a   =   \max_{1 \leq j \leq p} \left | \frac{ n | \En[ (\tilde \epsilon_i - (a-\alpha_1)' \tilde d_{ei}) \tilde f_{ij}] }{ \sqrt{
\En[(\tilde \epsilon_i - (a-\alpha_1) '\tilde d_{ei})^2 \tilde f^2_{ij}] }} \right|
 \geq
\max_{1 \leq j \leq p} \left | \frac{  n | \En[(a-\alpha_1)' \tilde d_{ei}\tilde f_{ij}]| }{ \sqrt{
\En[\tilde \epsilon^2_i\tilde f^2_{ij}]} + \sqrt{
\En[ \{(a-\alpha_1)' \tilde d_{ei}\}^2 \tilde f^2_{ij}] }}   \right| - \Lambda_{\alpha_1}.
$$}
The first term on the right side is bounded below by, wp $\to$ 1,
$$
\max_{1 \leq j \leq p} \frac{n|\En[(a-\alpha_1)' \tilde d_{ei}\tilde f_{ij}]|}{ c\sqrt{\En[\epsilon^2_i \tilde f_{ij}^2]} + \sqrt{
\En[ \{(a-\alpha_1)' \tilde d_{ei}\}^2 \tilde f^2_{ij}] }},
$$
by Step 1 for some $c>1$,  and $\Lambda_{\alpha_1} \lesssim_P \sqrt{ n \log (p/\gamma)}$ also by Step 1.  Hence
for any constant $C$, by the last condition in the statement of the theorem, with probability converging to 1, $\Lambda_a - C\sqrt{n \log (p/\gamma)} \to + \infty,$
so that Claim (2) immediately follows, since
 $\Lambda(1-\gamma)  \lesssim \sqrt{n \log (p/\gamma)}$. \qed

\subsection{Proof of Theorems \ref{theorem: spec} and \ref{Thm:InferenceSplitSampleIV}.} See Online Appendix.

\bibliographystyle{econometrica}

\bibliography{mybib}

\pagebreak

\begin{figure}
    \includegraphics[width=\textwidth]{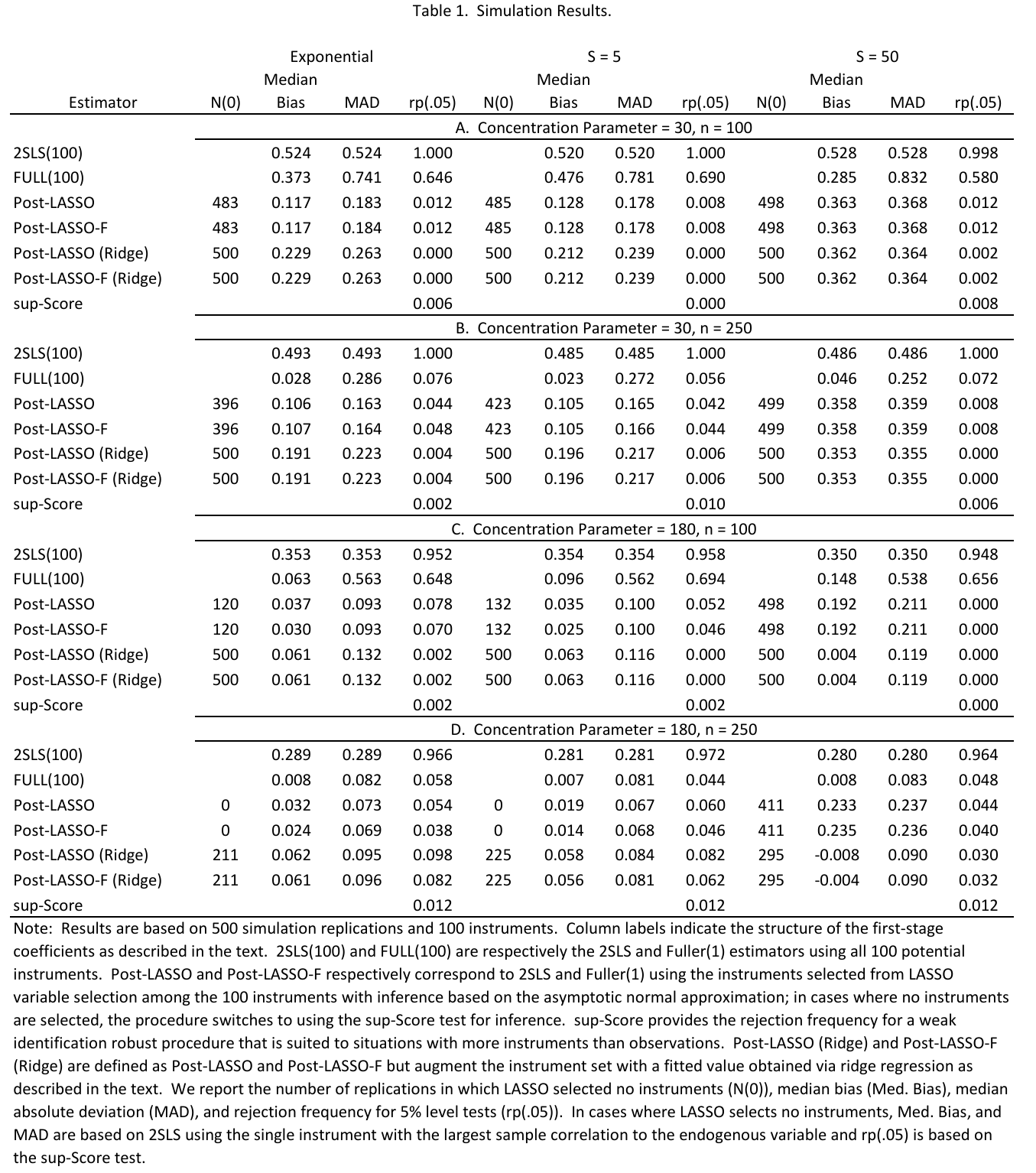}
    \label{fig:table1}
\end{figure}

\pagebreak

\begin{figure}
		\includegraphics[width=\textwidth]{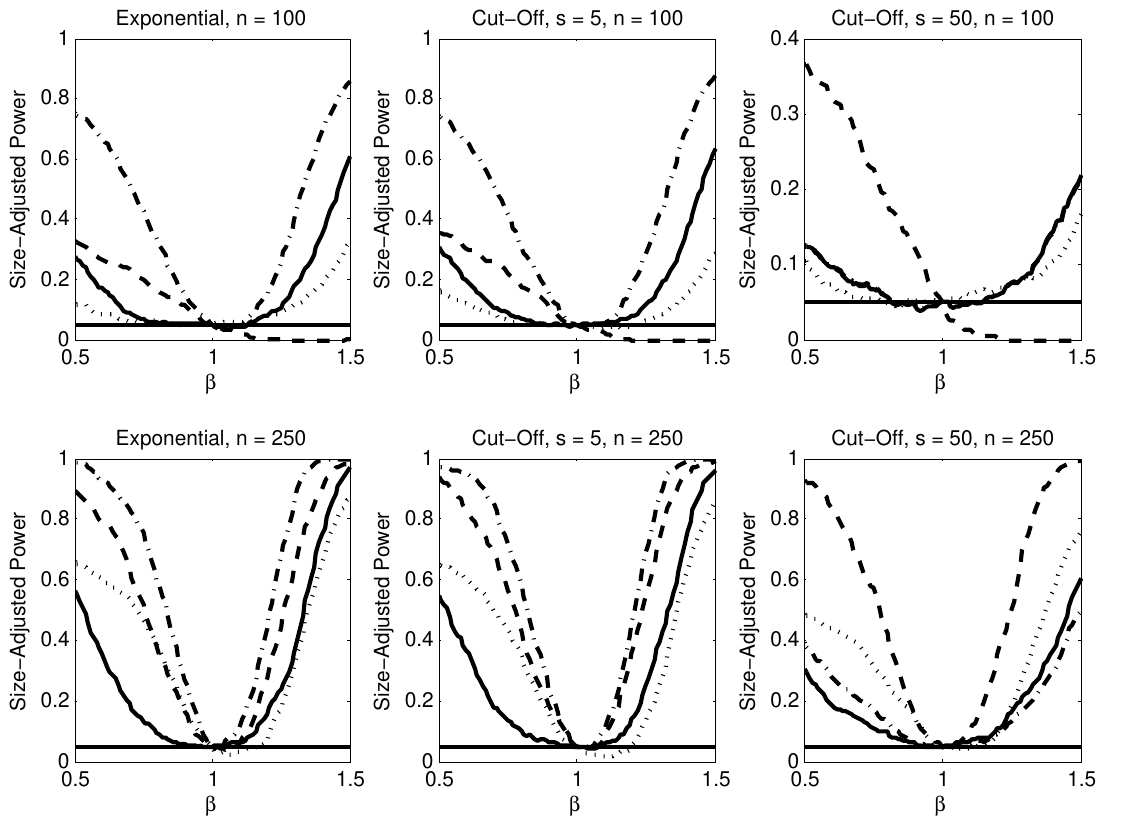}
		\label{fig:figure1}
		\caption{Size-adjusted power curves for Post-Lasso-F (dot-dash), Post-Lasso-F (Ridge) (dotted), FULL(100) (dashed), and sup-Score (solid) from the simulation example with concentration parameter of 180 for $n = 100$ and $n = 250$.}
\end{figure}

\pagebreak

%

\begin{figure}
    \includegraphics[width=\textwidth]{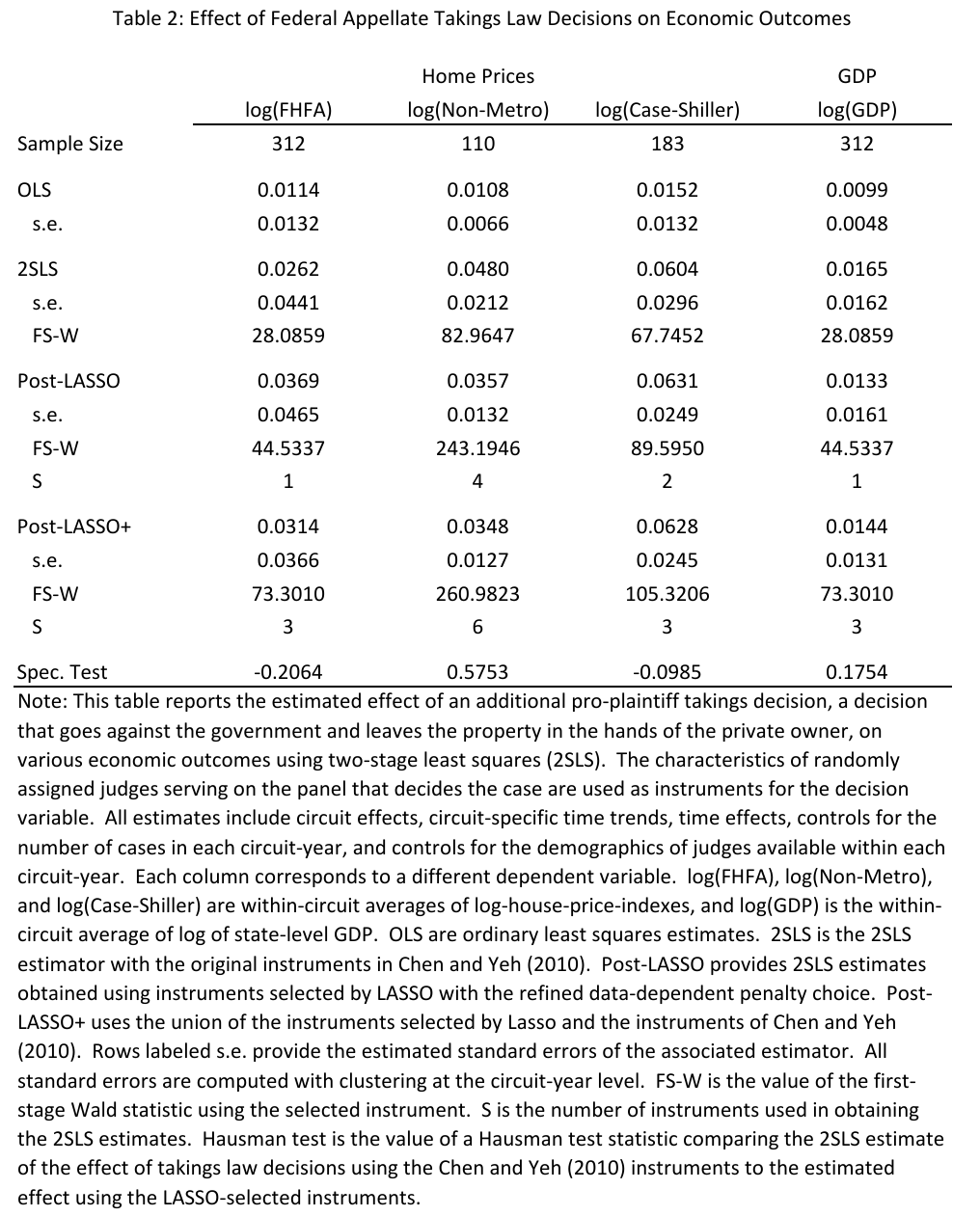}
    \label{fig:table3}
\end{figure}

\clearpage


\setcounter{page}{1}

\begin{center}
{\LARGE Online Appendix for ``Sparse Models and Methods for Optimal Instruments with an Application to Eminent Domain"}
\end{center}

\setcounter{section}{0}
\setcounter{lemma}{0}
\setcounter{theorem}{0}
\setcounter{corollary}{0}

\section{Tools}

\subsection{Lyapunov CLT, Rosenthal Inequality, and Von Bahr-Esseen Inequality}

\begin{lemma}[Lyapunov CLT] Let $\{X_{i,n}, i=1,...,n\}$ be independent zero-mean random variables with variance $s^2_{i,n}$,  $n=1,2,...$. Define $s_n^2 = \sum_{i=1}^n s^2_{i,n}$. If for some $\mu>0$, the Lyapunov's condition holds:
$$
     \lim_{n\to\infty} \frac{1}{s_{n}^{2+\mu}} \sum_{i=1}^{n} \operatorname{E}\big[\,|X_{i,n}|^{2+\mu}\,\big] = 0
$$
then as $n$ goes to infinity:
$$
    \frac{1}{s_n} \sum_{i=1}^{n} X_{i,n} \ \to_d \ \mathcal{N}(0,\;1).
$$
\end{lemma}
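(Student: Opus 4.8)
The plan is to prove this by the classical characteristic-function argument combined with L\'evy's continuity theorem; since Lindeberg--Feller is not among the results available here, I would keep the argument self-contained. First I would make two harmless reductions. Replacing $X_{i,n}$ by $X_{i,n}/s_n$ keeps the summands independent and zero-mean, turns the variances into $\sigma_{i,n}^2 := s_{i,n}^2/s_n^2$ with $\sum_{i=1}^n \sigma_{i,n}^2 = 1$, and preserves Lyapunov's condition, which now reads $\rho_n := \sum_{i=1}^n \Ep[|X_{i,n}|^{2+\mu}] \to 0$; I keep the same symbols for the rescaled variables. Moreover one may assume $0 < \mu \leq 1$, since by log-convexity of $L^q$-norms together with H\"older's inequality the Lyapunov condition for any $\mu$ implies it for every smaller exponent. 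It then suffices to show that for each fixed $t \in \Real$ the characteristic function $\varphi_n(t) := \Ep[\exp(\mathrm{i} t \sum_{i=1}^n X_{i,n})] = \prod_{i=1}^n \varphi_{i,n}(t)$, where $\varphi_{i,n}(t) := \Ep[\exp(\mathrm{i} t X_{i,n})]$, converges to $e^{-t^2/2}$.

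Next I would record two elementary bounds. \emph{Uniform negligibility}: by monotonicity of $L^p$-norms, $\sigma_{i,n}^{2+\mu} \leq \Ep[|X_{i,n}|^{2+\mu}] \leq \rho_n$, so $\max_{i\leq n}\sigma_{i,n}^2 \leq \rho_n^{2/(2+\mu)} \to 0$. \emph{Taylor remainder}: for real $x$ and $0 < \mu \leq 1$, $\bigl|e^{\mathrm{i}x} - 1 - \mathrm{i}x + \tfrac12 x^2\bigr| \leq |x|^{2+\mu}$, obtained from the standard bound $\min\{|x|^3/6,\,|x|^2\}$ for this remainder together with $\min\{a,b\}\leq a^{\mu}b^{1-\mu}$. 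Applying the latter with $x = t X_{i,n}$, taking expectations, and using $\Ep[X_{i,n}] = 0$ and $\Ep[X_{i,n}^2] = \sigma_{i,n}^2$, yields $\varphi_{i,n}(t) = 1 - \tfrac12 t^2 \sigma_{i,n}^2 + r_{i,n}(t)$ with $\sum_{i=1}^n |r_{i,n}(t)| \leq |t|^{2+\mu}\rho_n \to 0$.

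Finally I would pass to complex logarithms. By the two bounds above, $\max_{i\leq n}|\varphi_{i,n}(t) - 1| \leq \tfrac12 t^2 \max_{i\leq n} \sigma_{i,n}^2 + \max_{i\leq n} |r_{i,n}(t)| \to 0$, so for $n$ large the principal logarithm of each $\varphi_{i,n}(t)$ is defined and $\log \varphi_{i,n}(t) = (\varphi_{i,n}(t) - 1) + \theta_{i,n}(t)$ with $|\theta_{i,n}(t)| \leq |\varphi_{i,n}(t) - 1|^2$. Summing over $i$, $\log \varphi_n(t) = -\tfrac12 t^2 \sum_i \sigma_{i,n}^2 + \sum_i r_{i,n}(t) + \sum_i \theta_{i,n}(t)$; the first term equals $-\tfrac12 t^2$, the second is $o(1)$, and the third obeys $\bigl|\sum_i \theta_{i,n}(t)\bigr| \leq \max_{i\leq n}|\varphi_{i,n}(t)-1| \cdot \sum_i |\varphi_{i,n}(t)-1| \leq \max_{i\leq n}|\varphi_{i,n}(t)-1| \cdot \bigl(\tfrac12 t^2 + o(1)\bigr) \to 0$, again by uniform negligibility. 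Hence $\varphi_n(t) \to e^{-t^2/2}$ for every $t$, and L\'evy's continuity theorem gives $\sum_{i=1}^n X_{i,n} \to_d \mathcal N(0,1)$. I expect the only delicate point to be the control of the quadratic logarithm error $\sum_i \theta_{i,n}(t)$: this is precisely where the $(2+\mu)$-th moment (rather than just second moments) is used, through the uniform negligibility $\max_i \sigma_{i,n}^2 \to 0$. A shortcut, if one is willing to invoke it, is to observe that Lyapunov's condition implies Lindeberg's condition --- $\sum_i \Ep[X_{i,n}^2\, 1\{|X_{i,n}| > \varepsilon\}] \leq \varepsilon^{-\mu}\rho_n \to 0$ after the normalization --- and then cite the Lindeberg--Feller CLT directly.
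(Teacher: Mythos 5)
Your proof is correct. Note that the paper itself offers no proof of this lemma: it is stated in the online appendix's ``Tools'' section as a classical result (alongside Rosenthal's and von Bahr--Esseen's inequalities, which are cited to the literature), so there is no argument in the paper to compare against. What you give is the standard self-contained characteristic-function proof: the reduction to $0<\mu\leq 1$ via log-convexity of $L^q$-norms and H\"older is valid; the bound $\max_i\sigma_{i,n}^2\leq\rho_n^{2/(2+\mu)}\to 0$ correctly delivers uniform negligibility; the interpolated Taylor bound $\min\{|x|^3/6,|x|^2\}\leq|x|^{2+\mu}$ is right; and the logarithm step is sound --- the only bookkeeping point, which your write-up implicitly handles, is that one should read ``$\log\varphi_n(t)$'' as the sum $\sum_i\log\varphi_{i,n}(t)$ of principal logarithms, which suffices because $\exp$ of that sum equals the product $\varphi_n(t)$, so $\varphi_n(t)\to e^{-t^2/2}$ follows by continuity of $\exp$ and L\'evy's continuity theorem finishes the argument. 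Your closing observation that Lyapunov implies Lindeberg (via $\sum_i\Ep[X_{i,n}^2\,1\{|X_{i,n}|>\varepsilon\}]\leq\varepsilon^{-\mu}\rho_n$) is also correct and is the shortest route if one is willing to cite Lindeberg--Feller, which is how the result is usually dispatched; the authors simply take the CLT itself as given.
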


\begin{lemma}[Rosenthal Inequality] Let $X_1,...,X_n$ be independent zero-mean random variables, then for $r\geq 2$
$$
\Ep\left(  \left\vert \sum_{i=1}^{n}X_{i}\right\vert ^{r}\right) \leq C(r)  \max\left[ \sum_{t=1}^n \Ep(|X_i|^r),   \left\{\sum_{t=1}^n \Ep(X_i^2)\right\}^{r/2} \right].
$$
\end{lemma}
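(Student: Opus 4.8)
The plan is to reduce the statement to a symmetric Rademacher average, apply Khintchine's inequality conditionally, and then control the moment of the resulting quadratic variation by an induction on the exponent that terminates in the von Bahr--Esseen regime recorded earlier in this appendix; throughout, $C(r)$ denotes a constant depending only on $r$ whose value may change from line to line. First I would symmetrize: let $X_1',\dots,X_n'$ be an independent copy of $X_1,\dots,X_n$, jointly independent of the originals. Since each $X_i$ is mean zero, Jensen's inequality applied conditionally on $(X_i)$ gives $\Ep|\sum_i X_i|^r \leq \Ep|\sum_i (X_i - X_i')|^r$. Writing $Y_i := X_i - X_i'$, the $Y_i$ are independent and symmetric, so $(Y_i) \overset{d}{=} (\varepsilon_i Y_i)$ for i.i.d. Rademacher signs $\varepsilon_i$ independent of $(Y_i)$. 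Moreover $\Ep Y_i^2 = 2\Ep X_i^2$ and, by convexity, $\Ep|Y_i|^r \leq 2^{r}\Ep|X_i|^r$, so it suffices to prove the inequality for the $Y_i$ and translate back.

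Conditioning on $(Y_i)$ and using Khintchine's inequality for Rademacher sums, there is $C(r)$ with ${\mathrm{E}}_\varepsilon|\sum_i \varepsilon_i Y_i|^r \leq C(r)(\sum_i Y_i^2)^{r/2}$; taking the full expectation yields $\Ep|\sum_i Y_i|^r \leq C(r)\,\Ep(\sum_i Y_i^2)^{r/2}$. It therefore remains to bound the moment of the quadratic variation $\Ep(\sum_i Y_i^2)^{p}$ with $p := r/2$.

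For this I would center and induct on the exponent. Since $\sum_i Y_i^2 = \sum_i(Y_i^2 - \Ep Y_i^2) + \sum_i \Ep Y_i^2$, the bound $(a+b)^p \leq 2^{p-1}(a^p+b^p)$ gives
$$\Ep\Big(\sum_i Y_i^2\Big)^{p} \leq 2^{p-1}\Big[\Ep\Big|\sum_i (Y_i^2 - \Ep Y_i^2)\Big|^{p} + \Big(\sum_i \Ep Y_i^2\Big)^{p}\Big].$$
The variables $W_i := Y_i^2 - \Ep Y_i^2$ are independent and mean zero at the strictly smaller exponent $p = r/2$. If $1 \leq p \leq 2$ (that is, $2 \leq r \leq 4$), the von Bahr--Esseen inequality of this appendix bounds the first term by $2\sum_i \Ep|W_i|^{p} \leq C(r)\sum_i \Ep|Y_i|^{2p}$; if $p > 2$, the induction hypothesis (Rosenthal at exponent $p$) bounds it by $C(p)\max\{\sum_i \Ep|W_i|^{p},(\sum_i \Ep W_i^2)^{p/2}\}$. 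Since $\Ep|W_i|^{p} \leq C(r)\Ep|Y_i|^{2p}$ and $\Ep W_i^2 \leq \Ep Y_i^4$, one reduces to the interpolation $(\sum_i \Ep Y_i^4)^{p/2} \leq C(r)[\sum_i \Ep|Y_i|^{2p} + (\sum_i \Ep Y_i^2)^{p}]$, obtained by Hölder on the sequence $(\Ep Y_i^4)_i$. Collecting terms yields
$$\Ep\Big(\sum_i Y_i^2\Big)^{p} \leq C(r)\Big[\sum_i \Ep|Y_i|^{r} + \Big(\sum_i \Ep Y_i^2\Big)^{p}\Big].$$

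Finally I would assemble: combining the three displays with $\Ep|Y_i|^r \leq 2^r\Ep|X_i|^r$ and $\Ep Y_i^2 = 2\Ep X_i^2$ gives $\Ep|\sum_i X_i|^r \leq C(r)[\sum_i \Ep|X_i|^r + (\sum_i \Ep X_i^2)^{r/2}]$, and $a + b \leq 2\max\{a,b\}$ converts the sum into the maximum in the statement. The main obstacle is the bookkeeping of constants through the recursion in the third step: one must verify that the exponent strictly decreases at each stage (it halves), that the induction bottoms out cleanly in the von Bahr--Esseen range $[1,2]$, and that every auxiliary quantity produced by the inductive call — in particular the variance term $(\sum_i \Ep W_i^2)^{p/2}$ — can be re-expressed through the two target quantities by interpolation without $C(r)$ blowing up.
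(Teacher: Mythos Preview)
The paper does not actually prove this lemma: it states the inequality and attributes it to \citen{Rosenthal1970} without argument. Your proposal therefore goes well beyond what the paper does, supplying a complete self-contained proof.

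Your route---symmetrization to Rademacher sums, conditional Khintchine to reduce to a moment of the square function $\sum_i Y_i^2$, then an induction on the halved exponent that bottoms out in the von Bahr--Esseen range $[1,2]$---is a standard and correct way to establish Rosenthal's inequality. The one step that deserves a line of care is the interpolation $(\sum_i \Ep Y_i^4)^{p/2} \leq C(r)[\sum_i \Ep|Y_i|^{2p} + (\sum_i \Ep Y_i^2)^{p}]$: it follows from the moment interpolation $\Ep Y_i^4 \leq (\Ep Y_i^2)^{(p-2)/(p-1)}(\Ep|Y_i|^{2p})^{1/(p-1)}$, H\"older on the resulting sum, and then Young's inequality with conjugate weights $\tfrac{p-2}{2(p-1)}$ and $\tfrac{p}{2(p-1)}$ (which sum to $1$) applied to the product $(A^p)^{(p-2)/(2(p-1))} B^{p/(2(p-1))}$. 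With that spelled out, the recursion closes and the constants remain controlled. So your argument is sound; the paper simply relies on the literature rather than reproving the result.
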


This is due to \citen{Rosenthal1970}.

\begin{corollary} Let $r \geq 2$, and consider the case of  independent zero-mean variables $X_i$ with $\Ep \En (X_i^2)=1$ and $\Ep\En(|X_i|^r)$ bounded by $C$.
Then for any $\ell_n \to \infty$
$$
Pr \left(\frac{ |\sum_{i=1}^n X_i|}{n} >  \ell_n n^{-1/2} \right) \leq  \frac{2C(r)C}{\ell_n^{r}} \to 0.
$$
\end{corollary}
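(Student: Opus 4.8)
The plan is to derive this from the Rosenthal inequality stated above by a single application of Markov's inequality to the $r$-th absolute moment of the partial sum $S_n := \sum_{i=1}^n X_i$, after translating the hypotheses into the statements $\sum_{i=1}^n \Ep X_i^2 = n$ and $\sum_{i=1}^n \Ep|X_i|^r \leq Cn$ (which are exactly what $\Ep\En[X_i^2]=1$ and $\Ep\En[|X_i|^r]\leq C$ say).

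First I would rewrite the event and apply Markov: since $n\cdot \ell_n n^{-1/2} = \ell_n\sqrt{n}$, applying Markov's inequality to the nonnegative random variable $|S_n|^r$ gives
$$
\Pr\!\left(\frac{|S_n|}{n} > \ell_n n^{-1/2}\right) = \Pr\!\left(|S_n| > \ell_n\sqrt{n}\right) \leq \frac{\Ep|S_n|^r}{\ell_n^r\, n^{r/2}},
$$
and I would then bound the numerator with the Rosenthal inequality and the moment hypotheses,
$$
\Ep|S_n|^r \leq C(r)\max\!\left[\sum_{i=1}^n\Ep|X_i|^r,\ \Big(\sum_{i=1}^n\Ep X_i^2\Big)^{r/2}\right] \leq C(r)\big(Cn + n^{r/2}\big),
$$
using $\max[a,b]\leq a+b$ for $a,b\geq 0$.

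Combining and dividing through yields
$$
\Pr\!\left(\frac{|S_n|}{n} > \ell_n n^{-1/2}\right) \leq \frac{C(r)\big(Cn^{1-r/2} + 1\big)}{\ell_n^r}.
$$
Here I would invoke $r\geq 2$, which forces $n^{1-r/2}\leq 1$ for all $n\geq 1$, together with the fact that the bounding constant $C$ may be taken $\geq 1$ without loss: by Jensen's inequality applied to $t\mapsto t^{r/2}$, $\Ep|X_i|^r\geq(\Ep X_i^2)^{r/2}$, and by convexity again $\frac1n\sum_i(\Ep X_i^2)^{r/2}\geq\big(\frac1n\sum_i\Ep X_i^2\big)^{r/2}=1$, so that $\Ep\En[|X_i|^r]\geq 1$ and hence $C\geq 1$. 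Then $Cn^{1-r/2}+1\leq C+1\leq 2C$, giving the asserted bound $2C(r)C/\ell_n^r$, which tends to $0$ since $\ell_n\to\infty$ and $r>0$.

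I do not anticipate a genuine obstacle: the statement is a routine corollary of Rosenthal's inequality, and the only point that needs a moment's care is the bookkeeping that absorbs the $\max$ into a clean universal constant — checking $C\geq 1$ by Jensen and noting that the hypothesis $r\geq 2$ is precisely what keeps $n^{1-r/2}$ bounded (indeed at most $1$) as $n$ grows.
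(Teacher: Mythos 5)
Your proof is correct and follows essentially the same route as the paper: bound $\Ep|\sum_{i=1}^n X_i|^r$ via Rosenthal's inequality using $\sum_i\Ep X_i^2=n$ and $\sum_i\Ep|X_i|^r\leq Cn$, then apply Markov's inequality with threshold $\ell_n\sqrt{n}$. The only difference is that you make the bookkeeping explicit (the $\max\leq$ sum step, $n^{1-r/2}\leq 1$ for $r\geq 2$, and $C\geq 1$ by Jensen to absorb everything into the factor $2C(r)C$), which the paper handles by treating the constant loosely.
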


To verify the corollary, we use Rosenthal's inequality $E\left(  \left\vert \sum_{i=1}^{n}X_{i}\right\vert ^{r}\right) \leq C n^{r/2}$,
and the result follows by Markov inequality, $$ P\left(\frac{|\sum_{i=1}^n X_i|}{n} >c\right) \leq  \frac{ C(r) C n^{r/2} }{c^rn^r} \leq  \frac{C(r) C}{ c^r n^{r/2} }.$$


\begin{lemma}[von Bahr-Essen Inequality] Let $X_1,...,X_n$ be independent zero-mean random variables. Then for $1 \leq r \leq 2$
$$
\Ep \left( \left|  \sum_{i=1}^n X_i\right|^r\right) \leq  (2- n^{-1}) \cdot \sum_{k=1}^n \Ep(|X_k|^r).
$$
\end{lemma}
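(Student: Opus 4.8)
The plan is to reproduce the classical argument of \citen{vonbahr:esseen}: an induction on $n$ resting on a single elementary real-variable inequality, followed by a relabeling trick that upgrades the crude constant $2$ to the sharp $2-n^{-1}$. First I would dispose of the trivial case: if some $\Ep|X_k|^r=\infty$ the right-hand side is infinite and there is nothing to prove, so I may assume every $r$th moment is finite, and then $\Ep|S_k|^r<\infty$ for each partial sum $S_k:=X_1+\cdots+X_k$ (Minkowski, since $r\ge 1$), and also $\Ep|S_{k-1}|^{r-1}<\infty$ (Jensen, since $(r-1)/r\le 1$).

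The key lemma I would establish is that for $1\le r\le 2$ and all real $a,b$,
\begin{equation}\label{eq:prop-elem}
|a+b|^r \ \le\ |a|^r + r\,|a|^{r-1}\sign(a)\,b + 2\,|b|^r ,
\end{equation}
with the convention $|a|^{r-1}\sign(a)=0$ at $a=0$. By homogeneity and the symmetry $(a,b)\mapsto(-a,-b)$ this reduces to proving $|1+t|^r\le 1+rt+2|t|^r$ for all real $t$, which I would do by examining $h(t):=1+rt+2|t|^r-|1+t|^r$ separately on $t\ge 0$, on $-1\le t<0$, and on $t<-1$: in each range an endpoint value of $h$ is visibly nonnegative and $h$ is monotone in the favorable direction, the monotonicity being read off from the concavity/subadditivity of $s\mapsto s^{r-1}$ for $r-1\in[0,1]$ (e.g.\ $(1+s)^{r-1}\le 1+s^{r-1}$ and $(t-1)^{r-1}\le t^{r-1}$). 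This case analysis is the only genuinely computational ingredient, and it is the step I expect to require the most care, though it is entirely routine.

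With (\ref{eq:prop-elem}) in hand I would run the induction along $S_0=0$, $S_k=S_{k-1}+X_k$. Applying (\ref{eq:prop-elem}) with $a=S_{k-1}$, $b=X_k$ and taking expectations, the cross term is $r\,\Ep\!\big[|S_{k-1}|^{r-1}\sign(S_{k-1})\big]\cdot\Ep[X_k]=0$, since $X_k$ is independent of the whole partial sum $S_{k-1}$ and has mean zero; hence $\Ep|S_k|^r\le \Ep|S_{k-1}|^r+2\,\Ep|X_k|^r$ for each $k$. Because $\Ep|S_1|^r=\Ep|X_1|^r$ exactly, telescoping from $k=2$ to $n$ gives $\Ep|S_n|^r\le 2\sum_{k=1}^n\Ep|X_k|^r-\Ep|X_1|^r$.

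Finally, the labeling of the summands was arbitrary, so the same bound holds with $\Ep|X_1|^r$ replaced by $\Ep|X_i|^r$ for any $i$; choosing $i$ to maximize $\Ep|X_i|^r$ gives $\Ep|X_i|^r\ge n^{-1}\sum_{k=1}^n\Ep|X_k|^r$, whence $\Ep|S_n|^r\le\big(2-n^{-1}\big)\sum_{k=1}^n\Ep|X_k|^r$, the desired inequality. The one point I would double-check carefully is the vanishing of the cross term, which genuinely uses independence of $X_k$ from the \emph{entire} sum $S_{k-1}$ (not merely pairwise independence), but this is immediate from the hypothesis.
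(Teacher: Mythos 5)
Your proof is correct: the elementary pointwise inequality $|a+b|^r\le |a|^r+r|a|^{r-1}\sign(a)\,b+2|b|^r$ does hold for $1\le r\le 2$ (the endpoint-plus-monotonicity checks go through exactly as you indicate), the cross term vanishes by independence of $X_k$ from the whole of $S_{k-1}$ together with $\Ep[X_k]=0$, and the relabeling step legitimately upgrades the telescoped bound $2\sum_k\Ep|X_k|^r-\Ep|X_1|^r$ to the stated constant $2-n^{-1}$. The paper offers no proof of this lemma, simply citing von Bahr and Esseen, and your argument is essentially a faithful reconstruction of that classical proof, so it matches the approach the paper relies on.
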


This result is due to  \citen{vonbahr:esseen}.



\begin{corollary}  Let $r \in [1,2]$, and consider the case of independent zero-mean variables $X_i$ with $\Ep \En (|X_i|^r)$ bounded by $C$.
Then for any $\ell_n \to \infty$
$$
\Pr \left\{\frac{ \left|\sum_{i=1}^n X_i\right|}{n} >  \ell_n n^{-(1-1/r)} \right\}  \leq \frac{2C}{\ell_n^r} \to 0.
$$
\end{corollary}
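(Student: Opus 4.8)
The plan is to mimic exactly the short argument used for the Rosenthal corollary just above, replacing the second-moment control by the von Bahr--Esseen inequality, which is the natural tool in the regime $r \in [1,2]$. The entire proof is a single application of Markov's inequality to $|\sum_{i=1}^n X_i|^r$ combined with the moment bound supplied by von Bahr--Esseen, so I expect no genuine obstacle beyond careful bookkeeping of the powers of $n$ in the threshold.

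First I would unpack the hypothesis and the event. Since $\Ep \En(|X_i|^r) = \frac{1}{n}\sum_{i=1}^n \Ep|X_i|^r \leq C$, the assumption is equivalent to $\sum_{i=1}^n \Ep|X_i|^r \leq nC$. Next I would rewrite the event in the stated probability: the inequality $\frac{1}{n}\big|\sum_{i=1}^n X_i\big| > \ell_n n^{-(1-1/r)}$ holds if and only if $\big|\sum_{i=1}^n X_i\big| > \ell_n n^{1/r}$, so the relevant threshold is $t = \ell_n n^{1/r}$, for which $t^r = \ell_n^r\, n$.

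The main step is then to apply Markov's inequality to the nonnegative variable $\big|\sum_{i=1}^n X_i\big|^r$,
$$\Pr\Big( \big|\sum_{i=1}^n X_i\big| > t \Big) = \Pr\Big( \big|\sum_{i=1}^n X_i\big|^r > t^r \Big) \leq \frac{\Ep\big|\sum_{i=1}^n X_i\big|^r}{t^r},$$
and to control the numerator with the von Bahr--Esseen inequality (applicable precisely because $1 \leq r \leq 2$ and the $X_i$ are independent and zero-mean), yielding $\Ep\big|\sum_{i=1}^n X_i\big|^r \leq (2-n^{-1})\sum_{k=1}^n \Ep|X_k|^r \leq 2nC$. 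Substituting $t^r = \ell_n^r n$ gives $\Pr(\cdots) \leq \frac{2nC}{\ell_n^r n} = \frac{2C}{\ell_n^r}$, which matches the claimed bound and tends to $0$ since $\ell_n \to \infty$ with $r \geq 1$ fixed. The only place deserving attention is the exponent rescaling that turns $\ell_n n^{-(1-1/r)}$ into the threshold $\ell_n n^{1/r}$ on $\big|\sum_i X_i\big|$; once that is set up correctly, the conclusion is immediate from the two cited inequalities.
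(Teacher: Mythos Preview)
Your proof is correct and follows essentially the same argument as the paper: Markov's inequality applied to $\big|\sum_{i=1}^n X_i\big|^r$, followed by the von Bahr--Esseen bound $\Ep\big|\sum_i X_i\big|^r \leq (2-n^{-1})\sum_i \Ep|X_i|^r$, and then the substitution $c = \ell_n n^{-(1-1/r)}$. Your bookkeeping of the exponent rescaling and of the constant $(2-n^{-1})\leq 2$ is accurate and matches the stated bound $2C/\ell_n^r$.
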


The corollary follow by Markov and von Bahr-Esseen's inequalities,
$$
\Pr\left(\frac{|\sum_{i=1}^n X_i|}{n} >c\right) \leq  \frac{ C \Ep \left(|  \sum_{i=1}^n X_i|^r\right) }{c^rn^r} \leq  \frac{ \sum_{i=1}^n \Ep(|X_i|^r)}{ c^r n^r} \leq  C\frac{\barEp(|X_i|^r)}{ c^r n^{r-1}}.
$$

\subsection{A Symmetrization-based Probability Inequality}

Next we proceed to use symmetrization arguments to bound the empirical process. Let $\|f\|_{\mathbb{P}_n,2} = \sqrt{\En[ f(Z_i)^2 ]}$, $\mathbb{G}_n(f) = \sqrt{n}\En[f(Z_i) - \Ep[Z_i]]$, and for a random variable $Z$ let $q(Z,1-\tau)$ denote its $(1-\tau)$-quantile.
The proof follows standard symmetrization arguments.

\begin{lemma}[Maximal inequality via symmetrization]\label{Thm:masterSym}
Let $Z_1,\ldots, Z_n$ be arbitrary independent stochastic processes and $\mathcal{F}$ a finite set of measurable  functions. For any $\tau \in (0,1/2)$, and $\delta\in (0,1)$ with probability at least $1-4\tau-4\delta$ we have
$$
\sup_{f \in \mathcal{F}} | \mathbb{G}_n(f)| \leq   4\sqrt{2\log(2|\mathcal{F}|/\delta)}q(\sup_{f\in
\mathcal{F}}\|f\|_{\mathbb{P}_n,2},1-\tau) \vee 2 \sup_{f\in \mathcal{F}}q(|\Gn(f)|,1/2).
$$
\end{lemma}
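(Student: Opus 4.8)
The plan is to run the standard symmetrization-then-conditioning argument, being careful to keep everything in terms of quantiles rather than expectations (since the $Z_i$ are only assumed independent, with no moment conditions). First I would introduce an independent copy $Z_1',\dots,Z_n'$ of the data and the associated empirical process $\Gn'(f)$, and use the triangle inequality together with a ``chaining-in-probability'' / Chebyshev-type step to pass from $\sup_f|\Gn(f)|$ to $\sup_f|\Gn(f)-\Gn'(f)|$: on an event of probability at least $1-2\tau$ one has $\sup_f|\Gn'(f)-\Ep\Gn'(f)|$ controlled by $2\sup_f q(|\Gn(f)|,1/2)$ (here $\Ep\Gn'(f)=0$), so that with probability at least $1-2\tau$, $\sup_f|\Gn(f)|\le \sup_f|\Gn(f)-\Gn'(f)| + 2\sup_f q(|\Gn(f)|,1/2)$. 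Next, by independence of the two samples, $\Gn(f)-\Gn'(f)$ has the same distribution as $\tfrac{1}{\sqrt n}\sum_i \varepsilon_i\big(f(Z_i)-f(Z_i')\big)$ for Rademacher signs $\varepsilon_i$ independent of everything, and a further triangle inequality bounds this symmetrized process by twice the Rademacher process $\tfrac{1}{\sqrt n}\sum_i \varepsilon_i f(Z_i)$ (in the sense of the relevant sup and up to the usual factor).

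Then I would condition on $(Z_1,\dots,Z_n)$ and apply a sub-Gaussian maximal inequality for Rademacher chaos: for fixed data, $\tfrac{1}{\sqrt n}\sum_i \varepsilon_i f(Z_i)$ is sub-Gaussian with variance proxy $\|f\|_{\mathbb{P}_n,2}^2$, so by Hoeffding's inequality and a union bound over the finite class $\mathcal F$, with conditional probability at least $1-\delta$,
$$
\sup_{f\in\mathcal F}\Big|\tfrac{1}{\sqrt n}\sum_{i=1}^n \varepsilon_i f(Z_i)\Big| \le \sqrt{2\log(2|\mathcal F|/\delta)}\ \sup_{f\in\mathcal F}\|f\|_{\mathbb{P}_n,2}.
$$
On the event (of probability at least $1-\tau$) where $\sup_f\|f\|_{\mathbb{P}_n,2}\le q(\sup_f\|f\|_{\mathbb{P}_n,2},1-\tau)$, this replaces the random radius by the deterministic quantile. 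Unwinding the two triangle inequalities (the symmetrization step contributes a factor $2$, passing to a single sample another factor $2$) and collecting the exceptional events — $2\tau$ from the first conditioning/Chebyshev step, $\delta$ from the Hoeffding union bound, $\tau$ from the empirical-norm quantile event, and the symmetric copies of these that arise when one also controls the $Z_i'$ process, totalling $4\tau+4\delta$ — yields the stated bound
$\sup_{f}|\Gn(f)|\le 4\sqrt{2\log(2|\mathcal F|/\delta)}\, q(\sup_f\|f\|_{\mathbb{P}_n,2},1-\tau)\vee 2\sup_f q(|\Gn(f)|,1/2)$ with probability at least $1-4\tau-4\delta$.

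The main obstacle, and the part requiring genuine care, is the desymmetrization/quantile bookkeeping: because no integrability of $f(Z_i)$ is assumed, the classical expectation-based symmetrization lemma does not apply verbatim, and one must replace every ``$\Ep$'' step by a median/quantile argument (Chebyshev/Lévy-type inequalities relating $\sup_f|\Gn(f)|$ to its own median and to the symmetrized version). Getting the constants and the precise list of exceptional events to add up to exactly $4\tau+4\delta$, and making sure the two ``$\vee$''-branches (Rademacher/Hoeffding term versus the median term $2\sup_f q(|\Gn(f)|,1/2)$) are combined correctly rather than summed, is the delicate accounting; the sub-Gaussian maximal inequality itself (Hoeffding plus union bound) is routine.
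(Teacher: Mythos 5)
Your overall architecture (symmetrize, then condition on the data, apply Hoeffding plus a union bound over the finite class $\mathcal F$, and restrict to the event where $\sup_{f}\|f\|_{\Pn,2}$ is at most its $(1-\tau)$-quantile) is the same as the paper's, and that second half of your argument is essentially correct. The genuine gap is in your first step, the passage from $\sup_f|\Gn(f)|$ to the symmetrized process. You claim that with probability at least $1-2\tau$ one has $\sup_f|\Gn'(f)|\leq 2\sup_f q(|\Gn(f)|,1/2)$, so that $\sup_f|\Gn(f)|\leq \sup_f|\Gn(f)-\Gn'(f)|+2\sup_f q(|\Gn(f)|,1/2)$. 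This is unjustified and essentially circular: $\sup_f|\Gn'(f)|$ has the same distribution as the very quantity the lemma is bounding, and there is no reason it should be dominated by twice the largest pointwise median with probability $1-2\tau$ (for each fixed $f$ you only get probability $1/2$, a union bound over $\mathcal F$ destroys that, and $\tau$ has nothing to do with this event -- it is attached to the quantile of the empirical norm, not of $\Gn'$). Moreover, even if this step were repaired in additive form, you would obtain a bound equal to the \emph{sum} of the Rademacher term and the median term, not the maximum ``$\vee$'' stated in the lemma; your own remark that the two branches must be ``combined correctly rather than summed'' is precisely what your route does not deliver.

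The device the paper uses is symmetrization for probabilities (Lemma 2.3.7 of van der Vaart and Wellner), in which the median term enters \emph{multiplicatively} rather than additively. Set $e_{1n}=\sqrt{2\log(2|\mathcal F|/\delta)}\,q(\sup_f\|f\|_{\Pn,2},1-\tau)$, $e_{2n}=\max_f q(|\Gn(f)|,1/2)$, and $x=4e_{1n}\vee 2e_{2n}$. On the event $\{\sup_f|\Gn(f)|>x\}$ one works at the (data-dependent) maximizer $f^*$: since $x\geq 2q(|\Gn(f^*)|,1/2)$, the independent copy satisfies $|\Gn'(f^*)|\leq x/2$ with conditional probability at least $1/2$, whence $\tfrac12\Pr(\sup_f|\Gn(f)|>x)\leq \Pr(\sup_f|\Gn(f)-\Gn'(f)|>x/2)\leq 2\Pr(\sup_f|\Gn(\varepsilon_i f(Z_i))|>x/4)$, i.e. $\Pr(\sup_f|\Gn(f)|>4e_{1n}\vee2e_{2n})\leq 4\Pr(\sup_f|\Gn(\varepsilon_i f(Z_i))|>e_{1n})$. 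This multiplicative factor $4$ is exactly where the $4\tau+4\delta$ comes from: intersecting with the norm-quantile event $\mathcal E$ (probability at least $1-\tau$) contributes $4\tau$, and the conditional Hoeffding bound $2\exp(-e_{1n}^2/\{2\En[f^2(Z_i)]\})\leq \delta/|\mathcal F|$ together with the union bound contributes $4\delta$; the $\vee$ form is automatic because the single threshold $x$ must exceed $2e_{2n}$ (to make the median step work) and $4e_{1n}$ (to make the Rademacher tail small) simultaneously. If you replace your Chebyshev-style desymmetrization with this pointwise median argument, or simply invoke the cited lemma as the paper does, the remainder of your proof goes through.
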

\begin{proof}
Let
$$e_{1n} = \sqrt{2\log(2|\mathcal{F}|/\delta)} \  q\left(\max_{f\in\mathcal{F}}\sqrt{\En[ f(Z_i)^2 ]},1-\tau\right), \ \ \ e_{2n} = \max_{f\in\mathcal{F}} q\left(| \mathbb{G}_n(f(Z_i))|, \frac{1}{2}\right)$$
and the event $\mathcal{E}= \{\max_{f\in
\mathcal{F}}\sqrt{\En\[f^2(Z_i)\]} \leq q\left(\max_{f\in\mathcal{F}}\sqrt{\En[ f^2(Z_i) ]},1-\tau\right)\}$ which satisfies $P(\mathcal{E}) \geq 1-\tau$. By the symmetrization
Lemma 2.3.7 of \citen{vdV-W} (by definition of $e_{2n}$ we have $\beta_n(x)\geq 1/2$ in Lemma 2.3.7) we obtain
$$
\begin{array}{rl}
\mathbb{P}\left\{\max_{f \in \mathcal{F}} |\mathbb{G}_n(f(Z_i))|> 4 e_{1n} \vee 2e_{2n}  \right\} & \leq 4
\mathbb{P}\left\{\max_{f \in \mathcal{F}} | \mathbb{G}_n(\varepsilon_if(Z_i))| > e_{1n}  \right\}\\
 & \leq 4\mathbb{P}\left\{\max_{f \in \mathcal{F}} | \mathbb{G}_n(\varepsilon_i f(Z_i))| > e_{1n}  | \mathcal{E} \right\} + 4\tau
\end{array}$$ where $\varepsilon_i$ are independent Rademacher random variables, $P(\varepsilon_i=1) = P(\varepsilon_i=-1) = 1/2$.

Thus a union bound yields
\begin{equation}\label{Eq:afterUB} \mathbb{P}\left\{\max_{f \in \mathcal{F}}
|\mathbb{G}_n(f(Z_i))|> 4e_{1n}\vee 2e_{2n}\right\} \leq 4\tau +
4|\mathcal{F}| \max_{f \in \mathcal{F}}
\mathbb{P}\left\{ | \mathbb{G}_n(\varepsilon_if(Z_i))| > e_{1n}  | \mathcal{E}\right\} .
\end{equation}
 We then condition on the values of $Z_1,\ldots,Z_n$ and $\mathcal{E}$, denoting the conditional
 probability measure as $\mathbb{P}_{\varepsilon}$.
 Conditional on $Z_1,\ldots,Z_n$, by the Hoeffding inequality the symmetrized
 process $\mathbb{G}_n(\varepsilon_if(Z_i))$ is sub-Gaussian for the $L_2(\mathbb{P}_n)$ norm,
 namely, for $f \in \mathcal{F}$,
$ \mathbb{P}_{\varepsilon}\{|\mathbb{G}_n(\varepsilon_if(Z_i))| >x \} \leq 2 \exp(
- x^2/\{2\En[f^2(Z_i)]\}).$ Hence, under the event $\mathcal{E}$, we can bound
$$
\begin{array}{rcl}
\mathbb{P}_{\varepsilon}\left\{
|\mathbb{G}_n(\varepsilon_if(Z_i))| > e_{1n} | Z_1,\ldots,Z_n, \mathcal{E}\right\}
& \leq &  2\exp(-e_{1n}^2/[2\En[f^2(Z_i)]) \leq  2\exp(-\log (2|\mathcal{F}|/\delta)).\\
\end{array}
$$
Taking the expectation over $Z_1,\ldots,Z_n$ does not affect the
right hand side bound. Plugging in this bound  yields the result.
\end{proof}

\section{Proof of Theorem 6.}

\begin{proof}[Proof of Theorem 6]  To show part 1, note that by a standard argument
$$
\sqrt{n}(\tilde \alpha - \alpha)= M^{-1} \Gn[A_i \epsilon_i] + o_{\Pr}(1).
$$
From the proof of Theorem 4 we have
that
$$
\sqrt{n}(\widehat \alpha - \alpha_a)= Q^{-1} \Gn[D(x_i) \epsilon_i] + o_{\Pr}(1).
$$
The conclusion follows.  The consistency of $\widehat \Sigma$ for $\Sigma$
can be demonstrated similarly to the proof of consistency of $\widehat \Omega$ and $\widehat Q$
in the proof of Theorem 4.

To show part 2, let $\alpha$ denote the true value as before, which by assumption
coincides with the estimand of the baseline IV estimator by the standard argument,
$ \tilde \alpha - \alpha  = o_P(1). $  The baseline IV estimator is consistent for this quantity.
Under the alternative hypothesis, the estimand of $\hat \alpha$ is
$$
\alpha_{a} = \barEp[D(x_i)d_i']^{-1} \barEp[D(x_i) y_i] = \alpha + \barEp[D(x_i) D(x_i)']^{-1} \barEp[D(x_i) \epsilon_i].
$$
Under the alternative hypothesis, $\|\barEp[D(x_i) \epsilon_i] \|_2$ is bounded away from zero uniformly in $n$. Hence, since the eigenvalues of $Q$ are bounded away from zero uniformly in $n$,
$\|\alpha - \alpha_a\|_2$ is also bounded away from zero uniformly in $n$. Thus, it remains to show that
$\hat \alpha$ is consistent for $\alpha_a$.  We have that
$$
\widehat \alpha - \alpha_{a} =  \En[\hat D(x_i)d_i']^{-1} \En[\hat D(x_i) \epsilon_i ] - \barEp[D(x_i) D(x_i)']^{-1} \barEp[D(x_i) \epsilon_i]
$$
so that
\begin{align}
\|\widehat \alpha - \alpha_{a}\|_2  & \leq  \| \En[\hat D(x_i)d_i']^{-1} -\barEp[D(x_i) D(x_i)']^{-1}\| \|\En[\hat D(x_i) \epsilon_i ]\|_2 +\\
 & + \|\barEp[D(x_i) D(x_i)]^{-1}\| \|\En[\hat D(x_i) \epsilon_i ]- \barEp[D(x_i) \epsilon_i]\|_2 = o_P(1),
 \end{align}
provided that (i)
$\|\En[ \hat D(x_i)d_i']^{-1} - \barEp[D(x_i) D(x_i)']^{-1}\| = o_{P}(1),$
which is shown in the proof of Theorem 4; (ii) $\|\barEp[D(x_i) D(x_i)]^{-1}\|= \| Q^{-1}\|$ is bounded from above uniformly in $n$, which follows from the assumption on $Q$
in Theorem 4;  and (iii)
$$
\|\En[\hat D(x_i) \epsilon_i ]-\barEp[D(x_i) \epsilon_i]\|_2  = o_{P}(1), \ \  \| \barEp[D(x_i) \epsilon_i] \|_2 = O(1),
$$
where  $\| \barEp[D(x_i) \epsilon_i] \|_2 = O(1)$ is assumed.
To show the last claim note that
$$\begin{array}{rl}
\|\En[\hat D(x_i) \epsilon_i ]- \barEp[D(x_i) \epsilon_i]\|_2 & \leq \|\En[\hat D(x_i) \epsilon_i ] - \En[D(x_i)\epsilon_i] \|_2 + \|\En[D(x_i)\epsilon_i]-\barEp[D(x_i) \epsilon_i]\|_2\\
& \leq \sqrt{k_e}\max_{1\leq l\leq k_e}\| D_l(x_i) - \hat D_l(x_i) \|_{2, n} \| \epsilon_i\|_{2,n} + o_P(1)= o_P(1),
\end{array}$$
since $k_e$ is fixed, $\|\En[D(x_i)\epsilon_i]-\barEp[D(x_i) \epsilon_i]\|_2=o_P(1)$ by von Bahr-Essen inequality \citen{vonbahr:esseen} and SM, $\| \epsilon_i\|_{2,n} = O_P(1)$ follows by the Markov inequality and assumptions on the moments of $\epsilon_i$, and
$\max_{1\leq l \leq k_e}\| D_l(x_i) - \hat D_l(x_i)\|_{2,n} = o_P(1)$ follows from Theorems 1 and 2. \end{proof}

\section{Proof of Theorem 7}

We introduce additional superscripts $a$ and $b$ on all variables; and $n$ gets replaced by either $n_a$ or $n_b$. The proof is divided in steps. Assuming
\begin{eqnarray}\label{RateConditionSplitIV}
&&  \max_{1\leq l\leq k_e}\|\widehat D^k_{il} - D^k_{il}\|_{2,n_k} \lesssim_P \sqrt{\frac{s\log (p\vee n)}{n}} = o_P(1), \ \ \  k=a, b,
\end{eqnarray}
Steps 1 establishes bounds for the intermediary estimates on each subsample. Step 2 establishes the result for the final estimator $\widehat \alpha_{ab}$ and consistency of the matrices estimates. Finally, Step 3  establishes (\ref{RateConditionSplitIV}).

Step 1. We have that
\begin{eqnarray*}
\sqrt{n_k}(\widehat \alpha_k - \alpha_0) &= & \Enk [\widehat D^k_i d_i^k{}']^{-1} \sqrt{n} \Enk [\widehat D^k_i \epsilon^k_i]  \\
& = & \{ \Enk [\widehat D^k_i d^k_i{}']\}^{-1} \left( {\Gnk} [D_i^{k} \epsilon_i^k{}] + o_P(1) \right)  \\
& = & \{  \Enk [D_i D_i'] + o_P(1)\}^{-1} \left( {\Gnk} [D_i^{k} \epsilon_i^k{}] + o_P(1) \right)
 \end{eqnarray*}
since
\begin{eqnarray}
 \Enk [\widehat D^k_id^k_i{}'] =  \Enk[D_iD_i'] + o_P(1) \label{eq: to show a} \\
{ \sqrt{n}\Enk} [\widehat D^k_i \epsilon^k_i] = {\Gnk} [D^k_i \epsilon^k_i] + o_P(1)\label{eq: to show b}.
 \end{eqnarray}
Indeed,  (\ref{eq: to show a}) follows similarly to Step 2 in the proof of Theorems 4 and 5 and condition (\ref{RateConditionSplitIV}).
The relation (\ref{eq: to show b}) follows from $\Ep[\epsilon^k_i|x_i^k]=0$ for both $k=a$ and $k= b$, Chebyshev inequality and
$$
E[ \| \sqrt{n}  \Enk  [(\widehat D^k_i - D^k_i) \epsilon^k_i] \|_2^2| x_i^k, i=1,...,n, k^c]  \lesssim   \sqrt{k_e}\max_{1\leq l \leq k_e}\|(\widehat D^k_{il} - D^k_{il})  \|_{2,n_k}^2 \to_P 0, $$
where $E[\cdot |x_i^k, i=1,...,n, k^c]$ denotes the estimate computed conditional $x_i^k, i=1,...,n$ and on the sample $k^c$, where $k^c= \{a,b\}\setminus k$.  The bound follows from the fact that (a)
$$ \widehat D^k_{il} - D^k_{il}= f(x_i^k)'(\widehat \beta^{k^c}_{l} - \beta^{k^c}_{0l}) - a_l(x^i_i),  1 \leq i \leq n_k, $$
by Condition AS, where $(\widehat \beta^{k^c}_{l} - \beta^{k^c}_{0l})$  are independent of $\{\epsilon^k_i, 1 \leq i \leq n_k\}$,
by the independence of the two subsamples $k$ and $k^c$,  (b)  $\{\epsilon^k_i, x_i, 1 \leq i \leq n_k\} $ are independent across $i$ and independent from the sample $k^c$, (c)
$\{\epsilon^k_i, 1 \leq i \leq n_k\} $ have conditional mean equal to zero, conditional on $x_i^k, i=1,...,n$
and have conditional variance bounded from above, uniformly in $n$, conditional on $x_i^k, i=1,...,n$, by Condition SM,
and (d) that  $\max_{1\leq l\leq k_e} \|\widehat D^k_{il} - D^k_{il} \|_{2,n_k} \to_P 0$.

Using the same arguments as in  Step 1 in the proof of Theorems 4 and 5,
\begin{eqnarray*}
\sqrt{n_k}(\widehat \alpha_k - \alpha_0)
& = & (\Enk [{D^k_i} {D^{k'}_i} ])^{-1} {\Gnk} [D_i^{k} \epsilon_i^k{}]  +  o_P(1) = O_P(1).
 \end{eqnarray*}

Step 2.  Now putting together terms we get
\begin{eqnarray*}
\sqrt{n}(\widehat \alpha_{ab} - \alpha_0) & = &
( (n_a/n)  \Ena [\hat D_i^a  \hat D_i^{a}{}']  +  (n_b/n) \Enb [\hat D_i^b  \hat D_i^{b}{}']   )^{-1} \times \\
& \times &  ( (n_a/n)  \Ena [\hat D_i^a  \hat D_i^{a}{}']  \sqrt{n} (\widehat \alpha_a - \alpha_0) +   (n_b/n) \Enb  [\hat D_i^b  \hat D_i^{b}{}'] \sqrt{n} (\widehat \alpha_b - \alpha_0)   ) \\
& = &
( (n_a/n)  \Ena [ D_i^a   D_i^{a}{}']  +  (n_b/n) \Enb [ D_i^b   D_i^{b}{}']   )^{-1} \times \\
& \times &  ( (n_a/n)  \Ena [ D_i^a  D_i^{a}{}']  \sqrt{n} (\widehat \alpha_a - \alpha_0) +   (n_b/n) \Enb  [ D_i^b   D_i^{b}{}'] \sqrt{n} (\widehat \alpha_b - \alpha_0)   ) + o_P(1) \\
& = & \{ \En [D_i D_i{}']\}^{-1} \times \{(1/\sqrt{2}) \times  {{\Gn}_a} [D_i^{a} \epsilon_i^a{}]
+ (1/\sqrt{2}) {{\Gn}_b} [D_i^{b} \epsilon_i^b {}]\}+ o_P(1) \\
& = & \{ \En [D_i D_i{}']\}^{-1} \times {{\Gn}} [D_i \epsilon_i] + o_P(1) \\
 \end{eqnarray*}
 where we are also using the fact that
$$
\Enk [\hat D_i^k  \hat D_i^{k}{}'] -  \Enk [ D_i^k  D_i^{k}{}'] = o_P(1),
$$
which is shown similarly to the proofs given in Theorem 4 for showing that $
\En [ D_i   D_i^{}{}'] -  \En[ D_i  D_i^{}{}'] = o_P(1).
$
The conclusion of the theorem follows by an application of Liapunov CLT, similarly to the proof in of Theorem 4 of the main text.

Step 3. In this step we establish (\ref{RateConditionSplitIV}).
For every observation $i$ and $l=1,\ldots,k_e$, by Condition AS we have
\begin{equation}\label{StateCondAS} D_{il} = f_i'\beta_{l0} + a_l(x_i), \ \ \|\beta_{l0}\|_0 \leq s,  \  \max_{1\leq{l}\leq k_e} \|a_{{l}}(x_i)\|_{2,n} \leq c_{s} \lesssim_{\mathrm{P}} \sqrt{s/n}.\end{equation}

Under our conditions, the sparsity bound for LASSO by Lemma 9 implies that for all $\delta^k_l = \widehat \beta^k_l - \beta_{l0}$, $k=a,b$, and $l=1,\ldots,k_e$
$$ \| \delta^k_l\|_0 \lesssim_P s.$$
Therefore, by condition SE, we have for $M=\Enk[f_i^kf_i^k{}']$, $k=a,b$, that with probability going to $1$, for $n$ large enough  $$0 < \kappa' \leq \semin{\|\delta^k_l\|_0}(M) \leq \semax{\|\delta^k_l\|_0}(M) \leq \kappa''<\infty,$$
where $\kappa'$ and $\kappa''$ are some constants that do not depend on $n$.
Thus,
\begin{equation}\label{ExtraStep} \begin{array}{rl}
\|D^k_{il} - \widehat D_{il}^k\|_{2,n_k} & = \|f_i^k{}'\beta_{l0} + a_l(x_i^k) - f_i^k{}'\widehat \beta^{k^c}_l\|_{2,n_k} \\
& = \|f_i^k{}'(\beta_{l0} - \widehat \beta^{k^c}_l)\|_{2,n_k} + \|a_l(x_i^k)\|_{2,n_k} \\
& \leq \sqrt{ \kappa''/\kappa' } \|f_i^{k^c}{}'(\widehat \beta^{k^c}_l- \beta_{l0})\|_{2,n_{k^c}} + c_s\sqrt{n/n_k}\\
\end{array}\end{equation}
where the last inequality holds with probability going to 1 by Condition SE imposed on matrices $M = \Enk[f_i^kf_i^k{}']$, $k=a,b$, and
by
$$
 \|a_l(x_i^k)\|_{2,n_k} \leq   \sqrt{ n/n_k}  \|a_l(x_i)\|_{2,n} \leq  \sqrt{ n/n_k} c_s.
$$
Then, in view of (\ref{StateCondAS}), $\sqrt{n/n_k} = \sqrt{2} + o(1)$, and condition $s\log p = o(n)$, the result (\ref{RateConditionSplitIV}) holds by (\ref{ExtraStep}) combined with Theorem 1 for LASSO and Theorem 2 for Post-LASSO which imply
$$\max_{1\leq l\leq k_e} \|f_i^k{}'(\widehat \beta^{k}_l- \beta_{l0})\|_{2,n_{k}} \lesssim_P \sqrt{\frac{s\log (p\vee n)}{n}}= o_P(1), \ \ \ k=a,b.$$
\qed

\section{Proof of Lemma 3.}

Part 1. The first condition in RF(iv) is assumed, and we omit the proof for the third condition since it is analogous to the proof for the second condition.

Note that $\max_{1\leq j \leq p} \En[f_{ij}^4v_{il}^4] \leq (\En[v_{il}^8])^{1/2} \max_{1\leq j \leq p} (\En[f_{ij}^8])^{1/2} \lesssim_P 1$ since $\max_{ 1 \leq j \leq p } \sqrt{\En [f_{ij}^8]} \lesssim_{\mathrm{P}} 1$ by assumption and $\max_{1 \leq l \leq k_e} \sqrt{\En[v_{il}^8]} \lesssim_{\mathrm{P}} 1$ by the bounded $k_e$, Markov inequality, and the assumption that $\barEp[v_{il}^8]$ are uniformly bounded in $n$ and $l$.

Thus, applying  Lemma \ref{Thm:masterSym},
$$
\begin{array}{rl}
\displaystyle \max_{ 1 \leq l \leq k_e, 1 \leq j \leq p } | \En [f_{ij}^2 v_{il}^2] - \barEp[f_{ij}^2 v_{il}^2 ]  |  &\displaystyle  \lesssim_{\mathrm{P}} \sqrt{\frac{\log (pk_e)}{n}} \lesssim_{\mathrm{P}} \sqrt{\frac{\log p}{n}} \to_{\mathrm{P}} 0.\end{array}
$$

Part 2. To show (1), we note that by simple union bounds and
tail properties of Gaussian variable, we have that  $\max_{ij}f_{ij}^2 \lesssim_{\mathrm{P}} \log (p\vee n)$,
so we need $\log (p\vee n) \frac{s \log (p\vee n)}{n} \to 0$.

Therefore $\max_{1\leq j \leq p} \En[f_{ij}^4v_{il}^4] \leq \En[v_{il}^4] \max_{1\leq i\leq n, 1\leq j \leq p} f_{ij}^4\lesssim_P \log^2 (p\vee n)$. Thus, applying  Lemma \ref{Thm:masterSym},
$$
\begin{array}{rl}
\displaystyle \max_{ 1 \leq l \leq k_e, 1 \leq j \leq p } | \En [f_{ij}^2 v_{il}^2] - \barEp[f_{ij}^2 v_{il}^2 ]  |  &\displaystyle  \lesssim_{\mathrm{P}} \log (p\vee n) \ \sqrt{\frac{\log (pk_e)}{n}} \lesssim_{\mathrm{P}} \sqrt{\frac{\log^3 (p\vee n)}{n}} \to_{\mathrm{P}} 0\end{array}
$$ since $\log p = o(n^{1/3})$.
The remaining moment conditions of RF(ii) follows immediately from the definition of the conditionally bounded moments
since for any $m>0$, $\barEp[|f_{ij}|^m]$ is bounded, uniformly in $1 \leq j \leq p$, uniformly in $n$,
for the i.i.d. Gaussian regressors of Lemma 1 of \citen{BCCH-IV}. The proof of (2) for arbitrary bounded i.i.d. regressors of Lemma 2 of \citen{BCCH-IV} is similar.

\section{Proof of Lemma 4.}
The first two conditions of SM(iii) follows from the assumed rate $s^2\log^2(p\vee n) = o(n)$ since we have $q_\epsilon = 4$.
 To show part (1), we note that by simple union bounds and
tail properties of Gaussian variable, we have that  $\max_{1\leq i\leq n, 1\leq j\leq p}f_{ij}^2 \lesssim_{\mathrm{P}} \log (p\vee n)$,
so we need $\log (p\vee n) \frac{s \log (p\vee n)}{n} \to 0$. Applying union
bound, Gaussian concentration inequalities \citen{LedouxTalagrandBook}, and that $\log^2 p = o(n)$,
we have $\max_{1\leq j\leq p}\En[f^4_{ij}] \lesssim_{\mathrm{P}} 1$. Thus SM(iii)(c) holds by $\max_{j}\En[f^2_{ij}\epsilon_i^2]\leq (\En[\epsilon_i^4])^{1/2}\max_{1\leq j\leq p}(\En[f^4_{ij}])^{1/2}\lesssim_P 1$. Part (2) follows because regressors are bounded and the moment assumption on $\epsilon$. \qed

\comment{
Note that $P( \En[|f_{ij}^k|] > M ) = P( \|f_{\cdot j}\|_k^k > Mn ) = P(\|f_{\cdot j}\|_k > (Mn)^{1/k})$.

Since $| \|f\|_k - \|g\|_k | \leq \|f-g\|_k \leq \|f-g\|$, we have that $\|\cdot\|_k$ is 1-Lipschitz for $k\geq 2$. Moreover, $$\Ep[\|f_{\cdot j}\|_k] \leq (\Ep[\|f_{\cdot j}\|_k^k])^{1/k} = (\sum_{i=1}^n\Ep[|f_{ij}^k|])^{1/k}= n^{1/k}(\Ep[|f_{1j}^k|])^{1/k}$$
$$  = n^{1/k}\{\sigma^k 2^{k/2}\Gamma((k+1)/2)\}^{1/k} \leq n^{1/k}\sigma \sqrt{2} e$$

By \citen{LedouxTalagrandBook}, page 21 equation (1.6), we have
$$P( \|f_{\cdot j}\|_k  > (Mn)^{1/k} -  \Ep[\|f_{\cdot j}\|_k] ) \leq 2\exp(-\{(Mn)^{1/k} -  \Ep[\|f_{\cdot j}\|_k] \}^2/2).$$
Setting $M := \{\sigma \sqrt{2} e  + n^{-1/k}\sqrt{2\log (2p/\alpha)}  \}^k$, so that $(Mn)^{1/k} = n^{1/k}\sigma \sqrt{2} e + \sqrt{2\log (2p/\alpha)}$ we have
$$P( \max_{1\leq j\leq p} \En[|f_{ij}^k|] \geq M ) \leq \alpha.$$
Note that for fixed $k$, $M \lesssim 1$ provided that $n^{-1/k}\sqrt{2\log (2p/\alpha)}$ is bounded.

For $k=4$, it suffices $\log^{k/2}(p/\alpha) = o(n)$.

}

\section{Proof of Lemma 11.}

Step 1. Here we consider the initial option, in which
$\hat \gamma_{jl}^2 = \En [f_{ij}^2 (d_{il} - \En d_{il})^2]$. Let us define $\tilde d_{il} = d_{il} - \barEp [d_{il}] $, $\tilde \gamma_{jl}^2 = \En [f_{ij}^2 \tilde d_{il}^2]$ and  $\gamma_{jl}^2 = \barEp [f_{ij}^2 \tilde d_{il}^2].$
We want to show that
$$
\Delta_1 = \max_{ 1 \leq l \leq k_e, 1 \leq j \leq p }|\hat \gamma^2_{jl} - \tilde\gamma^2_{jl}| \to_{\mathrm{P}} 0, \ \Delta_2=\max_{ 1 \leq l \leq k_e, 1 \leq j \leq p } |\tilde \gamma^2_{jl} - \gamma^2_{jl}| \to _{\mathrm{P}} 0,
$$
which would imply that
$
\max_{1\leq j\leq p, 1 \leq l \leq k_e}|\hat \gamma^2_{jl} -  \gamma^2_{jl}| \to_{\mathrm{P}} 0$ and then since
$ \gamma_{jl}^2$'s are uniformly bounded from above by Condition RF and bounded below by $\gamma^{02}_{jl} = \barEp[f_{ij}^2 v^2_{il}]$, which are bounded away from zero. The asymptotic validity of the initial option then follows.

We have that $\Delta_2 \to_P 0$ by Condition RF, and, since $\En[\tilde d_{il}] = \En[d_{il}] - \barEp[d_{il}]$, we have
$$
\begin{array}{rl}
\Delta_1 & = \max_{ 1 \leq l \leq k_e, 1 \leq j \leq p }|\En [f_{ij}^2 \{(\tilde d_{il} - \En \tilde d_{il} )^2- \tilde d_{il}^2\}]| \\
& \leq  \max_{ 1 \leq l \leq k_e, 1 \leq j \leq p } 2 | \En[f_{ij}^2 \tilde d_{il}] \En [\tilde d_{il}] | + \max_{ 1 \leq l \leq k_e, 1 \leq j \leq p } | \En[f_{ij}^2]  (\En \tilde d_{il})^2| \to_{\mathrm{P}} 0.
\end{array}$$
Indeed, we have for the first term that,
$$\max_{ 1 \leq l \leq k_e, 1 \leq j \leq p }  | \En[f_{ij}^2 \tilde d_{il}]| \En [\tilde d_{il}] | \leq \max_{i \leq n ,j \leq p }|f_{ij}| \sqrt{ \En[f^2_{ij} \tilde d^2_{il}]} O_{\Pr}(1/\sqrt{n}) \to_{\Pr} 0 $$
where we first used Holder inequality and  $\max_{1\leq l\leq k_e} |\En [ \tilde d_{il}] | \lesssim_{\mathrm{P}} \sqrt{k_e}/\sqrt{n}$ by the Chebyshev inequality and by $\barEp[\tilde d^2_{il}]$ being uniformly bounded by Condition RF; then we invoked Condition RF to claim convergence to zero.  Likewise, by Condition RF,
$$
\max_{ 1 \leq l \leq k_e, 1 \leq j \leq p } | \En[f_{ij}^2]  (\En \tilde d_{il})^2| \leq  \max_{1 \leq j \leq p } |f_{ij}^2| O_{\Pr}(1/n) \to_{\Pr} 0.
$$
Step 2.  Here we consider the refined option, in which
$\hat \gamma_{jl}^2 = \En [f_{ij}^2 \widehat v^2_{il}].$
The residual here $\widehat v_{il} = d_{il} - \hat D_{il}$ can be based on any estimator that
obeys
\begin{equation}\label{FitRate}
\max_{1\leq l \leq k_e}\|\hat D_{il} - D_{il}\|_{2,n} \lesssim_{\mathrm{P}} \sqrt{\frac{s \log (p\vee n)}{n}}.
\end{equation}
Such estimators include the Lasso and Post-Lasso estimators based on the initial option.
Below we establish that the penalty levels, based on the refined option
using any estimator obeying (\ref{FitRate}), are asymptotically valid.
Thus by Theorems \ref{Thm:RateLASSO} and \ref{Thm:RatesPostLASSO}, the Lasso and Post-Lasso estimators based on the
refined option also obey (\ref{FitRate}). This, establishes
that we can iterate on the refined option a bounded number of times, without
affecting the validity of the approach.

Recall that  $\hat \gamma_{jl}^{02} = \En[f_{ij}^2 v^2_{il}]$ and define $\gamma_{jl}^{02} := \barEp[f_{ij}^2 v^2_{il}]$,
which is bounded away from zero and from above by assumption. Hence, by Condition RF, it suffices to show that
$
\max_{1\leq j \leq p, 1\leq l\leq k_e}|\hat \gamma^2_{jl} -  \gamma^{02}_{jl}| \to_{\mathrm{P}} 0$ which implies the loadings are asymptotic valid with $u'=1$. This in turn follows
from $$
\Delta_1 = \max_{ 1 \leq l \leq k_e, 1 \leq j \leq p }|\hat \gamma^2_{jl} - \hat \gamma^{02}_{jl}| \to_{\mathrm{P}} 0, \ \Delta_2=\max_{ 1 \leq l \leq k_e, 1 \leq j \leq p } |\hat \gamma^{02}_{jl} - \gamma^{02}_{jl}|^2 \to _{\mathrm{P}} 0,
$$
which we establish below.

Now note that we have proven $\Delta_2 \to _{\mathrm{P}} 0$ in the Step 3 of the proof of Theorem 1.  As for $\Delta_1$ we note
that
$$
\Delta_1 \leq 2 \max_{ 1 \leq l \leq k_e, 1 \leq j \leq p }|  \En[f^2_{ij} v_{jl} (\hat D_{il} - D_{il})]  | +  \max_{ 1 \leq l \leq k_e, 1 \leq j \leq p }\En[ f_{ij}^2 (\hat D_{il} - D_{il})^2].
$$
The first term is bounded by Holder and Liapunov inequalities,
$$\max_{ i \leq n, j \leq p} |f_{ij}|  ( \En [f_{ij}^2 v^2_{il}] )^{1/2}  \max_{ l \leq k_e} \| \hat D_{il} - D_{il}\|_{2,n} \lesssim_{\mathrm{P}} \max_{ i \leq n,  j \leq p} |f_{ij}|  ( \En [f_{ij}^2 v^2_{il}] )^{1/2}  \sqrt{ \frac{s \log (p\vee n)}{n}} \to_{\Pr} 0.
$$
where the conclusion is by Condition RF. The second term is of stochastic order
$$
\max_{ i \leq n, j \leq p }|f_{ij}^2| \frac{s \log (p\vee n)}{n} \to_{\mathrm{P}} 0,$$ which converges to zero by Condition RF.\qed

\section{Additional Simulation Results}

In this appendix, we present simulation results to complement the results given in the paper.  The simulations use the same model as the simulations in the paper:
\begin{align*}
\begin{array}{ll}
y_i &= \beta x_i + e_i \\
x_i &= z_i'\Pi + v_i
\end{array}  \ \ \ \ \ \
(e_i,v_i) &\sim N\(0,\(\begin{array}{cc} \sigma^2_e & \sigma_{ev} \\ \sigma_{ev} & \sigma^2_{v}\end{array}\)\) \ \text{i.i.d.}
\end{align*}
where $\beta=1$ is the parameter of interest, and $z_i = (z_{i1},z_{i2},...,z_{i100})' \sim N(0,\Sigma_Z)$ is a 100 x 1 vector with $E[z_{ih}^2] = \sigma^2_z$ and $Corr(z_{ih},z_{ij}) = .5^{|j-h|}$.  In all simulations, we set $\sigma^2_e = 2$ and $\sigma^2_z = 0.3$.

For the other parameters, we consider various settings.  We provide results for sample sizes, $n$, of 100, 250, and 500; and we consider three different values for $Corr(e,v)$: 0, .3, and .6.  We also consider four values of $\sigma^2_v$ which are chosen to benchmark four different strengths of instruments.  The four values of $\sigma^2_v$ are found as $\sigma^2_v = \frac{n \Pi'\Sigma_Z\Pi}{F^*\Pi'\Pi}$ for $F^*$: 2.5, 10, 40, and 160.  We use two different designs for the first-stage coefficients, $\Pi$.  The first sets the first $S$ elements of $\Pi$ equal to one and the remaining elements equal to zero.  We refer to this design as the ``cut-off'' design.  The second model sets the coefficient on $z_{ih} = .7^{h-1}$ for $h=1,...,100$.  We refer to this design as the ``exponential'' design.  In the cut-off case, we consider $S$ of 5, 25, 50, and 100 to cover different degrees of sparsity.

For each setting of the simulation parameter values, we report results from seven different estimation procedures.  A simple possibility when presented with many instrumental variables is to just estimate the model using 2SLS and all of the available instruments.  It is well-known that this will result in poor-finite sample properties unless there are many more observations than instruments; see, for example, \citen{bekker}.  The limited information maximum likelihood estimator (LIML) and its modification by \citen{fuller} (FULL)\footnote{\citen{fuller} requires a user-specified parameter.  We set this parameter equal to one which produces a higher-order unbiased estimator.} are both robust to many instruments as long as the presence of many instruments is accounted for when constructing standard errors for the estimators; see \citen{bekker} and \citen{hhn:weakiv} for example.  We report results for these estimators in rows labeled 2SLS(100), LIML(100), and FULL(100) respectively.\footnote{With $n = 100$, we randomly select 99 instruments for use in FULL(100) and LIML(100).}  For LASSO, we consider variable selection based on two different sets of instruments.  In the first scenario, we use LASSO to select among the base 100 instruments and report results for the IV estimator based on the LASSO (LASSO) and Post-LASSO (Post-LASSO) forecasts.  In the second, we use LASSO to select among 120 instruments formed by augmenting the base 100 instruments by the first 20 principle components constructed from the sampled instruments in each replication.  We then report results for the IV estimator based on the LASSO (LASSO-F) and Post-LASSO (Post-LASSO-F) forecasts.  In all cases, we use the refined data-dependent penalty loadings given in the paper.\footnote{Specifically, we start by finding a value of $\lambda$ for which LASSO selected only one instrument.  We used this instrument to construct an initial set of residuals for use in defining the refined penalty loadings.  Using these loadings, we run another LASSO step and select a new set of instruments.  We then compute another set of residuals and use these residuals to recompute the loadings.  We then run a final LASSO step.  We report the results for the IV estimator of $\beta$ based on the instruments chosen in this final LASSO-step.}  For each estimator, we report root-truncated-mean-squared-error (RMSE),\footnote{We truncate the squared error at 1e12.} median bias (Med. Bias), median absolute deviation (MAD), and rejection frequencies for 5\% level tests (rp(.05)).\footnote{In cases where LASSO selects no instruments, Med. Bias, and MAD use only the replications where LASSO selects a non-empty set of instruments, and we set the confidence interval eqaul to $(-\infty,\infty)$ and thus fail to reject.}  For computing rejection frequencies, we estimate conventional 2SLS standard errors for 2SLS(100), LASSO, and Post-LASSO, and the many instrument robust standard errors of \citen{hhn:weakiv} for LIML(100) and FULL(100).

\bibliographystyle{econometrica}

\bibliography{mybib}

\begin{figure}
    \includegraphics[width=\textwidth]{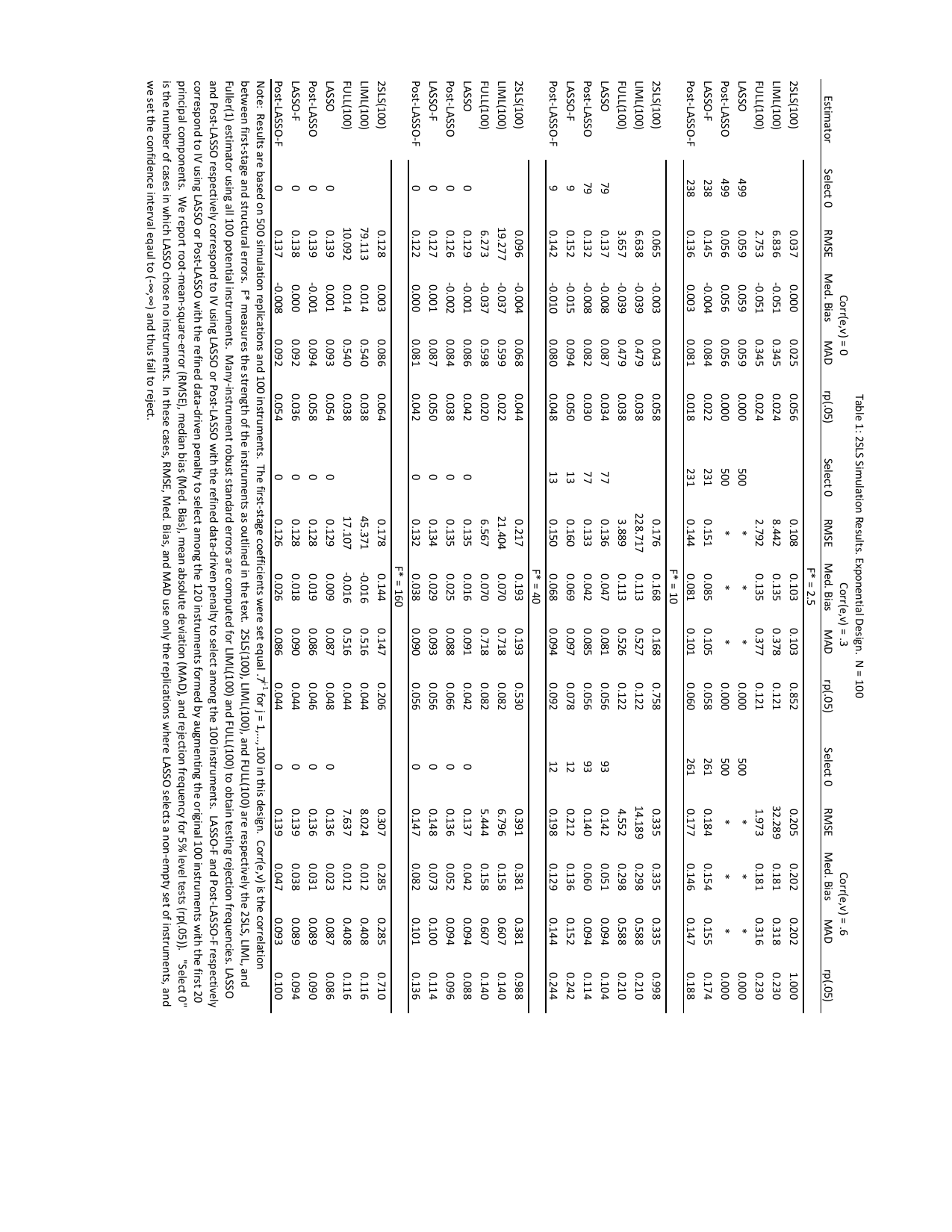}
    \label{fig:table1}
\end{figure}

\begin{figure}
    \includegraphics[width=\textwidth]{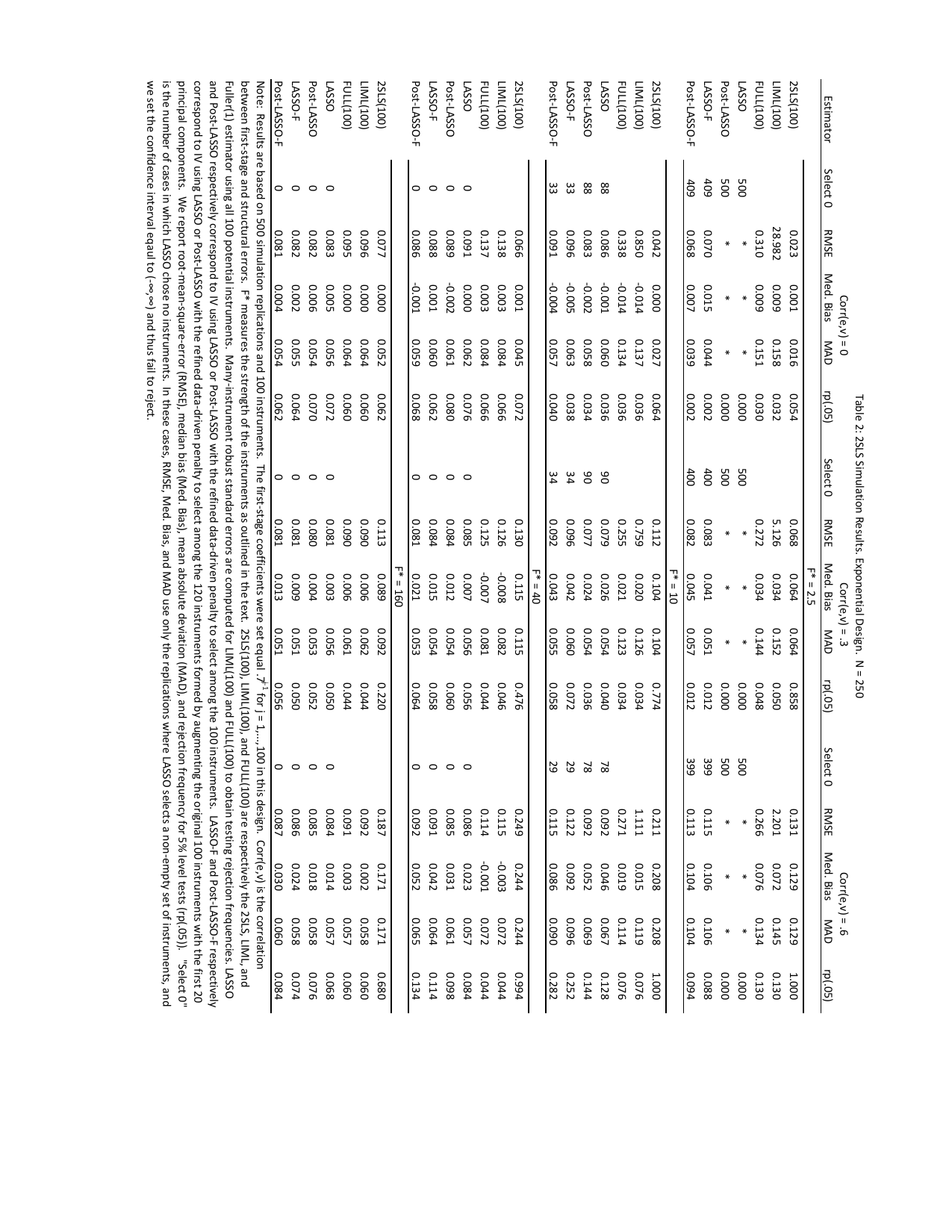}
    \label{fig:table2}
\end{figure}

\begin{figure}
    \includegraphics[width=\textwidth]{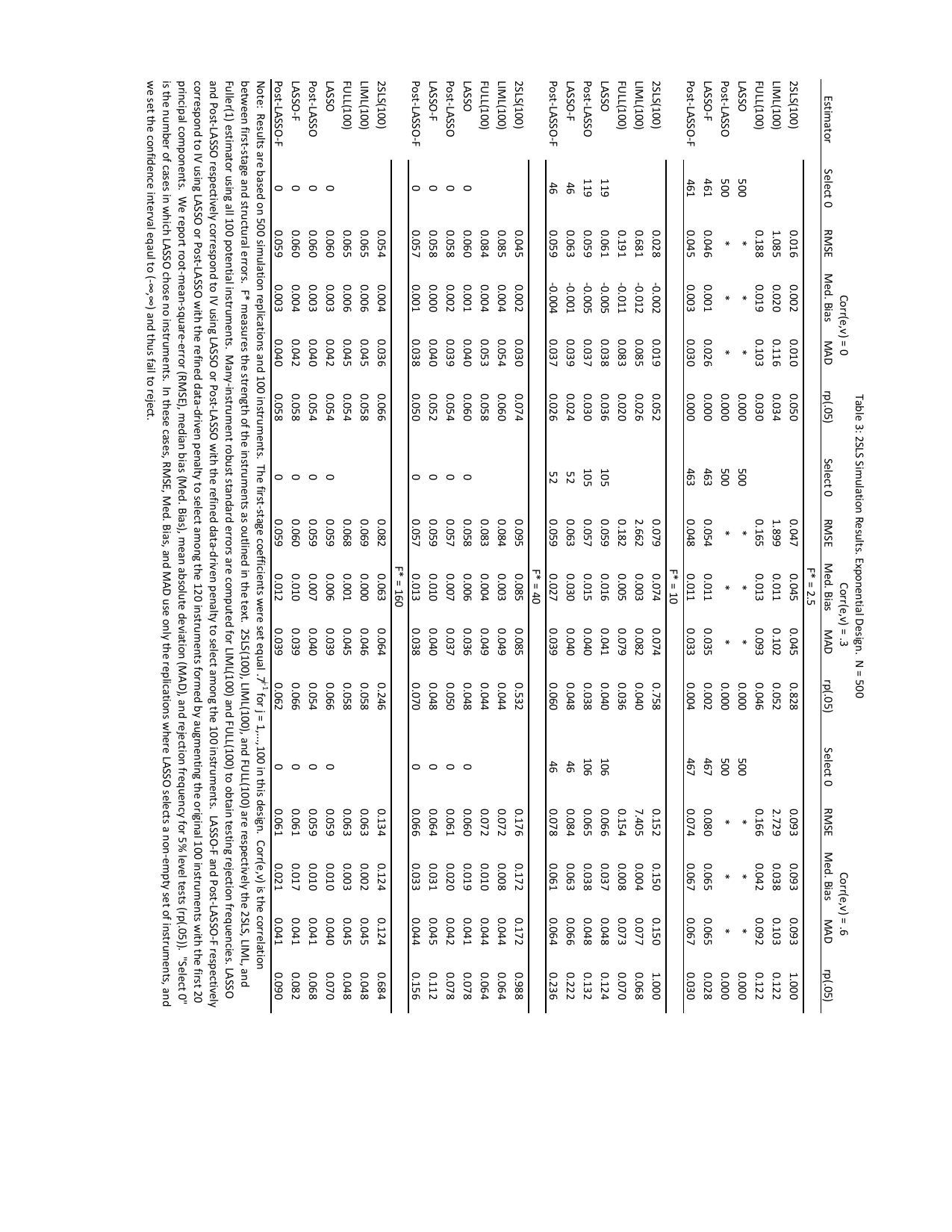}
    \label{fig:table3}
\end{figure}

\begin{figure}
    \includegraphics[width=\textwidth]{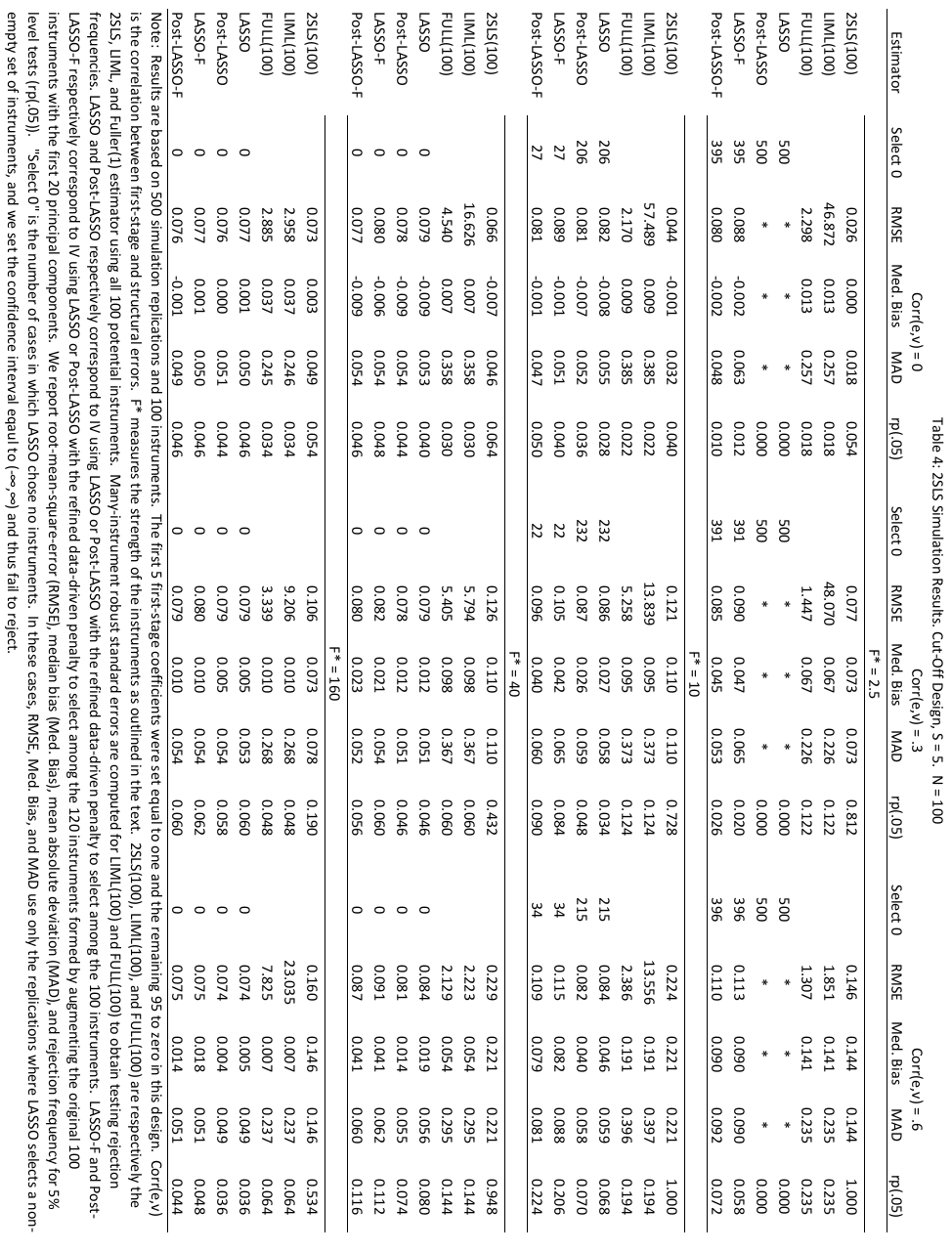}
    \label{fig:table4}
\end{figure}

\begin{figure}
    \includegraphics[width=\textwidth]{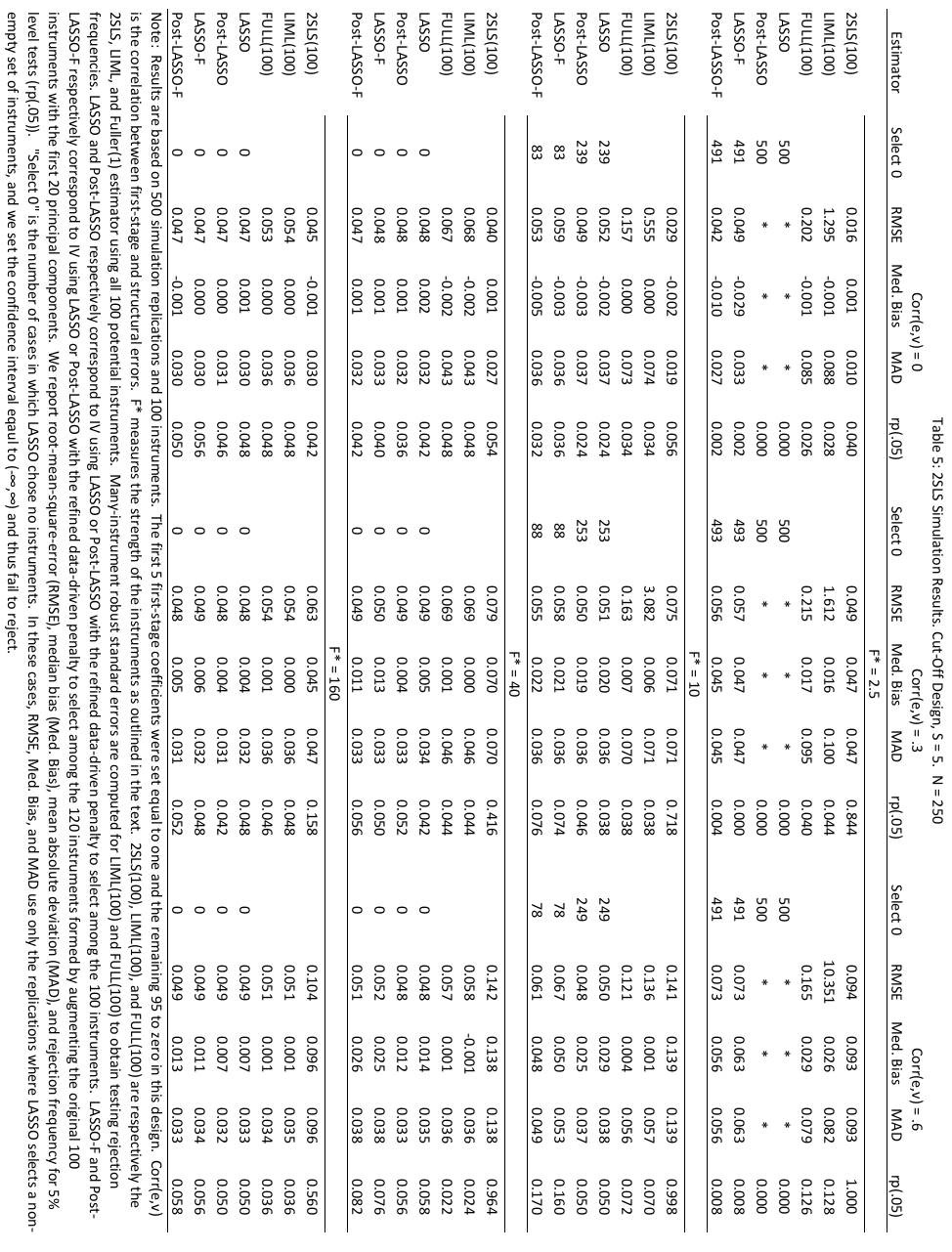}
    \label{fig:table5}
\end{figure}

\begin{figure}
    \includegraphics[width=\textwidth]{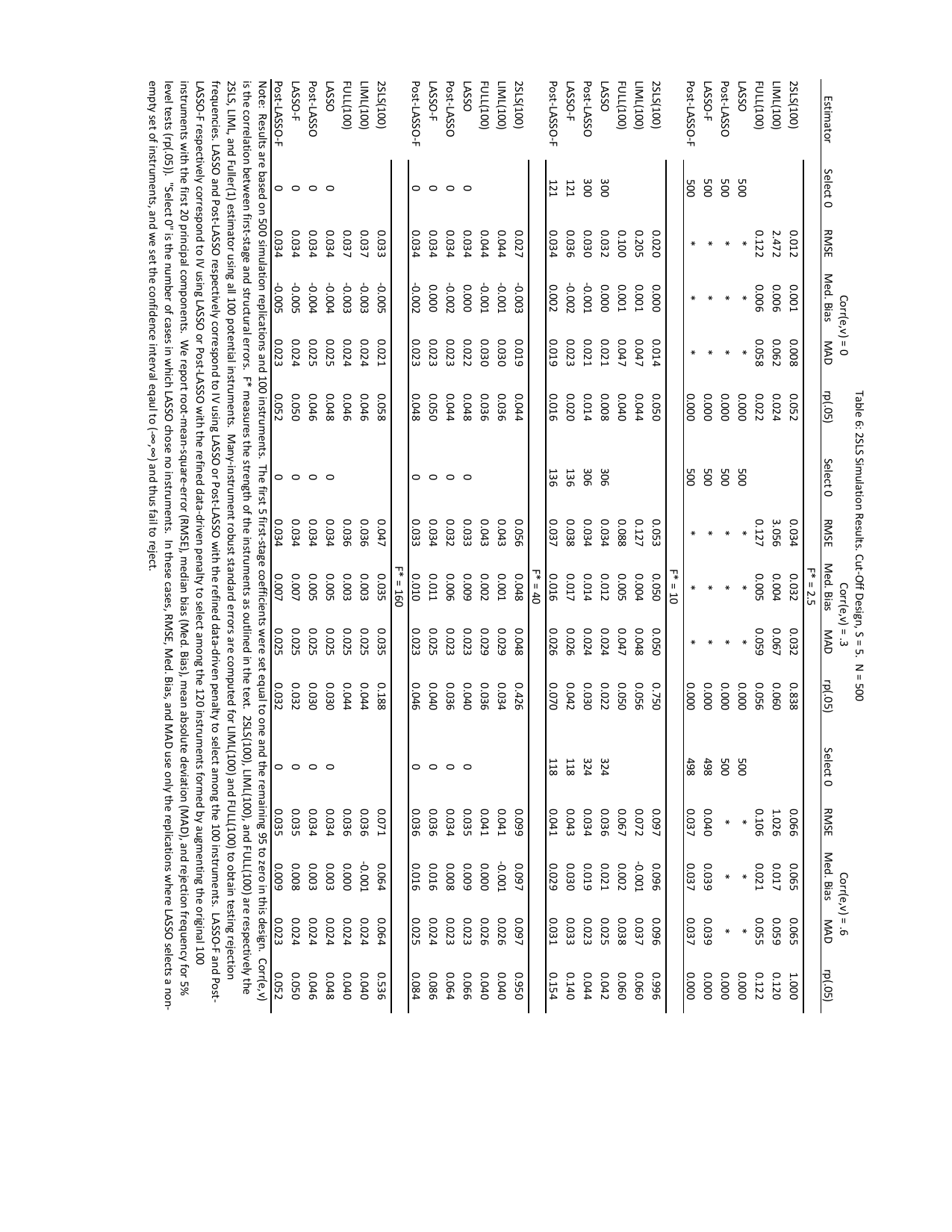}
    \label{fig:table6}
\end{figure}

\begin{figure}
    \includegraphics[width=\textwidth]{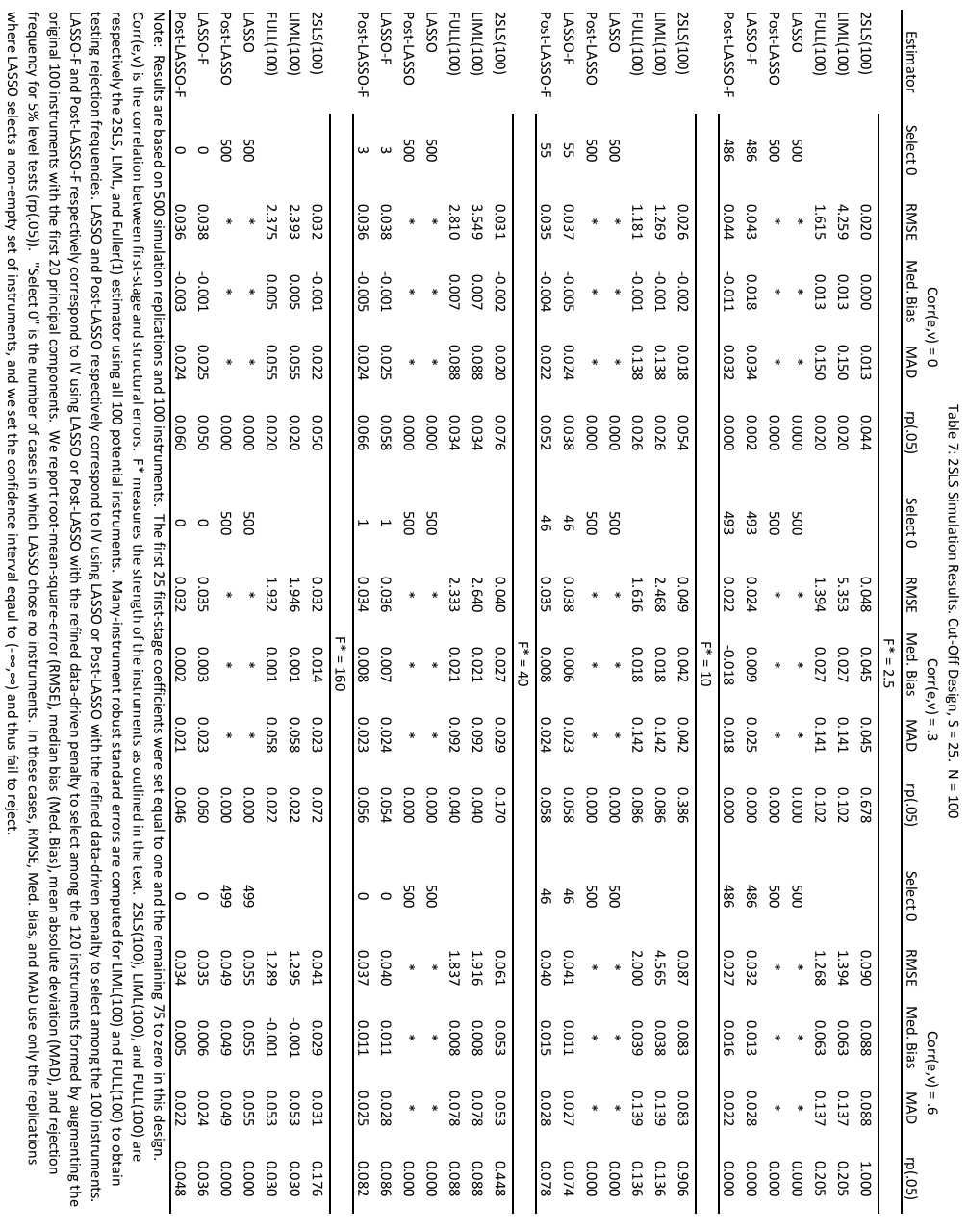}
    \label{fig:table7}
\end{figure}

\begin{figure}
    \includegraphics[width=\textwidth]{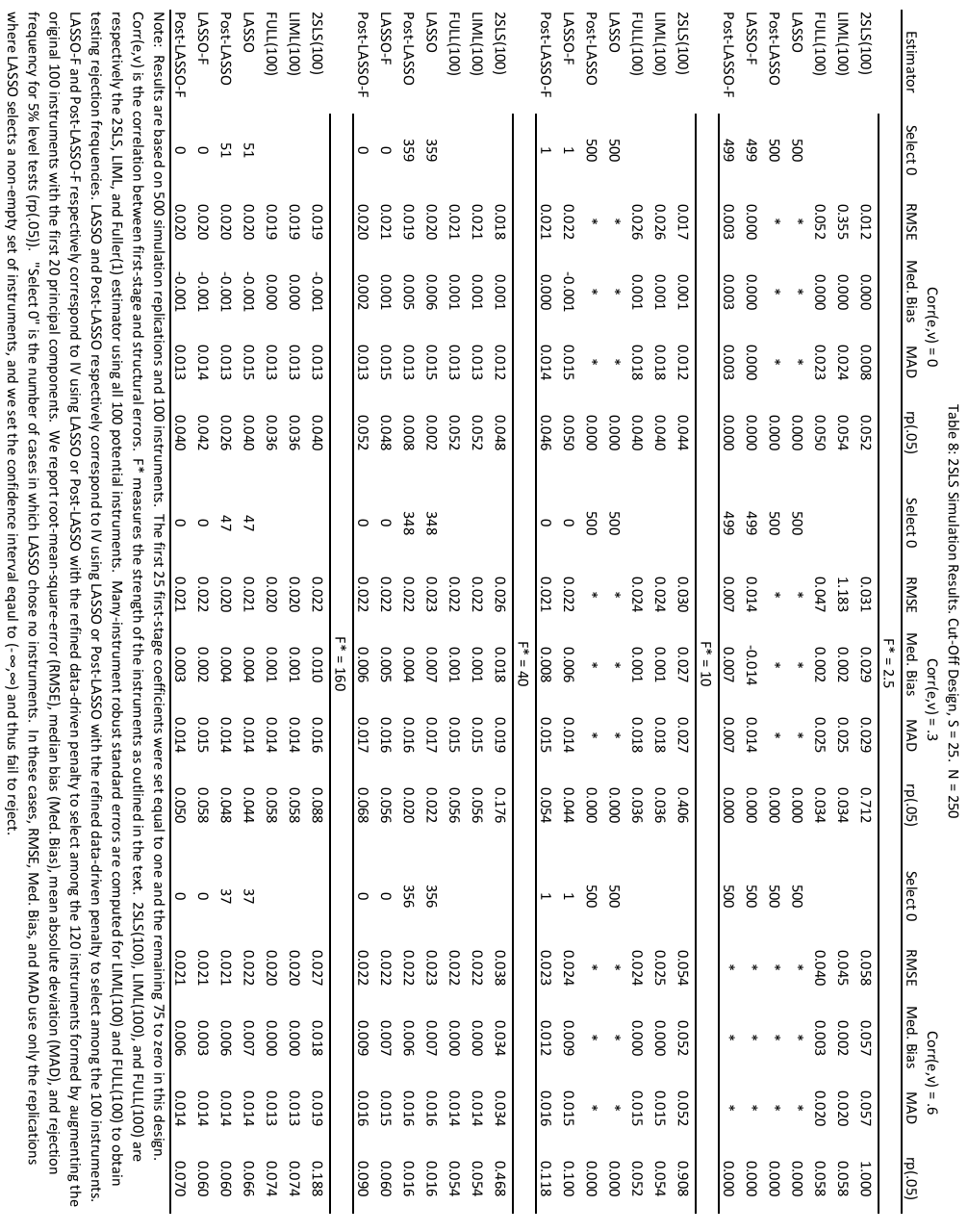}
    \label{fig:table8}
\end{figure}

\begin{figure}
    \includegraphics[width=\textwidth]{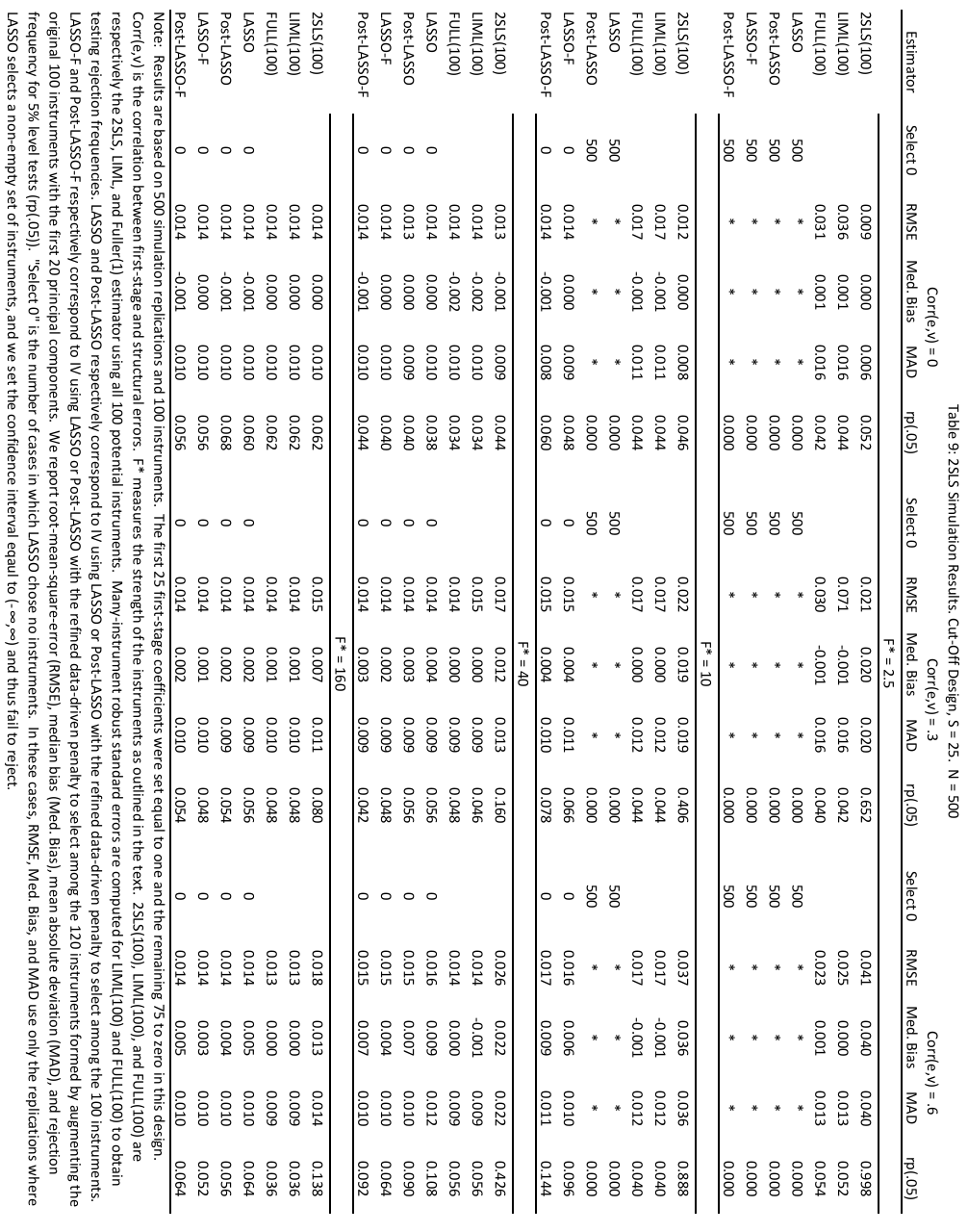}
    \label{fig:table9}
\end{figure}

\begin{figure}
    \includegraphics[width=\textwidth]{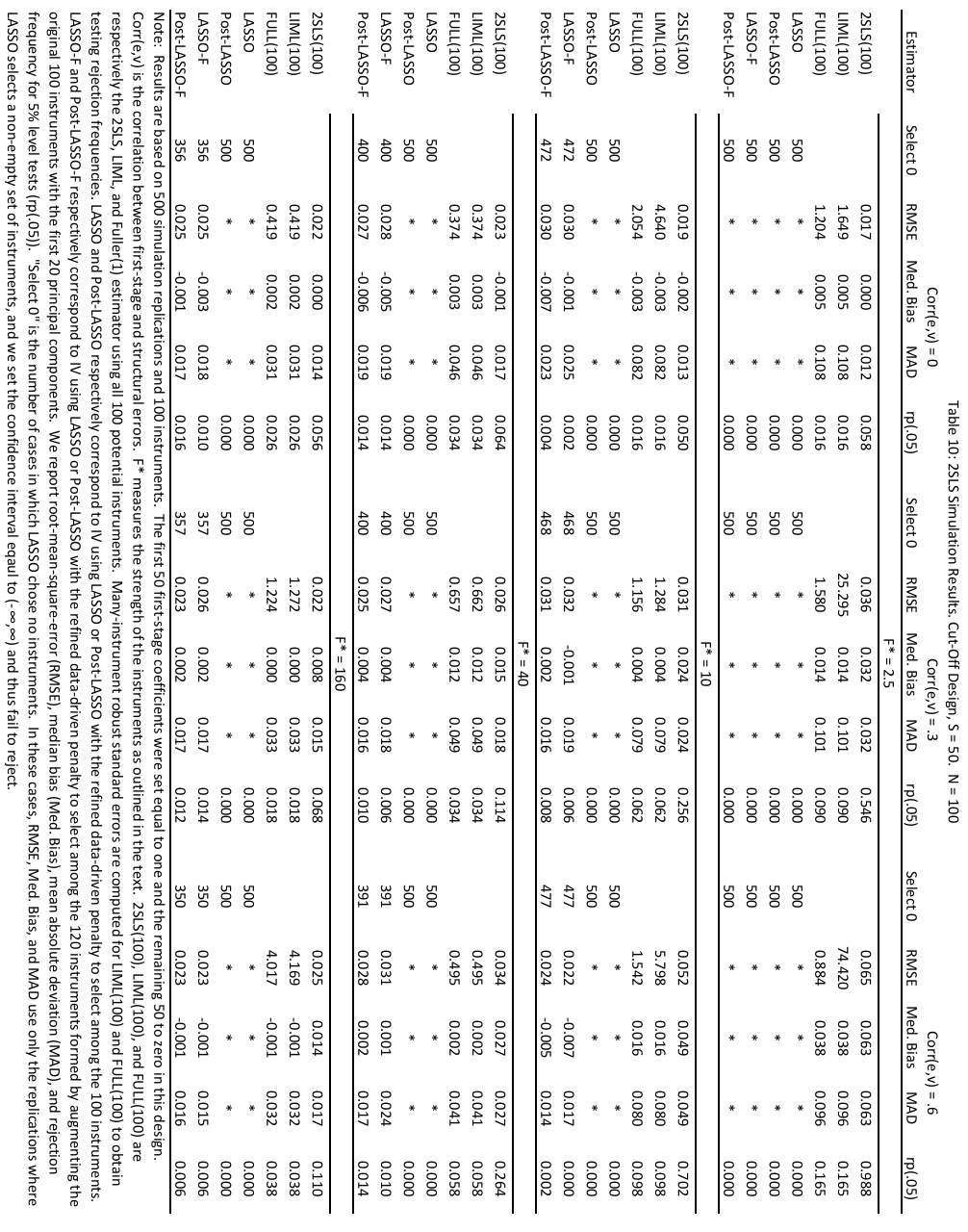}
    \label{fig:table10}
\end{figure}

\begin{figure}
    \includegraphics[width=\textwidth]{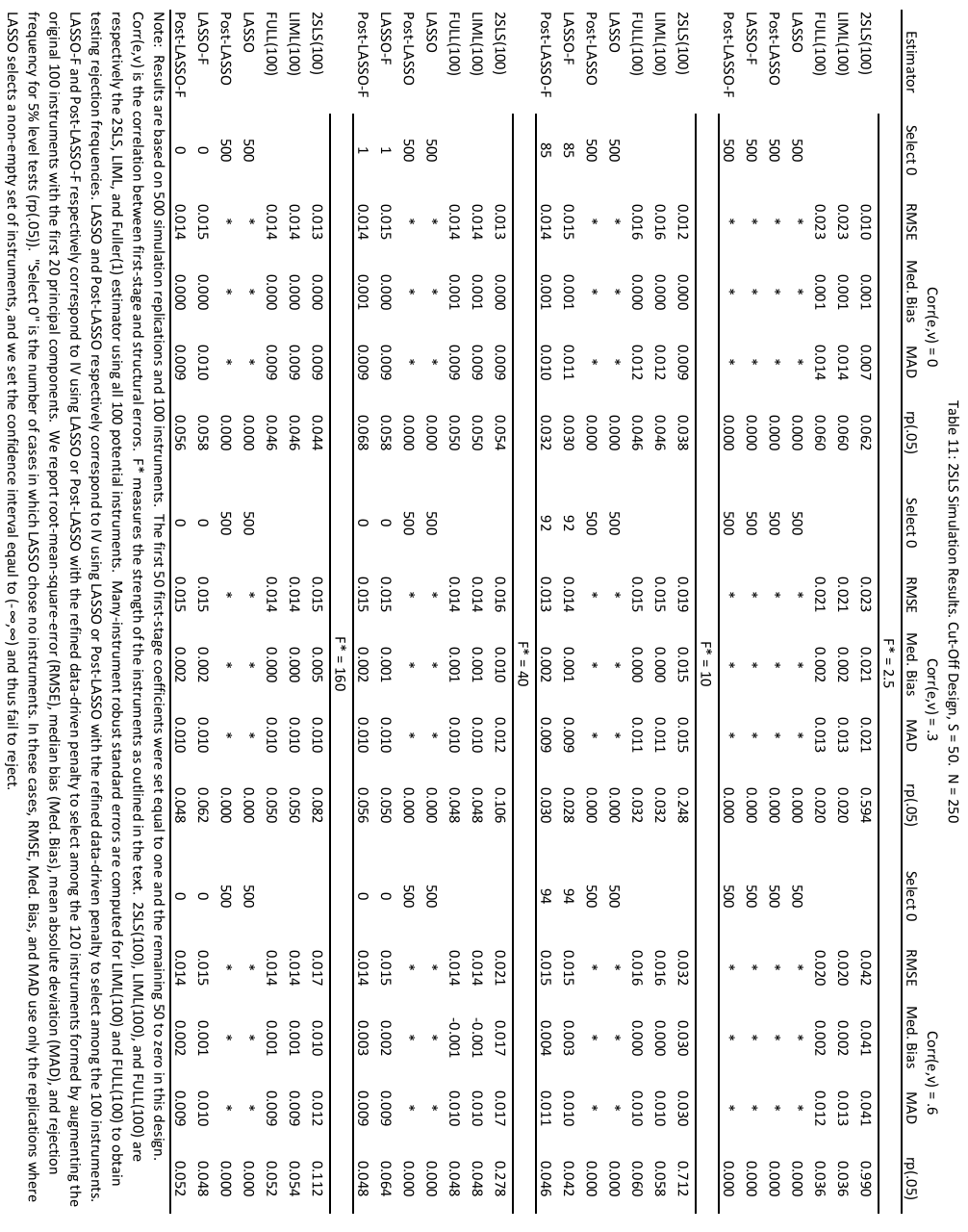}
    \label{fig:table11}
\end{figure}

\begin{figure}
    \includegraphics[width=\textwidth]{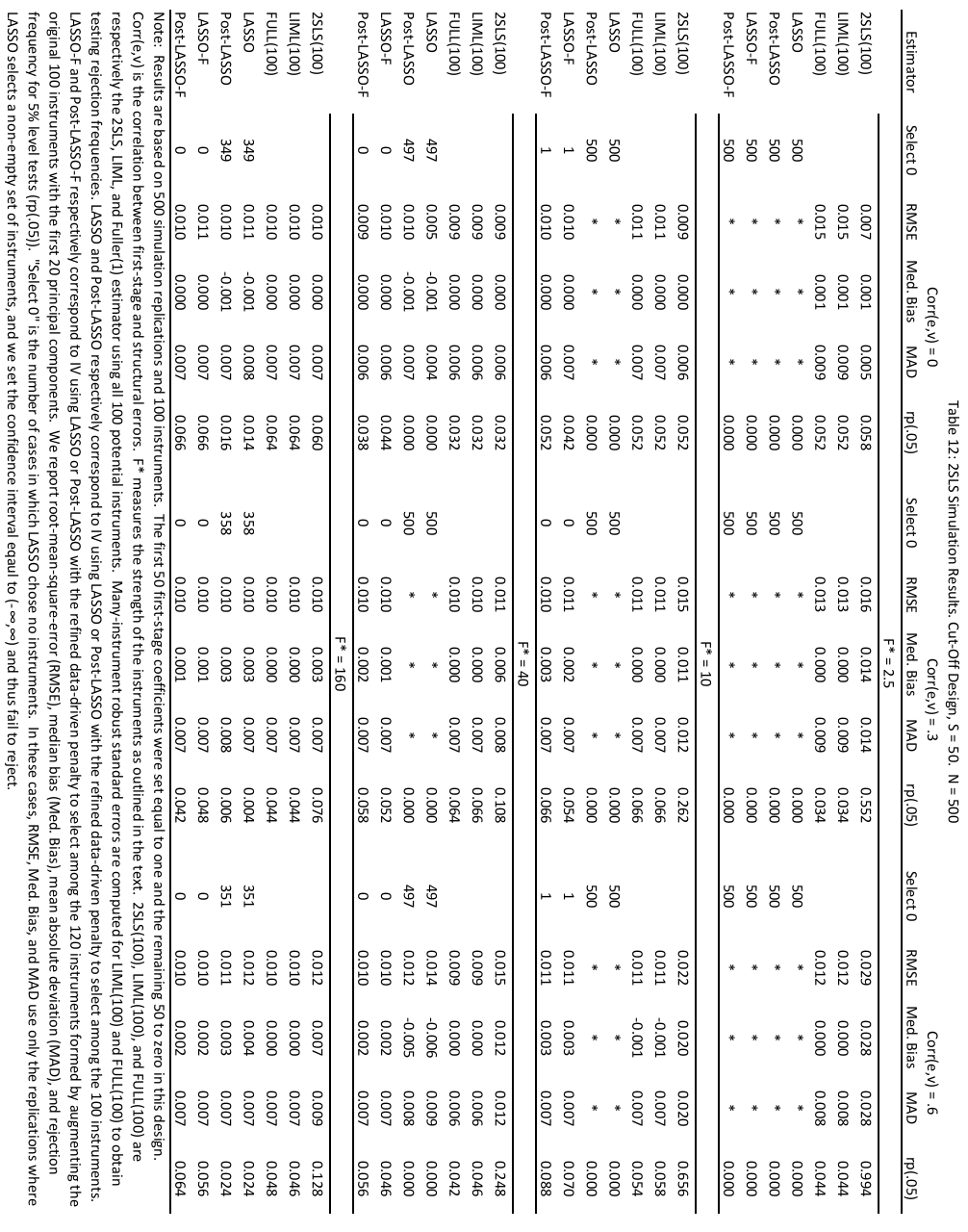}
    \label{fig:table12}
\end{figure}

\begin{figure}
    \includegraphics[width=\textwidth]{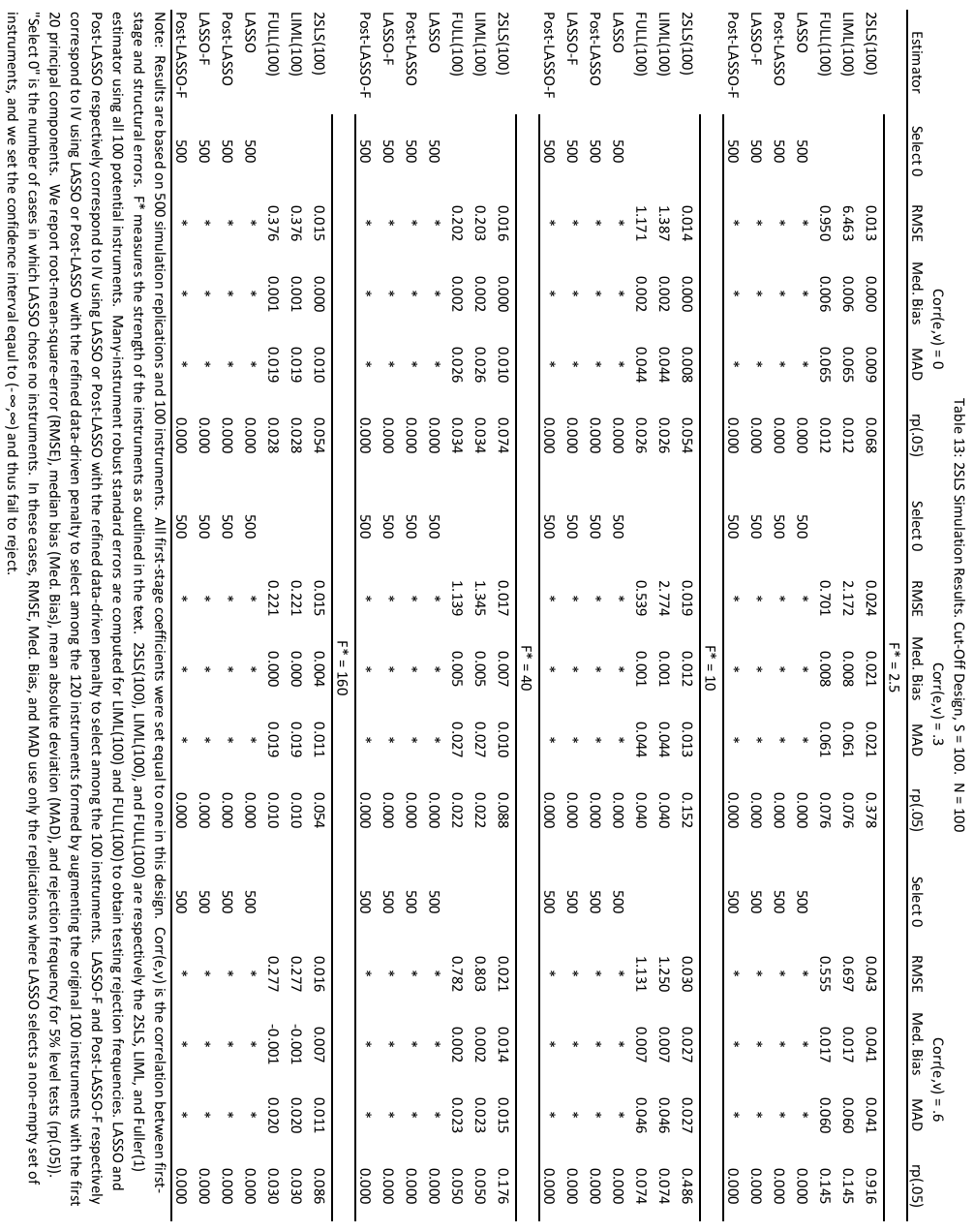}
    \label{fig:table13}
\end{figure}

\begin{figure}
    \includegraphics[width=\textwidth]{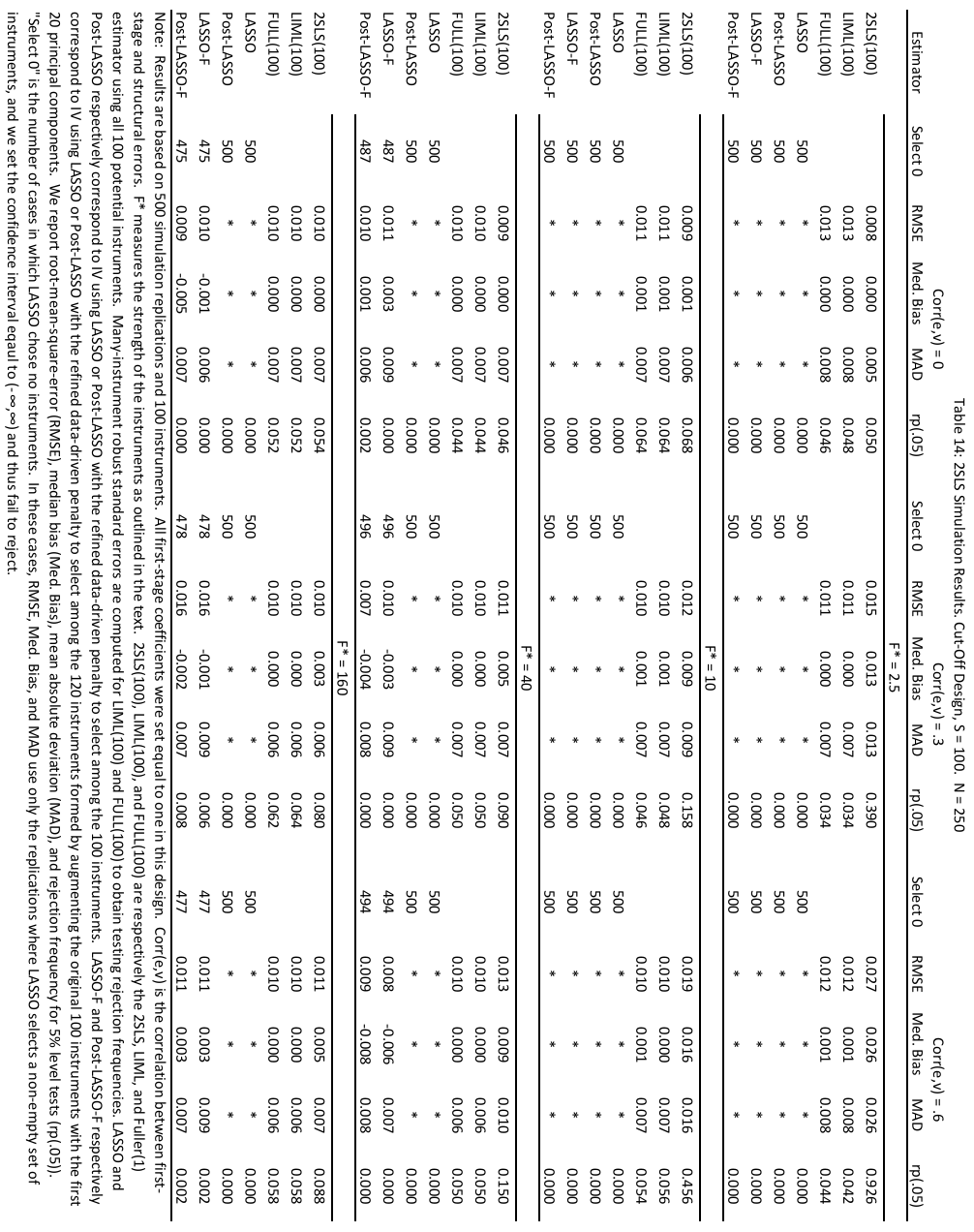}
    \label{fig:table14}
\end{figure}

\begin{figure}
    \includegraphics[width=\textwidth]{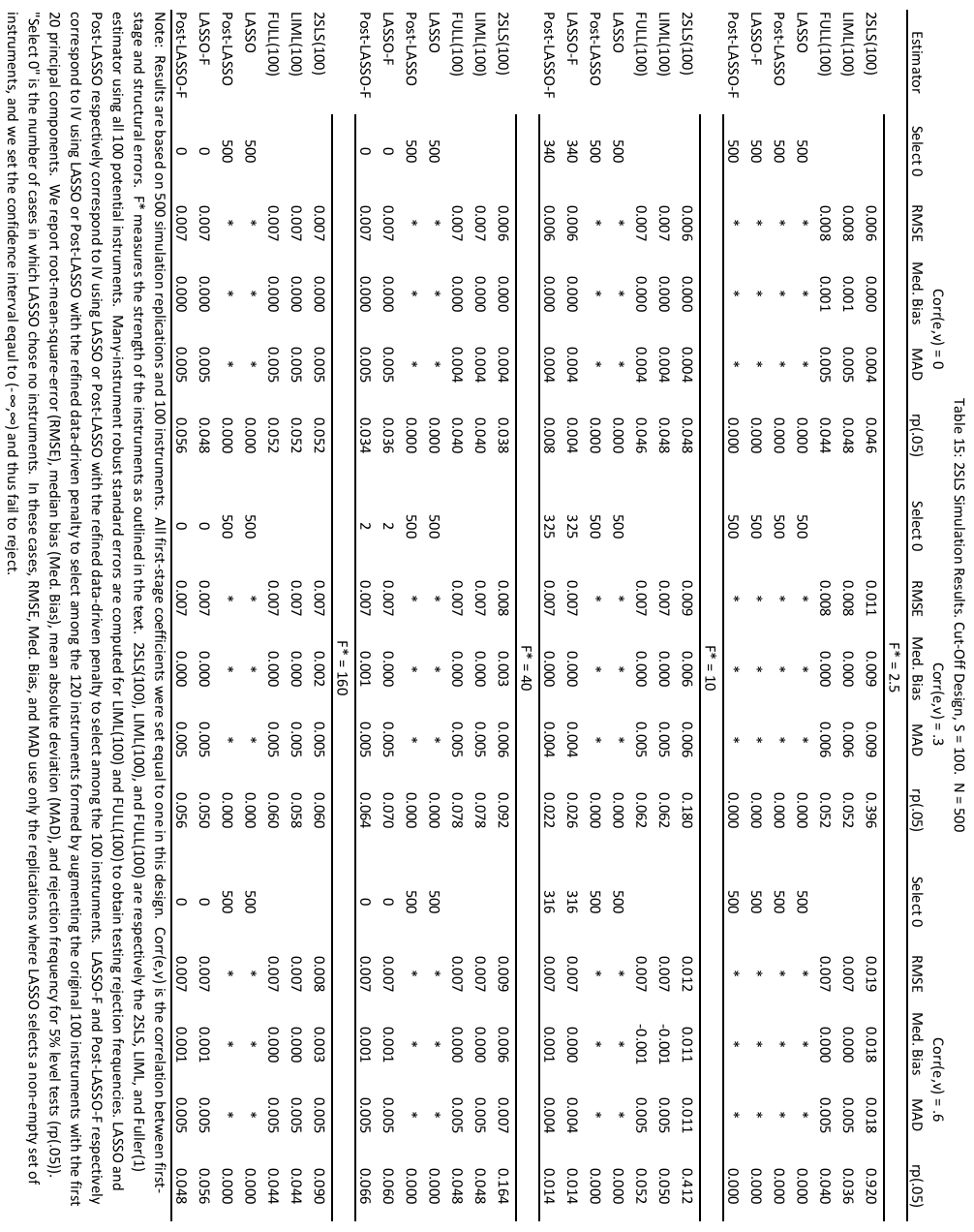}
    \label{fig:table15}
\end{figure}

\end{document}